\newcommand{\R}{\mathbb{R}}
\newcommand{\C}{\mathbb{C}}
\newcommand{\Z}{\mathbb{Z}}
\newcommand{\N}{\mathbb{N}}
\newcommand{\Q}{\mathbb{Q}}
\newcommand{\pre}{\mrm{par}}
\newcommand{\anc}{\mrm{anc}}
\newcommand{\In}{\mrm{In}}
\newcommand{\dom}{\m{D}}
\newcommand{\sgn}{\mrm{sgn}}
\newcommand{\lvl}{\mrm{lv}}
\newcommand{\coleqq}{\vcentcolon=}
\newcommand{\rhoisom}{\stackrel{\rho}{\sim}}
\newcommand{\sigisom}{\stackrel{\sigma}{\sim}}
\newcommand{\outmap}[2]{\left\langle{#1}\right\rangle^{#2}}
\newcommand{\OnoA}[2]{\langle{#1}\rangle^{#2}}
\newcommand{\outmapT}[3]{\left\langle{#1}\right\rangle^{#2,\,#3}}
\newcommand{\OTnoA}[3]{\langle{#1}\rangle^{#2,\,#3}}
\newcommand{\vin}{v_{\hspace{1pt}\mathrm{in}}}
\newcommand{\m}{\mathcal}
\newcommand{\mrm}{\mathrm}
\newcommand{\wtd}{\widetilde}
\newcommand{\LtCla}[2]{\Sigma_{#1,#2}}
\newcommand{\Vin}{V_{\mrm{in}}}
\newcommand{\Vout}{V_{\mrm{out}}}
\renewcommand{\Re}{\mrm{Re}}
\renewcommand{\Im}{\mrm{Im}}
\theoremstyle{theorem}
\newtheorem{theorem}{Theorem}
\newtheorem{lemma}{Lemma}
\newtheorem{prop}{Proposition}
\theoremstyle{definition}
\newtheorem{definition}{Definition}
\theoremstyle{remark}
\theoremstyle{definition}
\begin{document}

\title{\huge Affine Symmetries and Neural Network Identifiability}

\author{\IEEEauthorblockN{Verner Vla\v{c}i\'{c} and Helmut B\"olcskei  \\[0.4cm]}
\IEEEauthorblockA{
Dept. of EE and Dept. of Math., ETH Zurich, Switzerland\\
 [-0.1cm] Email: vlacicv@mins.ee.ethz.ch, hboelcskei@ethz.ch}
}

\maketitle
\thispagestyle{empty}

\begin{abstract}
	We address the following question of neural network identifiability: Suppose we are given a function $f:\R^m\to\R^n$ and a nonlinearity $\rho$. Can we specify the architecture, weights, and biases of all feed-forward neural networks with respect to $\rho$ giving rise to $f$?
Existing literature on the subject suggests that the answer should be yes, provided we are only concerned with finding networks that satisfy certain ``genericity conditions''. Moreover, the identified networks are mutually related by symmetries of the nonlinearity. For instance, the $\tanh$ function is odd, and so flipping the signs of the incoming and outgoing weights of a neuron does not change the output map of the network.
The results known hitherto, however, apply either to single-layer networks, or to networks satisfying specific structural assumptions (such as full connectivity), as well as to specific nonlinearities.
In an effort to answer the identifiability question in greater generality, we consider arbitrary nonlinearities with potentially complicated affine symmetries, and we show that the symmetries can be used to find a rich set of networks giving rise to the same function $f$. The set obtained in this manner is, in fact, exhaustive (i.e., it contains all networks giving rise to $f$) unless there exists a network $\m{A}$ ``with no internal symmetries'' giving rise to the identically zero function. This result can thus be interpreted as an analog of the rank-nullity theorem for linear operators. We furthermore exhibit a class of ``$\tanh$-type'' nonlinearities (including the $\tanh$ function itself) for which such a network $\m{A}$ does not exist, thereby solving the identifiability question for these nonlinearities in full generality and settling an open problem posed by Fefferman in \cite{Fefferman1994}. Finally, we show that this class contains nonlinearities with arbitrarily complicated symmetries.
\end{abstract}

\section{Introduction}

\subsection{Background and previous work}

Deep neural network learning has become a highly successful machine learning method employed in a wide range of applications such as optical character recognition \cite{lecun:1995MNIST},
image classification \cite{Krizhevsky2012Imagenet}, speech recognition \cite{Hint2012acoustic}, and generative models \cite{GAN2014}. Neural networks are typically defined as concatenations of affine maps between finite dimensional spaces and  nonlinearities applied coordinatewise, and are often studied as mathematical objects in their own right, for instance in approximation theory \cite{Boelcskei2019,Mallat2012,Petersen2018,Wiatowski2018} and in control theory \cite{Albertini1993,SontagMTNS1993}.

In data-driven applications \cite{Bengio2016,LeCun2015} the parameters of a neural network (i.e., the coefficients of the network's affine maps) need to be learned based on training data. In many cases, however, there exist multiple networks with different parameters, or even different architectures, giving rise to the same input-output map on the training set. These networks might differ, however, in terms of their generalization performance. In fact, even if several networks with differing architectures realize the same map on the entire domain, some of them might be easier to arrive at through training than others. It is therefore of interest to understand the ways in which a given function can be parametrized as a neural network.
 Specifically, we ask the following question of identifiability: Suppose that we are given a function $f:\R^m\to\R^n$ and a nonlinearity $\rho$. Can we specify the network architecture, weights, and biases of all feed-forward neural networks with respect to $\rho$ realizing $f$?
For the special case of the $\tanh$ nonlinearity, this question was first addressed in \cite{Sussman1992} for single-layer networks, and in \cite{Fefferman1994} for multi-layer networks satisfying certain ``genericity conditions'' on the architecture, weights, and biases. The identifiability question for single-layer networks with nonlinearities satisfying the so-called ``independence property'' (corresponding to the absence of non-trivial affine symmetries according to our Definition \ref{def:sym}) was solved in \cite{Sontag1993}, whereas the recent paper \cite{Vlacic2019} reports the first known identifiability result for multi-layer networks with minimal conditions on the architecture, weights, and biases, albeit with artificial nonlinearities designed to be ``highly asymmetric''.
We also remark that the identifiability of recurrent  single-layer networks was considered in \cite{Albertini1993} and \cite{SontagMTNS1993}.

It is important to note that all aforementioned results, as well as the results in the present paper, are concerned with the identifiability of networks given knowledge of the function $f$ on its entire domain. This corresponds to characterizing the fundamental limit on nonuniqueness in neural network representation of functions. Specifically, the nonuniqueness can only be richer if we are interested in networks that realize $f$ on a proper subset of $\R^m$, such as a finite (training) sample $\{x_1,\dots, x_m\}\subset\R^m$.
Moreover, we do not address neural network reconstruction, i.e., we do not provide a procedure for constructing an instance of a network realizing a given function $f$, but rather focus on building a theory that systematically describes how the neural networks realizing $f$ relate to one another. We do this in full generality for networks with ``$\tanh$-type'' nonlinearities (including the $\tanh$ function itself), settling an open problem posed by Fefferman in  \cite{Fefferman1994}.

Recent results on neural network  reconstruction on samples can be found in \cite{Fornasier2018}, \cite{Fornasier2019} for shallow networks and in \cite{Rolnick2020} for ReLU networks of arbitrary depth.

\subsection{Affine symmetries as a template for neural network nonuniqueness}

In order to develop intuition on the identifiability of general neural networks, we follow \cite{Sussman1992} and \cite{Sontag1993} and begin by considering single-layer networks. To this end, let $\rho:\R\to\R$ be a nonlinearity, and let
\begin{equation}\label{eq:proto-sym-2nets}
\OnoA{\m{N}}{\rho}\coleqq\sum_{p=1}^{n}\lambda_p\,\rho(\omega_p\cdot\,+\,\theta_p) \;+\lambda\qquad \text{and}\qquad \OnoA{\m{N}'}{\rho}\coleqq \sum_{p=1}^{n'}\lambda_p'\,\rho(\omega_p'\cdot\,+\,\theta_p') \;+\lambda'
\end{equation}
be the maps realized by the single-layer networks $\m{N}$ and $\m{N}'$, both with nonlinearity $\rho$. Suppose that these networks realize the same function, i.e.,
\begin{equation*}
\sum_{p=1}^{n}\lambda_p\, \rho(\omega_pt+\theta_p) \;+\lambda=\sum_{p=1}^{n'}\lambda_p'\, \rho(\omega_p't+\theta_p') \;+\lambda',
\end{equation*}
for all $t\in\R$. This is equivalent to the following linear dependency relation between the constant function $\bm{1}:\R\to \R$ taking on the value 1 and affinely transformed copies of $\rho$:
\begin{equation}\label{eq:proto-sym}
\sum_{p=1}^{n}\lambda_p\,\rho(\omega_pt+\theta_p)\;-\sum_{p=1}^{n'}\lambda_p'\,\rho(\omega_p't+\theta_p')= (\lambda'-\lambda)\bm{1}(t),
\end{equation}
for all $t\in\R$.

We consider two concrete nonlinearities to demonstrate how linear dependency relations of the form \eqref{eq:proto-sym} lead to formally different networks realizing the same function. First, let $\rho=\tanh$. Then, as $\tanh(t)=-\tanh(-t)$, for all $t\in\R$, we have
\begin{equation*}
\sum_{p=1}^{n}\lambda_p\,\tanh(\omega_p\cdot\,+\theta_p)-\sum_{p=1}^{n}s_p\lambda_p\,\tanh(s_p\omega_p\cdot\,+s_p\theta_p)=0,
\end{equation*}
 for every choice of signs $s_p\in\{-1,+1\}$, $p\in\{1,\dots,n\}$, i.e., with the notation in \eqref{eq:proto-sym-2nets}, we have $\OnoA{\m{N}}{\tanh}=\OnoA{\m{N}'}{\tanh}$ with $n'=n$, $\lambda'=\lambda$, $\lambda_p'=s_p \lambda_p$, $\omega_p'=s_p\omega_p$, and $\theta_p'=s_p \theta_p$, for all $p\in\{1,\dots, n\}$. Underlying this nonuniqueness is the simple insight that $\tanh(t)=-\tanh(-t)$ can be rewritten as $\tanh(t)+\tanh(-t)=0$, which, in turn, can be interpreted as a single-layer network with two neurons, mapping every input to output 0.
 
  For a more intricate example, consider the clipped rectified linear unit (CReLU) nonlinearity given by $\rho_{c}(t)=\min\{1,\max\{0,t\}\}$, and note that
\begin{equation}\label{eq:intro-symm-1} 
\rho_{c}\,(t)-{\textstyle \frac{1}{2}}\rho_{c}\,(2t)-{\textstyle \frac{1}{2}}\rho_{c}\,(2t-1)=0, \quad\text{for all }t\in\R,
\end{equation}
 corresponds to a single-layer network with three neurons mapping every input to output 0. This can be rewritten as $\rho_{c}\,(t)={\textstyle \frac{1}{2}}\rho_{c}\,(2t)+{\textstyle \frac{1}{2}}\rho_{c}\,(2t-1)$ and applied recursively to yield
\begin{equation*} 
 \OnoA{\m{N}^n}{\rho_{c}}\coleqq \sum_{p=1}^{n} 2^{-p}\rho_{c}(2^p \cdot\, -1) \;+2^{-n}\rho_{c}(2^{n}\cdot)=\rho_{c},
\end{equation*}
for all $n\in\N$. In other words, we have effectively used the three-neuron network \eqref{eq:intro-symm-1} to repeatedly replace single nodes with pairs of nodes without changing the function realized by the network, thereby constructing an infinite collection of different networks, all satisfying $\OnoA{\m{N}^n}{\rho_{c}}=\rho_{c}$.

In summary, we see that, at least for single-layer networks, non-uniqueness in the realization of a function arises from affine symmetries of the nonlinearity, where the symmetries are none other than single-layer networks mapping every input to output 0. Namely, these ``zero networks'' can be used as templates for modifying the structure of (more complex) networks without affecting the function they realize. This motivates the following definition.

\begin{definition}[Nonlinearity and affine symmetry]\label{def:sym}
A \emph{nonlinearity} is a continuous function $\rho:\R\to\R$ such that $\rho\neq \{t\mapsto at +b:t\in\R\}$, for all $a,b\in\R$.
Let $\rho:\R\to\R$ be a nonlinearity and $\m{I}$ a nonempty finite index set. An \emph{affine symmetry} of $\rho$ is a collection of real numbers of the form $\left(\zeta,\{(\alpha_s,\beta_s,\gamma_s)\}_{s\in\m{I}}\right)$ such that,
\begin{enumerate}[(i)]
\item for all $t\in\R$,
\begin{equation}\label{eq:symm-generic}
\sum_{s\in\m{I}} \alpha_s \rho(\beta_{s} t + \gamma_{s})=\zeta\,\bm{1}(t),
\end{equation}
and
\item there does not exist a proper subset $\m{I}'$ of $\m{I}$ such that $\{ \rho(\beta_{s}\cdot\, +\,  \gamma_{s}):s\in\m{I}'\}\cup\{\bm{1}\}$ is a linearly dependent set of functions from $\R$ to $\R$.
\end{enumerate}
\end{definition}
\noindent Item (ii) in Definition \ref{def:sym} is a minimality condition, ensuring that only ``atomic'' symmetries qualify under the formal definition.

Note that every nonlinearity $\rho$ satisfies $\rho+(-\rho)=0$, and hence possesses at least the ``trivial affine symmetries'' $\big(0,\{(\alpha,\beta,\gamma),(-\alpha,\beta,\gamma)\}\big)$, for $\alpha,\beta\in\R\setminus\{0\}$ and $\gamma\in\R$.
We remark that Definition \ref{def:sym} is more general than what is needed to cover our examples above, as $\zeta$ in \eqref{eq:symm-generic} is allowed to be an arbitrary real number, whereas we had $\zeta=0$ in both of our examples. One can, of course, seek to build a theory encompassing even more general symmetries, e.g. those for which the right-hand side of \eqref{eq:symm-generic} is itself an affine function $t\mapsto \zeta_1 + \zeta_2\,  t$ (which, in the context of $\rho$-modification introduced later, could then be absorbed into the next layer of the network). This is, however, outside the scope of the present paper.

\subsection{Formalizing the identifiability question}

Our aim is to generalize the aforementioned correspondence between neural network non-uniqueness and the affine symmetries of the underlying nonlinearity $\rho$ to multi-layer networks of arbitrary architecture. Moreover, we wish to do so in a canonical fashion, i.e., without regard to the ``fine properties'' of $\rho$ beyond its affine symmetries. Specifically, we will derive conditions under which the set of networks giving rise to a fixed $f$ and derived from the affine symmetries of $\rho$ through ``symmetry modification'' is exhaustive (i.e., it contains all networks giving rise to $f$). These conditions are formally characterized by our null-net theorems (Theorem \ref{thm:intro-NN-G} and Theorem \ref{thm:intro-NN-L}).
The concept of symmetry modification will be introduced in the following sections, and corresponds to using $\tanh(t)=-\tanh(-t)$ to flip the signs of weights and biases in the network (in the case when $\rho=\tanh$) or using $\rho_{c}\,(t)={ \frac{1}{2}}\rho_{c}\,(2t)+{ \frac{1}{2}}\rho_{c}\,(2t-1)$ to replace single nodes with pairs of nodes in the network (in the case $\rho=\rho_c$).

In order to streamline the extension of the discussion in the previous subsection to multi-layer networks and to facilitate the comparison of our results with previous work, it will be opportune to immediately introduce neural networks in their full generality, i.e., as ``computational graphs''. To this end, we recall the definition of a directed acyclic graph, as well as several associated concepts that will be needed later.

\begin{definition}[Directed acyclic graph, parent and ancestor set, input nodes, and node level]\label{def:DAG}\hfil
\begin{itemize}[--]
\item A directed graph is an ordered pair $G=(V,E)$ where $V$ is a nonempty finite set of nodes and $E\subset V\times V\setminus \{(v,v):v\in V\}$ is a set of directed edges. We interpret an edge $(v,\wtd{v})$ as an arrow connecting the nodes $v$ and $\wtd{v}$ and pointing at $\wtd{v}$.  
\item A directed cycle of a directed graph $G$ is a set $\{v_1,\dots,v_k\}\subset V$ such that, for every $j\in\{1,\dots,k\}$,  $(v_j,v_{j+1})\in E$, where we set $v_{k+1}\coleqq v_1$.
\item A directed graph $G$ is said to be a directed acyclic graph (DAG) if it has no directed cycles.
\end{itemize}
Let $G=(V,E)$ be a DAG. 
\begin{itemize}[--]
\item We define the parent set of a node by $\pre (\wtd{v})=\{v\in V :(v,\wtd{v})\in E\}$.
\item For a set $W\subset V$ we define $\pre^0(W)=W$ and $\pre^r(W)=\bigcup_{s\in W}\pre^{r-1}(\pre(s))$, for $r\geq 1$. The ancestor set of $W$ is now given by $\anc(W)=\bigcup_{r\geq 0}\pre^r(W)$.
\item We say that $v\in V$ is an input node if $\pre(v)=\varnothing$, and we write $\In(G)$ for the set of input nodes.
\item We define the level $\lvl(v)$ of a node $v\in V$ recursively as follows. If $\pre (v)=\varnothing$, we set $\lvl(v)=0$. If $\pre(v)=\{v_1,v_2,\dots,v_k\}$ and $\lvl(v_1),\lvl(v_2),\dots,\lvl(v_k)$ are defined, we set $\lvl(v)=\max\{\lvl(v_1),\lvl(v_2),\dots,\lvl(v_k)\}+1$.
\end{itemize}
\end{definition}

\noindent As the graph $G$ in Definition \ref{def:DAG}  is assumed to be acyclic, the level is well-defined for all nodes of $G$. We are now ready to introduce our general definition of a neural network.

\begin{definition}[GFNNs and LFNNs]\label{def:GFNN}
A general feed-forward neural network (GFNN) with $D$-dimensional output is an ordered septuple $\m{N}=(V,E,\Vin,\allowbreak \Vout,\Omega,\Theta,\Lambda)$, where
\begin{enumerate}[(i)]
\item $G=(V,E)$ is a 
DAG, called the architecture of $\m{N}$,
\item $\Vin=\In(G)$ is the set of input nodes of $\m{N}$,
\item $\Vout\subset V\setminus \Vin$ is the set of output nodes of $\m{N}$,
\item $\Omega=\{\omega_{\wtd{v}v}\in \R\setminus\{0\} : (v,\wtd{v})\in E\}$ is the set of weights of $\m{N}$,
\item $\Theta=\{\theta_{v}\in \R: v\in V\setminus \Vin\}$ is the set of biases of $\m{N}$, and
\item $\Lambda=\{\lambda^{(r)}\in \R:r\in \{1,\dots,D\}\}\cup\{\lambda_{w}^{(r)}\in\R:w\in \Vout,r\in\{1,\dots,D\}\}$ is the set of output scalars of $\m{N}$.
\end{enumerate}
The depth of a GFNN is defined as $L(\m{N})=\max\{\lvl(v):v\in V\}$. A layered feed-forward neural network (LFNN) is a GFNN satisfying $\lvl(\wtd{v})=\lvl(v)+1$ for all $(v,\wtd{v})\in E$.
\end{definition}

The role of the output scalars is to form $D$ affine combinations of the functions realized by the output nodes, which are then designated as the coordinates of the $D$-dimensional output function of the network. Note that this renders the definition of the function realized by a network more general than directly taking the functions realized by the output nodes to be the output of the network. Formally, we have the following.

\begin{definition}[Output maps]\label{def:GFNNrealiz}
Let $\m{N}=(V,E,\Vin,\Vout,\Omega,\Theta,\Lambda)$ be a GFNN with $D$-dimensional output, and let $\rho:\R\to \R$ be a nonlinearity.
The map realized by a node $u\in V$ under $\rho$ is the function $\outmap{u}{\rho}:\R^{\Vin}\to \R$ defined recursively as follows:
\begin{itemize}[--]
\item If $u\in \Vin$, set $\outmap{u}{\rho}\!(\bm{t})=t_u$, for all $\bm{t}=(t_{\wtd{u}})_{\wtd{u}\in \Vin}\in \R^{\Vin}$.
\item Otherwise, set $\outmap{u}{\rho}\!(\bm{t})=\rho \left(\sum_{v\in\pre (u)} \omega_{uv}\cdot \outmap{v}{\rho}(\bm{t}) +\theta_u\right)$, for all $\bm{t}\in \R^{\Vin}$.
\end{itemize}
The map realized by $\m{N}$ under $\rho $ is the function $\outmap{\m{N}}{\rho}:\R^{\Vin}\to \R^{D}$ given by
\begin{equation*}
\outmap{\m{N}}{\rho}=\left(\lambda^{(r)}\,\bm{1}+\sum_{w\in \Vout}\lambda^{(r)}_{w}\outmap{w}{\rho}\right)_{r\in\{1,\dots,D\}}.
\end{equation*}
 When dealing with several networks $\m{N}_j$ 
 we will write $\outmapT{u}{\rho}{\m{N}_j}$ for the map realized by $u$ in $\m{N}_j$, to avoid ambiguity.
\end{definition}
\noindent We will treat nodes $u\in V$ only as ``handles'', and never as variables or functions. This is relevant when dealing with multiple networks that have shared nodes, as in the example depicted in Figure \ref{fig:SharedNodes}. On the other hand, the  map $\outmap{u}{\rho}$ realized by $u$ is a function. We remark that Definitions \ref{def:GFNN} and \ref{def:GFNNrealiz} are largely analogous to \cite[Defs. 8,11]{Vlacic2019}, save for the output scalars $\Lambda$ that do not feature in \cite{Vlacic2019}.

\begin{figure}[h!]\centering
\includegraphics[height=50mm,angle=0]{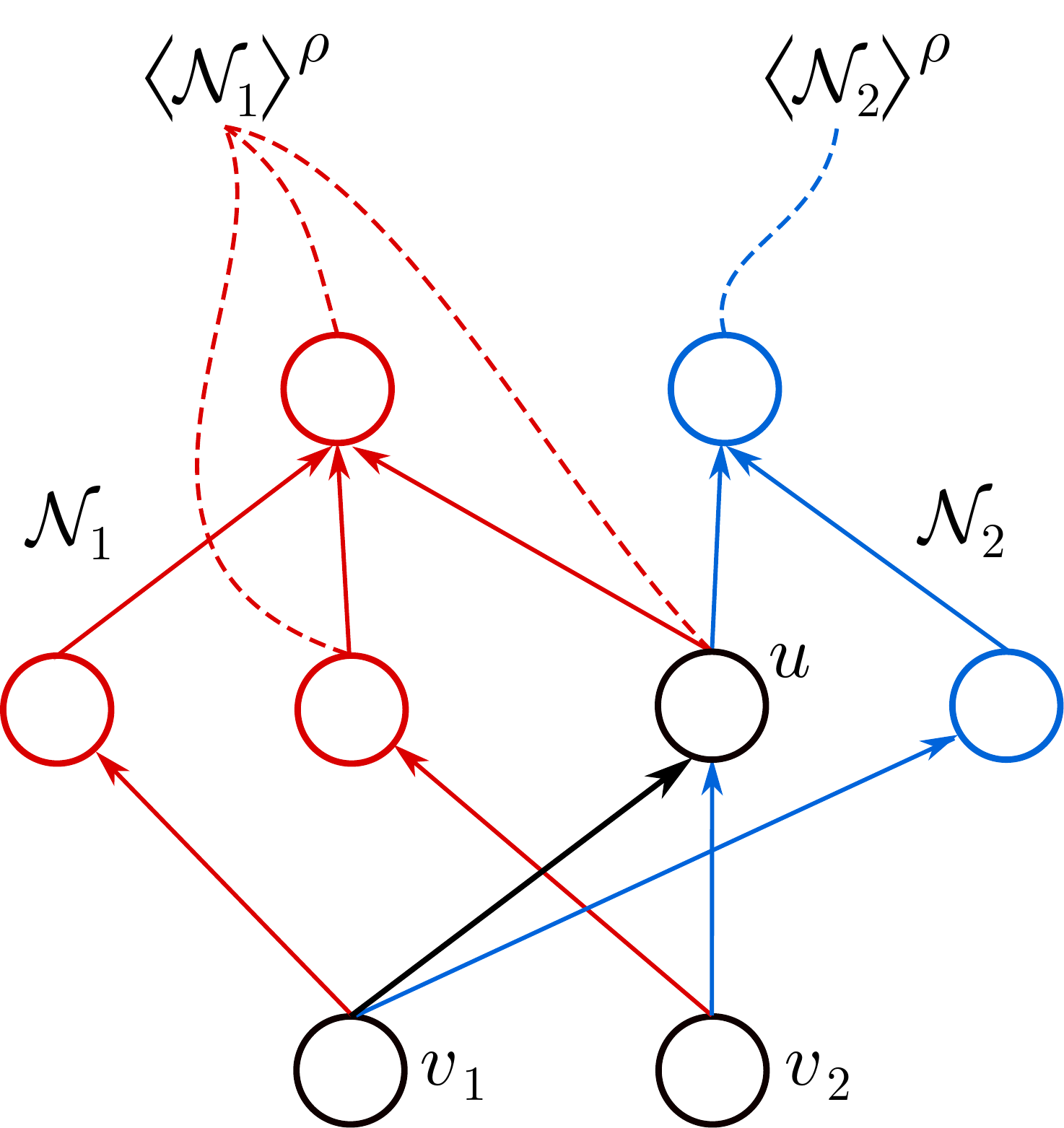}
\caption{The network $\m{N}_1$ consists of the elements in red and black, and $\m{N}_2$ consists of the elements in blue and black. The arrows represent the edges, and the dashed lines represent the output scalars. Note that $\m{N}_1$ and $\m{N}_2$ share the nodes $v_1$, $v_2$, and $u$, even though the functions $\OTnoA{u}{\rho}{\mathcal{N}_1}$ and $\OTnoA{u}{\rho}{\mathcal{N}_2}$ may be ``completely unrelated''. \label{fig:SharedNodes}}
\end{figure}

Note that LFNNs are similar to feed-forward neural networks as widely studied in the literature, namely as concatenations of affine maps between finite dimensional spaces and elementwise application of nonlinearities. Our definition of LFNNs is, however, somewhat more general, in the sense of the map of the network being allowed to depend directly on ``non-final'' nodes. An example of such an LFNN is $\m{N}_1$ in Figure \ref{fig:SharedNodes}. Further still, GFNNs are more general than LFNNs, and allow for ``skip connections'' within the network itself. For an example of a GFNN that is not layered, see Figure \ref{fig:nonlayered}.

\begin{figure}[h!]\centering
\includegraphics[height=50mm,angle=0]{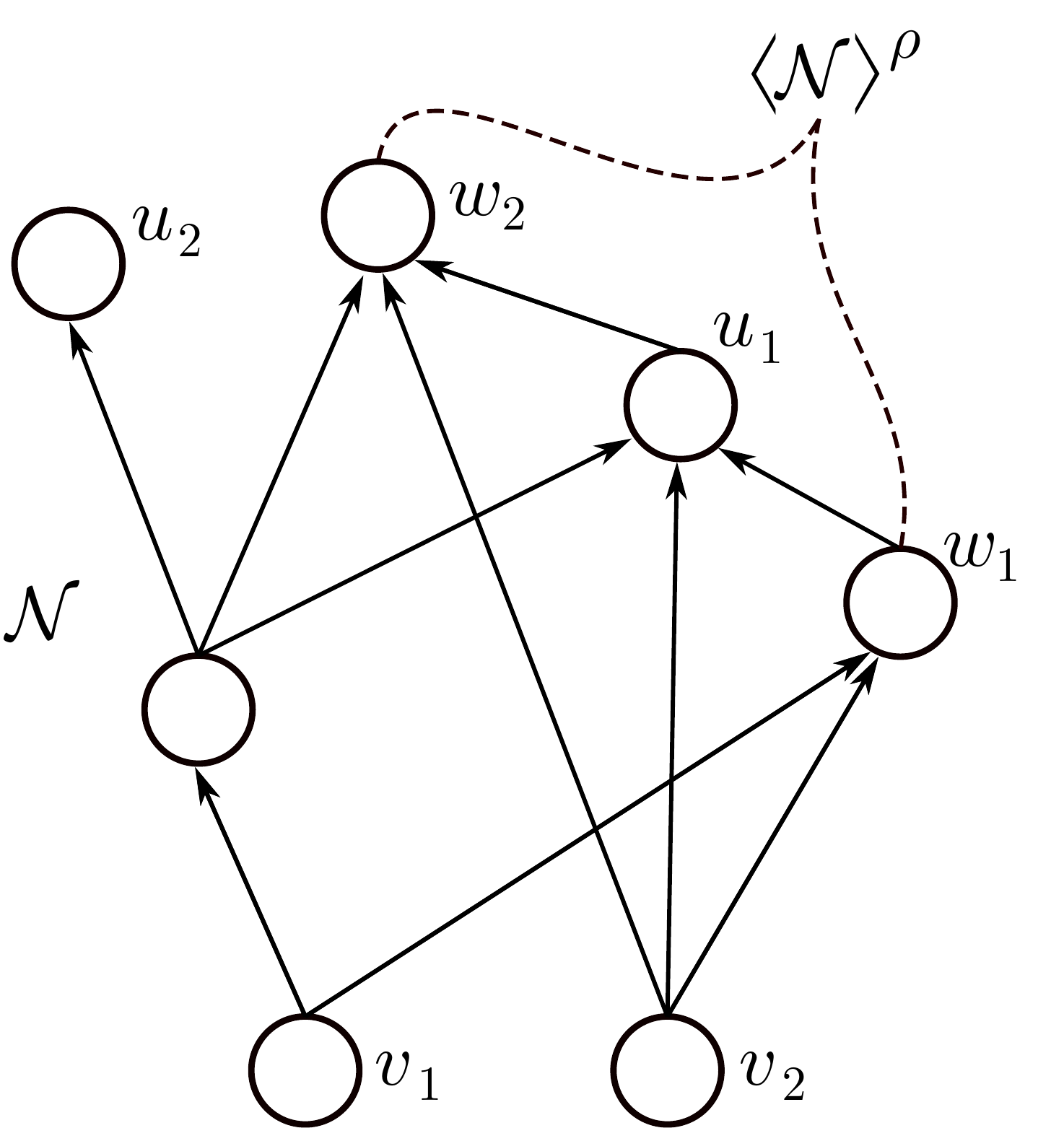}
\caption{The GFNN $\m{N}$ with one-dimensional output, input set $\{v_1,v_2\}$ and output set $\{w_1,w_2\}$. Note that $\m{N}$ is not layered as $\lvl(u_1)=2\neq 1= \lvl(v_2)+1$. This network is also degenerate due to the presence of the node $u_2$ which does not contribute to the map realized by $\m{N}$. \label{fig:nonlayered}}
\end{figure}

In order to meaningfully discuss the identifiability of GFNNs from their ouput maps, it is necessary that the networks under consideration have no spurious nodes, i.e., nodes that are ``invisible'' to the map of the network. Formally, we will require that GFNNs satisfy the following non-degeneracy property:

\begin{definition}[Non-degeneracy]\label{def:NonDeg}
We say that a GFNN $\m{N}=(V,E,\Vin,\Vout,\Omega,\Theta,\Lambda)$ with $D$-dimensional output is non-degenerate if
\begin{enumerate}[(i)]
\item $V\setminus \Vin= \anc(\Vout)\setminus \Vin$,
\item for every $w\in \Vout$, there exists an $r\in\{1,\dots,D\}$ such that $\lambda_w^{(r)}\neq 0$.
\end{enumerate}
Networks that are not non-degenerate are referred to as degenerate.
\end{definition}
\noindent Informally, a network is non-degenerate if its every non-input node ``leads up'' to at least one output node, and each output node contributes to at least one of the $D$ coordinates of the map realized by $\m{N}$. For example, the network $\m{N}$ in Figure \ref{fig:nonlayered} is degenerate. Note that non-degenerate networks are allowed to have input nodes without any outgoing edges. This is useful as we want our theory to encompass networks whose maps are constant relative to some (or all) of the inputs. An extreme but important case are the so-called \emph{trivial networks} implementing the constant zero function from $\R^{\Vin}$ to $\R^D$.

\begin{definition}[Trivial network]\label{def:TrivNet}
Let $\Vin$ be a nonempty set of nodes. We define the \emph{trivial network with $D$-dimensional output and input $\Vin$} as $\m{T}^{\, \Vin,\, D}=(\Vin,\varnothing,\Vin, \varnothing, \varnothing,\varnothing,\Lambda)$, where $\Lambda=\{\lambda^{(r)}\coleqq 0:r\in\{1,\dots,D\}\}$.
\end{definition}

\noindent Note that $\m{T}^{\, \Vin,\, D}$ is the only network with input set $\Vin$ and $D$-dimensional output of depth $0$.

We are now ready to formalize our notion of neural network identifiability. 

\begin{definition}[Identifiability]\label{def:identif}
For given $\Vin$ and $D\in\N$, let $\mathscr{N}$ be a set of non-degenerate GFNNs with $D$-dimensional output and input set $\Vin$.
Let $\rho$ be a nonlinearity, and suppose that $\sim$ is an equivalence relation on $\mathscr{N}$ such that 
\begin{equation}\label{eq:ident-compat}
\m{N}_1\sim \m{N}_2 \quad\implies\quad \OnoA{\m{N}_1}{\rho}(t)=\OnoA{\m{N}_2}{\rho}(t),\;\forall t\in \R^{\Vin}.
\end{equation}
We say that $(\mathscr{N},\rho)$ \emph{is identifiable up to }$\sim$ if, for all $\m{N}_1,\m{N}_2\in\mathscr{N}$,
\begin{equation*}
\OnoA{\m{N}_1}{\rho}(t)=\OnoA{\m{N}_2}{\rho}(t),\;\forall t\in \R^{\Vin}\quad\implies\quad  \m{N}_1\sim \m{N}_2.
\end{equation*}
\end{definition}
\noindent The equivalence relation $\sim$ thus models the ``degree of nonuniqueness'' of networks with nonlinearity $\rho$, in the sense that the relation $\sim$ partitions $\mathscr{N}$ into equivalence classes containing networks realizing the same map.  Conversely, by saying that $(\mathscr{N},\rho)$ is identifiable up to $\sim$, we mean that the equivalence class of networks realizing a given function can be inferred from the function itself.
A trivial example of such a relation is the equality relation, i.e., $\m{N}_1\sim \m{N}_2$ if and only if $\m{N}_1=\m{N}_2$. We saw in the introduction, however, that networks realizing a given function are not unique in the presence of non-trivial affine symmetries of $\rho$, and therefore in such cases $\mathscr{N}$ is not identifiable up to equality.  On the other hand, we could define an equivalence relation $\sim$ on $\mathscr{N}$ by setting $\m{N}_1\sim \m{N}_2$ if and only if $\OnoA{\m{N}_1}{\rho}=\OnoA{\m{N}_2}{\rho}$. Then $\mathscr{N}$ is, of course, identifiable up to $\sim$, but the relation $\sim$ defined in this way is not at all informative about the relationship between the structures of the networks realizing the same function. We are therefore interested in specifying the relation $\sim$ in Definition \ref{def:identif} in terms of the architecture, weights, and biases of the networks in $\mathscr{N}$ in an explicit fashion, and one would ideally like to do so for as large a class $\mathscr{N}$ of networks as possible.

To make further headway in our understanding of how $\sim$ can manifest itself for concrete nonlinearities and multi-layer networks, we again consider the case $\rho=\tanh$.
Let $\m{N}=(V,E,\Vin,\Vout,\Omega,\Theta,\allowbreak \Lambda)$ and $\m{N}'=(V',E',\Vin,\Vout',\Omega',\Theta', \Lambda')$ be non-degenerate GFNNs with $D$-dimensional output and the same input set $\Vin$. Suppose that there exist a bijection $\pi:V\to V'$ with $\pi(v)=v$, for all $v\in \Vin$, and signs $s_v\in\{-1,+1\}$, for $v\in V\setminus \Vin$, such that
\begin{itemize}[--]
\item $E'=\{(\pi(v),\pi(u)):(v,u)\in E\}$,
\item $\Vout'=\{\pi(v):v\in V\}$,
\item $\Omega'=\{\omega'_{\pi(u)\pi(v)}\coleqq s_u\omega_{uv}s_v:(v,u)\in E\}$,
\item $\Theta'=\{\theta_{\pi(v)}'\coleqq s_v\theta_v:v\in V\setminus \Vin\} $, and
\item $\Lambda'=\big\{ (\lambda^{(r)})'\coleqq \lambda^{(r)}:r\in\{1,\dots,D\}\big\}\cup \big\{(\lambda_{\pi(w)}^{(r)})'\coleqq s_{w}\lambda_{w}^{(r)}:w\in \Vout ,r\in\{1,\dots,D\} \big\}$.
\end{itemize}
We will then say that $\m{N}$ and $\m{N}'$ are \emph{isomorphic up to sign changes}, and write $\m{N}\sim_{\pm}\m{N}'$.
Owing to $\tanh(t)=-\tanh(-t)$, we have $\outmap{\m{N}}{\tanh}=\outmap{\m{N}'}{\tanh}$ whenever $\m{N}$ and $\m{N}'$ are isomorphic up to sign changes. The following question is thus natural: For which classes $\mathscr{N}$ is $(\mathscr{N},\tanh)$ identifiable up to $\sim_{\pm}$? This question was treated in the seminal paper by Fefferman \cite{Fefferman1994}, who showed that $(\mathscr{N}^{\Vin,D}_{\mrm{Feff}},\tanh)$ is identifiable up to $\sim_{\pm}$, where $\mathscr{N}^{\Vin,D}_{\mrm{Feff}}$ is the set of non-degenerate LFNNs $\m{N}=(V,E,\Vin,\Vout,\Omega,\Theta,\Lambda)$ with $D$-dimensional output and input set $\Vin$ satisfying the following structural conditions:
\begin{itemize}
\item[(F1)] $(v,u)\in E$, for all $u,v\in V$ such that $\lvl(u)=\lvl(v)+1$ (full connectivity),
\item[(F2)] $\lvl(w)=L(\m{N})$, for all $w\in \Vout$,
\item[(F3)] $\lambda^{(r)}=0$, for all $r\in\{1,\dots, D\}$, and $\Vout$ can be enumerated as $\Vout=\{w_1,\dots, w_D\}$ so that $\lambda_{w_j}^{(r)}=\delta_{jr}$, for $j,r\in\{1,\dots,D\}$, where $\delta_{jr}$ denotes the Kronecker delta,
\end{itemize}
as well as the following genericity conditions on the weights and biases:
\begin{itemize}[--]
\item[(F4)] $\theta_u\neq 0$ and $\theta_u\neq \theta_{\wtd{u}}$, for all $u,\wtd{u}\in V\setminus \Vin$ such that $u\neq \wtd{u}$ and $\lvl(u)=\lvl(\wtd{u})$, and
\item[(F5)] for all $\ell\in\{1,\dots, L(\m{N})\}$ and all $u,\wtd{u},v\in V$ so that $\lvl(v)=\ell-1$, $\lvl(u)=\lvl(\wtd{u})=\ell$, and $u\neq \wtd{u}$, we must have
\begin{equation*}
\omega_{uv}/\omega_{\wtd{u}v}\notin\left\{p/q:p,q\in\Z,\; 1\leq q\leq 100 D_\ell^2\right\},
\end{equation*}
where $D_\ell=\#\{u'\in V :\lvl(u')=\ell\}$ is the number of nodes in the $\ell$-th layer.
\end{itemize}
Fefferman's proof of the identifiability of $(\mathscr{N}^{\Vin,D}_{\mrm{Feff}},\tanh)$ up to $\sim_{\pm}$ is significant as it is the first known identification result for multi-layer networks. The proof is effected by the insight that the architecture, the weights, and the biases of a network $\m{N}\in \mathscr{N}^{\Vin,D}_{\mrm{Feff}}$ are encoded in the geometry of the singularities of the analytic continuation of $\OnoA{\m{N}}{\tanh}$.
The precise conditions (F1) -- (F5) are distilled from the proof technique so that the class of networks $\mathscr{N}^{\Vin,D}_{\mrm{Feff}}$ be as large as possible, while still guaranteeing identifiability up to $\sim_\pm$.
In the contemporary practical machine learning literature, however, a network satisfying assumptions (F1) -- (F5) would not be considered generic, as (F1) imposes a full connectivity constraint throughout the network, and (F4) implies that all biases are nonzero.
Indeed, Fefferman remarks explicitly that it would be interesting to replace (F1) -- (F5) with minimal hypotheses for layered networks. In the present paper, we address this issue and fully resolve the question of identifiability up to $\sim_\pm$ for GFNNs (and thus, in particular, for LFNNs) with the $\tanh$-nonlinearity.

The following two sections bring an informal exposition of our results leading to the resolution of neural network identifiability for the $\tanh$-nonlinearity, whereas the remainder of the paper (from Section \ref{sec:formal-NNT} onwards) is devoted to formalizing these results.

\section{A theory of identifiability based on affine symmetries}\label{sec:a-theory-of-ident}

\subsection{Canonical symmetry-induced isomorphisms and the null-net theorems}

We saw in the introduction how the symmetry $\tanh(\,\cdot\,)+ \tanh(-\, \cdot\,)=0$ of $\tanh$ leads to the equivalence relation $\sim_{\pm}$. By the same token, we will next show how the affine symmetries of a general nonlinearity $\rho$ lead to a canonical equivalence relation $\sim$ among GFNNs.
We begin by reconsidering the CReLU nonlinearity $\rho_{c}(t)=\min\{1,\max\{0,t\}\}$, both for the sake of concreteness, and because this nonlinearity, whilst of simple structure, exhibits all the phenomena we wish to address. 
We have already seen that the affine symmetry \eqref{eq:intro-symm-1} of $\rho_{c}$ leads to infinitely many distinct networks of depth 1 realizing the same map. The same symmetry can also lead to structurally different multi-layer networks realizing the same map, as illustrated by the following example. Let $\m{N}_1$, $\m{N}_2$, $\m{N}_3$, and $\m{N}_4$ be GFNNs as given schematically in Figure \ref{fig:modification}.
We then have
\begin{equation}\label{eq:examp-1-manip}
\begin{aligned}
\OnoA{\m{N}_1}{\rho_{c}}(t_1,t_2)&=\rho_{c}\big(\rho_{c}(t_1-t_2) +4 \rho_{c}(2 t_1- 2 t_2-2)\big)-{\textstyle \frac{1}{2}} \rho_{c}\big(2\rho_{c}(t_1-t_2) + 8\rho_{c}(2 t_1- 2 t_2-2)\big),\\
\OnoA{\m{N}_2}{\rho_{c}}(t_1,t_2) &=\rho_{c}\big({\textstyle \frac{1}{2}}\rho_{c}(2t_1-2t_2)+{\textstyle \frac{1}{2}}\rho_{c}(2t_1-2t_2-1) +4 \rho_{c}(2 t_1- 2 t_2-2)\big)\\
&\qquad\qquad -{\textstyle \frac{1}{2}} \rho_{c}\big(\rho_{c}(2t_1-2t_2)+\rho_{c}(2t_1-2t_2-1) + 8\rho_{c}(2 t_1- 2 t_2-2)\big),\\
\OnoA{\m{N}_3}{\rho_{c}}(t_1,t_2)&={\textstyle \frac{1}{2}}  \rho_{c}\big(\rho_{c}(2t_1-2t_2)+\rho_{c}(2t_1-2t_2-1) + 8\rho_{c}(2 t_1- 2 t_2-2)-1\big),\text{and}\\
 \OnoA{\m{N}_4}{\rho_{c}}(t_1,t_2)&={\textstyle \frac{1}{2}}  \rho_{c}\Big(\rho_{c}(2t_1-2t_2)+2 \rho_{c}\big(t_1-t_2-{\textstyle \frac{1}{2}}\big)+7 \rho_{c}(2 t_1- 2 t_2-2) -1\Big),\, \text{for $(t_1,t_2)\in\R^2$}.
\end{aligned}
\end{equation}
We now observe that $\OnoA{\m{N}_{j+1}}{\rho_{c}}=\OnoA{\m{N}_{j}}{\rho_{c}}$, for every $j\in\{1,2,3\}$, and moreover, each of these equalities can be established by performing substitutions of the affine symmetry \eqref{eq:intro-symm-1} of $\rho_{c}$ in the formal expressions \eqref{eq:examp-1-manip} of the maps $\OnoA{\m{N}_{j}}{\rho_{c}}$, $j\in\{1,2,3,4\}$.

This motivates the concept of \emph{$\rho$-modification} (to be formally introduced in Definition \ref{def:modif}) of a GFNN $\m{N}$. Suppose that an affine symmetry of $\rho$ can be used to manipulate the formal expression of $\OnoA{\m{N}}{\rho}$ as in the example above. We interpret this manipulation as a ``structural operation'' on $\mathcal{N}$ involving three distinct sets of nodes (all with a common parent set):
\begin{itemize}[--]
\item $A$, the set of nodes of $\m{N}$ to be removed,
\item $B$, the set of nodes of $\m{N}$ whose outgoing weights and output scalars are to be altered,
\item $C$, a set of newly-created nodes to be adjoined to the network.
\end{itemize}
The resulting GFNN $\m{N}'$ is called a \emph{$\rho$-modification of $\m{N}$}. We note that some of the sets $A$, $B$, and $C$ may be empty.

\begin{figure}[h!]\centering
\includegraphics[height=130mm,angle=0]{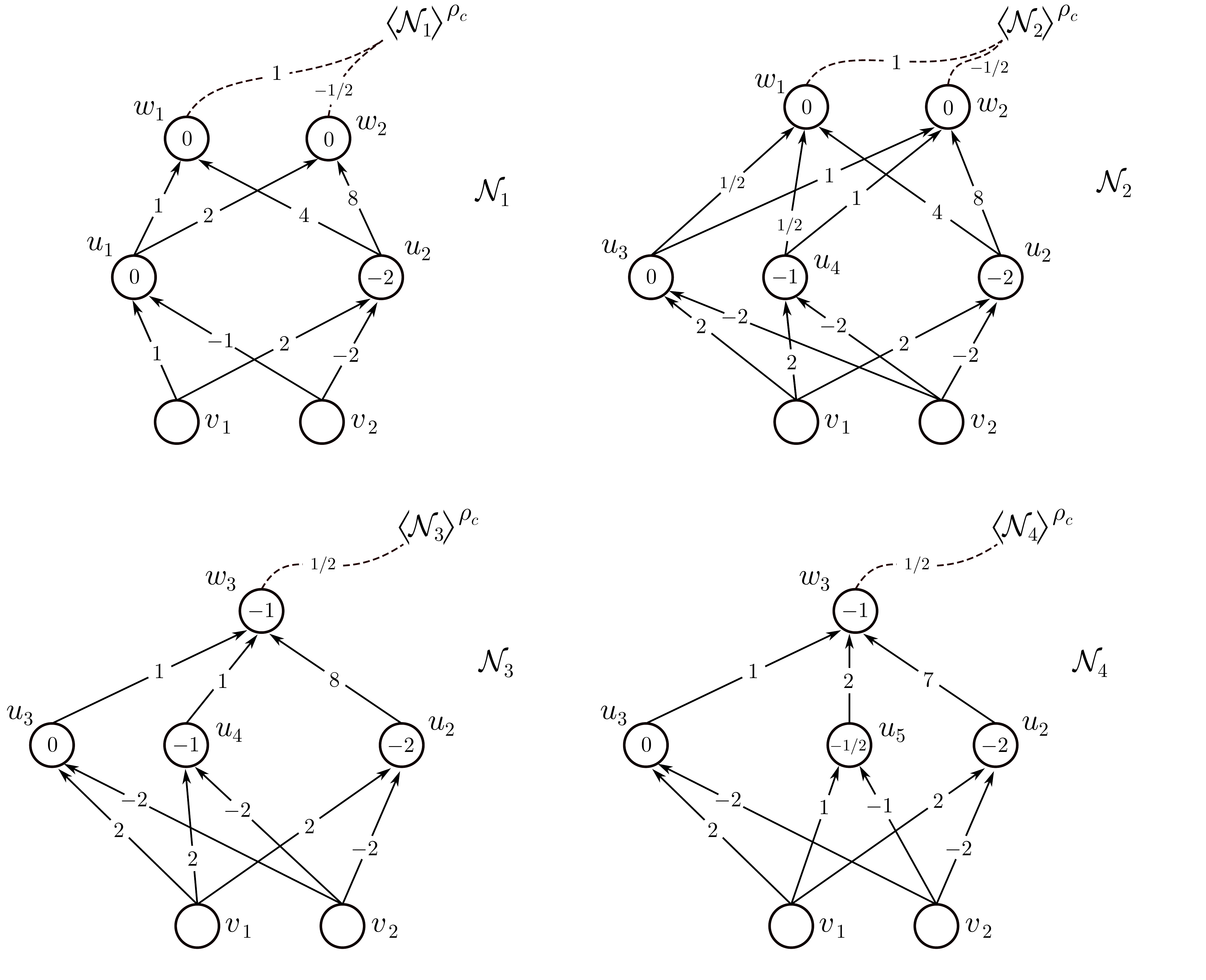}
\caption{The GFNNs $\m{N}_1$, $\m{N}_2$, $\m{N}_3$, and $\m{N}_4$. The edges are labeled by their weights, the numbers inside the nodes are their biases, and the numbers on the dashed lines are the output scalars. We have that $\m{N}_2$ is a $\rho$-modification of $\m{N}_1$ with $(A,B,C)=(\{u_1\},\varnothing,\{u_3,u_4\})$, $\m{N}_3$ is a $\rho$-modification of $\m{N}_2$ with $(A,B,C)=(\{w_1,w_2\},\varnothing,\{w_3\})$, and $\m{N}_4$ is a $\rho$-modification of $\m{N}_3$ with $(A,B,C)=(\{u_4\},\{u_2\},\{u_5\})$. \label{fig:modification}}
\end{figure}

We can thus define an equivalence relation $\rhoisom$ (to be formally introduced in Definition \ref{def:symiso}), called the \emph{$\rho$-isomorphism}, on the set $\mathscr{N}^{\Vin, D}$ of all GFNNs with $D$-dimensional output and input set $\Vin$ by letting $\m{N}\rhoisom \m{M}$ if and only if $\m{M}$ can be obtained from $\m{N}$ via a finite sequence of $\rho$-modifications. Thus, the networks $\m{N}_1$ and $\m{N}_4$ in the example above, although structurally rather different, are $\rho_c$-isomorphic.

A special case of $\rho$-modification arises if the incoming weights of several neurons $U=\{u_1,\dots,u_m\}$ of a GFNN $\m{N}$ ``line up'' with an affine symmetry of $\rho$, allowing for a $\rho$-modification with strictly fewer nodes than $\m{N}$. More precisely, suppose that a set of nodes $U$ have the same parent set $P$, and that there exist nonzero reals $\{\beta_u\}_{u\in U}$ and $\{\kappa_v\}_{v\in P}$ such that $\{\omega_{uv}\}_{v\in P}=\beta_u\{\kappa_v\}_{v\in P}$, for all $u\in U$. Assume further that $\left(\zeta,\{(\alpha_u,\beta_u,\theta_u)\}_{u\in U}\right)$ is an affine symmetry of $\rho$. Then, setting $K_P=\sum_{v\in P}\kappa_{v}\OnoA{v}{\rho}$, we have
\begin{equation}\label{eq:intro-explain-red}
\begin{aligned}
\sum_{u\in U}\alpha_u \OnoA{u}{\rho}& =\sum_{u\in U}\alpha_u\,\rho\Big(\sum_{v\in P}\omega_{uv}\OnoA{v}{\rho}\;+\theta_u \Big)=\sum_{u\in U}\alpha_u\, \rho\left(\beta_u K_P+\theta_u \right)=\zeta\,\bm{1}.
\end{aligned}
\end{equation}
Therefore, the set $\{\bm{1},\OnoA{u_1}{\rho},\dots,\OnoA{u_m}{\rho}\}$ is linearly dependent, and so $\m{N}$ admits a $\rho$-modification with $A\supset\{u_1\}$, $B= U\setminus A$, $C=\varnothing$, hence yielding a network with strictly fewer nodes than $\m{N}$. We call such a $\rho$-modification a \emph{$\rho$-reduction}. A simple example of a $\rho$-reduction is the $\tanh$-reduction of the single-layer network with the map $\tanh(\,\cdot\,)+\tanh(-\,\cdot\,)$ to the trivial network.
For a more involved example of a $\rho$-reduction, see Figure \ref{fig:reduction1}.

\begin{figure}[h!]\centering
\includegraphics[height=50mm,angle=0]{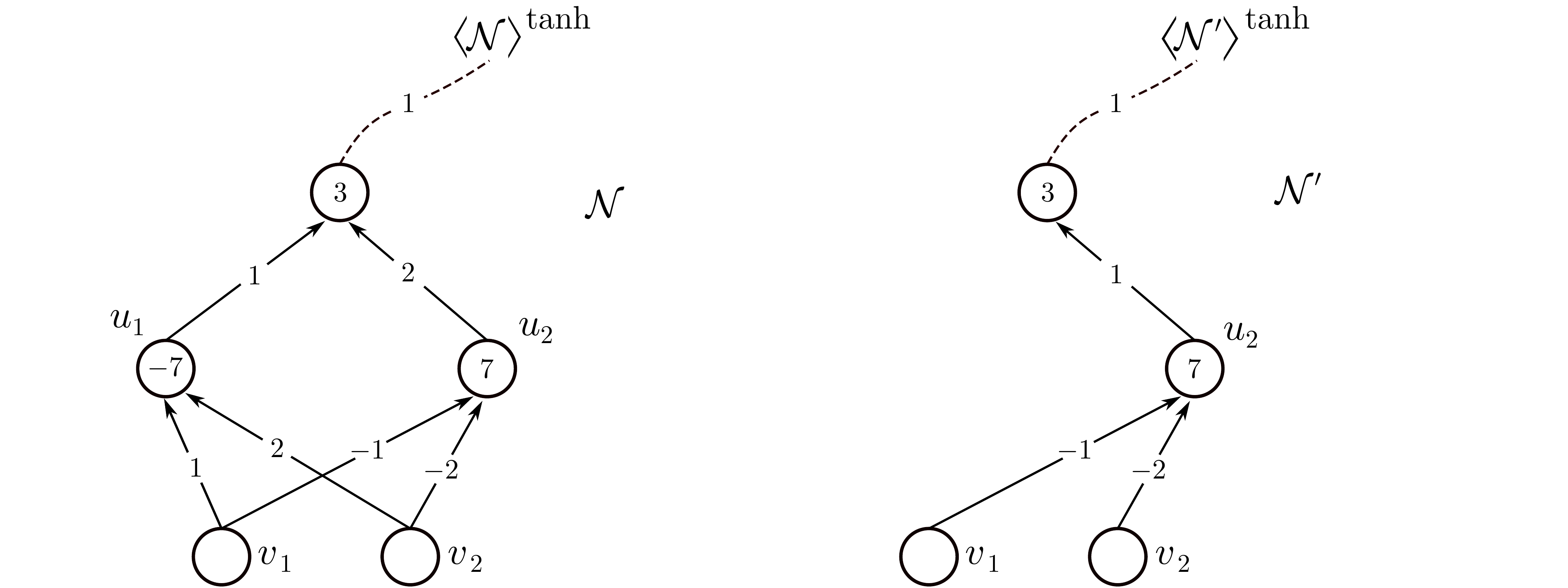}
\caption{The network $\m{N}'$ is a $\tanh$-reduction of $\m{N}$ with $(A,B,C)=(\{u_1\},\{u_2\},\varnothing)$. Concretely, we have $\OnoA{u_1}{\tanh} +\OnoA{u_2}{\tanh}=0$ (corresponding to \eqref{eq:intro-explain-red}), and thus
 $\OnoA{\m{N}}{\tanh}=\tanh\big(\OnoA{u_1}{\tanh} +2\OnoA{u_2}{\tanh} +3\big)=\tanh\big(\OnoA{u_2}{\tanh} +3\big)=\OnoA{\m{N}'}{\tanh}$, as claimed.
 \label{fig:reduction1}}
\end{figure}

A $\rho$-reduction can, in fact, yield neurons with no incoming edges. In that case, the maps of such neurons are constant, determined only by their biases, and so their values can be ``propagated through the network'' in the form of bias alteration, and the corresponding ``constant'' parts of the network can subsequently be deleted. For an example of such a $\rho$-reduction, see Figure \ref{fig:reduction2}.

\begin{figure}[h!]\centering
\includegraphics[height=50mm,angle=0]{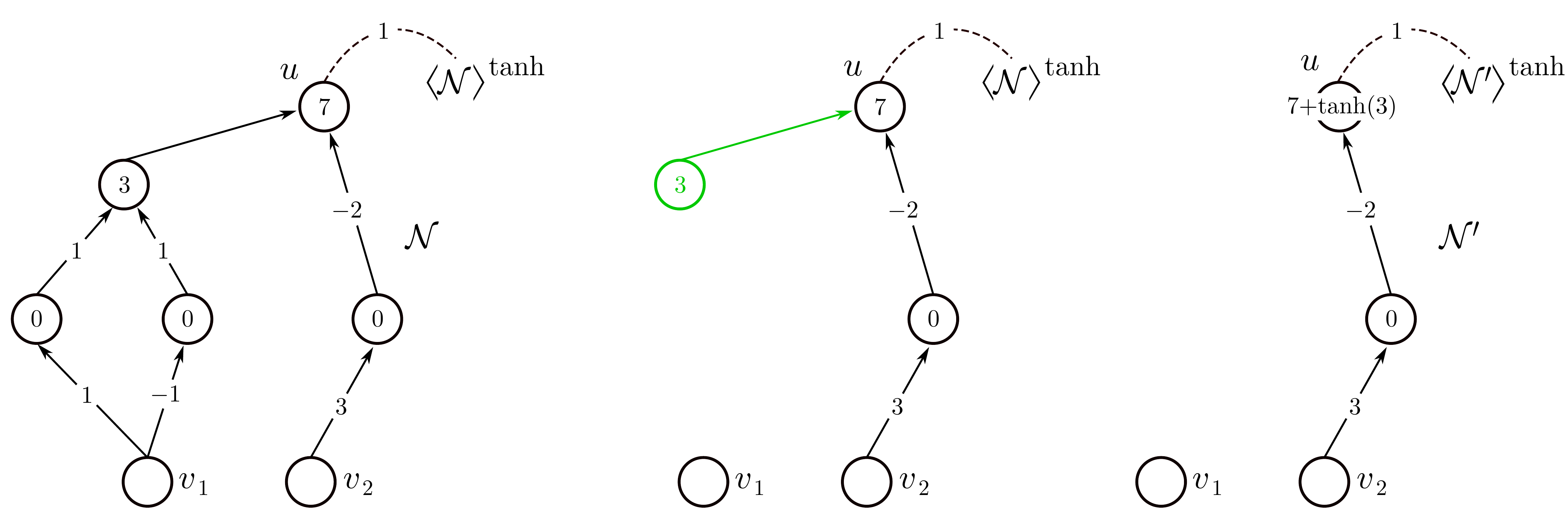}
\caption{The network $\m{N}'$ is a $\tanh$-reduction of $\m{N}$. Note that the map of the node in green is constant, taking on the value $\tanh(3)$. This value can be propagated as a bias alteration, and so the bias $7$ in the node $u$ is replaced with $7+\tanh(3)$. This example also illustrates that it is necessary to allow for input nodes without outgoing edges (the node $v_1$ in this example) in order for every network $\m{N}$ to be $\rho$-isomorphic to a regular $\m{N}'$ when reduced to ``lowest terms''. \label{fig:reduction2}}
\end{figure}

\begin{definition}
We will say that a GFNN is \emph{irreducible} if it does not admit a $\rho$-reduction, and if it is both irreducible and non-degenerate, we will say that it is \emph{regular}. 
\end{definition}

We remark that trivial networks are vacuously regular. Note that every GFNN $\m{N}$ can be reduced to ``lowest terms'' via a sequence of $\rho$-reductions, i.e., there exists a regular $\m{N}'$ such that $\m{N}\rhoisom \m{N}'$. Hence, in order to establish whether the equality $\OnoA{\m{N}}{\rho}=\OnoA{\m{M}}{\rho}$ implies $\m{N}\rhoisom \m{M}$, for $\m{N},\m{M}\in \mathscr{N}^{\Vin, \, D}$, it suffices to find regular $\m{N}'$ and $\m{M}'$ such that $\m{N}\rhoisom \m{N}'$ and $\m{M}\rhoisom \m{M}'$, and ascertain whether $\OnoA{\m{N}'}{\rho}=\OnoA{\m{M}'}{\rho}$ implies $\m{N}'\rhoisom \m{M}'$.

Therefore, in order to settle the question of identifiability up to $\rhoisom$ for all non-degenerate networks, it suffices to consider the classes $\mathscr{N}^{\Vin,\,D}_{\mrm{G}}$ and $\mathscr{N}^{\Vin,\,D}_{\mrm{L}}$ of all regular GFNNs, respectively regular LFNNs, with $D$-dimensional output and input set $\Vin$. Our first results relate the identifiability of regular networks to the following \emph{null-net condition}.

\begin{definition}[Null-net condition]\label{def:nnn-cond-intro}
Let $\rho$ be a nonlinearity and $\Vin$ a nonempty set of nodes. We say that $\rho$ satisfies the general (respectively layered) null-net condition on $\Vin$ if the only network $\m{A}\in \mathscr{N}^{\Vin,\, 1}_{\mrm{G}}$ (respectively $\m{A}\in \mathscr{N}^{\Vin,\,1}_{\mrm{L}}$) satisfying $\OnoA{\m{A}}{\rho}=0$ is the trivial network $\m{T}^{\,\Vin,\,1 }$.
\end{definition}

Definition \ref{def:nnn-cond-intro} addresses only networks with one-dimensional output, as one can easily construct identically zero networks with multi-dimensional output from identically-zero networks with one-dimensional output and vice versa.

\begin{theorem}[Null-net theorem for GFNNs]\label{thm:intro-NN-G}
Let $\rho$ be a nonlinearity. Then the class $(\mathscr{N}^{\Vin,D}_{\mrm{G}},\rho)$ of all regular GFNNs with $D$-dimensional output and input set $\Vin$ is identifiable up to $\rhoisom$ if and only if $\rho$ satisfies the general null-net condition on $\Vin$. 
\end{theorem}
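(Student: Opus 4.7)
The plan is to prove both directions of the iff. The forward direction $(\Leftarrow)$ carries the bulk of the work; the backward direction $(\Rightarrow)$ is handled by contrapositive via an explicit counterexample.

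For $(\Leftarrow)$, suppose the general null-net condition on $\Vin$ holds and let $\m{N}_1, \m{N}_2 \in \mathscr{N}^{\Vin,D}_{\mrm{G}}$ satisfy $\OnoA{\m{N}_1}{\rho} = \OnoA{\m{N}_2}{\rho}$. For each output coordinate $r \in \{1,\dots,D\}$, form a one-dimensional-output GFNN $\m{A}^{(r)}$ on the shared-input disjoint union of the node sets of $\m{N}_1$ and $\m{N}_2$, inheriting all edges, weights, and biases, and with output scalars chosen so that $\OnoA{\m{A}^{(r)}}{\rho} = \OnoA{\m{N}_1}{\rho}^{(r)} - \OnoA{\m{N}_2}{\rho}^{(r)} \equiv 0$; concretely $\lambda^{(1)} = \lambda^{(r)}_{\m{N}_1} - \lambda^{(r)}_{\m{N}_2}$, with $\lambda^{(1)}_w = \lambda^{(r)}_{w,\m{N}_1}$ for $w \in \Vout(\m{N}_1)$ and $\lambda^{(1)}_w = -\lambda^{(r)}_{w,\m{N}_2}$ for $w \in \Vout(\m{N}_2)$. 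Since every GFNN admits a finite sequence of $\rho$-reductions to a regular element of the same class, and the null-net condition forces any such regular network with zero output to be $\m{T}^{\Vin,1}$, we obtain $\m{A}^{(r)} \rhoisom \m{T}^{\Vin,1}$ for every $r$.

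The crux is translating this collapse back into a $\rho$-isomorphism between $\m{N}_1$ and $\m{N}_2$. I would induct on $|V(\m{N}_1) \setminus \Vin| + |V(\m{N}_2) \setminus \Vin|$, the base case being $\m{N}_1 = \m{N}_2 = \m{T}^{\Vin,D}$. For the inductive step, note that regularity (hence irreducibility) of both $\m{N}_1$ and $\m{N}_2$ precludes any $\rho$-reduction of $\m{A}^{(r)}$ whose node set $U$ lies wholly in $V(\m{N}_1) \setminus \Vin$ or $V(\m{N}_2) \setminus \Vin$: such a $U$ would have its parent set contained in the corresponding side and would yield a reduction of the irreducible $\m{N}_i$. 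Thus the first reduction available in $\m{A}^{(r)}$ must straddle both sides. The affine symmetry driving it can then be re-packaged as a $\rho$-modification on $\m{N}_1$ alone—excising the $V(\m{N}_1)$-side of $U$ and introducing copies of the $V(\m{N}_2)$-side with the appropriate output-scalar sign flips—yielding a network $\m{N}_1'$ with the same output map as $\m{N}_1$; after further reduction to a regular form $\widetilde{\m{N}}_1$, the combined size with $\m{N}_2$ has strictly decreased, and the induction hypothesis gives $\widetilde{\m{N}}_1 \rhoisom \m{N}_2$, whence $\m{N}_1 \rhoisom \widetilde{\m{N}}_1 \rhoisom \m{N}_2$.

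For $(\Rightarrow)$, I argue the contrapositive: suppose the null-net condition fails, so there is a non-trivial regular $\m{A} \in \mathscr{N}^{\Vin,1}_{\mrm{G}}$ with $\OnoA{\m{A}}{\rho} \equiv 0$. Lift $\m{A}$ to $\m{A}' \in \mathscr{N}^{\Vin,D}_{\mrm{G}}$ by setting the additional output coordinates to zero; this preserves regularity since the underlying DAG is unchanged and the $r=1$ coordinate certifies non-degeneracy. Then $\m{A}'$ and $\m{T}^{\Vin,D}$ both lie in $\mathscr{N}^{\Vin,D}_{\mrm{G}}$ and realize the identically-zero function, yet are structurally distinct. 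Using the invariance property of $\rho$-modifications on regular networks—namely that the $\rho$-isomorphism class of the trivial network within $\mathscr{N}^{\Vin,D}_{\mrm{G}}$ contains only itself (to be established in later sections via the explicit structure of $\rho$-modifications in Definition \ref{def:modif})—we conclude $\m{A}' \not\rhoisom \m{T}^{\Vin,D}$, violating identifiability.

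The main obstacle is the inductive step in $(\Leftarrow)$: rigorously establishing that every straddling $\rho$-reduction of the difference network $\m{A}^{(r)}$ lifts to a single $\rho$-modification on $\m{N}_1$ alone producing a net decrease in combined size once further reduced. This requires a careful case analysis based on which of the sets $A, B, C$ in the $\rho$-modification draw from $V(\m{N}_1)$ versus $V(\m{N}_2)$, and the proper handling of constant-map nodes that can emerge when all ancestors of a node in $\m{A}^{(r)}$ are excised (cf.\ Figure \ref{fig:reduction2}).
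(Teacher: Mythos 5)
Your backward direction ($\Rightarrow$, by contrapositive) is essentially the paper's argument and is fine: a non-trivial regular zero network and the trivial network realize the same function but cannot be $\rho$-isomorphic, since trivial networks admit no $\rho$-modifications. The forward direction is where the approaches diverge: the paper also argues by contrapositive (Proposition \ref{prop:nonisom->Enullnet}: non-isomorphic plus equal output maps yields a non-trivial regular zero network), and it controls the process not by a size induction but by a maximality argument on the poset of common subnetworks --- it fixes a maximal regular subnetwork $\m{M}_1$ shared between $\m{N}_1$ and some $\m{N}_3\rhoisom\m{N}_2$, and shows that any reducibility of the combined network would permit a regular modification of $\m{N}_3$ that strictly enlarges the overlap, contradicting maximality.

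The genuine gap in your version is the well-foundedness of your induction, which you yourself flag as the main obstacle. When you re-package a straddling reduction of $\m{A}^{(r)}$ with node set $U$ as a $\rho$-modification of $\m{N}_1$ alone, you delete $A\subset U\cap V(\m{N}_1)$ but must adjoin a set $C$ of fresh copies of $U\cap V(\m{N}_2)$; nothing forces $|C|\le|A|$, so $|V(\m{N}_1')\setminus\Vin|+|V(\m{N}_2)\setminus\Vin|$ can strictly increase, and if $\m{N}_1'$ is already regular no further reduction rescues the claimed strict decrease. The correct monotone quantity is the one the paper uses: the common subnetwork grows by $C$ at each step, and since everything lives inside the finite node set of $\m{N}_1$ this terminates. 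Two further points you would need: (a) under Definition \ref{def:modif} a $\rho$-reduction is \emph{not} a $\rho$-modification (the set $C$ must be nonempty), so your chain $\m{N}_1\rhoisom\m{N}_1'\rhoisom\widetilde{\m{N}}_1$ passing through reductions does not certify $\rhoisom$ as formally defined --- the paper avoids this via Proposition \ref{prop:modif-regul}, which replaces a given modification by a regularity-preserving one so that no subsequent reduction is needed; and (b) your induction has no articulated terminal case other than both networks being trivial, whereas the argument must also terminate when the output-scalar differences in \eqref{eq:morphProp-4} all vanish, which is precisely the paper's ``Claim'' forcing $\m{N}_1=\m{N}_3$ and hence the contradiction with non-isomorphism.
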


\begin{theorem}[Null-net theorem for LFNNs]\label{thm:intro-NN-L}
Let $\rho$ be a nonlinearity. Then the class $(\mathscr{N}^{\Vin,D}_{\mrm{L}},\rho)$  of all regular LFNNs with $D$-dimensional output and input set $\Vin$ is identifiable up to $\rhoisom$ if and only if $\rho$ satisfies the layered null-net condition on $\Vin$. 
\end{theorem}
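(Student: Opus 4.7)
The plan is to prove both directions of Theorem \ref{thm:intro-NN-L} separately, mirroring the template of Theorem \ref{thm:intro-NN-G} but leveraging the layered structure throughout. For the forward ("only if") direction, I argue by contrapositive: suppose the layered null-net condition on $\Vin$ fails, so that there is a non-trivial regular $\m{A}\in\mathscr{N}^{\Vin,1}_{\mrm{L}}$ with $\outmap{\m{A}}{\rho}=0$. Extend $\m{A}$ to a regular $D$-dimensional LFNN $\m{A}_D\in\mathscr{N}^{\Vin,D}_{\mrm{L}}$ with zero output, by padding the output scalars compatibly while preserving non-degeneracy. Then $\m{A}_D$ and $\m{T}^{\Vin,D}$ both realize the zero function, so to conclude that identifiability fails it suffices to show $\m{A}_D\not\rhoisom \m{T}^{\Vin,D}$. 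A direct argument: in any hypothetical $\rhoisom$-chain between the two, the initial step out of $\m{A}_D$ cannot be a $\rho$-reduction (since $\m{A}_D$ is regular), so the chain must first expand $\m{A}_D$ and only then contract all the way down; composing this "detour" yields a genuine $\rho$-reduction of $\m{A}_D$, contradicting regularity.

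For the backward ("if") direction, suppose the layered null-net condition holds and let $\m{N}_1,\m{N}_2\in\mathscr{N}^{\Vin,D}_{\mrm{L}}$ be regular with $\outmap{\m{N}_1}{\rho}=\outmap{\m{N}_2}{\rho}$. For each coordinate $r\in\{1,\dots,D\}$, form the difference LFNN $\m{D}^{(r)}$ by gluing disjoint copies of $\m{N}_1$ and $\m{N}_2$ along $\Vin$, negating the $r$-th output scalars contributed by $\m{N}_2$, and retaining only the $r$-th coordinate; then $\outmap{\m{D}^{(r)}}{\rho}=0$. Reducing $\m{D}^{(r)}$ to a regular $\m{A}^{(r)}\in \mathscr{N}^{\Vin,1}_{\mrm{L}}$ via a sequence of layered $\rho$-reductions and pruning the degenerate subnetworks that appear, the layered null-net hypothesis forces $\m{A}^{(r)}=\m{T}^{\Vin,1}$. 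Thus $\m{D}^{(r)}$ collapses entirely to the trivial network via legitimate $\rho$-reductions.

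The crux is to translate this collapse of the difference networks into an explicit $\rhoisom$-chain from $\m{N}_1$ to $\m{N}_2$. I propose an induction on the maximum depth $L=\max\{L(\m{N}_1),L(\m{N}_2)\}$, with the base case $L=0$ being immediate. For the inductive step, since any $\rho$-reduction in $\m{D}^{(r)}$ combines only co-parented nodes (which, by the layered structure, share a common level), the reduction sequence can be reordered to first eliminate all level-$L$ nodes. This top-layer elimination must pair level-$L$ nodes of $\m{N}_1$ with level-$L$ nodes of $\m{N}_2$ via atomic affine symmetries of $\rho$ in the sense of Definition \ref{def:sym}(ii), and each such pairing furnishes a $\rho$-modification step transforming the top layer of $\m{N}_1$ into that of $\m{N}_2$. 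After these steps the two networks agree at level $L$, and their depth-$(L-1)$ truncations satisfy the inductive hypothesis with the same output, closing the chain.

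The main obstacle is the "level-sorting" of the reduction sequence in the inductive step: arguing that the arbitrary reduction chain of $\m{D}^{(r)}$ can indeed be reorganized to act at level $L$ first while remaining within the layered regular category. This rests on a commutativity argument for $\rho$-reductions acting at disjoint levels, which should follow from the layered constraint, but requires careful bookkeeping. A subsidiary challenge is the bias-propagation phenomenon of Figure \ref{fig:reduction2}: when a reduction produces a node with no incoming edges, its constant value must be absorbed into the biases of its children, and for the induction to go through this absorption must be coordinated across the $\m{N}_1$- and $\m{N}_2$-sides of $\m{D}^{(r)}$ so that the resulting pairings constitute a valid $\rho$-modification between the networks themselves rather than merely between their superpositions.
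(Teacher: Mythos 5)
Your easy direction (null-net fails $\Rightarrow$ not identifiable) is fine in substance, but the ``detour'' argument for $\m{A}_D\not\rhoisom\m{T}^{\,\Vin,D}$ is both shaky and unnecessary: by Definition \ref{def:modif} a $\rho$-modification removes a nonempty set $A$ of \emph{non-input} nodes, so the trivial network admits no $\rho$-modifications at all, and hence its $\rhoisom$-class is a singleton. That one-line observation is all that is needed (and is what the paper uses).

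The hard direction (null-net condition $\Rightarrow$ identifiability) has a genuine gap at exactly the point you flag as ``the crux.'' First, your level-sorting claim runs in the wrong direction: in the glued network $\m{D}^{(r)}$, a level-$(\ell+1)$ node of the $\m{N}_1$-copy and one of the $\m{N}_2$-copy can only participate in a common $\rho$-reduction if they share a parent set, which is impossible until the level-$\ell$ nodes of the two copies have already been identified; so cross-copy reductions are forced to propagate bottom-up from $\Vin$, and the reduction chain cannot be reorganized to ``first eliminate all level-$L$ nodes.'' Second, and more fundamentally, converting a sequence of $\rho$-reductions of the superposition $\m{D}^{(r)}$ into a chain of \emph{regular $\rho$-modifications} relating $\m{N}_1$ to $\m{N}_2$ (as required by Definition \ref{def:symiso}) is precisely the hard combinatorial content of the theorem, and your sketch does not supply it: a reduction of $\m{D}^{(r)}$ merges nodes of a network that is neither $\m{N}_1$ nor $\m{N}_2$, and there is no canonical way to read off from it a modification of $\m{N}_1$ that stays regular (regularity is not preserved by arbitrary modifications). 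The paper avoids this entirely: in Proposition \ref{prop:nonisom->Enullnet} it picks a \emph{maximal} common regular subnetwork $\m{M}_1$ of $\m{N}_1$ and of some $\m{N}_3\rhoisom\m{N}_2$, forms the combined network of $\m{N}_1$ and $\m{N}_3$ only, and uses maximality of $\m{M}_1$ (together with Proposition \ref{prop:modif-regul} and Lemmas \ref{lem:modif-irred}, \ref{lem:modif-nondeg}, which produce regularity-preserving modifications) to show the combined zero-output network is \emph{already irreducible} --- so no collapse-and-translate step is ever needed; the null-net condition is then contradicted directly. Your truncation-based induction also founders on the fact that, in this paper's definition of LFNN, output nodes need not sit at the top level, so the depth-$(L-1)$ truncations of $\m{N}_1$ and $\m{N}_2$ need not realize the same map. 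As it stands, the backward direction of your proposal would need to be replaced by (or reduced to) something like Proposition \ref{prop:nonisom->Enullnet}.
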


\noindent Theorems \ref{thm:intro-NN-G} and \ref{thm:intro-NN-L} can be seen as nonlinear analogs of the rank-nullity theorem for the ``output realization'' map $\m{E}$ taking elements of the quotient set $\mathscr{N}^{\Vin,D}_{\mrm{G}}/\rhoisom$ to functions from $\R^{\Vin}$ to $\R^{D}$ via $\m{E}([\m{N}])=\OnoA{\m{N}}{\rho}$, where $[\m{N}]=\{\m{N}'\in\mathscr{N}^{\Vin,D}_{\mrm{G}}:\m{N}'\rhoisom \m{N}\}$ denotes the equivalence class of $\m{N}$. Namely,  Theorems \ref{thm:intro-NN-G} and \ref{thm:intro-NN-L} state that the solution to the equation $\m{E}([\m{N}])=f$ is unique for every $f\in \mrm{Img}(\m{E})$ if and only if the ``null-set'' of $\m{E}$, i.e., the set of solutions $[\m{A}]$ to the equation $\m{E}([\m{A}])=0$, is trivial.

\subsection{Absence of the null-net condition for the ReLU and other piecewise linear nonlinearities}

The null-net condition does not hold for various piecewise linear nonlinearities like the ReLU, the leaky ReLU, the absolute value function, or the clipped ReLU. Concretely, let $\rho_1(t)=\max\{at,t\}$, where $a\in(0,1)$, $\rho_{2}=|\cdot|$, and either $\rho_3(t)=\max\{0,t\}$ or $\rho_3=\rho_c$. Then the $\rho_j$-regular networks $\m{A}_j$ with one-dimensional output and input set $\Vin=\{v_1\}$ as depicted in Figure \ref{fig:half-line-affine} are non-trivial, and yet satisfy $\OnoA{\m{A}_j}{\rho_j}=0$, for $j\in\{1,2,3\}$. These examples can easily be extended to input sets $\Vin$ of arbitrary cardinality.

\begin{figure}[h!]\centering
\includegraphics[height=60mm,angle=0]{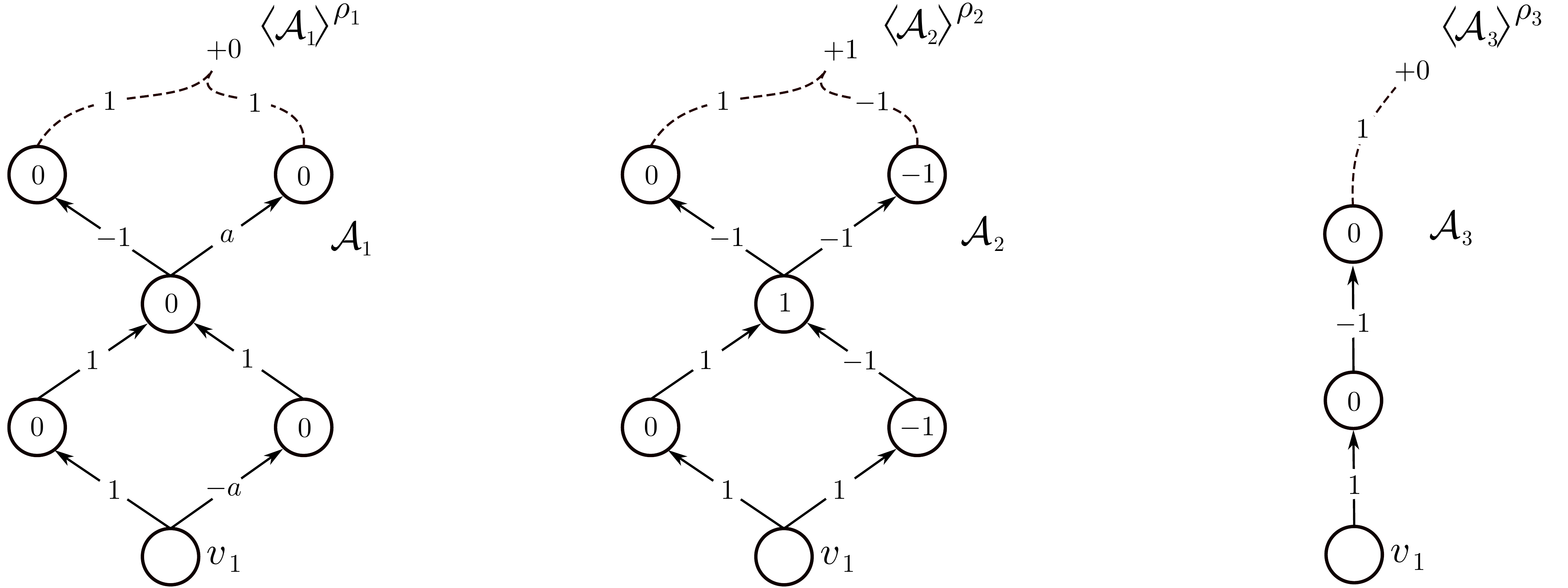}
\caption{The networks $\m{A}_j$ are $\rho_j$-regular and satisfy $\OnoA{\m{A}_j}{\rho_j}=0$, for $j\in\{1,2,3\}$. \label{fig:half-line-affine}}
\end{figure}

For these nonlinearities there exist non-$\rho$-isomorphic networks realizing the same function, indicating that the identifiability of networks with such nonlinearities is necessarily more involved. In particular, ``non-affine'' symmetries of the nonlinearity would have to be taken into account when characterizing the equivalence relation $\rho$ that is supposed to fully capture the non-uniqueness of networks realizing a given function (where, by analogy with viewing affine symmetries as single-layer zero-output networks, ``non-affine'' symmetries would correspond to multi-layer zero-output networks such as $\m{A}_1$, $\m{A}_2$, and $\m{A}_3$ in Figure \ref{fig:half-line-affine}).

\section{Identifiability for the $\tanh$ and other meromorphic nonlinearities}\label{sec:ident-for-tanh}

\subsection{Single-layer networks with the $\tanh$-nonlinearity and the simple alignment condition}

Even though both the identifiability of $(\mathscr{N}^{\Vin,D}_{\mrm{G}},\rho)$ and the null-net condition are statements quantified over all regular GFNNs (or LFNNs), and in particular over networks of arbitrarily complicated architecture, Theorems \ref{thm:intro-NN-G} and \ref{thm:intro-NN-L} allow us to shift the original question of identifiability of regular networks to a different realm where the problem will be easier to tackle by leveraging the ``fine properties'' of the nonlinearity.
Therefore, our goal will henceforth be to establish suitable sufficient conditions on nonlinearities guaranteeing that the null-net condition holds on all input sets $\Vin$.

In order to motivate our results and techniques, we demonstrate informally how the null-net condition is established for the $\tanh$ nonlinearity on a singleton input set $\{\vin\}$, and indicate in the relevant places how this argument extends to more general meromorphic nonlinearities. As the maps realized by networks with 1-dimensional output and input set $\{\vin\}$ are single-valued functions of one variable, and are defined in terms of repeated compositions of the meromorphic function $\tanh$ and affine combinations, they can be analytically continued to their natural domains in $\C$ and can therefore be studied in the context of complex analysis. This approach was pioneered by Fefferman in \cite{Fefferman1994}. 

Before continuing, we will need a concrete description of irreducibility for the $\tanh$ nonlinearity:

\begin{lemma}\label{lem:intro-tanh-irred}
A GFNN $\m{N}=(V,E,\Vin,\allowbreak \Vout,\Omega,\Theta,\Lambda)$ is irreducible with respect to the $\tanh$ nonlinearity if and only if there do not exist nodes $u_1,u_2\in V\setminus \Vin$, $u_1\neq u_2$, and an $s\in\{-1,1\}$ such that 
$P\coleqq \pre(u_1)=\pre(u_2)$,
 $\{\omega_{u_1v}\}_{v\in P}=s\{\omega_{u_2v}\}_{v\in P}$, and $\theta_{u_1}=s\,\theta_{u_2}$.
\end{lemma}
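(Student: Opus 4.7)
My plan is to recast the biconditional as \emph{$\m{N}$ admits a $\tanh$-reduction if and only if distinct nodes $u_1, u_2 \in V \setminus \Vin$ with the stated alignment exist}, then prove the two directions separately. The harder direction rests on a classification of the atomic affine symmetries of $\tanh$. For the easier (construction) direction, given $u_1, u_2$ with common parent set $P$ and $\{\omega_{u_2 v}\}_{v \in P} = s\{\omega_{u_1 v}\}_{v \in P}$, $\theta_{u_2} = s\,\theta_{u_1}$ for some $s \in \{-1, +1\}$, I would take $U := \{u_1, u_2\}$, $\kappa_v := \omega_{u_1 v}$, $\beta_{u_1} := 1$, $\beta_{u_2} := s$. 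The weight alignment $\omega_{u_j v} = \beta_{u_j}\kappa_v$ is immediate, and $\tanh(y) = s\tanh(sy)$ (trivial for $s = 1$; oddness of $\tanh$ for $s = -1$) exhibits $\bigl(0, \{(1, 1, \theta_{u_1}), (-s, s, s\theta_{u_1})\}\bigr)$ as an atomic affine symmetry of $\tanh$. The data $(U, \{\beta_u\}, \{\kappa_v\})$ together with this symmetry form a $\tanh$-reduction of $\m{N}$.

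For the extraction direction, suppose $\m{N}$ admits a $\tanh$-reduction with node set $U$, common parent set $P$, scalings $\{\beta_u\}$, $\{\kappa_v\}$, and atomic affine symmetry $(\zeta, \{(\alpha_u, \beta_u, \theta_u)\}_{u \in U})$. The conclusion reduces to the following \emph{key claim}: every atomic affine symmetry of $\tanh$ has the form $\bigl(0, \{(\alpha, \beta, \gamma), (-s\alpha, s\beta, s\gamma)\}\bigr)$ for some $s \in \{-1, +1\}$, $\alpha \neq 0$, $\beta \neq 0$, $\gamma \in \R$. Granted the claim, $U = \{u_1, u_2\}$ with $\beta_{u_2} = s\beta_{u_1}$ and $\theta_{u_2} = s\theta_{u_1}$ for some $s \in \{\pm 1\}$; since $\omega_{u_j v} = \beta_{u_j}\kappa_v$, one obtains $\omega_{u_2 v} = s\omega_{u_1 v}$ for all $v \in P$, giving the desired alignment.

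To prove the key claim, I would use meromorphic continuation. After normalizing $\beta_s > 0$ for all $s$ (via $\tanh(\beta t + \gamma) = \sgn(\beta)\tanh(|\beta|t + \sgn(\beta)\gamma)$, absorbing the sign into $\alpha_s$), the relation $\sum_{s \in \m{I}} \alpha_s \tanh(\beta_s t + \gamma_s) = \zeta$ extends to $t \in \C$. Each summand has simple poles on the vertical line $\Re(t) = -\gamma_s/\beta_s$ at heights $\pi(k + \tfrac{1}{2})/\beta_s$ with residues $\alpha_s/\beta_s$. Holomorphicity of the right-hand side forces all poles to cancel; poles on distinct vertical lines cannot cancel, so indices must be grouped by $L_s := -\gamma_s/\beta_s$, and cancellation must occur within each group. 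Setting $t = L + iy$ on the line $\Re(t) = L$ and using $\tanh(iy) = i\tan(y)$, the restricted sum becomes $i\sum_{s : L_s = L}\alpha_s \tan(\beta_s y)$, which must be bounded on $\R$. Minimality (ii) of Definition~\ref{def:sym} forces the $(\beta_s, \gamma_s)$ in $\m{I}$ to be distinct (else one could delete a duplicate index from $\m{I}$ without changing the function set). Hence within each group the $\beta_s$ are distinct. Considering the largest $\beta_s$ in a group, its pole at $y = \pi/(2\beta_{\max})$ can be cancelled only by a $\tan(\beta_s y)$-summand whose $\beta_s$ is a larger odd multiple of $\beta_{\max}$, contradicting maximality. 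Hence no group of size $\geq 2$ can exist, and a singleton group is ruled out because a single nonzero multiple of $\tan(\beta \cdot)$ is unbounded. Therefore each group must collapse, which (combined with minimality) forces $|\m{I}| = 2$, a single vertical line, coincident $(\beta, \gamma)$, $\alpha_1 + \alpha_2 = 0$, and $\zeta = 0$. Undoing the $\beta_s > 0$ normalization reinstates the parameter $s \in \{\pm 1\}$.

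The principal obstacle I anticipate is the iterative pole-cancellation step within a single vertical-line group: ruling out nontrivial cancellation amongst $\tan$-summands with distinct positive $\beta_s$ requires careful bookkeeping of residues together with a use of minimality to prohibit proper subsum cancellations. A potential shortcut is to invoke classical linear independence of $\{\tan(\beta_s \cdot) : s \in \m{I}\}$ with distinct positive $\beta_s$ modulo constants, sidestepping the explicit pole argument.
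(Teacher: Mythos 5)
Your reduction of Lemma~\ref{lem:intro-tanh-irred} to the classification of the atomic affine symmetries of $\tanh$ is exactly the paper's route: the paper obtains Lemma~\ref{lem:intro-tanh-irred} as ``a direct consequence'' of Lemma~\ref{lem:intro-tanh-signs}, which is cited from \cite{Sussman1992} rather than reproved. Both directions of your derivation from that classification are correct --- the data $(U,\{\beta_u\}_{u\in U},\{\kappa_v\}_{v\in P})$ you build in the easy direction do furnish a $(\tanh,U)$--reduction in the sense of Definition~\ref{def:reduc}, and your ``key claim'' is a faithful restatement of Lemma~\ref{lem:intro-tanh-signs}.

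Where you go beyond the paper, in sketching a proof of Sussman's classification, there is a genuine logical gap. After normalizing $\beta_s>0$ you assert that minimality forces the normalized pairs $(\beta_s,\gamma_s)$, $s\in\m{I}$, to be pairwise distinct, and you then rule out every nonempty $L$-group (size $\geq 2$ by pole cancellation, size $1$ by unboundedness of $\tan$), leaving nothing at all. But the conclusion you are aiming for is $\m{I}=\{s_1,s_2\}$ with $(\beta_{s_1},\gamma_{s_1})=(\beta_{s_2},\gamma_{s_2})$ after normalization --- that is, \emph{not} distinct. Indeed, when $\m{I}$ has only two elements, $\{s_1,s_2\}$ is not a proper subset of $\m{I}$, so item (ii) of Definition~\ref{def:sym} imposes no constraint on whether those two pairs coincide; your distinctness claim therefore fails precisely in the target case, and ``therefore each group must collapse'' does not follow from the contradiction you derived. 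To repair the argument, run it as a dichotomy: assume toward contradiction that at least two normalized pairs differ, apply the pole argument to conclude the residues attached to some distinct pair cannot all cancel, deduce that all normalized pairs agree with a single $(\beta,\gamma)$, conclude $\sum_s\alpha_s=0$ and $\zeta=0$ from the linear independence of $\tanh(\beta\cdot+\gamma)$ and $\bm{1}$, and finally invoke minimality --- read with the collection $\{\rho(\beta_s\cdot+\gamma_s):s\in\m{I}'\}$ of Definition~\ref{def:sym}(ii) interpreted as an indexed family, so that repeated entries count as dependent --- to force $|\m{I}|=2$. None of this is required for Lemma~\ref{lem:intro-tanh-irred} itself, since the paper simply cites Lemma~\ref{lem:intro-tanh-signs}, but the stand-alone key-claim argument you sketch does not close as written.
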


\noindent This result is a direct consequence of the following lemma providing an exhaustive characterization of the affine symmetries of $\tanh$.

\begin{lemma}[Sussman, {\cite[Lemma 1]{Sussman1992}}]\label{lem:intro-tanh-signs}
Every affine symmetry of $\tanh$ is either $\left(0,\!\{(\alpha,\beta,\gamma),\!(-\alpha,\!\beta,\!\gamma)\}\right)$ or $\left(0,\!\{(\alpha,\beta,\gamma),(\alpha,\!-\beta,\!-\gamma)\}\right)$, for some $\alpha,\beta\in\R\setminus\{0\}$ and $\gamma\in\R$.
\end{lemma}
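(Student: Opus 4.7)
The plan is to analytically continue the defining identity $\sum_{s\in\m{I}}\alpha_s\tanh(\beta_s t+\gamma_s)=\zeta\,\bm{1}(t)$ to the complex plane and exploit the meromorphic structure of $\tanh$ on $\C$: $\tanh$ has simple poles exactly at $\{i\pi(k+\tfrac{1}{2}):k\in\Z\}$, each with residue $1$. I would first use the minimality condition (ii) to conclude that $\alpha_s,\beta_s\neq 0$ for every $s$: an $\alpha_s=0$ lets index $s$ be dropped while preserving the identity, and a $\beta_s=0$ turns the $s$-th summand into a constant absorbable into $\zeta\,\bm{1}$; both yield a proper linearly dependent subfamily (the degenerate $|\m{I}|=1$ case with $\beta_1=0$ matches neither listed form and is therefore outside the lemma's scope). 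Sending $t\to\pm\infty$ in condition (i) and using $\tanh(\beta_s t+\gamma_s)\to\sgn(\beta_s)$ then yields $\zeta=0$ together with $\sum_s\alpha_s\sgn(\beta_s)=0$.

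Next, I would partition the indices into equivalence classes (``cosets'') under $(\beta,\gamma)\sim(-\beta,-\gamma)$. Since $\tanh(-\beta t-\gamma)=-\tanh(\beta t+\gamma)$, every term whose $(\beta_s,\gamma_s)$ lies in coset $C=\{(\beta,\gamma),(-\beta,-\gamma)\}$ is a scalar multiple of $\tanh(\beta t+\gamma)$, so the symmetry collapses to $\sum_C d_C\tanh(\beta_C t+\gamma_C)=0$, where $d_C=\sum_{s:(\beta_s,\gamma_s)=(\beta,\gamma)}\alpha_s-\sum_{s:(\beta_s,\gamma_s)=(-\beta,-\gamma)}\alpha_s$. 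The central analytic step is to show that the representatives $\tanh(\beta_C\,\cdot\,+\gamma_C)$, one per coset, are linearly independent. I would prove this by pole analysis: two distinct cosets either have disjoint pole lines $\{\Re(z)=-\gamma/\beta\}$ and cannot cancel each other's poles, or they share a pole line but differ in $|\beta|$, in which case a translation $t\mapsto t-(-\gamma/\beta)$ reduces to poles on the imaginary axis with distinct spacings $\pi/|\beta|$; the pole closest to the origin is then uniquely contributed by the term with the largest magnitude $\beta^*$ (any other term would need $|\beta|/\beta^*$ to be a positive odd integer strictly less than $1$, which is impossible), forcing its coefficient to vanish, and iterating downward forces all $d_C=0$.

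Combined with the minimality condition, this gives the classification in two stages. First, if two distinct cosets $C_1\neq C_2$ were both represented in $\m{I}$ with nonzero coefficients, then, by $d_{C_j}=0$, the indexed subfamily on each $\m{I}_j=\{s:\text{cos}(s)=C_j\}$ would satisfy a nontrivial linear relation (nontrivial since every $\alpha_s\neq 0$), violating (ii) because both $\m{I}_j$'s are proper subsets. Hence all indices come from a single coset $C=\{(\beta,\gamma),(-\beta,-\gamma)\}$. Letting $n^+$ and $n^-$ count the indices with $(\beta_s,\gamma_s)=(\beta,\gamma)$ and $(-\beta,-\gamma)$ respectively, I observe that $n^+\geq 2$ would produce two identical functions $\tanh(\beta\,\cdot\,+\gamma)$ supplying a linearly dependent indexed family on those two indices, hence a violation of (ii) unless $|\m{I}|=n^+=2$; the symmetric statement handles $n^-\geq 2$, and $|\m{I}|=1$ is impossible because $\alpha_1\tanh(\beta_1 t+\gamma_1)\not\equiv 0$. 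Thus $|\m{I}|=2$ with $(n^+,n^-)\in\{(2,0),(0,2),(1,1)\}$.

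Solving $d_C=0$ in each remaining configuration yields the announced forms: cases $(2,0)$ and $(0,2)$ give $\alpha_2=-\alpha_1$, corresponding (after relabeling $\beta\mapsto-\beta$ and $\gamma\mapsto-\gamma$ in the $(0,2)$ case) to the first listed form $(0,\{(\alpha,\beta,\gamma),(-\alpha,\beta,\gamma)\})$, while case $(1,1)$ gives $\alpha_2=\alpha_1$, corresponding to the second listed form $(0,\{(\alpha,\beta,\gamma),(\alpha,-\beta,-\gamma)\})$. The main obstacle I anticipate is the pole-analysis step establishing independence of the coset representatives: one must simultaneously keep track of multiple pole lines \emph{and} of several distinct magnitudes $|\beta|$ within a single line, and then apply the minimality condition (ii) in the ``indexed family'' reading, so that two indices with identical $(\beta_s,\gamma_s)$ count as a linearly dependent subfamily even though the underlying set of functions collapses to a singleton.
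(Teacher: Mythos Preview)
The paper does not supply its own proof of this lemma; it is attributed to Sussman and stated without argument, and the machinery developed later for $\Sigma_{a,b}$-nonlinearities (the SAC and CAC) does not retroactively yield a classification of affine symmetries either. Your pole-analysis approach is the natural one and is correct. The central step---linear independence of the coset representatives $\tanh(\beta_C\,\cdot\,+\gamma_C)$ under $(\beta,\gamma)\sim(-\beta,-\gamma)$---is handled cleanly by your two-case argument: distinct pole lines cannot interact, and on a common pole line the term of maximal $|\beta|$ uniquely owns the pole nearest the real axis, so an induction on the number of cosets forces all $d_C=0$. The subsequent reduction to a single coset with $|\mathcal{I}|=2$ via the minimality condition then goes through.

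Your flagged concern about the ``set'' versus ``indexed family'' reading of condition (ii) in Definition~\ref{def:sym} is a genuine observation about the paper's formulation rather than a defect in your argument. Under a strict set reading, a four-element configuration such as $\{(\alpha,\beta_1,\gamma_1),(-\alpha,\beta_1,\gamma_1),(\alpha',\beta_2,\gamma_2),(-\alpha',\beta_2,\gamma_2)\}$ with the two pairs $(\beta_j,\gamma_j)$ in distinct cosets would formally satisfy both (i) and (ii) yet not match either listed form. Since the paper's own trivial-symmetry example $\big(0,\{(\alpha,\beta,\gamma),(-\alpha,\beta,\gamma)\}\big)$ already exhibits repeated $(\beta,\gamma)$ values, the indexed-family reading you adopt is clearly the intended one, and under it your argument is complete. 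The degenerate $|\mathcal{I}|=1$ cases you note are likewise an imprecision in the lemma's phrasing rather than a gap in your proof.
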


\noindent Concretely, this says that the only affine symmetries of $\tanh$ are the ``trivial'' and the ``odd'' symmetries. As a result, $\tanh$-modification of a regular network corresponds to either leaving the network intact (if substituting the trivial symmetry), or flipping the signs of the bias and the incoming and outgoing weights of a single neuron (if substituting the odd symmetry).

Going back to establishing the null-net condition for $\tanh$ on the input set $\{\vin\}$, we first consider the single-layer case. Concretely, let $\m{N}$ be a regular GFNN with 1-dimensional output, input set $\{\vin\}$, and $L(\m{N})=1$. Enumerating the non-input nodes of $\m{N}$ as $\{u_1,\dots, u_{D_1}\}$, we have
\begin{equation*}
\OnoA{\m{N}}{\tanh}(t)=\lambda^{(1)}\;+ \sum_{j=1}^{D_1} \lambda^{(1)}_{u_j}\,\tanh(\omega_{u_j\vin}t\,+\, \theta_{u_j}), \quad \text{for }t\in \R,
\end{equation*}
where $\lambda^{(1)}_{u_j}\neq 0$, for all $j\in\{1,\dots, D_1\}$, as $\m{N}$ is non-degenerate. We aim to show that $\OnoA{\m{N}}{\tanh}$ cannot be identically zero. Then, as $\OnoA{\m{N}}{\tanh}$ can be analytically continued to a meromorphic function on $\C$, it suffices to show that its set of poles $P\subset \bigcup_{j=1}^{D_1} \omega_{u_j\vin}^{-1}\! \left(-\,\theta_{u_j} + i \pi\big(\Z +\frac{1}{2}\big) \right)$ is nonempty. To this end, let $P_j=\omega_{u_j\vin}^{-1}\! \left(-\,\theta_{u_j} + i \pi\big(\Z +\frac{1}{2}\big)\right)$ be the set of poles of $\tanh(\omega_{u_j\vin}\cdot \,+\, \theta_{u_j})$, for $j\in\{1,\dots, D_1\}$, and consider the set
\begin{equation*}
\m{J}=\{j\in\{1,\dots, D_1\}: \; \omega_{u_j\vin}^{-1}\! \theta_{u_j}= \omega_{u_1\vin}^{-1}\! \theta_{u_1},\;  \omega_{u_j\vin}/\omega_{u_1\vin}\in\Q  \}
\end{equation*}
of indices $j$ for which the functions $\tanh(\omega_{u_j\vin}\cdot \,+\, \theta_{u_j})$ and $\tanh(\omega_{u_1\vin}\cdot \,+\, \theta_{u_1})$ have common poles.
Now, assume by way of contradiction that $P\cap \bigcup_{j\in \m{J}}P_j=\varnothing$, and set 
\begin{equation*}
\beta=\max_{k\in \m{J}} |\omega_{u_{k}\vin}|\quad\text{and}\quad \m{J}_{\mrm{max}}=\{j\in\m{J}: |\omega_{u_j\vin}|= \beta\}.
\end{equation*}
Then $\#(\m{J}_{\mrm{max}})\geq 2$, as $\m{J}_{\mrm{max}}=\{j^*\}$ being a singleton would imply that $\OnoA{\m{N}}{\tanh}$ has a pole at $\beta^{-1}\big(\!-\theta_{u_{j^*}} + \frac{i \pi}{2}\big)\in P_{j^*}$, contradicting the assumption $P\cap \bigcup_{j\in \m{J}}P_j=\varnothing$. We hence deduce that there exist distinct $j_1,j_2\in\m{J}_{\mrm{max}}$. Then $ |\omega_{u_{j_1}\vin}|=|\omega_{u_{j_2}\vin}|$ and $\omega_{u_{j_1}\vin}^{-1}\! \theta_{u_{j_1}}= \omega_{u_{j_2}\vin}^{-1}\!\theta_{j_2}$, which, by Lemma \ref{lem:intro-tanh-irred},  stands in contradiction to the irreduciblity of $\m{N}$.

This establishes that $\OnoA{\m{N}}{\tanh}$ has a pole $p\in\bigcup_{j\in \m{J}} P_j$, which suffices to conclude that $\OnoA{\m{N}}{\tanh}$ cannot be identically zero. Before proceeding to the multi-layer case, it will be opportune to continue the argument above and prove a stronger statement, namely that the set $P$ of poles  of $\OnoA{\m{N}}{\tanh}$ is unbounded.
To this end, write $\OnoA{\m{N}}{\tanh}= \lambda^{(1)} +  f_1 + f_2$, where
\begin{equation*}
f_1\coleqq \sum_{j\in \m{J}} \lambda^{(1)}_{u_j}\,\tanh(\omega_{u_j\vin}\cdot \,+\, \theta_{u_j}) \quad\text{and}\quad  f_2\coleqq \sum_{j\in \{1,\dots,D_1\}\setminus \m{J} } \lambda^{(1)}_{u_j}\,\tanh(\omega_{u_j\vin}\cdot \,+\, \theta_{u_j}).
\end{equation*}
Note that the sets of poles of $f_1$ and $f_2$ are disjoint (as $P_j\cap P_k=\varnothing$, for all $j\in \m{J}$ and $k\in  \{1,\dots,D_1\}\setminus \m{J}$), and hence $p$ must be a pole of $f_1$.
What is more, as $\omega_{u_j\vin}/\omega_{u_1\vin}\in\Q$, for all $j\in \m{J}$, there exists a $T\in\R$ such that $\omega_{u_j\vin}T/\pi \in \Z$, for all $j\in\m{J}$, and so $f_1$ is $iT$-periodic, further implying that $p+iTk$ is a pole of $f_1$, for every $k\in\Z$. Therefore,
$P\supset \{ p+iTk: k\in\Z\}$, and so $P$ is unbounded. This argument leads to the following alignment condition for the $\tanh$ nonlinearity.

\begin{definition}[Simple alignment condition]\label{def:SAC}
Let $\sigma$ be a meromorphic nonlinearity on $\C$. We say that $\sigma$ satisfies the \emph{simple alignment condition} (SAC) if the following implication holds for all finite sets of triples $\{(\alpha_s,\beta_s,\gamma_s)\}_{s\in\m{I}}\subset \R \times\R\times \R$:
\begin{equation*}
\text{the set of poles of  $f\coleqq \sum_{s\in\m{I}}\alpha_s\,\sigma(\beta_s\cdot\,+\,\gamma_s)$ is bounded} \quad\implies\quad f\text{ is constant on }\C.
\end{equation*}
\end{definition}
 
\subsection{Multi-layer networks with the $\tanh$-nonlinearity and the composite alignment condition}
 
We are now ready to proceed to the multi-layer case of our argument establishing the null-net condition for $\tanh$ on $\{\vin\}$. More specifically, we will show how the ``nonemptiness of the pole set'' property can be extended to multi-layer networks by induction on depth. This will then immediately imply that the maps of these networks cannot be identically zero, establishing the null-net condition for $\tanh$ on the singleton input set $\{\vin\}$.
Our discussion will reveal a sufficient condition (the \emph{composite alignment condition}) for this inductive argument to generalize to arbitrary meromorphic nonlinearities with simple poles only, which, together with the SAC, will allow us to establish the null-net condition for meromorphic nonlinearities more general than $\tanh$.

It will be of interest to consider the maximal domain in $\C$ to which the map $\OnoA{\m{N}}{\tanh}$ of a non-trivial regular GFNN  $\m{N}$ can be analytically continued.
Even though for a general holomorphic function there may not exist a unique maximal set to which it can be analytically continued (consider, for instance, the function $z\mapsto \sqrt{1+z^2}$), this is the case for holomorphic functions defined on a domain with countable complement in $\C$ (a property the map $\OnoA{\m{N}}{\tanh}$ will be shown to possess). We thus have the following definition.
\begin{definition}[Natural domain]\label{def:nat-dom}
Suppose $f:\dom\to \C$ is a holomorphic function on a domain with countable complement in $\C$. The natural domain of $f$ is the unique maximal set $\dom_f\supset \dom$ with respect to set inclusion to which $f$ can be analytically continued.
\end{definition}
\noindent The existence of a unique maximal set $\dom_f$ in Definition \ref{def:nat-dom} is formally justified by \cite[Lemma III.A.1]{Fefferman1994}.

Now, let $\m{N}=(V,E,\{\vin\},\Vout,\Omega,\Theta,\Lambda)$ be a non-trivial regular GFNN with 1-dimensional output of depth $L(\m{N})\geq 2$, and, for every non-trivial regular GFNN $\m{N}'$ with 1-dimensional output, input set $\{\vin\}$, and depth $L(\m{N}')< L(\m{N})$, assume that
\begin{enumerate}[--]
\item $\OnoA{\m{N}'}{\tanh}$ can be analytically continued to a domain with countable complement in $\C$ and
\item the set of simple poles of $\OnoA{\m{N}'}{\tanh}$ is nonempty.
\end{enumerate}
We aim to show that the set of simple poles of $\OnoA{\m{N}}{\tanh}$ is nonempty under these assumptions. To this end, first note that we can write
\begin{equation}\label{eq:intro-simplified-1}
\OnoA{\m{N}}{\tanh}(z)=f(z) + \sum_{w\in \Vout^{>1}} \lambda^{(1)}_w\,  \tanh\big(\OnoA{\m{N}_w}{\tanh}(z)\big),
\end{equation}
where $\m{N}_w$, for $w\in \Vout^{>1}\coleqq\{w\in \Vout:\lvl(w)>1\}$, are non-trivial regular GFNNs with input set $\{\vin\}$ and depth $L(\m{N}_w)< L(\m{N})$, and $f:\dom_f\to \C$ is a meromorphic function given by
\begin{equation*}
f(z)=\lambda^{(1)}\;+ \sum_{\substack{w\in \Vout \\ \lvl(w)=1 }} \lambda^{(1)}_{w}\,\tanh(\omega_{w\vin}z\,+\, \theta_{w}).
\end{equation*}
One can show that \eqref{eq:intro-simplified-1} holds for $z$ in an open set with countable complement in $\C$ (see Lemma \ref{lem:nat-dom}), and so the natural domain $\dom_{\OnoA{\m{N}}{\tanh}}$ of $\OnoA{\m{N}}{\tanh}$ is well-defined.
Write  $P_w$ for the set of poles of $\OnoA{\m{N}_{w}}{\tanh}$, for $w\in\Vout^{>1}$. Now, fix a $w^*\in \Vout^{>1}$ and a $p\in P_{w^*}$, and set $\Vout^*=\{w\in \Vout^{>1}: p\in P_w\}$. We make the following assumption:
\begin{equation}\label{eq:non-ess-cancel}
\text{$\OnoA{\m{N}_w}{\tanh}$ is analytic at $p$, for all $w\in \Vout^{>1}\setminus \Vout^*$.}
\end{equation}
Next, note that, for $w\in \Vout^*$, as $p$ is a simple pole of $\OnoA{\m{N}_w}{\tanh}$, we can write
\begin{equation}\label{eq:intro-simplified-2}
\OnoA{\m{N}_w}{\tanh}(z)=\frac{\beta_{w}}{z-p} + \gamma_{w} + \epsilon_{w}(z-p),
\end{equation}
for $z$ in an open neighborhood of $p$, where $\beta_{w}\in \C\setminus\{0\}$, $\gamma_{w}\in \C$, and $\epsilon_{w}:\dom_{\epsilon_{w}}\to \C$ is a function holomorphic on a domain $\dom_{\epsilon_{w}}$ with countable complement in $\C$ and such that $\epsilon_{w}(0)=0$. Using \eqref{eq:intro-simplified-2} in \eqref{eq:intro-simplified-1} and performing the variable substitution $z'=\frac{1}{z-p}$ then yields
\begin{equation}\label{eq:intro-simplified-3}
\OnoA{\m{N}}{\tanh}\left(p+1/z'\right)=(f+g)\left(p+1/z' \right)+ \sum_{w\in \Vout^*} \lambda^{(1)}_w\,  \tanh\left(\beta_{w} \,z' +\gamma_w +\epsilon_{w}(1/z') \right),
\end{equation}
for all $z'\in \C$ of sufficiently large modulus, where
\begin{equation*}
g(z)=\sum_{w\in \Vout^{>1}\setminus \Vout^*} \lambda^{(1)}_w\,  \tanh\big(\OnoA{\m{N}_w}{\tanh}(z)\big)
\end{equation*}
is analytic on a punctured neighborhood of $p$ owing to the assumption \eqref{eq:non-ess-cancel}. Then, according to \eqref{eq:intro-simplified-3}, $p$ will be a cluster point of simple poles of $\OnoA{\m{N}}{\tanh}$, unless the set of poles of
\begin{equation}\label{eq:intro-simplified-4}
z'\; \longmapsto \;  \sum_{w\in \Vout^*} \lambda^{(1)}_w\,  \tanh\left(\beta_{w} \, z' +\gamma_w +\epsilon_{w}(1/z') \right)
\end{equation}
is bounded. Therefore, if we can guarantee that
\begin{enumerate}[(i)]
\item there exists a $p\in P_{w^*}$ satisfying \eqref{eq:non-ess-cancel}, and 
\item the set of poles of the function \eqref{eq:intro-simplified-4} is unbounded,
\end{enumerate}
then we will be able to conclude that the set of simple poles of $\OnoA{\m{N}}{\tanh}$ is nonempty, as desired.
Item (i) can be established by more careful bookkeeping of the clusters of poles already formed in $\OnoA{\m{N}_w}{\tanh}$, for $w\in \Vout^{>1}$, whereas (ii) will be a consequence of the {composite alignment condition} introduced next. 

\begin{definition}[Asymptotic bias compensator]\label{def:ABC}
An \emph{asymptotic bias compensator} (ABC) is a holomorphic function $\epsilon:\dom_{\epsilon}\to\C$ such that $\C\setminus\dom_{\epsilon}$ is closed and countable, $0\in \dom_{\epsilon}$, and $\epsilon(0)=0$.
\end{definition}

\begin{definition}[Composite alignment condition]\label{def:CAC}
Let $\sigma$ be a meromorphic nonlinearity on $\C$ with infinitely many simple poles and no poles of higher order. We say that $\sigma$ satisfies the \emph{composite alignment condition} (CAC) if the following implication holds for all nonempty finite sets of triples $\{(\alpha_s,\beta_s,\gamma_s)\}_{s\in\m{I}}\subset \C\times(\C\setminus \{0\})\times \C $ and all sets $\{\epsilon_s\}_{s\in\m{I}}$ of ABCs:
\begin{equation}\label{eq:cac-impl} 
\begin{aligned}
&\text{the set of poles of  $\sum_{s\in\m{I}}\alpha_s\,\sigma\left(\beta_s \cdot\,+\, \gamma_s+\epsilon_s(1/\cdot)\right)$ is bounded}\\
\implies \quad  &\exists\, \text{ nonempty }\m{I}'\subset\m{I} \quad \text{ s.t. }\quad  \beta_{s_1}^{-1}\epsilon_{s_1}=\beta_{s_2}^{-1}\,\epsilon_{s_2},\, \forall s_1,s_2\in\m{I}', \text{ and}\\
&  \text{the set of poles of $\sum_{s\in\m{I}'}\alpha_s\,\sigma\left(\beta_s \cdot\,+\,\gamma_s\right)$ is bounded.}
\end{aligned}
\end{equation}
\end{definition}

To see why item (ii) above follows from the CAC, assume by way of contradiction that the set of poles of the function \eqref{eq:intro-simplified-4} is bounded. Then, by the CAC, there exists a nonempty $U\subset \Vout^*$ such that 
$\beta_{w_1}^{-1}\epsilon_{s_1}=\beta_{w_2}^{-1}\epsilon_{w_2}$, for all $w\in U$, and the set of poles of 
\begin{equation}\label{eq:intro-simplified-5}
f_U\coleqq \sum_{w\in U}  \lambda^{(1)}_w\, \sigma\left(\beta_w \cdot \,+\,\gamma_w \right)
\end{equation}
is bounded. This together with \eqref{eq:intro-simplified-2} implies that
\begin{equation*}
\beta_{w_1}^{-1 }\OnoA{\m{N}_{w_1}}{\tanh}-\beta_{w_2}^{-1} \OnoA{\m{N}_{w_2}}{\tanh}=\left(\beta_{w_1}^{-1}\gamma_{w_1}-\beta_{w_2}^{-1}\gamma_{w_1}\right)\bm{1}
\end{equation*}
is constant, for all $w_1,w_2\in U$. Now, unless 
\begin{equation}\label{eq:intro-simplified-6}
\begin{aligned}
Y\coleqq \pre_{\m{N}}(w_1) &= \pre_{\m{N}}(w_2)\quad\text{and }\\
\beta_{w_1}^{-1}\{\omega_{w_1u}\}_{u\in Y}&= \beta_{w_2}^{-1}\{\omega_{w_2u}\}_{u\in Y},
\end{aligned}
\end{equation}
 for all $w_1,w_2\in U$, it would be possible to find distinct $w_1',w_2'\in U$ and construct a non-trivial regular GFNN $\m{N}'$ with 1-dimensional output, input set $\{\vin\}$, and depth $L(\m{N}')< L(\m{N})$ such that $\OnoA{\m{N}'}{\tanh}= \beta_{w_1'}^{-1 }\OnoA{\m{N}_{w_1'}}{\tanh}-\beta_{w_2'}^{-1} \OnoA{\m{N}_{w_2'}}{\tanh}$ is constant, which would contradict the assumption that the set of simple poles of $\OnoA{\m{N}'}{\tanh}$ is nonempty. Therefore, \eqref{eq:intro-simplified-6} must hold, which  will further imply the existence of a $\vartheta\in \R$ and a $c\in\C$ such that $\beta_w e^{-i\vartheta}\in\R$, for all $w\in U$, and
 \begin{equation}\label{eq:intro-simplified-7}
\sum_{w\in U} \lambda^{(1)}_w\,  \sigma\big(\beta_{w} e^{-i\vartheta} \, \cdot  +\,\theta_w \big)=f_U(e^{-i\vartheta}\,\cdot\, + c).
\end{equation}
As the set of poles of $f_U$ is bounded and $\beta_w e^{-i\vartheta}\in\R$, for all $w\in U$, the SAC for $\sigma$ now implies that the function \eqref{eq:intro-simplified-7} must be constant. However, this and \eqref{eq:intro-simplified-6} together contradict the irreducibility of $\m{N}$, establishing that the set of poles of  \eqref{eq:intro-simplified-4} must be unbounded.

Finally, it remains to justify why $\tanh$ satisfies the CAC. To this end, we first need to define and analyze several concepts related to densities of  subsets of $\C$. These will be used to characterize the geometric relationship between the poles of the summand functions in \eqref{eq:cac-impl}. 

\begin{definition}\label{def:dens}[Line, arithmetic sequence, and density]
\begin{enumerate}[(i)]
\item A \emph{line} in $\C$ is a set of the form $\ell=\{x+ty:t\in\R\}$, where $x\in\C$ and $y\in\C\setminus\{0\}$.
\item An \emph{arithmetic sequence} in $\C$ is a set of the form $\Pi=\{x+ky:k\in\Z\}$, where $x\in\C$ and $y\in\C\setminus\{0\}$.
\item For an arbitrary set $F\subset \C$, a discrete set $P\subset\C$, and $\varepsilon>0$, we set
\begin{equation*}
\Delta_\varepsilon(F,P)=\limsup_{N\to\infty}\frac{1}{2N}\, \#\{p\in P: |p|\leq N,\;\exists\, q\in F \text{ s.t. } |p-q|\leq \varepsilon\},
\end{equation*}
and we define the \emph{asymptotic density of $P$ along $F$} by
\begin{equation*}
\Delta(F,P)=\lim_{\varepsilon\to 0}\Delta_\varepsilon(F,P)=\inf_{\varepsilon>0}\Delta_\varepsilon(F,P).
\end{equation*}
\end{enumerate}
\end{definition}
\noindent  Note that the limit as $\varepsilon\to 0$ in the previous definition always exists, as $\Delta_\varepsilon(F,P)$ is an increasing function of $\varepsilon$.  Furthermore, as the limit superior is subadditive, so is the asymptotic density, specifically,
\begin{equation*}
\Delta(F,P_1\cup P_2)\leq\Delta(F,P_1)+\Delta(F,P_2), 
\end{equation*}
for $F\subset \C$ and discrete $P_1,P_2\subset\C$.

Now, assume that the antecedent of \eqref{eq:cac-impl} is satisfied with $\sigma=\tanh$, and let  $\wtd{P}_s$ denote the set of poles of $z\mapsto \tanh\left(\beta_s z+\, \gamma_s+\epsilon_s(1/z)\right)$, for $s\in\m{I}$. In order to specify the subset $\m{I}'\subset\m{I}$ for which we will prove the consequent of \eqref{eq:cac-impl}, we  first observe the following:
\begin{itemize}[--]
\item There exists an $R>0$ such that every element of $\bigcup_{s\in \m{I}} \wtd{P}_{s}$ is contained in both $\wtd{P}_{s_1}$  and $\wtd{P}_{s_2}$, for some distinct $s_1,s_2\in \m{I}$,
\item for every $s\in\m{I}$, the set $\wtd{P}_s$ is asymptotic to the arithmetic sequence $\Pi_s\coleqq \beta_s^{-1}\! \left(-\,\gamma_s + i \pi\big(\Z +\frac{1}{2}\big) \right) $,  in the sense that, for every $\varepsilon>0$, there exists an $A>0$ such that every $p'\in \wtd{P}_s$ with $|p'|>A$ is within $\varepsilon$ of $\Pi_s$ and every $p\in \Pi_s$ with $|p|>A$ is within $\varepsilon$ of $\wtd{P}_s$, and
\item for every $s\in\m{I}$, the density of $\wtd{P}_s$ along the line $\ell=\{\beta_s^{-1}(-\gamma_s +i t):t\in\R\}$ is strictly positive, i.e., we have $\Delta(\ell,\wtd{P}_s)>0$.
\end{itemize}
This motivates defining an undirected graph $\m G=(\m I, \m E)$ on $\m{I}$, with $\m{E}$ given by
\begin{equation*}
\m{E}\coleqq\{(s_1,s_2)\in \m I \times \m I: s_1\neq s_2 \text{ and }\exists \text{ line }\ell \text{ in }\C\text{ s.t. } \Delta(\ell,\wtd{P}_{s_1}\cap \wtd{P}_{s_2})>0\}.
\end{equation*}
Informally, the condition $\Delta(\ell,\wtd{P}_{s_1}\cap \wtd{P}_{s_2})>0$, for $(s_1,s_2)\in \m{E}$, imposes sufficient ``geometrical rigidity'' on the points of $\wtd{P}_{s_1}$  and $\wtd{P}_{s_2}$ in order for $\beta_{s_1}^{-1}\epsilon_{s_1}=\beta_{s_2}^{-1}\epsilon_{s_2}$ to hold, whereas, for $(s_1,s_2)\notin \m{E}$, we have $\Delta(\ell,\wtd{P}_{s_1}\cap \wtd{P}_{s_2})=0$ for every line $\ell$ in $\C$, and so  $\wtd{P}_{s_1}$  and $\wtd{P}_{s_2}$ do not ``get in the way'' of one another.
This reasoning will allow us to show that the consequent of  \eqref{eq:cac-impl} holds for every connected component of $\m{G}$. To this end, we fix an arbitrary connected component $\m{I}'$ of $\m{G}$ and $s_1,s_2\in\m{I}'$ such that $(s_1,s_2)\in\m{E}$. Then, as 
$\Delta(\ell,\wtd{P}_{s_1}\cap \wtd{P}_{s_2})>0$, there exists a sequence of poles $\{p_n\}_{n=1}^\infty\subset \wtd{P}_{s_1}\cap\wtd{P}_{s_2}$ diverging to infinity, i.e.,
\begin{equation*}
\beta_{s_1} p_n+ \gamma_{s_1}+\epsilon_{s_1}(1/p_n),\quad  \beta_{s_2} p_n+ \gamma_{s_2}+\epsilon_{s_2}(1/p_n)\quad \in i \pi\left(\Z +{\textstyle \frac{1}{2}}\right),
\end{equation*}
for all $n\in\N$, further implying that 
$
\left(\beta_{s_1}^{-1}\epsilon_{s_1}-\beta_{s_2}^{-1}\epsilon_{s_2}\right)(1/p_n)\in \Pi_{s_1}-\Pi_{s_2},
$
for all $n\in\N$, and $\left(\beta_{s_1}^{-1}\epsilon_{s_1}-\beta_{s_2}^{-1}\epsilon_{s_2}\right)(1/p_n)\to 0$ as $n\to\infty$. On the other hand, one can show
\begin{equation*}
\Delta(\Pi_{s_1},\Pi_{s_2})\geq \Delta(\ell,\wtd{P}_{s_1}\cap \wtd{P}_{s_2})>0,
\end{equation*}
which, by a special case of Weyl's equidistribution theorem \cite[Cor. 2.A.12]{Fefferman1994}, implies that $\beta_{s_1}/\beta_{s_2}\in\Q$, further implying that $ \Pi_{s_1}-\Pi_{s_2}$ is uniformly discrete. Therefore, we must have $(\beta_{s_1}^{-1}\epsilon_{s_1}-\beta_{s_2}^{-1}\epsilon_{s_2})(1/p_n)=0$, for all sufficiently large $n$, and thus, as $\beta_{s_1}^{-1}\epsilon_{s_1}-\beta_{s_2}^{-1}\epsilon_{s_2}$ is analytic on a neighborhood of 0 and $1/p_n\to 0$ as $n\to\infty$, it follows by the identity theorem that $\beta_{s_1}^{-1}\epsilon_{s_1}-\beta_{s_2}^{-1}\epsilon_{s_2}=0$. Hence, as $s_1$  and $s_2$ were arbitrary and $\m{I}'$ is connected, we must have $ \beta_{s_1}^{-1}\epsilon_{s_1}=\beta_{s_2}^{-1}\epsilon_{s_2}$, for all $s_1,s_2\in\m{I}'$.
It remains to show that the set ${P}_{\m{I}'}$ of poles of 
$
f_{\m{I}'}\coleqq \sum_{s\in\m{I}'}\alpha_s \tanh\left(\beta_s \cdot\,+\,\gamma_s\right)
$
 is bounded. We will, in fact, prove a stronger statement, namely that ${P}_{\m{I}'}$ is empty. To this end, suppose by way of contradiction that the set ${P}_{\m{I}'}$ is nonempty. Then, by an argument analogous to the discussion of the single-layer case (specifically, using $\beta_{s_1}/\beta_{s_2}\in\Q$, for all $s_1,s_2\in \m{I}'$), there must exist a line $\ell$ in $\C$ such that $\Delta(\ell,P_{\m{I}'})>0$. Next, letting $\xi=\beta_{s'}^{-1}\epsilon_{s'}$ for an arbitrary $s'\in\m{I}'$, we have $\epsilon_s=\beta_s \xi$, for all $s\in\m{I}'$, and thus the asymptotic density of the poles of
\begin{equation*}
\sum_{s\in\m{I}'}\alpha_s\tanh\left(\beta_s \cdot+\gamma_s+\epsilon_s(1/\cdot)\right)= \sum_{s\in\m{I}'}\alpha_s\tanh\big(\beta_s (\cdot+\xi(1/\cdot))+\gamma_s\big)=f_{\m{I}'}\big(\cdot\,+\,\xi(1/\cdot)\big)
\end{equation*}
along $\ell$ is equal to $\Delta(\ell,P_{\m{I}'})$, since $\xi(1/z)\to 0$ as $|z|\to\infty$. Now, using the subadditivity property of the asymptotic density, we find that the set $\wtd{P}_{\m{I}}$ of poles of $\sum_{s\in\m{I}}\alpha_s\tanh\left(\beta_s \cdot\,+\, \gamma_s+\epsilon_s(1/\cdot)\right)$ must satisfy
\begin{equation*}
\Delta(\ell,\wtd{P}_{\m{I}})\geq \Delta(\ell,{P}_{\m{I}'}) -\sum_{s_1\in \m{I}'}\sum_{s_2\in\m{I}\setminus\m{I}'}  \underbrace{\Delta(\ell,\wtd{P}_{s_1}\cap \wtd{P}_{s_2})}_{=0}>0,
\end{equation*}
which contradicts our assumption that $\wtd{P}_{\m{I}}$ be bounded. This shows that ${P}_{\m{I}'}=\varnothing$, thereby proving the CAC for $\tanh$ and concluding our informal argument establishing the null-net property for $\tanh$ on $\{\vin\}$.

\subsection{General meromorphic nonlinearities and arbitrary input sets}

We will later formalize the discussion in the previous two subsections, proving the following result for meromorphic nonlinearities more general than $\tanh$.

\begin{prop}\label{prop:intro-vin-NNC}
Let $\sigma$ be a meromorphic nonlinearity on $\C$ with infinitely many simple poles and no poles of higher order. Suppose that $\sigma(\R)\subset \R$, and that $\sigma$ satisfies both the SAC and the CAC. Then, for every non-trivial regular GFNN $\m{N}$ with 1-dimensional output and a singleton input set $\{\vin\}$, the map $\OnoA{\m{N}}{\sigma}$ can be analytically continued to a domain with countable complement in $\C$, and its set of poles is nonempty. In particular, $\sigma$ satisfies the general (and therefore also the layered) null-net condition on $\{\vin\}$. 
\end{prop}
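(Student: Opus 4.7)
The plan is to argue by induction on $L(\m{N})$, establishing at each step the following strengthened statement: for every regular, non-trivial GFNN $\m{N}$ with one-dimensional output on $\{\vin\}$, the map $\outmap{\m{N}}{\sigma}$ extends meromorphically to a domain with countable complement in $\C$, has only simple poles, and its pole set is nonempty. The null-net condition follows immediately, since an identically zero regular GFNN has empty pole set and must therefore be trivial.

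For the base case $L(\m{N})=1$, the map $\outmap{\m{N}}{\sigma}$ is a finite affine combination of translates of $\sigma$, hence meromorphic on all of $\C$ with only simple poles. If its pole set were empty it would in particular be bounded, so by the SAC $\outmap{\m{N}}{\sigma}$ would be constant on $\C$. Combining the irreducibility of $\m{N}$ with the fact that $\sigma$ has infinitely many simple poles leads, via the same single-layer bookkeeping argument that was carried out informally for $\tanh$ in the preceding subsection, to a pair of hidden neurons of $\m{N}$ violating the $\sigma$-analog of the irreducibility criterion captured by Lemma~\ref{lem:intro-tanh-irred}, a contradiction.

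For the inductive step, suppose $L(\m{N})\geq 2$ and the claim holds for all regular networks on $\{\vin\}$ of strictly smaller depth. Decomposing $\outmap{\m{N}}{\sigma}=f+\sum_{w\in\Vout^{>1}}\lambda^{(1)}_w\,\sigma(\outmap{\m{N}_w}{\sigma})$ as in~\eqref{eq:intro-simplified-1}, each subnetwork $\m{N}_w$ is regular of strictly smaller depth, so the inductive hypothesis supplies the analytic continuation and a nonempty pole set for $\outmap{\m{N}_w}{\sigma}$; this also verifies the countable-complement hypothesis required to define $\outmap{\m{N}}{\sigma}$ on its natural domain. Using the density machinery of Definition~\ref{def:dens} applied to the arithmetic-sequence asymptotics of the pole sets inherited inductively, I select an output node $w^*\in\Vout^{>1}$ and a simple pole $p$ of $\outmap{\m{N}_{w^*}}{\sigma}$ satisfying the non-essential cancellation condition~\eqref{eq:non-ess-cancel}. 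After Laurent expansion at $p$ and the substitution $z'=1/(z-p)$, proving that $p$ is a cluster point of simple poles of $\outmap{\m{N}}{\sigma}$ reduces to showing that the pole set of the map in~\eqref{eq:intro-simplified-4} is unbounded. Suppose for contradiction it is bounded; then the CAC produces a nonempty $U\subset\Vout^*$ with $\beta_{w_1}^{-1}\epsilon_{w_1}=\beta_{w_2}^{-1}\epsilon_{w_2}$ for all $w_1,w_2\in U$ and with $f_U=\sum_{w\in U}\lambda^{(1)}_w\,\sigma(\beta_w\cdot+\gamma_w)$ having bounded pole set. Matching the principal parts of the Laurent expansions forces $\beta_{w_1}^{-1}\outmap{\m{N}_{w_1}}{\sigma}-\beta_{w_2}^{-1}\outmap{\m{N}_{w_2}}{\sigma}$ to be constant; the inductive hypothesis applied to a regular subnetwork realizing this difference shows that the parent sets and incoming weights of the $w\in U$ must align as in~\eqref{eq:intro-simplified-6}, and a final application of the SAC to the real-variable identity~\eqref{eq:intro-simplified-7} contradicts the irreducibility of $\m{N}$. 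This produces the required simple pole of $\outmap{\m{N}}{\sigma}$ and closes the induction.

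The main obstacle is the pole-selection step: merely knowing that $\outmap{\m{N}_{w^*}}{\sigma}$ has a nonempty pole set is insufficient, and one must exploit the precise arithmetic-sequence geometry inherited from $\sigma$'s own pole pattern to locate a pole $p$ avoided by all other $\outmap{\m{N}_w}{\sigma}$. Weyl-type equidistribution along candidate lines, combined with the subadditivity of the asymptotic density in Definition~\ref{def:dens}, will do the heaviest lifting here. The remaining ingredients (Laurent expansion, invocation of CAC, and the SAC-based refutation of irreducibility) are essentially mechanical once this $p$ is in hand.
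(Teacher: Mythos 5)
Your overall skeleton matches the paper's proof of the strengthened Proposition~\ref{prop:pole-structure}: induction on depth, the decomposition \eqref{eq:intro-simplified-1}, the Laurent expansion and substitution $z'=1/(z-p)$, the CAC to extract $U$, the construction of a lower-depth subnetwork realizing $\beta_{w_1}^{-1}\OnoA{\m{N}_{w_1}}{\sigma}-\beta_{w_2}^{-1}\OnoA{\m{N}_{w_2}}{\sigma}$, and the final SAC-plus-irreducibility contradiction. The base case and the CAC portion of the inductive step are correct as you describe them.

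The gap is in the step you yourself flag as the main obstacle: producing a pole $p$ of some $\OnoA{\m{N}_{w^*}}{\sigma}$ at which all other $\OnoA{\m{N}_w}{\sigma}$, $w\in\Vout^{>1}\setminus\Vout^*$, are analytic, i.e.\ condition \eqref{eq:non-ess-cancel}. Your proposed tool --- ``arithmetic-sequence asymptotics of the pole sets inherited inductively'' together with Weyl-type equidistribution and density subadditivity --- does not apply here. For a subnetwork $\m{N}_w$ of depth $\geq 2$, the pole set $P_w$ is not asymptotic to an arithmetic sequence; it accumulates at the essential singularities of $\OnoA{\m{N}_w}{\sigma}$ and exhibits iterated clustering (clusters of clusters, up to depth $L(\m{N}_w)$). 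Moreover, the proposition is stated for an arbitrary $\sigma$ satisfying the SAC and the CAC, whose own pole set need not have any lattice structure; the arithmetic-sequence and equidistribution machinery is used in the paper only in Section~\ref{sec:LtCla-a-b} to \emph{verify} the CAC for $\LtCla{a}{b}$-nonlinearities, not inside the induction. Your induction hypothesis (meromorphic continuation, simple poles, nonempty pole set) is too weak to control where the essential singularities of the various $\OnoA{\m{N}_w}{\sigma}$ sit relative to the poles of $\OnoA{\m{N}_{w^*}}{\sigma}$, so the selection of a ``clean'' $p$ cannot be carried out from it.

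The paper's fix is structural rather than geometric: the induction hypothesis is strengthened to statement (iii) of Proposition~\ref{prop:pole-structure}, namely $L_{\m{C}}(\overline{P_{\m{N}}})=L(\m{N})$, and no pole selection is performed at all. Instead, one shows that \emph{every} point of $\bigcup_{w}\overline{P_w}$ is a cluster point of $P_{\m{N}}$, splitting into the case $p\notin\bigcup_w E_w$ (where \eqref{eq:non-ess-cancel} holds automatically and your CAC argument runs) and the case $p\in\bigcup_w E_w$, which is handled by the purely topological Claim~2: $\bigcup_w E_w=\bigcup_u\m{C}^1(P_u^\circ)$ with $P_u^\circ=P_u\setminus\bigcup_w E_w$, so every essential singularity is itself a limit of ``clean'' poles already covered by the first case. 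That claim relies on the finiteness and exact value of the clustering depths $L_{\m{C}}(E_w,y)$ supplied by the strengthened hypothesis, which is precisely the ingredient your formulation omits. You would need to add statement (iii) (or an equivalent bound on iterated clustering) to your induction and replace the equidistribution argument by this cluster-set bookkeeping for the proof to close.
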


The final step is to establish the null-net property on input sets $\Vin=\{v^0_1,\dots,v^0_{D_0}\}$ of arbitrary size $D_0$. As the argument for $\tanh$ is identical to that for more general meromorphic nonlinearities satisfying the alignment conditions, we proceed by assuming that $\sigma$ is a meromorphic nonlinearity on $\C$ satisfying the SAC and the CAC, but is otherwise arbitrary (we will shortly discuss such nonlinearities that are not $\tanh$).
We argue by contradiction, i.e., we assume the existence of a non-trivial regular GFNN $\m{M}$ with  input set $\Vin$ and a one-dimensional output identically equal to zero. Next, we use the \emph{input anchoring} procedure, which is a method for constructing a non-trivial network $\m{M}_a$ derived from $\m{M}$ in a manner that preserves the zero-output property while reducing the cardinality of the input set. This is achieved by selecting an input node of $\m{M}$, say $v_{D_0}^0$, and a real number $a\in\R$ that is then assigned to that node as a fixed value and propagated through the network in the form of bias alteration. The parts of $\m{M}$ whose contributions are rendered constant in the process are then deleted. The so-constructed network $\m{M}_{a}$ has a smaller input set $\Vin\setminus\{v^0_{D_0}\}$ and by construction satisfies
\begin{equation*}
\OnoA{\m{M}_a}{\sigma}(t_{v^0_1} \,, \dots ,\, t_{v^0_{D_0-1}})=\OnoA{\m{M}}{\sigma}(t_{v^0_1}\,,\dots,\, t_{v^0_{D_0-1}}, a)=0,
\end{equation*}
for all $(t_{v^0_1},\dots,t_{v^0_{D_0-1}})\in \R^{\Vin\setminus\{v^0_{D_0}\}}$. We will later show that a value of $a$ can be selected so that the network $\m{M}_a$ is regular. The procedure can now be repeated, successively eliminating the input nodes until only one remains. We are thus left with a non-trivial regular GFNN with a singleton input set and one-dimensional output identically equal to zero. This constitutes a contradiction to the null-net property for $\sigma$ on singleton input sets, thereby establishing the null-net property on arbitrary input sets. The input anchoring procedure is illustrated in Figure \ref{fig:anchoring-g-intro}.
Formalizing this argument will allow us to prove the following theorem.

\begin{theorem}\label{thm:intro-Zab-NNC}
Let $\sigma$ be a meromorphic nonlinearity on $\C$ with infinitely many simple poles and no poles of higher order. Suppose that $\sigma(\R)\subset \R$, and that $\sigma$ satisfies both the SAC and the CAC. Then $\sigma$ satisfies the general (and therefore also the layered) null-net condition on $\Vin$, for every finite set $\Vin$.
\end{theorem}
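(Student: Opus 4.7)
The plan is to reduce to Proposition \ref{prop:intro-vin-NNC} by inductively eliminating input nodes via the ``input anchoring'' procedure sketched informally in the text preceding the theorem. Proceeding by contradiction, suppose $\sigma$ fails the general null-net condition on some $\Vin$ with $D_0 = |\Vin| \geq 2$, and fix a non-trivial regular GFNN $\m{M} = (V, E, \Vin, \Vout, \Omega, \Theta, \Lambda)$ with one-dimensional output satisfying $\OnoA{\m{M}}{\sigma} \equiv 0$. The goal is to show that, for a suitable choice of input node $v^0_{D_0} \in \Vin$ and real parameter $a$, the anchored network $\m{M}_a$ is itself a non-trivial regular GFNN on input set $\Vin \setminus \{v^0_{D_0}\}$ with identically-zero one-dimensional output. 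Iterating this step $D_0 - 1$ times then produces a non-trivial regular GFNN on a singleton input set with identically-zero output, contradicting Proposition \ref{prop:intro-vin-NNC}.

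To make the anchoring precise, fix $a \in \R$. For each $u \in V \setminus \Vin$ with $v^0_{D_0} \in \pre(u)$, update $\theta_u \mapsto \theta_u + \omega_{u v^0_{D_0}} a$ and delete the edge $(v^0_{D_0}, u)$. Next, iteratively process any non-input node $u$ whose parent set has become empty: its map is now the constant $\sigma(\theta_u)$, so absorb this constant into the bias of each child $u'$ via $\theta_{u'} \mapsto \theta_{u'} + \omega_{u'u}\,\sigma(\theta_u)$ and, if $u \in \Vout$, into $\lambda^{(1)}$ via $\lambda^{(1)} \mapsto \lambda^{(1)} + \lambda^{(1)}_u\,\sigma(\theta_u)$; then delete $u$ along with its incident edges. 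Finally, remove $v^0_{D_0}$ from the input set. By construction,
\begin{equation*}
\OnoA{\m{M}_a}{\sigma}(t) = \OnoA{\m{M}}{\sigma}(t, a) \equiv 0, \qquad t \in \R^{\Vin \setminus \{v^0_{D_0}\}},
\end{equation*}
and non-degeneracy of $\m{M}_a$ is immediate, since the deletion step removes exactly those nodes whose maps are constant, while every surviving non-input node still lies on a path to an output node with nonzero output scalar.

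The main obstacle is ensuring that $(v^0_{D_0}, a)$ can be chosen so that $\m{M}_a$ is both non-trivial and irreducible. For non-triviality, note that any smallest-level non-input node $u$ of $\m{M}$ has an all-input parent set $P$, and since weights are nonzero and $\sigma$ is not affine, $\outmap{u}{\sigma}$ is a non-constant function of $(t_v)_{v \in P}$. Because $|\Vin| \geq 2$, one can always choose $v^0_{D_0} \in \Vin$ so that $P \setminus \{v^0_{D_0}\} \neq \varnothing$ for at least one such $u$; this $u$ retains non-constant output after anchoring and hence survives, making $\m{M}_a$ non-trivial for every $a$. For irreducibility, observe that anchoring only perturbs biases: shifts are either affine in $a$ (from the first step) or of the form $\omega_{u'u}\,\sigma(\theta_u(a))$ with $\theta_u(a)$ itself affine in $a$ (from iterative deletion), and hence are real-analytic in $a$. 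A $\rho$-reduction of $\m{M}_a$ would require, for some subset of sibling nodes whose (unchanged) weights match an affine-symmetry weight alignment, that the shifted biases satisfy the corresponding bias constraint; this is a real-analytic equation in $a$, and its identical satisfaction in $a$ would be incompatible with the irreducibility of $\m{M}$. Each such equation therefore has at most countably many solutions; since $\m{M}$ is finite, there are only finitely many candidate sibling subsets and collapse patterns, yielding an at most countable set of bad $a$, so a valid $a \in \R$ exists.

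Iterating the anchoring step $D_0 - 1$ times produces a non-trivial regular GFNN with singleton input set, one-dimensional output, and identically-zero output map, contradicting Proposition \ref{prop:intro-vin-NNC}. This establishes the general null-net condition on $\Vin$; the layered null-net condition follows immediately, since a regular LFNN witnessing its failure would also witness failure of the general condition.
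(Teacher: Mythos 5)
Your overall strategy---iterated input anchoring to reduce to the singleton-input case handled by Proposition \ref{prop:intro-vin-NNC}---is the same as the paper's, but the step where you secure irreducibility of $\m{M}_a$ contains a genuine gap. First, a reduction of $\m{M}_a$ by a set $U$ of siblings requires the shifted bias tuple $(\xi_u(a))_{u\in U}$ to lie in the set $\Gamma$ of bias tuples completing the given weight alignment to an affine symmetry of $\sigma$. This is not a single real-analytic equation: a priori $\Gamma$ could be an uncountable set, and the paper needs Lemma \ref{lem:orderly} (which itself relies on the SAC and a connectivity argument on the pole lattices) to show that $\Gamma$ is a \emph{countable} union of parallel lines, so that membership decomposes into countably many real-analytic systems. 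Your count of ``finitely many candidate sibling subsets and collapse patterns'' does not by itself yield countably many bad values of $a$.

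Second, and more seriously, your claim that identical satisfaction in $a$ of such a system ``would be incompatible with the irreducibility of $\m{M}$'' is false. The shifted biases are $\xi_u(a)=\theta_u+\sum_{v\in P'\cap\,\pre_{\m{M}}(u)}\omega_{uv}\OTnoA{v}{\sigma}{\m{M}}(a)$, and the constraint can hold for \emph{all} $a$ whenever the normalized anchored-part contributions $\beta_{u_1}^{-1}\sum_v\omega_{u_1v}\OTnoA{v}{\sigma}{\m{M}}-\beta_{u_2}^{-1}\sum_v\omega_{u_2v}\OTnoA{v}{\sigma}{\m{M}}$ differ by a constant, which is a cancellation among non-constant maps of the deleted nodes and does not require the parent sets or weight ratios of $u_1,u_2$ in $\m{M}$ to align. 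Reducibility of $\m{M}$ would only follow if \emph{all} parents (in $P$ and in $P'$) aligned; when they do not, the correct conclusion is the existence of a strongly regular subnetwork of $\m{M}$ with input $\{v_{D_0}^0\}$ and identically zero output---alternative (ii) of Proposition \ref{prop:inp-anch}. This alternative must be handled explicitly: it does not contradict irreducibility of $\m{M}$, but it does produce a single-input zero network and hence still feeds into the contradiction with Proposition \ref{prop:intro-vin-NNC}. As written, your argument breaks down precisely in this case, since no good value of $a$ exists there. (The paper also sidesteps your iteration by taking a counterexample with input set of minimal cardinality, so that both alternatives of the anchoring proposition immediately contradict minimality.)
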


\begin{figure}[h!]\centering
\includegraphics[height=65mm,angle=0]{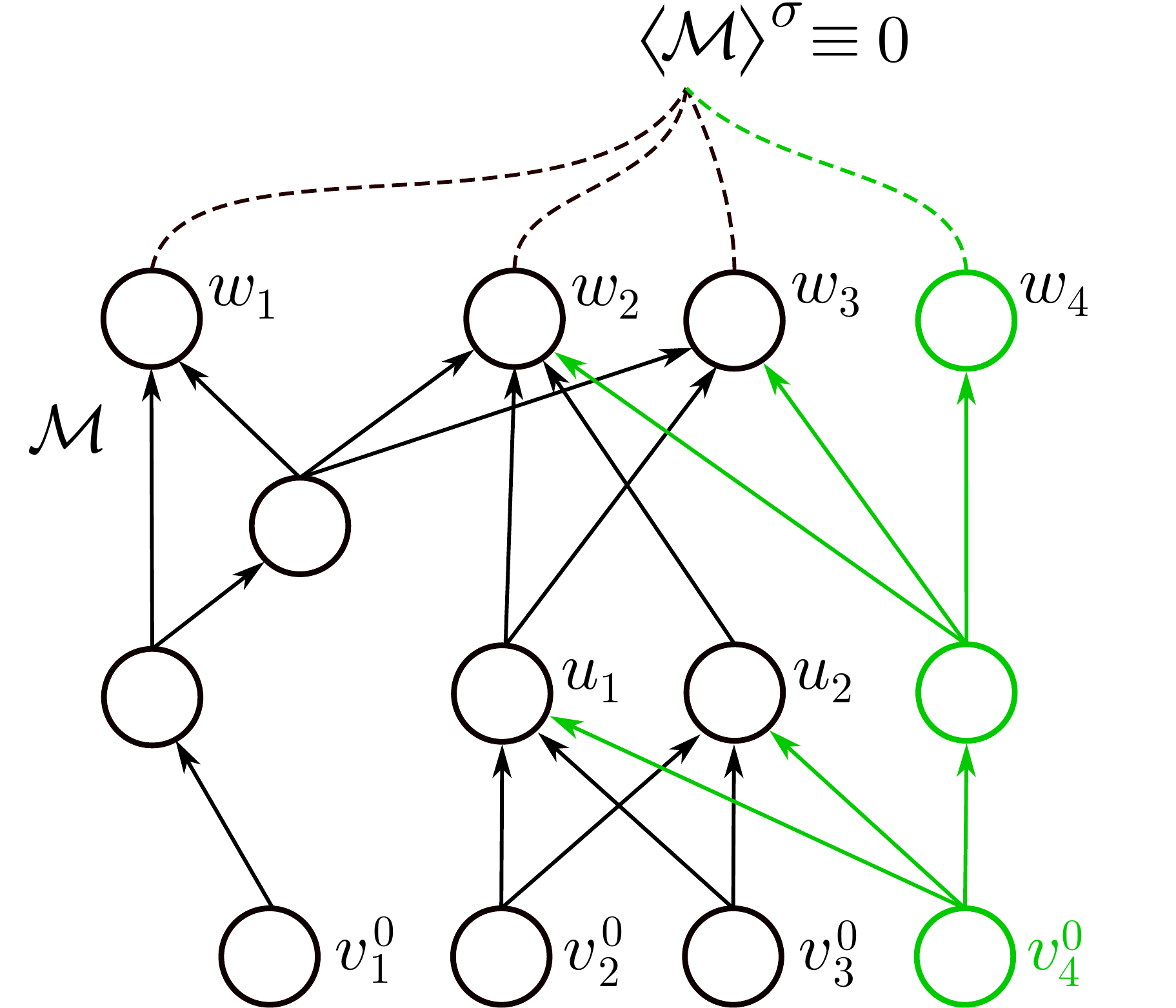}
\caption{A concrete example of anchoring the input at $v_4^0$ of a network $\m{M}$ to a real number $a$. The nodes of $\m{M}$ (in green) that are connected to $v_4^0$, but not to any of the remaining inputs $v_1^0,v_2^0,v_3^0$, are removed, while the rest of $\m{M}$ constitutes $\m{M}_a$. The anchored value $a$ is propagated through the removed parts of $\m{M}$, resulting in bias alteration at the nodes $u_1,u_2,w_2$, and $w_3$.
\label{fig:anchoring-g-intro}}
\end{figure}

\subsection{The class $\LtCla{a}{b}$ of nonlinearities}

The SAC and the CAC are admittedly rather technical conditions. However, unlike the null-net condition, which is a ``recursive''  statement about $\sigma$ (i.e., a statement about repeated compositions of affine functions and $\sigma$), the alignment conditions are statements about linear combinations of functions. The significance of Theorem \ref{thm:intro-Zab-NNC} thus lies in bridging the conceptual gap between the identifiability of single-layer networks and the identifiability of multi-layer networks, at least for meromorphic nonlinearities with simple poles only. In the present paper, we verify the SAC and the CAC for the class $\LtCla{a}{b}$ of ``$\tanh$-type'' nonlinearities introduced next.

\begin{definition}
Let $a,b>0$. The class $\LtCla{a}{b}$ consists of meromorphic functions $\sigma$ of the form
\begin{equation}\label{eq:Zab-series}
\sigma=C+\sum_{k\in\Z }c_k \big[\sgn(k)+\, \tanh\!\big(\pi b^{-1}(\,\cdot - ka)\big)\big],
\end{equation}
where $C\in\C$, and $\{c_k\}_{k\in\Z}$ is a sequence of complex numbers such that $\sup_{k\in\Z}|c_k|e^{-\pi a' |k|/b}<\infty$, for some $a'\in(0,a)$, and at least one $c_k$ is nonzero.
\end{definition}

\noindent The purpose of the $\sgn(k)$ term in \eqref{eq:Zab-series} is to make the series locally uniformly convergent even for sequences $\{c_k\}_{k\in\Z}$ growing exponentially with $k$ or $-k$.

\begin{theorem}\label{thm:intro-Zab-AC}
Let $a,b>0$ and let $\sigma\in\LtCla{a}{b}$. Then $\sigma$ satisfies the SAC and the CAC.
\end{theorem}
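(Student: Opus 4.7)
The plan is to reduce the verification of the SAC and the CAC to density-of-poles considerations on the vertical arithmetic progressions that constitute the pole set of $\sigma\in\LtCla{a}{b}$. The preliminary step is to pin this pole set down. From the defining series \eqref{eq:Zab-series}, for every $k\in\Z$ with $c_k\neq 0$ the summand $c_k\tanh(\pi b^{-1}(\,\cdot-ka))$ has simple poles exactly at $ka+ib(\Z+\tfrac12)$ with residues alternating in sign; meanwhile every other summand is holomorphic on a neighborhood of $ka+i\R$, because $\tanh(\pi b^{-1}(\,\cdot-ja))$ decays exponentially off the line $ja+i\R$ and the growth condition $|c_k|\le Ce^{\pi a'|k|/b}$ with $a'<a$ controls the series. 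Hence $P_\sigma=\bigcup_{k:\,c_k\neq 0}\bigl(ka+ib(\Z+\tfrac12)\bigr)$ is a disjoint union of vertical arithmetic progressions of density $1/b$ each, with simple poles of known residue pattern.

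For the SAC, take $f=\sum_{s\in\m I}\alpha_s\sigma(\beta_s\cdot+\gamma_s)$ with real parameters and bounded pole set. Since the $\beta_s$ are real, the poles of the $s$-th summand lie on the vertical lines $\Re z=(ka-\gamma_s)/\beta_s$ for $k\in\mathrm{supp}(c)$, arranged into arithmetic progressions of spacing $b/|\beta_s|$ along each line. Fix any vertical line $L:\Re z=x_0$ on which some summand contributes poles, and decompose $f(x_0+it)$, as a function of $t\in\R$, into a uniformly bounded remainder (from non-contributing summands and from the bounded parts of contributing ones, using that $\tanh(x+it)$ is bounded in $t$ whenever $x\neq 0$) plus a finite sum of terms of the form $i\alpha_s c_{k(s)}\tan(\pi\beta_s t/b)$, one per contributing summand. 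Bounded pole set of $f$ forces this tangent sum to have bounded pole set in $t\in\R$; applying Weyl's equidistribution \cite[Cor.\,2.A.12]{Fefferman1994} exactly as in the informal CAC sketch for $\tanh$ yields commensurability of all $\beta_s$ relevant to $L$, whence periodicity of the tangent sum, whence identical vanishing. Running this on every $L$ shows that all poles of $f$ cancel, so $f$ is entire, and, using that $\sigma$ is uniformly bounded on $\C$ off $\varepsilon$-neighborhoods of $P_\sigma$, that $f$ is bounded on $\C$; Liouville then gives $f\equiv\text{const}$.

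For the CAC, follow the informal proof of the CAC for $\tanh$ in the excerpt, \emph{verbatim}, with $\sigma$ in place of $\tanh$. Given $\{(\alpha_s,\beta_s,\gamma_s)\}\subset\C\times(\C\setminus\{0\})\times\C$ and ABCs $\{\epsilon_s\}$ such that $\sum_s\alpha_s\sigma(\beta_s\cdot+\gamma_s+\epsilon_s(1/\cdot))$ has bounded pole set, the pole set $\widetilde P_s$ of the $s$-th summand is asymptotic to $\beta_s^{-1}(P_\sigma-\gamma_s)$, again a union of arithmetic progressions along parallel lines. Form the overlap graph $\m G=(\m I,\m E)$ with $(s_1,s_2)\in\m E$ iff $\Delta(\ell,\widetilde P_{s_1}\cap\widetilde P_{s_2})>0$ for some line $\ell$. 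On each connected component $\m I'$: along a line $\ell$ witnessing an edge, positive-density overlap with Weyl's equidistribution yields $\beta_{s_1}/\beta_{s_2}\in\Q$ and uniform discreteness of the corresponding pole differences, so that $(\beta_{s_1}^{-1}\epsilon_{s_1}-\beta_{s_2}^{-1}\epsilon_{s_2})(1/p_n)=0$ eventually along a sequence of common poles $p_n\to\infty$; the identity theorem gives $\beta_{s_1}^{-1}\epsilon_{s_1}=\beta_{s_2}^{-1}\epsilon_{s_2}$ throughout $\m I'$. Finally, subadditivity of $\Delta(\ell,\cdot)$ across components together with a per-line density-lower-bound argument identical to the $\tanh$ case forces $\sum_{s\in\m I'}\alpha_s\sigma(\beta_s\cdot+\gamma_s)$ to have empty pole set, yielding the consequent of \eqref{eq:cac-impl}.

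The main obstacle is that, unlike $\tanh$ whose poles form a single vertical arithmetic progression, $P_\sigma$ is a union of countably many such progressions along parallel vertical lines indexed by $k\in\mathrm{supp}(c)$, so the intersection $\widetilde P_{s_1}\cap\widetilde P_{s_2}$ is in principle a two-dimensional lattice overlap rather than a one-dimensional one. The resolution has two parts: (i) any line $\ell$ meets each vertical component of $P_\sigma$ in at most an arithmetic progression (or a single point), so positive-density overlap along $\ell$ reduces, component-pair by component-pair, to the one-dimensional situation handled for $\tanh$; and (ii) the exponential decay $|c_k|\le Ce^{\pi a'|k|/b}$ with $a'<a$ ensures that contributions from distant $k$-components do not degrade the asymptotic density calculations at infinity. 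Combined with the fact that the residues $(-1)^\ell c_k b/\pi$ are nonvanishing on $\mathrm{supp}(c)$, these observations import the entire density/equidistribution machinery from $\tanh$ over to arbitrary $\sigma\in\LtCla{a}{b}$.
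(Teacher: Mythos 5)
Your CAC plan follows the paper's route (Proposition~\ref{prop:laced-breakdown}) at a high level: form the overlap graph $\m G$ on $\m I$ using positive density $\Delta(\ell,\wtd P_{s_1}\cap\wtd P_{s_2})>0$ as the edge relation, use Weyl equidistribution to get rationality of the ratios $\beta_{s_1}/\beta_{s_2}$, get uniform discreteness, conclude $\beta_{s_1}^{-1}\epsilon_{s_1}=\beta_{s_2}^{-1}\epsilon_{s_2}$ via the identity theorem, and finish with subadditivity of $\Delta(\ell,\cdot)$. You correctly flag the principal new difficulty over $\tanh$ (that $P_\sigma$ is a union of countably many vertical arithmetic progressions, i.e.\ a full two-dimensional lattice pattern), and your proposed resolution is the right idea. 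However, this remains a sketch: what makes the reduction rigorous in the paper is the change of variables $p'=p-h_s(1/p)$ with $h_s=\beta_s^{-1}(\epsilon_s\circ g_s^{-1})$, the matching density estimates transporting $\Delta$ between $\wtd P_s$ and $P_s$, and two dedicated lemmas (Lemma~\ref{lem:d>0->Q}, showing that positive line density forces the lattice slice to \emph{snap} to an arithmetic progression on $\ell$, and Lemma~\ref{lem:2-lattice-fit}, giving uniform discreteness of near-line poles when the two slice periods are commensurable). Your item (ii), that exponential decay of $c_k$ controls distant components, is not actually how the paper handles this; the lattice-slicing lemmas do the work, and the growth of $c_k$ enters elsewhere.

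Your SAC argument, however, contains a genuine error. You claim ``$\sigma$ is uniformly bounded on $\C$ off $\varepsilon$-neighborhoods of $P_\sigma$'' and then conclude by Liouville. This is false for general $\sigma\in\LtCla{a}{b}$. The class permits $|c_k|\le C e^{\pi a'|k|/b}$ with $a'<a$, and summing $\sim 2\sum_{1\le k\le N}c_k$ (the leading behaviour of $\sigma$ near $\Re z= Na$) grows exponentially in $N$ whenever the $c_k$ do. Indeed Lemma~\ref{lem:Zab-props}(iv) only gives $|\sigma(z)|\le\frac{M}{1\wedge d(z,P_\sigma)}e^{\eta|z|/b}$ with $\eta=\pi a'/a<\pi$, which is exponential, not bounded. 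This is precisely why the paper does \emph{not} use Liouville: Proposition~\ref{prop:Zab-SAC} isolates the summands of largest $|\beta_s|$, shows the resulting $f_1$ is entire, bounds it by $M''e^{\eta\beta_{\mrm{max}}|z|/b}$ via the maximum modulus principle, and then applies Carlson's theorem (Proposition~\ref{prop:Carlson}) to $f_1(ib\beta_{\mrm{max}}^{-1}\cdot)-f_1(0)$, finishing by induction on $\#(\m I)$. Your per-vertical-line tangent decomposition combined with Weyl equidistribution is a reasonable route to showing $f$ is entire and is genuinely different from the paper's $\beta_{\mrm{max}}$-peeling argument, but it does not get you to constancy; you would still need Carlson (or an equivalent Phragm\'en--Lindel\"of-type argument for entire functions of exponential type $<\pi$ vanishing on $\Z$) to replace the broken Liouville step.

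One further soft spot in the SAC sketch: from commensurability along a line $L$ and boundedness of the pole set you deduce ``periodicity of the tangent sum, whence identical vanishing.'' Periodicity plus no poles only gives that the tangent sum is entire; you would still have to supply a boundedness/oddness argument (on a torus, say) to force it to vanish, and this needs to be done uniformly over all lines $L$. Again, some version of the Carlson step cannot be avoided.
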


\noindent The proof of Theorem \ref{thm:intro-Zab-AC} is a generalization of the arguments presented above establishing the SAC and the CAC for the $\tanh$ nonlinearity. Specifically, it relies on the $ib$-periodicity of the nonlinearities in $\LtCla{a}{b}$ and the lattice geometry of their poles. As the proof involves the application of various ``point density'' techniques (such as the Kronecker-Weyl equidistribution theorem) to the poles of functions of the form $\sigma (\beta \cdot\, +\, \gamma + \epsilon(1/\cdot))$ (where $\epsilon$ is an ABC), Theorem \ref{thm:intro-Zab-AC} can be seen as a far-reaching refinement of the ``Deconstruction Lemma'' in \cite{Fefferman1994}.
We finally remark that our techniques can be adapted to prove the SAC and the CAC for nonlinearities of the form $\sigma(z)=r(e^z)$, where $r$ is a bounded non-constant real rational function with only simple poles.

The implications of Theorems \ref{thm:intro-NN-G}, \ref{thm:intro-NN-L}, \ref{thm:intro-Zab-NNC}, and \ref{thm:intro-Zab-AC} can now be summarized as follows:

\begin{theorem}\label{thm:intro-summary}
Let $a,b>0$, $\sigma\in\LtCla{a}{b}$, $D\in\N$, and let $\Vin$ be a nonempty finite set. Then $(\mathscr{N}^{\Vin,D}_{\mrm{G}},\sigma)$ and $(\mathscr{N}^{\Vin,D}_{\mrm{L}},\sigma)$ are identifiable up to $\sigisom$.
\end{theorem}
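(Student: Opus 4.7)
The plan is to chain the four preceding results, so that Theorem \ref{thm:intro-summary} follows essentially as a corollary of Theorems \ref{thm:intro-NN-G}, \ref{thm:intro-NN-L}, \ref{thm:intro-Zab-NNC}, and \ref{thm:intro-Zab-AC}. First I would invoke Theorem \ref{thm:intro-Zab-AC} to obtain that every $\sigma \in \LtCla{a}{b}$ satisfies both the SAC and the CAC. To then apply Theorem \ref{thm:intro-Zab-NNC}, I must verify the remaining hypotheses on $\sigma$: that it is a meromorphic nonlinearity on $\C$ with infinitely many simple poles and no poles of higher order, and that $\sigma(\R) \subset \R$.

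The meromorphic structure is read off the series representation \eqref{eq:Zab-series}. The $k$-th summand $\sgn(k) + \tanh(\pi b^{-1}(\cdot - ka))$ decays exponentially in $|k|$ at rate $2\pi a/b$ uniformly on compact subsets of $\C$ that stay at positive distance from the lattice $\Lambda = \{ka + ib(\ell + 1/2) : k,\ell \in \Z\}$; combined with the growth bound $|c_k| \leq M e^{\pi a' |k|/b}$ for some $a' \in (0,a)$, this yields normal convergence of the series on $\C \setminus \Lambda$ by a standard Weierstrass $M$-test. Consequently $\sigma$ is meromorphic on $\C$, each singularity is a simple pole inherited from exactly one summand, and the pole set is infinite since at least one $c_k$ is nonzero. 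The condition $\sigma(\R) \subset \R$ is built into viewing $\sigma$ as a nonlinearity in the sense of Definition \ref{def:sym}, and corresponds to a standard reality constraint on $C$ and the $\{c_k\}_{k\in\Z}$. With these checks in place, Theorem \ref{thm:intro-Zab-NNC} gives that $\sigma$ satisfies the general null-net condition on every finite $\Vin$.

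The conclusion then follows from the null-net theorems. Theorem \ref{thm:intro-NN-G} directly delivers identifiability of $(\mathscr{N}^{\Vin,D}_{\mrm{G}}, \sigma)$ up to $\sigisom$. For the layered case I would observe that $\mathscr{N}^{\Vin,1}_{\mrm{L}} \subset \mathscr{N}^{\Vin,1}_{\mrm{G}}$, so the general null-net condition on $\Vin$ automatically entails the layered null-net condition on $\Vin$; Theorem \ref{thm:intro-NN-L} then yields identifiability of $(\mathscr{N}^{\Vin,D}_{\mrm{L}}, \sigma)$ up to $\sigisom$. The genuine difficulty in the overall program is not in this corollary but is concentrated upstream in Theorem \ref{thm:intro-Zab-AC}, whose proof of the CAC requires the equidistribution and density arguments sketched in the previous subsection; the only technical point inside the present proof is the normal-convergence analysis of \eqref{eq:Zab-series}, which is routine given the exponential control on $|c_k|$.
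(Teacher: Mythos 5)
Your proposal is correct and matches the paper's own approach: the theorem is stated immediately after the sentence "The implications of Theorems \ref{thm:intro-NN-G}, \ref{thm:intro-NN-L}, \ref{thm:intro-Zab-NNC}, and \ref{thm:intro-Zab-AC} can now be summarized as follows," so the intended proof is precisely the chaining you describe, with the meromorphy/simple-pole structure furnished by Lemma \ref{lem:Zab-props} (your normal-convergence argument is what that lemma formalizes) and the observation that the general null-net condition implies the layered one already made explicit in the statement of Theorem \ref{thm:intro-Zab-NNC}.
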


\noindent In particular, as Lemma \ref{lem:intro-tanh-signs} implies that the $\tanh$-isomorphism is none other than the relation $\sim_{\pm}$,  and $\tanh\in \LtCla{1}{\pi}$, Theorem \ref{thm:intro-summary} specializes to the following result.

\begin{prop}\label{prop:intro-tanh-ident}
Let $\Vin$ be a nonempty finite set and $D\in\N$. Then $(\mathscr{N}^{\Vin,D}_{\mrm{G}},\tanh)$ and $(\mathscr{N}^{\Vin,D}_{\mrm{L}},\tanh)$ are identifiable up to $\sim_{\pm}$.
\end{prop}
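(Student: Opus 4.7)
The plan is to derive Proposition \ref{prop:intro-tanh-ident} as a direct corollary of Theorem \ref{thm:intro-summary}, which already establishes identifiability up to $\tanhisom$ for the entire class $\LtCla{a}{b}$. Two elementary verifications are needed: membership of $\tanh$ in some $\LtCla{a}{b}$, and the identification of $\tanhisom$ with $\sim_\pm$ on the class of regular networks.

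For the first verification, I would set $a=1$, $b=\pi$, $C=0$, $c_0=1$, and $c_k=0$ for $k\neq 0$ in \eqref{eq:Zab-series}. Since $\sgn(0)=0$, the series collapses to the single term $\tanh(\cdot)$, and the summability condition $\sup_k|c_k|e^{-\pi a'|k|/b}<\infty$ holds trivially for any $a'\in(0,1)$. Thus $\tanh\in\LtCla{1}{\pi}$, and Theorem \ref{thm:intro-summary} immediately yields identifiability of $(\mathscr{N}^{\Vin,D}_{\mrm{G}},\tanh)$ and $(\mathscr{N}^{\Vin,D}_{\mrm{L}},\tanh)$ up to $\tanhisom$.

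For the second verification, I would invoke Lemma \ref{lem:intro-tanh-signs}, which classifies every affine symmetry of $\tanh$ as either the trivial one, $(0,\{(\alpha,\beta,\gamma),(-\alpha,\beta,\gamma)\})$, or the odd one, $(0,\{(\alpha,\beta,\gamma),(\alpha,-\beta,-\gamma)\})$. The inclusion $\sim_\pm\,\subseteq\,\tanhisom$ is realized by a sequence of single-node $\tanh$-modifications substituting the odd symmetry---one for each non-input node $u$ with $s_u=-1$---each swapping $u$ for its sign-flipped counterpart $u'$ (incoming weights, bias, outgoing weights, and output scalars all negated); this exactly reproduces the data of the $\sim_\pm$-bijection. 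For the reverse inclusion on regular networks, each modification in a chain linking two regular $\m{N}$ and $\m{N}'$ substitutes one of the two symmetries above: odd-symmetry substitutions implement sign flips, while trivial-symmetry substitutions create or delete a pair of nodes sharing parents, with identical incoming weights and biases but opposite outgoing contributions. By Lemma \ref{lem:intro-tanh-irred}, regular networks contain no such cancelling pair, so every creation in the chain must be undone by a later deletion. After cancelling these ``bubbles'', the chain reduces to sign-flip modifications alone, witnessing $\m{N}\sim_\pm\m{N}'$.

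The principal obstacle I anticipate is rigorously justifying the bubble-cancellation step, since a created cancelling pair may interact with subsequent modifications before being deleted. The cleanest route is an induction on the number of trivial-symmetry substitutions in the chain, exploiting the fact that every $\tanh$-symmetry involves exactly two neurons with a rigid structural coupling to commute each matched creation-deletion pair past any interleaved odd-symmetry modifications without altering the endpoints. The bookkeeping is delicate but contains no conceptual surprises, and Lemma \ref{lem:intro-tanh-signs} ensures no other symmetry types intervene.
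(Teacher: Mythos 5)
Your overall route is exactly the paper's: verify $\tanh\in\LtCla{1}{\pi}$ by the parameter choice $C=0$, $c_0=1$, $c_k=0$ for $k\neq 0$ (correct), invoke Theorem \ref{thm:intro-summary}, and then identify $\tanhisom$ with $\sim_\pm$ via Lemma \ref{lem:intro-tanh-signs}. The first two steps are fine. However, your argument for the third step is built on a misreading of Definition~\ref{def:modif}, and as a result you anticipate a difficulty that does not exist.

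You describe trivial-symmetry substitutions as ``creating or deleting a pair of nodes sharing parents with identical incoming weights and biases but opposite outgoing contributions'' and then propose a bubble-cancellation induction. That picture is the informal one from Section~\ref{sec:a-theory-of-ident}, where $\rho$-modification was allowed to encompass $\rho$-reduction. But the $\tanhisom$ of Definition~\ref{def:symiso} is built on the formal Definition~\ref{def:modif}, which requires $A\neq\varnothing$ and $n=|C|\geq 1$, with the affine symmetry in condition (i) having exactly $|A\cup B|+n$ terms. By Lemma~\ref{lem:intro-tanh-signs}, every affine symmetry of $\tanh$ has exactly two terms, so $|A\cup B|+|C|=2$ forces $A=\{u\}$, $B=\varnothing$, $C=\{u'\}$. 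Plugging into Definition~\ref{def:modif}: with the trivial symmetry $\bigl(0,\{(\alpha,\beta,\gamma),(-\alpha,\beta,\gamma)\}\bigr)$ one computes that $u'$ inherits precisely the incoming weights, bias, outgoing weights and output scalars of $u$ (a pure rename), while with the odd symmetry $\bigl(0,\{(\alpha,\beta,\gamma),(\alpha,-\beta,-\gamma)\}\bigr)$ all of those data are negated (a sign flip), and the downstream biases and global output scalar are untouched since $\zeta=0$. No cancelling pair is ever created or deleted; there are no bubbles; the delicate commuting induction you flagged is a phantom. Any chain of regular $\tanh$-modifications is, step by step, nothing but renames and single-node sign flips, which directly composes to a $\sim_\pm$-isomorphism; conversely, these operations trivially preserve irreducibility and non-degeneracy, so regularity of the intermediate networks is automatic.

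One further small gap on the forward inclusion $\sim_\pm\subseteq\tanhisom$: your recipe of one odd-symmetry modification per node with $s_u=-1$ does not by itself realize the bijection $\pi$. Nodes with $s_u=+1$ but $\pi(u)\neq u$ require trivial-symmetry renames, and even the sign-flipped nodes are replaced by a fresh label $u'\notin V$ rather than landing directly on $\pi(u)$; reaching the exact target node set $V'$ takes a routine interleaving of renames (e.g., first renaming through fresh intermediate labels to avoid collisions). This is bookkeeping, but your statement that the single sweep of sign flips ``exactly reproduces the data of the $\sim_\pm$-bijection'' overstates what it does.
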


\noindent We remark that the characterization of irreducibility for the $\tanh$ nonlinearity according to Lemma \ref{lem:intro-tanh-irred} directly generalizes the concept of irreducibility in \cite{Sussman1992}, and is analogous to the no-clones condition in \cite{Vlacic2019}.

\subsection{Nonlinearities in $\LtCla{a}{b}$ with exotic affine symmetries}

Note that, given an arbitrary $\zeta\in \R$ and a finite set of real numbers $\{(\alpha_s,\beta_s,\gamma_s)\}_{s\in \m{I}}$, it is not clear whether there exists a nonlinearity with the affine symmetry $\left(\zeta,\{(\alpha_s,\beta_s,\gamma_s)\}_{s\in \m{I}}\right)$. It is likewise unclear if such a nonlinearity exists that additionally satisfies the null-net condition. Even though the existence of such nonlinearities would be desirable to justify the generality of the theory of $\rho$-modification and $\rho$-isomorphism presented in Section \ref{sec:a-theory-of-ident}, this is likely a difficult open problem. We are, however, able to offer a partial solution by showing that the class $\LtCla{a}{b}$ contains nonlinearities with (infinitely many) distinct affine symmetries that are more involved than the trivial and odd symmetries of the $\tanh$ function.

\begin{prop}\label{prop:intro-exotic-symm}
Let $\{\alpha_k\}_{k=0}^n$ be arbitrary nonzero real numbers with $n\geq 1$. Then there exist $b>0$, $\zeta\in\R$, and a $\sigma\in\LtCla{1}{b}$ such that $\left(\zeta,\{(\alpha_k,1,k)\}_{k=0}^n\right)$ is an affine symmetry of $\sigma$.
\end{prop}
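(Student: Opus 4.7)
The plan is to look for $\sigma\in\LtCla{1}{b}$ for a suitably small $b>0$ and choose its coefficients $\{c_j\}_{j\in\Z}$ so that the desired symmetry becomes a linear recurrence. Writing $\sigma=C+\sum_{j\in\Z}c_j[\sgn(j)+T_j]$ with $T_j(t)\coleqq \tanh(\pi b^{-1}(t-j))$, and using $T_j(t+k)=T_{j-k}(t)$, the substitution $m=j-k$ gives (for real-valued $c_j$ with the growth below, so the double sum converges absolutely)
\begin{equation*}
\sum_{k=0}^n \alpha_k\,\sigma(t+k)=\Big(\textstyle\sum_k \alpha_k\Big)C+\sum_{m\in\Z}\Big(\sum_{k=0}^n \alpha_k c_{m+k}\Big)T_m(t)+\sum_{m\in\Z}\sum_{k=0}^n \alpha_k c_{m+k}\,\sgn(m+k).
\end{equation*}
Since the meromorphic functions $\{T_m\}_{m\in\Z}$ have pairwise disjoint pole sets in $\C$, they are linearly independent, so the sum being constant in $t$ forces the key recurrence
\begin{equation*}
\sum_{k=0}^n \alpha_k\, c_{m+k}=0,\qquad\text{for all }m\in\Z.
\end{equation*}
Granted this, the remaining constant is a genuinely finite real number: once $|m|>n$, $\sgn(m+k)$ is independent of $k\in\{0,\dots,n\}$ and the inner sum $\sum_k \alpha_k c_{m+k}\,\sgn(m+k)=\pm\sum_k \alpha_k c_{m+k}$ vanishes. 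This well-defined constant is our $\zeta$.

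Next, I would construct a real two-sided sequence $\{c_j\}_{j\in\Z}$ satisfying the recurrence and whose minimal annihilating polynomial is exactly $P(x)\coleqq \sum_{k=0}^n \alpha_k x^k$. Since $\alpha_0,\alpha_n\neq 0$, the recurrence uniquely determines the sequence bidirectionally from any $n$ consecutive real values, so the solution space is $n$-dimensional over $\R$, and the set of solutions whose minimal polynomial is a proper divisor of $P$ is contained in a finite union of proper subspaces. Hence a generic (and therefore existent) choice of real initial data produces $\{c_j\}$ with minimal polynomial exactly $P$. Standard theory of linear recurrences bounds $|c_j|\lesssim |j|^{n-1}R^{|j|}$, where $R\coleqq \max(\max_r |r|,\max_r|r|^{-1})$ and $r$ ranges over the roots of $P$ in $\C$. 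Fixing any $a'\in(0,1)$ and choosing $b>0$ small enough that $\log R<\pi a'/b$ (any $b>0$ suffices if $R\le 1$) then yields $\sup_{j\in\Z}|c_j|e^{-\pi a'|j|/b}<\infty$, so $\sigma\in\LtCla{1}{b}$.

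Finally, I would verify condition (ii) of Definition~\ref{def:sym}. Suppose for contradiction that some proper subset $\m{I}'\subsetneq\{0,\dots,n\}$ admits reals $\{\beta_k\}_{k\in\m{I}'}$ and $\beta$, not all zero, with $\sum_{k\in\m{I}'}\beta_k\sigma(\cdot+k)=\beta\,\bm{1}$. The derivation in the first paragraph applies verbatim and gives $\sum_{k\in\m{I}'}\beta_k c_{m+k}=0$ for all $m\in\Z$, i.e.\ $Q(x)\coleqq \sum_{k\in\m{I}'}\beta_k x^k$ annihilates $\{c_j\}$. Since $P$ is the minimal annihilating polynomial, $P\mid Q$; but $\deg Q\le n=\deg P$, so either $Q=0$ or $Q$ is a nonzero scalar multiple of $P$. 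The latter is impossible, as it would force the support of $Q$ to equal $\{0,\dots,n\}$ (using $\alpha_k\neq 0$ for all $k$), contradicting $\m{I}'\subsetneq\{0,\dots,n\}$. Hence all $\beta_k=0$, after which $\beta\,\bm{1}=0$ forces $\beta=0$, a contradiction. The main subtlety is engineering the sequence $\{c_j\}$ to have minimal polynomial exactly $P$ rather than some proper divisor; once this generic condition is secured, the rest of the argument is bookkeeping for absolute convergence and for reading off $\zeta$.
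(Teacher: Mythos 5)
Your proof is correct, and while it shares the paper's central idea---translating the putative symmetry into the two-sided linear recurrence $\sum_{k}\alpha_k c_{m+k}=0$ by reading off residues at the poles $m+ib(l+\tfrac12)$---it diverges from the paper's proof at two points, in ways worth noting. First, the paper picks the explicit initial data $r_0=\alpha_n,\dots,r_{n-1}=\alpha_1$, shows only that the residues of $f=\sum_k\alpha_k\sigma(\cdot\,-k)$ vanish (so $f$ is entire), and then must invoke the SAC for $\LtCla{1}{b}$ (Proposition \ref{prop:Zab-SAC}, ultimately Carlson's theorem) to upgrade ``entire'' to ``constant''; you instead observe that once the recurrence holds the $\tanh$-parts cancel term by term and the leftover $\sum_m\sum_k\alpha_kc_{m+k}\sgn(m+k)$ is a genuinely finite sum, so constancy (and the value of $\zeta$) is immediate and no appeal to the SAC is needed. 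Second, for the minimality condition (ii) of Definition \ref{def:sym} the paper exploits its special initial data to get $\alpha_l r_{n-l}=\alpha_l^2>0$ and runs a sum-of-squares argument, whereas you choose generic initial data so that the minimal annihilating polynomial of $\{c_j\}$ is exactly $P=\sum_k\alpha_kx^k$ (the genericity argument---finitely many proper divisors, each cutting out a proper subspace---is sound since $\alpha_0,\alpha_n\neq0$ makes the solution space $n$-dimensional and forces every annihilator to divide $P$), and conclude by a divisibility-plus-support argument. Your route is more self-contained and in fact verifies (ii) for an arbitrary linear dependence $\sum_{k\in\m{I}'}\beta_k\sigma(\cdot+k)=\beta\,\bm{1}$, which is what the definition literally requires, at the cost of a non-constructive choice of $\{c_j\}$; the paper's choice is explicit but leans on the heavier SAC machinery. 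The growth estimate $|c_j|\lesssim |j|^{n-1}R^{|j|}$ and the choice of small $b$ are handled the same way in both arguments.
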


 \subsection{Organization of the remainder of the paper}

We conclude this section by laying out the organization of the remainder of the paper. In Section \ref{sec:formal-NNT}, we formalize the concepts of $\rho$-modification and $\rho$-isomorphism and prove Theorems \ref{thm:intro-NN-G} and \ref{thm:intro-NN-L}. In Section \ref{sec:Clustering}, we analyze the pole structure of network maps with a meromorphic nonlinearity satisfying the SAC and the CAC, providing a formal proof of (a strengthened version of) Proposition \ref{prop:intro-vin-NNC}. In Section \ref{sec:inp-anch}, we introduce the procedure of input anchoring, allowing us to prove Theorem \ref{thm:intro-Zab-NNC}, and in Section VII, we analyze the fine properties of $\LtCla{a}{b}$--nonlinearities, allowing us, in turn, to prove Theorem \ref{thm:intro-Zab-AC}. Finally, the Appendix contains the proofs of various ancillary results that are either simple, standard, or based on ideas already seen in the main body of the paper.

\section{The $\rho$-isomorphism and the null-net theorems}\label{sec:formal-NNT}

\subsection{Irreducibility, regularity, $\rho$-modification, and the $\rho$-isomorphism}

We begin this chapter by formalizing the concepts of irreducibility and regularity, already introduced informally in Section \ref{sec:a-theory-of-ident}.

\begin{definition}[Irreducibility]\label{def:reduc}
Let $\m{N}=(V,E,\Vin,\Vout,\Omega,\Theta,\Lambda)$ be a GFNN with $D$-dimensional output, and let $\rho:\R\to\R$ be a nonlinearity. Let $U\subset V$ be a nonempty set of nodes, and suppose the following hold:
\begin{enumerate}[(i)]
\item the nodes in $U$ have a common parent set $P\subset V$, i.e., $\pre(u)=P$, for all $u\in U$,
\item  there exist sets of nonzero real numbers $\{\kappa_{v}\}_{v\in P}$ and $\{\beta_{u}\}_{u\in U}$ such that $\{\omega_{uv}\}_{v\in P}=\beta_u \{\kappa_v\}_{v\in P}$, for all $u\in U$, and
\item there exist a $\zeta\in \R$ and nonzero real numbers $\{\alpha_u\}_{u\in U}$ such that $\left(\zeta,\{(\alpha_u,\beta_u,\theta_u)\}_{u\in U}\right)$ is an affine symmetry of $\rho$.
\end{enumerate}
We then say that $\m{N}$ is $\rho\,$--\emph{reducible}. Whenever we wish to specify the set $U$ causing the reducibility, we will say that $\m{N}$ is \emph{$(\rho,U)$--reducible}. Finally, a GFNN that is not reducible will be called \emph{irreducible}.
\end{definition}

\begin{definition}[Regularity]
We say that a GFNN is \emph{regular} if it is irreducible and non-degenerate according to Definition \ref{def:NonDeg}.
The set of all regular GFNNs (repectively regular LFNNs) with $D$-dimensional output and input set $\Vin$ is denoted by $\mathscr{N}^{\Vin,D}_{\mrm{G}}$ (respectively $\mathscr{N}^{\Vin,D}_{\mrm{L}}$).
\end{definition}

We now formalize symmetry modification, already introduced informally in Section \ref{sec:a-theory-of-ident}. Before providing the definition, we motivate the concept by describing how an affine symmetry can be used to replace a single node in the network by newly-created nodes. Thus, let $\m{N}$ be a GFNN, and let $u^*$ be a non-input node of $\m{N}$ to be replaced. Write $P=\pre(u^*)$, and suppose that $B\subset V\setminus\{u^*\}$ is a set of nodes with parent set $P$ and such that there exist nonzero real numbers $\{\kappa_v\}_{v\in P}$ and $\{\beta_u\}_{u\in \{u^*\}\cup B}$ satisfying $\{\omega_{uv}\}_{v\in P}=\beta_u \{\kappa_v\}_{v\in P}$, for all $u\in \{u^*\}\cup B$. Suppose furthermore that the nonlinearity $\rho$ has an affine symmetry $\left(\zeta,\{(\alpha_u,\beta_u,\theta_u)\}_{u\in \{u^*\}\cup B}\cup  \{(\alpha'_p,\beta'_p,\gamma'_p)\}_{p=1}^n \right)$. 
Now, if $w$ is a node of $\m{N}$ with $u^*\in\pre(w)$, then, writing $K_P(t)=\sum_{v\in P}\kappa_v\OnoA{v}{\rho}(t)$, we have
\begin{equation}\label{eq:modif-intro-1}
\begin{aligned}
\OTnoA{w}{\rho}{\m{N}}(t)&=\rho\Bigg(\sum_{u\in \pre(w)}\omega_{wu}\OnoA{u}{\rho}(t)\;+\theta_w\Bigg)\\
&=\rho\Bigg(\sum_{u\in \{u^*\} \cup (B\,\cap \, \pre(w))}\hspace*{-5mm} \omega_{wu}\,\rho\big(\beta_u\,K_P(t)+\theta_u \big)\quad+\sum_{u\in \pre(w)\setminus (\{u^*\}\cup B)}\hspace*{-3mm} \omega_{wu}\OnoA{u}{\rho}(t) \quad + \theta_w\Bigg)\\
&=\rho\Bigg(\sum_{u\in B\,\cap\, \pre(w)}\hspace*{-2mm}  \left(\omega_{wu}-\!{\frac{\alpha_u \omega_{wu^*}}{\alpha_{u^*}}}\!\right)\! \rho\big(\beta_u K_P(t)+\theta_u \big) \; +\hspace*{-3mm}\sum_{u\in  B\setminus \pre(w) }\hspace*{-6mm}  -{\frac{\alpha_u \omega_{wu^*}}{\alpha_{u^*}}}\!\,\rho\big(\beta_u K_P(t)+\theta_u \big) \\
&\qquad+\sum_{p=1}^n{ -\frac{\alpha_p' \omega_{wu^*}}{\alpha_{u^*}} }\rho\big(\beta_p' K_P(t)+\gamma_p' \big)\; +\hspace*{-0mm} \sum_{u\in \pre(w)\setminus (\{u^*\}\cup B) }\hspace*{-3mm} \omega_{wu}\OnoA{u}{\rho}(t) \quad +\theta_w+{ \frac{\zeta\omega_{wu^*}}{\alpha_{u^*}}} \Bigg),
\end{aligned}
\end{equation}
for $t\in\R$. Therefore, $\m{N}$ can be modified without changing the map $\OTnoA{w}{\rho}{\m{N}}$ by removing the node $u^*$, replacing the weights $\omega_{wu}$ by $\omega_{wu}-{\textstyle\frac{\alpha_u \omega_{wu^*}}{\alpha_{u^*}}}$, for $u\in B\cap \pre(w)$, creating new edges $(u,w)$ with weights $-{\textstyle\frac{\alpha_u \omega_{wu^*}}{\alpha_{u^*}}}$, for $u\in  B\setminus \pre(w)$, adjoining $n$ new nodes $\{u_1',\dots, u_n'\}$ with biases $\gamma_p'$, incoming edges $(v,u_p')$ with weights $\beta_p'\kappa_v$, for $v\in P$, and outgoing edges $(u_p',w)$ with weights $-\frac{\alpha_p' \omega_{wu^*}}{\alpha_{u^*}}$, and finally replacing the bias $\theta_w$ by $\theta_w+\frac{\zeta\omega_{wu^*}}{\alpha_{u^*}}$.
Moreover, as the node $w$ with $u^*\in\pre(w)$ was arbitrary, this modification can be performed for all such $w$ simultaneously, therefore resulting in another network whose map is identical to $\OnoA{\m{N}}{\rho}$.
In this example only the node $u^*$ was removed. However, multiple nodes (the set $A$ in the next definition) can be removed at once in a similar manner, provided a suitable affine symmetry exists. We thus have the following formal definition:

\begin{definition}[$\rho$-modification]\label{def:modif}
Let $\m{N}=(V,E,\Vin,\Vout,\Omega,\Theta,\Lambda)$ be an irreducible GFNN with $D$-dimensional output, and let $\rho:\R\to\R$ be a nonlinearity. 
Let $A,B\subset V$, $A\neq \varnothing$, be disjoint sets of non-input nodes with a common parent set $P\subset V$, and let  $W=\{w\in V : \pre(w)\cap A\neq \varnothing\}$. Suppose the following are satisfied:
\begin{enumerate}[(i)]
\item there exists an affine symmetry
$
\left(\zeta,\{(\alpha_u,\beta_u,\theta_u)\}_{u\in A\cup B}\cup \{(\alpha'_p,\beta'_p,\gamma'_p)\}_{p=1}^n \right)
$ of $\rho$ with $n\geq 1$,
\item there exists a set of nonzero real numbers $\{\kappa_v\}_{v\in P}$ such that $\{\omega_{uv}\}_{v\in P}=\beta_u\{\kappa_v\}_{v\in P}$, for all $u\in A\cup B$,
\item $A\subset \pre(w)$, for all $w\in W$, and there exist nonzero real numbers $\{\nu_w\}_{w\in W}$ such that $\{\omega_{wu}\}_{u\in A}=\nu_w\{\alpha_u\}_{u\in A}$, for all $w\in W$,
\item either $A\cap \Vout=\varnothing$, or $A\subset \Vout$ and there exist real numbers $\{\mu_r\}_{r=1}^D$ such that $\{\lambda_{u}^{(r)}\}_{u\in A}=\mu_r\{\alpha_u\}_{u\in A}$, for all $r\in \{1,\dots, D\}$.
\end{enumerate}
We define a GFNN $\m{N}'=(V',E',\Vin,\Vout',\Omega',\Theta',\Lambda')$ by modifying $\m{N}$ as follows:
\begin{itemize}[--]
\item The nodes in $A$ and their incoming and outgoing edges are deleted, and a set $C=\{u_1',\dots,u_n'\}$ of $n$ new nodes (disjoint from $V$) is adjoined to the existing set of nodes $V$.
\item For $v\in P$ and $u_p'\in C$, an edge $(v,u_p')$ is created and assigned weight $\beta_p'\kappa_v$, and the node $u_p$ is assigned bias $\gamma_p'$.
\item For $w\in W$ and $u_p'\in C$, an edge $(u_p',w)$ is created and assigned weight $-\alpha_p'\nu_w$, and the bias $\theta_{w}$ is replaced by $\theta_{w}+\zeta\nu_w$.
\item  For $w\in W$ and $u\in B$
\begin{itemize}[-]
\item if $(u,w)\notin E$, an edge $(u,w)$ with weight $-\alpha_u \nu_{w}$ is created; otherwise
\item if $\omega_{wu}-\alpha_u \nu_{w}\neq0$, the weight $\omega_{wu}$ is replaced by $\omega_{wu}-\alpha_u \nu_{w}$, and 
\item if $\omega_{wu}-\alpha_u \nu_{w}= 0$, the edge $(u,w)$ is deleted.
\end{itemize}
\item If $A\cap \Vout=\varnothing$, then set $\Vout'=\Vout$ and $\Lambda'=\Lambda$, completing the construction.
\item If $A\subset \Vout$, then, for every $r\in\{1,\dots,D\}$,
\begin{itemize}[-]
\item the output scalar $\lambda^{(r)}$ is replaced by $\lambda^{(r)}+\zeta\mu_r$,
\item for $u_p'\in C$, new output scalars $\lambda_{u_p'}^{(r)}=-\alpha_p'\mu_r$ are created, and
\item for $u\in B\setminus \Vout$, new output scalars $\lambda_{u}^{(r)}=-\alpha_u\mu_r$ are created.
\item The set $B_{\mrm{out}}^{(r)}=\{u\in B\cap \Vout: \lambda_u^{(r)}-\alpha_u \mu_r\neq 0\}$ is defined, and, for $u\in B_{\mrm{out}}^{(r)}$, the output scalars $\lambda_u^{(r)}$ are replaced by $\lambda_u^{(r)}-\alpha_u\mu_r$.
\item Set $\Vout'=(\Vout\setminus (A\cup B))\; \cup \; \bigcup_{r=1}^D B_{\mrm{out}}^{(r)} \;\cup\;  C$ completing the construction.
\end{itemize}
\end{itemize}
We say that the so-constructed network $\m{N}'$ is a \emph{$(\rho\,;A,B,C)$--modification of $\m{N}$}. Whenever it is not necessary to explicitly specify the sets $A$, $B$, and $C$ involved in the modification, we will simply say that $\m{N}'$ is a \emph{$\rho$-modification} of $\m{N}$. A $\rho$-modification that is a regular network is called a \emph{regular $\rho$-modification}.
\end{definition}

The directed graph underlying the network $\m{N}'$ in Definition \ref{def:modif} is acyclic, as required by Definition \ref{def:GFNN}. Indeed, the nodes in the sets $A$ and $B$ all have the same parent set in $\m{N}$, and therefore have the same level, say $\ell_0$, whereas the nodes in $W$ have level at least $\ell_0+1$ in $\m{N}$. The nodes in $C$ will then have level $\ell_0$ in $\m{N}'$, and the nodes $w\in W$ will satisfy $\lvl_{\m{N}'}(w)=\lvl_{\m{N}}(w)$. A concrete example of a $\rho$-modification is shown schematically in Figure \ref{fig:ModificationFormal}.

\begin{figure}[h!]\centering
\includegraphics[height=56mm,angle=0]{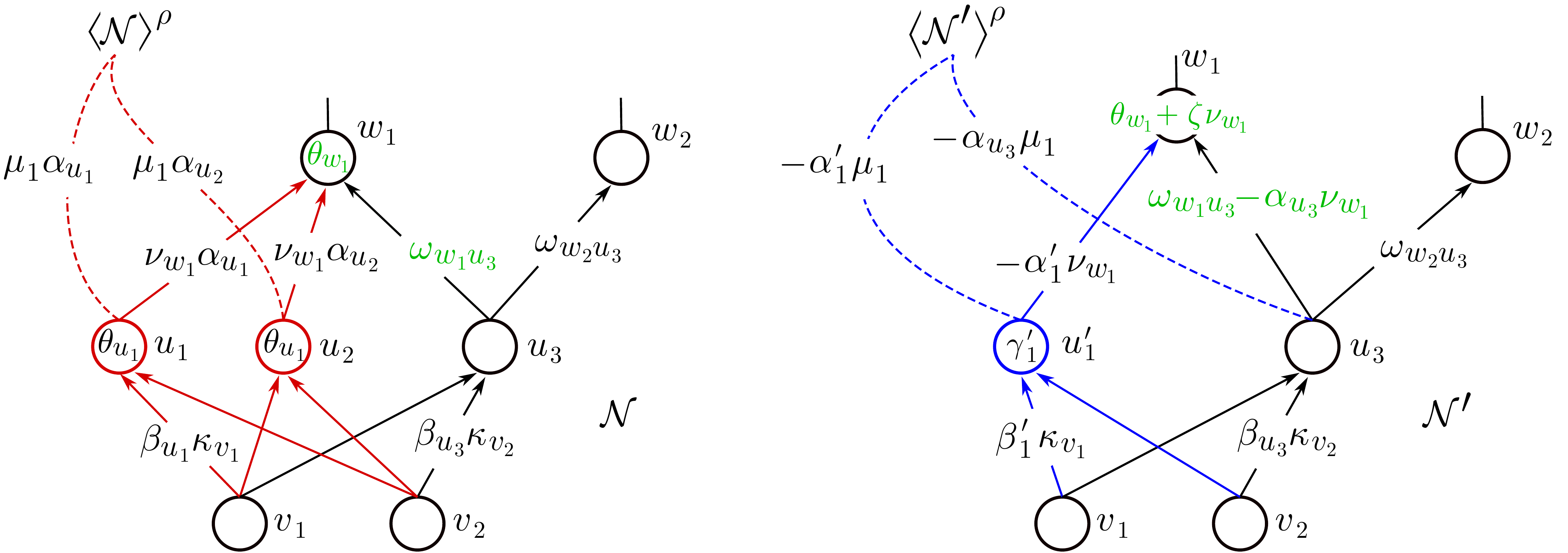}
\caption{Consider a network $\m{N}$ with nodes $P=\{v_1,v_2\}$, $A=\{u_1,u_2\}$, $B=\{u_3\}$, and $W=\{w_1\}$ as shown on the left. Suppose that $\rho$ has an affine symmetry $\big(\zeta,\{(\alpha_{u_1},\beta_{u_1},\theta_{u_1}), (\alpha_{u_2},\beta_{u_2},\theta_{u_2}), (\alpha_{u_3},\beta_{u_3},\theta_{u_3}), \allowbreak  (\alpha_{u_1'},\beta_{u_1'},\theta_{u_1'}) \} \big)$ and that there exist sets $\{\kappa_{v_1}, \kappa_{v_2}\}$, $\{\nu_{w_1}\}$, and $\{\mu_1\}$ so that Conditions (i) -- (iv) in Definition \ref{def:modif} hold. Then $\m{N}$ admits a $(\rho\,;A,B,C)$--modification $\m{N}'$ (right), where $C=\{u_1'\}$. The parts of $\m{N}$ in red are removed in the process, the parts in blue are added, and those in green are altered. Note that, if $\omega_{w_1u_3}-\alpha_{u_3}\nu_{w_1}= 0$, then the edge $(u_3,w_1)$  would have to be deleted as well. \label{fig:ModificationFormal}}
\end{figure}

Note that the set $B$ in Definition \ref{def:modif} is allowed to be empty, but the sets $A$ and $C$ must be nonempty. In particular, Definition \ref{def:modif} does not encompass $\rho$-reduction, in contrast to the informal definition of $\rho$-modification provided in Section \ref{sec:a-theory-of-ident}. This is in order to avoid the scenario described in Figure \ref{fig:reduction2} that necessitates further alteration to obtain a network without ``constant parts''. Moreover, restricting the number of possibilities in which $\rho$-modification can be carried out renders the claims of Theorems \ref{thm:intro-NN-G} and \ref{thm:intro-NN-L} stronger.

The following proposition summarizes the properties of GFNNs that are readily seen to be preserved under $\rho$-modification.

\begin{prop}\label{prop:modif-basic-prop}
Let $\m{N}_1=(V^1,E^1,\Vin,\Vout^1,\Omega^1,\Theta^1,\Lambda^1)$ be a GFNN with $D$-dimensional output, let $\rho$ be a nonlinearity, and let $\m{N}_2=(V^2,E^2,\Vin,\Vout^2,\Omega^2,\Theta^2,\Lambda^2)$ be a $\rho$-modification of $\m{N}_1$. Then,
\begin{enumerate}[(i)]
\item if $\m{N}_1$ is layered, then $\m{N}_2$ is also layered,
\item $\m{N}_1$ is a $\rho$-modification of $\m{N}_2$, and
\item $\outmap{\m{N}_1}{\rho}=\outmap{\m{N}_2}{\rho}$.
\end{enumerate}
\end{prop}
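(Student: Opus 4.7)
The plan is to verify the three claims directly from Definition \ref{def:modif}, working in the order (i), (iii), (ii), since (ii) is an inversion-bookkeeping argument that is cleanest once (iii) is in place.

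For (i), Conditions (i)--(ii) of Definition \ref{def:modif} force $A\cup B$ to share the common parent set $P$, so every node of $A\cup B$ has the same level $\ell_0$ in $\m{N}_1$; Condition (iii) combined with the layered hypothesis then forces $\lvl_{\m{N}_1}(w)=\ell_0+1$ and $\pre_{\m{N}_1}(w)\subset\{v\in V^1:\lvl(v)=\ell_0\}$ for every $w\in W$. In $\m{N}_2$, the nodes of $C$ have parent set exactly $P$ (hence level $\ell_0$), each $w\in W$ has new parent set contained in $(\pre_{\m{N}_1}(w)\setminus A)\cup B\cup C$ (all at level $\ell_0$ in $\m{N}_2$), and all remaining nodes keep their $\m{N}_1$-parent sets. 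This makes $\m{N}_2$ layered.

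For (iii), I induct on level. Nodes at level $\leq\ell_0$ share ancestor subgraphs between $\m{N}_1$ and $\m{N}_2$, so their maps coincide. Writing $K_P=\sum_{v\in P}\kappa_v\OnoA{v}{\rho}$, Condition (ii) gives $\OTnoA{u}{\rho}{\m{N}_1}=\rho(\beta_u K_P+\theta_u)$ for $u\in A\cup B$ and $\OTnoA{u_p'}{\rho}{\m{N}_2}=\rho(\beta_p' K_P+\gamma_p')$ for $u_p'\in C$. Evaluating the affine symmetry at $K_P(t)$ and using $\omega_{wu}=\nu_w\alpha_u$ for $u\in A$ gives
\begin{equation*}
\sum_{u\in A}\omega_{wu}\,\OTnoA{u}{\rho}{\m{N}_1}=\nu_w\zeta-\nu_w\sum_{u\in B}\alpha_u\,\OTnoA{u}{\rho}{\m{N}_1}-\nu_w\sum_{p=1}^n\alpha_p'\,\OTnoA{u_p'}{\rho}{\m{N}_2}.
\end{equation*}
Substituting this into the pre-activation of $w\in W$ in $\m{N}_1$, routine algebra reproduces the pre-activation in $\m{N}_2$ prescribed by Definition \ref{def:modif}, handling uniformly $u\in B\cap\pre_{\m{N}_1}(w)$ (weight modified to $\omega_{wu}-\alpha_u\nu_w$, or deleted when this vanishes) and $u\in B\setminus\pre_{\m{N}_1}(w)$ (fresh weight $-\alpha_u\nu_w$), and absorbing $\zeta\nu_w$ into the bias update $\theta_w\mapsto\theta_w+\zeta\nu_w$. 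Induction then propagates the equality to all descendants of $W$. The case $A\subset\Vout$ is handled by applying the same identity coordinate-wise to the output combination $\lambda^{(r)}+\sum_w\lambda_w^{(r)}\OnoA{w}{\rho}$, with $\mu_r$ playing the role that $\nu_w$ plays above.

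For (ii), I verify that Definition \ref{def:modif} applied to $\m{N}_2$ with the substitutions $A\leftrightarrow C$, $B$ unchanged, $\nu_w\mapsto-\nu_w$ and $\mu_r\mapsto-\mu_r$, using the same affine symmetry re-partitioned so that $C$ plays the ``deleted'' role and $A$ the ``created'' role, reproduces $\m{N}_1$ exactly. The common-parent condition for $C\cup B$ in $\m{N}_2$ is automatic; the outgoing weights from $C$ in $\m{N}_2$, namely $-\alpha_p'\nu_w=\alpha_p'(-\nu_w)$, satisfy Condition (iii) with the sign-flipped $\nu_w$; the bias reversal $\theta_w^{(\m{N}_2)}+\zeta(-\nu_w)=\theta_w^{(\m{N}_1)}$ and the outgoing-weight reversal $-\alpha_u(-\nu_w)=\omega_{wu}^{(\m{N}_1)}$ restore $\m{N}_1$, while the $B$-to-$W$ edge adjustments reverse the original modification, re-creating edges that had been deleted by cancellation and deleting edges that had been freshly created. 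The irreducibility of $\m{N}_2$ required to legitimately apply Definition \ref{def:modif} to it can be established by contradiction: any reducibility witness in $\m{N}_2$ would pull back along this reversal to a reducibility witness in $\m{N}_1$. The main obstacle is the case analysis in (iii), where every transformation scenario for $B$-to-$W$ edges, together with the output-scalar updates in the case $A\subset\Vout$, must be subsumed into a single identity equating the pre-activations in $\m{N}_1$ and $\m{N}_2$; once this identity is secured, (i) is nearly immediate and (ii) reduces to tracking signs in the substitution $\nu_w\mapsto-\nu_w$.
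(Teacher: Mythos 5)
Your proposal follows the same route as the paper for all three parts, and the core arguments are sound. A few remarks on the details.

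For part (ii), you correctly identify that, if one keeps the affine symmetry coefficients $\{\alpha_u\}_{u\in A\cup B}\cup\{\alpha_p'\}_p$ fixed and only re-partitions them so that $C$ plays the deleted role and $A$ the created role, then the scalars must be negated: $\nu_w\mapsto-\nu_w$ and $\mu_r\mapsto-\mu_r$. Indeed $\omega^{\m{N}_2}_{wu_p'}=-\alpha_p'\nu_w$ together with Condition (iii) of Definition \ref{def:modif} forces $\nu'_w=-\nu_w$, and the bias update $\theta^{\m{N}_2}_w+\zeta\nu'_w=\theta_w$ then works out. The paper instead asserts the reverse modification uses \emph{the same} $\{\nu_w\}$ and $\{\mu_r\}$, which only parses if one simultaneously negates all the $\alpha$'s and $\zeta$ in the affine symmetry (which yields an equally valid affine symmetry); as written, the paper's statement is imprecise, and your sign bookkeeping is the more careful formulation.

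The one place where your argument for (ii) is incorrect is the aside on irreducibility of $\m{N}_2$. You claim that any reducibility witness in $\m{N}_2$ would pull back to a reducibility witness in $\m{N}_1$; this is false, and in fact it is precisely the content of Lemma \ref{lem:modif-irred} that $\rho$-modification does \emph{not} in general preserve irreducibility: if the set $C$ of freshly created nodes is not chosen with minimal cardinality, $\m{N}_2$ may well be reducible even though $\m{N}_1$ is not. However, this does not compromise part (ii). The point of part (ii) is that the construction prescribed by Definition \ref{def:modif}, applied with the roles $(A,B,C)\leftrightarrow(C,B,A)$, recovers $\m{N}_1$ exactly; the preamble of Definition \ref{def:modif} does require an irreducible source network, but in every context where Proposition \ref{prop:modif-basic-prop}(ii) is invoked (notably Proposition \ref{prop:rhoisom-equiv}, where all networks are regular by hypothesis of Definition \ref{def:symiso}), irreducibility of $\m{N}_2$ is supplied externally. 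So you should simply drop the pull-back claim rather than try to prove irreducibility of $\m{N}_2$; it is not available and not needed.

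Parts (i) and (iii) are essentially verbatim the paper's argument: for (i), the common parent set $P$ pins the level of $A\cup B$ and hence of $C$ and of the nodes in $W$; for (iii), one writes $K_P=\sum_{v\in P}\kappa_v\OnoA{v}{\rho}$, evaluates the affine symmetry at $K_P(t)$, and substitutes into the pre-activation of each $w\in W$ (handling $B\cap\pre(w)$, $B\setminus\pre(w)$, and the deleted edges uniformly), with the output-scalar case handled analogously via $\mu_r$. These are correct as written, modulo the unspoken assumption in (iii) that the induction base (nodes with no ancestors in $A\cup C$) is shared between $\m{N}_1$ and $\m{N}_2$, which the paper also treats as immediate.
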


These properties naturally lead to the following definition of isomorphism up to $\rho$-modification. For example, the networks $\m{N}_1$, $\m{N}_2$, $\m{N}_3$, and $\m{N}_4$ in Figure \ref{fig:modification} are $\rho_c$-isomorphic.

\begin{definition}[$\rho$-isomorphism]\label{def:symiso}
Let $\m{N}$ and $\m{M}$ be regular GFNNs with $D$-dimensional output and the same input set, and let $\rho:\R\to\R$ be a nonlinearity. We say that $\m{N}$ is $\rho$-isomorphic to $\m{M}$, and write $\m{N}\rhoisom\m{M}$, if there exists a finite sequence $\m{N}_1,\dots, \m{N}_n$ of regular GFNNs with $D$-dimensional output and the same input set such that $\m{N}_1=\m{N}$, $\m{N}_n=\m{M}$, and, for $j\in\{1,\dots,n-1\}$, $\m{N}_{j+1}$ is a regular $\rho$-modification of $\m{N}_j$.
\end{definition}

\begin{prop}\label{prop:rhoisom-equiv}
The binary relation $\rhoisom$ is an equivalence relation on both $\mathscr{N}^{\Vin,D}_{\mrm{G}}$ and $\mathscr{N}^{\Vin,D}_{\mrm{L}}$, and if $\m{N}\rhoisom\m{M}$, then $\outmap{\m{N}}{\rho}=\outmap{\m{M}}{\rho}$.
\end{prop}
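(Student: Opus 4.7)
The plan is to verify reflexivity, symmetry, and transitivity of $\rhoisom$ in turn, in each case leveraging the three basic properties of $\rho$-modification collected in Proposition~\ref{prop:modif-basic-prop}, and then to handle the output-map claim by iterating $\OnoA{\m{N}_j}{\rho}=\OnoA{\m{N}_{j+1}}{\rho}$ along a witnessing chain.

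First, reflexivity is immediate from Definition~\ref{def:symiso}: take the single-element sequence $\m{N}_1=\m{N}$, for which the index set $\{1,\dots,n-1\}=\varnothing$ makes the $\rho$-modification condition vacuously true. For symmetry, suppose $\m{N}\rhoisom\m{M}$ is witnessed by a sequence $\m{N}=\m{N}_1,\m{N}_2,\dots,\m{N}_n=\m{M}$ of regular GFNNs (respectively LFNNs) with $\m{N}_{j+1}$ a regular $\rho$-modification of $\m{N}_j$ for each $j$. Reversing the sequence gives $\m{M}=\m{N}_n,\m{N}_{n-1},\dots,\m{N}_1=\m{N}$, still a sequence of regular networks. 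Proposition~\ref{prop:modif-basic-prop}(ii) yields that $\m{N}_j$ is a $\rho$-modification of $\m{N}_{j+1}$, and since $\m{N}_j$ is regular, it is in fact a \emph{regular} $\rho$-modification of $\m{N}_{j+1}$. This establishes $\m{M}\rhoisom\m{N}$. Transitivity is handled by concatenation: if $\m{N}_1,\dots,\m{N}_n$ witnesses $\m{N}\rhoisom\m{M}$ and $\m{M}_1,\dots,\m{M}_m$ witnesses $\m{M}\rhoisom\m{K}$ (with $\m{N}_n=\m{M}_1=\m{M}$), then the merged sequence $\m{N}_1,\dots,\m{N}_n,\m{M}_2,\dots,\m{M}_m$ witnesses $\m{N}\rhoisom\m{K}$.

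To restrict the relation to the layered class, I invoke Proposition~\ref{prop:modif-basic-prop}(i), which guarantees that each regular $\rho$-modification of an LFNN is again an LFNN; therefore the witnessing chain stays within $\mathscr{N}^{\Vin,D}_{\mrm{L}}$ whenever its first element does, and the three equivalence axioms verified above carry over verbatim. Finally, the output-map statement follows from iterating Proposition~\ref{prop:modif-basic-prop}(iii): along any chain witnessing $\m{N}\rhoisom\m{M}$ we have
\begin{equation*}
\OnoA{\m{N}}{\rho}=\OnoA{\m{N}_1}{\rho}=\OnoA{\m{N}_2}{\rho}=\cdots=\OnoA{\m{N}_n}{\rho}=\OnoA{\m{M}}{\rho}.
\end{equation*}
Since all the substantive work has been absorbed into Proposition~\ref{prop:modif-basic-prop} (whose three parts can each be read off directly from Definition~\ref{def:modif} and the computation in \eqref{eq:modif-intro-1}), the proof of Proposition~\ref{prop:rhoisom-equiv} is purely a bookkeeping exercise about finite sequences of modifications, with no serious obstacle to overcome.
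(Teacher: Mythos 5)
Your proof is correct and takes essentially the same route as the paper's: reflexivity and transitivity from Definition \ref{def:symiso} directly, symmetry by reversing the witnessing chain and appealing to Proposition \ref{prop:modif-basic-prop}(ii), the restriction to LFNNs via Proposition \ref{prop:modif-basic-prop}(i), and the output-map identity by iterating Proposition \ref{prop:modif-basic-prop}(iii).
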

\begin{proof}
By item (i) of Proposition \ref{prop:modif-basic-prop}, $\rhoisom$ is a relation on both $\mathscr{N}^{\Vin,D}_{\mrm{G}}$ and $\mathscr{N}^{\Vin,D}_{\mrm{L}}$.
Reflexivity and transitivity follow immediately from Definition \ref{def:symiso}. To establish symmetry, let $\m{N}$ and $\m{M}$ be regular GFNNs with $D$-dimensional output, and let $\rho:\R\to\R$ be a nonlinearity. Suppose that $\m{N}\rhoisom\m{M}$ and let $\m{N}_1,\dots, \m{N}_n$, $n\geq 1$, be a sequence of regular GFNNs as in Definition \ref{def:symiso}. Then, by item (ii) in Proposition \ref{prop:modif-basic-prop}, we know that $\m{N}_{j}$ is a $\rho$-modification of $\m{N}_{j+1}$, for all $j\in\{1,\dots,n-1\}$, and thus $\m{M}=\m{N}_{n},\m{N}_{n-1},\dots, \m{N}_1=\m{N}$ is a sequence establishing $\m{M}\rhoisom\m{N}$, and thereby symmetry of the relation $\rhoisom$. Moreover, we have
\begin{equation*}
\OnoA{\m{N}}{\rho}=\OnoA{\m{N}_1}{\rho}=\OnoA{\m{N}_{2}}{\rho}=\dots=\OnoA{\m{N}_n}{\rho}=\OnoA{\m{M}}{\rho},
\end{equation*}
as desired.
\end{proof}

We note that trivial networks $\m{T}^{\,\Vin,\, D}$ do not admit any $\rho$-modifications (simply as they do not have any non-input nodes), and therefore the only network that is $\rho$-isomorphic to $\m{T}^{\,\Vin,\, D}$ is $\m{T}^{\,\Vin,\, D}$ itself.

\subsection{Subnetworks and proofs of the null-net theorems}

The following proposition is the cornerstone of the null-net theorems.

\begin{prop}\label{prop:nonisom->Enullnet}
Let $\m{N}_1$ and $\m{N}_2$ be regular GFNNs, both with $D$-dimensional output and the same input set $\Vin$, and let $\rho$ be a nonlinearity. Suppose that $\m{N}_1$ and $\m{N}_2$ are not $\rho$-isomorphic and $\outmap{\m{N}_1}{\rho}=\outmap{\m{N}_2}{\rho}$. Then  there exists a non-trivial regular GFNN $\m{A}$ (layered if $\m{N}_1$ and $\m{N}_2$ are layered) with one-dimensional output and input set $\Vin$ such that $\outmap{\m{A}}{\rho}= 0$.
\end{prop}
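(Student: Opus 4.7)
The plan is to combine $\m{N}_1$ and $\m{N}_2$ into a single network whose output is their difference (hence identically zero), and then reduce that network to obtain a regular 1-dim null-net. First, I would construct the \emph{difference network} $\m{A}^*$ as the disjoint union of $\m{N}_1$ and $\m{N}_2$ along the shared input set $\Vin$: the edges, weights, and biases are inherited from the respective source networks; the output set is $\Vout^1\cup\Vout^2$; and the output scalars are chosen to realize the subtraction, namely $(\lambda^{(r)})^*\coleqq \lambda^{(r)}_1-\lambda^{(r)}_2$, $(\lambda^{(r)}_w)^*\coleqq \lambda^{(r)}_{w,1}$ for $w\in\Vout^1$, and $(\lambda^{(r)}_w)^*\coleqq -\lambda^{(r)}_{w,2}$ for $w\in\Vout^2$. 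Since $V^1\cap V^2=\Vin$ contains no edges of either network, $\m{A}^*$ is a valid DAG and inherits the layered property from $\m{N}_1$ and $\m{N}_2$. By construction, $\outmap{\m{A}^*}{\rho}=\outmap{\m{N}_1}{\rho}-\outmap{\m{N}_2}{\rho}\equiv 0$, and $\m{A}^*$ is non-degenerate as a $D$-dimensional GFNN.

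Next, I would pick a coordinate $r\in\{1,\dots,D\}$ for which at least one of the scalars $\lambda^{(r)}_{w,1}$, $\lambda^{(r)}_{w,2}$ is nonzero (such an $r$ exists by non-degeneracy of $\m{N}_1$ and $\m{N}_2$, at least one of which is non-trivial, since two trivial networks would be $\rho$-isomorphic). Restricting $\m{A}^*$ to this single coordinate yields a 1-dim GFNN $\m{A}^*_r$ with zero output. I would then \emph{prune} it by deleting every output node with zero $r$-th output scalar and iteratively removing every non-input node that is not an ancestor of some remaining output node, arriving at a non-degenerate 1-dim GFNN; because $r$ was chosen so that some output scalar survives, this pruned network is non-trivial. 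Finally, I would perform $\rho$-reductions until no further reduction is possible. Since each reduction strictly decreases the node count, the procedure terminates in a regular 1-dim GFNN $\m{A}_r$ with input $\Vin$ and zero output, which is layered whenever $\m{N}_1,\m{N}_2$ are by Proposition~\ref{prop:modif-basic-prop}(i).

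The remaining task, and the main obstacle, is to show that for \emph{some} choice of $r$, the resulting $\m{A}_r$ is non-trivial. I would proceed by contradiction, assuming $\m{A}_r=\m{T}^{\,\Vin,\,1}$ for every valid $r$ and deducing $\m{N}_1\rhoisom\m{N}_2$, which would contradict the hypothesis. The key observation is that the irreducibility of $\m{N}_1$ and $\m{N}_2$ prevents any $\rho$-reduction of $\m{A}^*$ from being supported entirely within $V^1$ or entirely within $V^2$: every reduction must mix nodes from both sides, and because the only initially shared nodes lie in $\Vin$, each mixing reduction at the lowest available level pairs $V^1$-nodes with $V^2$-nodes in a manner dictated by an affine symmetry of $\rho$. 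Tracking these pairings across successive reductions and prunings—e.g., by induction on the number of reduction steps performed—one can re-encode the entire sequence as a chain of $\rho$-modifications of $\m{N}_1$ that transforms it into a network structurally identical to $\m{N}_2$, thereby yielding $\m{N}_1\rhoisom\m{N}_2$. The careful bookkeeping of these pairings, in particular propagating matches through multiple levels after new nodes have been adjoined by mixing reductions that invoke non-trivial affine symmetries, is the principal technical hurdle.
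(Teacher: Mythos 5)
Your high-level plan---combine $\m{N}_1$ and $\m{N}_2$ into a difference network $\m{A}^*$ with identically zero output, pick a coordinate $r$, prune, and reduce to a regular network $\m{A}_r$---is a natural reading of the statement, and the constructions you describe are all sound as far as they go. The serious gap is precisely what you label ``the principal technical hurdle'': you assume that if every $\m{A}_r$ reduces to the trivial network, the entire chain of reductions of $\m{A}^*$ can be repackaged as a chain of \emph{regular} $\rho$-modifications carrying $\m{N}_1$ to $\m{N}_2$, but you never argue this, and the difficulties are substantive. First, the formal $\rho$-isomorphism (Definition \ref{def:symiso}) requires every intermediate network to be regular, and a reduction of $\m{A}^*$ mixing $V^1$- and $V^2$-nodes need not translate to a \emph{regular} modification of $\m{N}_1$: the resulting network can acquire new reducibility among the freshly-adjoined $C$-nodes and the surviving $B$-nodes, or become degenerate, which is exactly why the paper needs Proposition \ref{prop:modif-regul}. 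Second, once the first reduction is performed, the collapsed difference network has new shared structure; subsequent reductions are no longer cleanly supported on ``$V^1$-side versus $V^2$-side'' nodes, so the inductive bookkeeping must propagate matched groups of arbitrary size (your reductions may swap $|A|$ nodes for $|C|\neq|A|$ nodes) through arbitrarily many levels. You flag this as a hurdle but leave it entirely unresolved, and it is where the whole proof actually lives.

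The paper's proof sidesteps this iteration. Rather than tracking a reduction sequence, it fixes a $\prec$-\emph{maximal} regular subnetwork $\m{M}_1$ of $\m{N}_1$ that is also a subnetwork of some $\m{N}_3\rhoisom\m{N}_2$, forms the combined network $\wtd{\m{A}}$ of $\m{N}_1$ and $\m{N}_3$, extracts a nontrivial non-degenerate 1-dim zero-output subnetwork $\m{A}$ from the coordinates where $\m{N}_1$ and $\m{N}_3$ disagree, and proves irreducibility of $\m{A}$ in one stroke: any $(\rho,U)$-reduction of $\wtd{\m{A}}$ would, via Proposition \ref{prop:modif-regul}, yield a regular $\rho$-modification of $\m{N}_3$ adjoining fresh nodes $C\subset V^1\setminus V^3$, producing a strictly larger element of $\mathscr{M}$ and contradicting maximality. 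This single maximality argument replaces the open-ended bookkeeping your proposal gestures at, and it is the piece your proof is missing.
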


The proof of Proposition \ref{prop:nonisom->Enullnet} relies crucially on being able to perform  $\rho$-modification in a manner that preserves regularity. Unfortunately, neither irreducibility nor non-degeneracy are generally preserved under $\rho$-modification. 
The following proposition, however, tells us that, for every $\rho$-modification of a regular GFNN, there exists an alternative (but related) $\rho$-modification that preserves regularity, which will suffice for the purpose of proving Proposition \ref{prop:nonisom->Enullnet}.

\begin{prop}\label{prop:modif-regul}
Let $\m{N}$ be a regular GFNN with $D$-dimensional output, let $\rho:\R\to\R$ be a nonlinearity, and let $A_0,B_0$ be disjoint sets of nodes of $\m{N}$ with a common parent set $P$ such that $\m{N}$ admits a $(\rho\,;A_0,B_0,C_0)$--modification. Then there exist disjoint sets $A\supset A_0$ and $B$ of nodes with common parent set $P$, and a $C\subset C_0$, such that $\m{N}$ admits a regular $(\rho\,;A,B,C)$--modification.
\end{prop}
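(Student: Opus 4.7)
The plan is an extremal argument. Let $\mathcal{M}$ denote the (finite, nonempty) collection of triples $(A, B, C)$ for which $\m{N}$ admits a $(\rho\,;A, B, C)$--modification with $A \supseteq A_0$ and $C \subseteq C_0$; it is nonempty since $(A_0, B_0, C_0) \in \mathcal{M}$ by hypothesis. I would then choose $(A^*, B^*, C^*) \in \mathcal{M}$ maximizing $|A|$, breaking ties by minimizing $|C|$, and let $\m{N}'$ denote the resulting modified network. The task then reduces to verifying that this extremal $\m{N}'$ is necessarily regular.

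For non-degeneracy, I would first observe that a new node $u_p' \in C^*$ is never degenerate in $\m{N}'$: by items (iii) and (iv) of Definition \ref{def:modif} (together with $\alpha_p', \nu_w \neq 0$ and the non-degeneracy of $\m{N}$), each $u_p'$ either has a nonzero outgoing edge to some $w \in W$, or, if $A^* \subseteq \Vout$, carries a nonzero output scalar. The only remaining potential source of degeneracy is a node $u \in B^*$ all of whose outgoing edges to $W$ get cancelled in $\m{N}'$ and which contributes nothing to the output elsewhere. For such a $u$, I would show that moving it from $B^*$ to $A^*$ yields another valid element of $\mathcal{M}$—the affine symmetry relation used in Definition \ref{def:modif} is unchanged under reassigning a node from $B$ to $A$, and conditions (i)--(iv) transfer because $u$'s outgoing weights to $W$ are already aligned with $\{\alpha_u \nu_w\}$—contradicting the maximality of $|A^*|$.

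For irreducibility, I would suppose for contradiction that $\m{N}'$ is $(\rho, U')$--reducible and let $P'$ be the common parent set of $U'$ in $\m{N}'$. If $U'$ lies entirely within nodes untouched by the modification (i.e.\ $U' \cap (B^* \cup C^* \cup W) = \varnothing$), then the identical reducibility relation would persist in $\m{N}$, contradicting the irreducibility of $\m{N}$. Hence $U'$ must meet $B^* \cup C^* \cup W$. A case analysis on whether $P' = P$ or $P'$ is the (modified) parent set of some $w \in W$ then allows me to combine the alignment relation witnessing the $(\rho, U')$--reducibility with the original affine symmetry supported on $A^* \cup B^* \cup C^*$ to synthesize a strictly larger affine symmetry in $\m{N}$. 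This synthesized symmetry yields a modification $(A,B,C) \in \mathcal{M}$ with $|A|>|A^*|$, again contradicting maximality.

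The main obstacle will be the second case, where $U'$ contains nodes in $W$ whose parent sets differ from $P$ because of the removal of $A^*$ and the insertion of $C^*$. There, the alignment coefficients for $U'$ must be pulled back to $\m{N}$ by reconciling the modification-induced weights $-\alpha_p'\nu_w$ (for edges from $C^*$) and $\omega_{wu} - \alpha_u \nu_w$ (for edges from $B^*$) with the reducibility coefficients on $U'$, and the minimality clause (ii) of Definition \ref{def:sym} must be invoked to strip any ``passive'' summands and guarantee that the combined symmetry remains atomic. Once this bookkeeping is handled, the extremality of $(|A^*|, -|C^*|)$ in $\mathcal{M}$ delivers the required contradiction in each case, and the proposition follows.
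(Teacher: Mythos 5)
Your extremal framework is a plausible alternative to the paper's decomposition into two separate lemmas (Lemma~\ref{lem:modif-irred} and Lemma~\ref{lem:modif-nondeg}), and the non-degeneracy half is essentially the paper's argument: the paper proves exactly your claim in Lemma~\ref{lem:modif-nondeg}, by promoting a problematic $u^*\in B$ into $A$, which your maximality of $|A^*|$ re-derives.

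However, the irreducibility half of your argument has a genuine gap, and the gap concerns precisely the part you flag as ``the main obstacle.'' You assert that both branches of the reducibility case analysis ``yield a modification $(A,B,C)\in\mathcal{M}$ with $|A|>|A^*|$,'' but this does not match what actually happens. In the paper's proof (Lemma~\ref{lem:modif-irred}), the case $U'\cap C^*\neq\varnothing$ (your ``$P'=P$'' branch) produces a modification with the \emph{same} $A$ and \emph{strictly smaller} $C$ — so the contradiction is with minimality of $|C|$, not maximality of $|A|$; under your lexicographic order this only works through the tie-break, and you should say so. More seriously, the case $P'\cap C^*\neq\varnothing$, which forces $U'\subset W$ and $C^*\subset P'$, does \emph{not} terminate in a contradiction with extremality at all. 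In that case, one must prove that the weights $\omega'_{wu}$ of $\m{N}'$ emanating from $B^*\cup C^*\cup D$ into $U'\subset W$ can be reconciled with the original weights $\omega_{wu}$ of $\m{N}$ emanating from $A^*\cup B^*\cup D$, ultimately showing that $\m{N}$ itself is $(\rho,U')$--reducible, contradicting irreducibility of $\m{N}$ directly. This requires a careful proportionality argument (showing the reducibility coefficients $\{\wtd\beta_w\}_{w\in U'}$ align with a set $\{\kappa_u^*\}_{u\in P_{U'}^*}$ over the full parent set in $\m{N}$), which involves nontrivial algebraic bookkeeping with the modified weights $-\alpha_p'\nu_w$ and $\omega_{wu}-\alpha_u\nu_w$, plus an invocation of Lemma~\ref{lem:LD->reduc} to rule out a spurious linear dependency among the weights to $B_3$. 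Your sketch frames this case as deliverable by extremality alone, which is incorrect, and otherwise gives no hint of the concrete mechanism. Without this substance, the irreducibility claim is unproven.
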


The proof of Proposition \ref{prop:modif-regul} proceeds via the next two lemmas (proved in the Appendix) that treat the irreducibility and non-degeneracy aspects of regularity separately.
To motivate the first lemma, we note that $\rho$-modification can be seen as a process whereby certain nodes $A$ are removed from a GFNN by replacing their maps with a combination of the maps of nodes $B$ already present in the GFNN, as well as several ``nascent'' nodes $C$. However, if we add too many nascent nodes $C$ at once, we might provoke reducibility in the resulting network. This situation is illustrated in Figure \ref{fig:ModificationLemma3}.
Our lemma thus shows that irreducibility can be preserved by ``modifying frugally'', i.e., by adding the least possible number of nodes $C$ that facilitates $\rho$-modification:

\begin{lemma}\label{lem:modif-irred}
Let $\m{N}$ be an irreducible GFNN with $D$-dimensional output, let $\rho:\R\to\R$ be a nonlinearity, and let $A_0,B_0$ be disjoint sets of nodes of $\m{N}$ with a common parent set $P$ such that $\m{N}$ admits a $(\rho\,;A_0,B_0,C_0)$--modification. Let $C\subset C_0$ be a set of least possible cardinality so that there exist disjoint sets $A\supset A_0$ and $B$ of nodes of $\m{N}$ with a common parent set $P$ such that $\m{N}$ admits a $(\rho\,;A,B,C)$--modification $\m{N}'$. Then $\m{N}'$ is irreducible.
\end{lemma}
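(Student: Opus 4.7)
The plan is to argue by contradiction. Suppose $\m{N}'$ is reducible via some set $U'\subset V'$ with common parent set $P'$ in $\m{N}'$; I will show that either $\m{N}$ itself is reducible (contradicting the hypothesis) or $\m{N}$ admits a $(\rho\,;\wtd{A},\wtd{B},\wtd{C})$-modification with $\wtd{A}\supset A_0$ and $|\wtd{C}|<|C|$, contradicting the minimality of $|C|$.

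The easy case is $U'\cap C=\varnothing$. If moreover $U'$ is disjoint from $W=\{w\in V:\pre_\m{N}(w)\cap A\neq\varnothing\}$, then the incoming weights, biases, and parent sets of $U'$ coincide in $\m{N}$ and $\m{N}'$, so the affine symmetry witnessing reducibility of $U'$ in $\m{N}'$ transports verbatim to $\m{N}$. If $U'\cap W\neq\varnothing$, then since every $w\in W$ has $C\subset\pre_{\m{N}'}(w)$ by construction, we must have $C\subset P'$, and hence $U'\subset W$ (no other type of node has $C$ in its parent set). The reducibility proportionality of the incoming weights of $U'$ in $\m{N}'$, when unwound via the explicit weight adjustments prescribed in Definition \ref{def:modif}, yields a proportionality of the original weights of $\m{N}$ that, together with the reducibility-witnessing symmetry, exhibits reducibility of $\m{N}$.

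The substantive case is $U'\cap C\neq\varnothing$. Since every $u_p'\in C$ has $\pre_{\m{N}'}(u_p')=P$, the common parent set forces $P'=P$ and $U'\subset C\cup V_P$, where $V_P\coleqq\{u\in V\setminus A:\pre_\m{N}(u)=P\}\supset B$. Rescaling so that the proportionality constants of $U'$ in $\m{N}'$ match the original $\{\kappa_v\}_{v\in P}$, the reducibility-witnessing symmetry becomes an identity of the form $\sum_{u\in U'\cap V_P}\wtd{\alpha}_u\,\rho(\beta_u\cdot+\theta_u)+\sum_{u_p'\in U'\cap C}\wtd{\alpha}_p\,\rho(\beta'_p\cdot+\gamma'_p)=\zeta''\bm{1}$, where $\beta_u$ for $u\in V_P\setminus B$ denotes the common ratio $\omega_{uv}/\kappa_v$. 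I would then pick $u_{p^*}'\in U'\cap C$ and form a linear combination of this identity with the original affine symmetry underlying the $(\rho\,;A,B,C)$-modification so as to annihilate the $\rho(\beta'_{p^*}\cdot+\gamma'_{p^*})$ term. The resulting nonzero dependency relation retains the original nonzero coefficients on the terms indexed by $A_0$ and involves at most $n-1$ triples of the form $(\beta'_p,\gamma'_p)$. Extracting from it an atomic affine symmetry via the minimality clause (ii) of Definition \ref{def:sym} that still covers every triple indexed by $A_0$ produces a valid $(\rho\,;\wtd{A},\wtd{B},\wtd{C})$-modification of $\m{N}$ with $\wtd{A}\supset A_0$, $\wtd{B}\subset V_P\setminus\wtd{A}$, and $|\wtd{C}|\leq n-1<|C|$, contradicting the minimality of $|C|$.

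The main obstacle I anticipate is the extraction step: ensuring that an atomic subrelation of the combined identity can be chosen to (i) continue to involve every $u\in A_0$, and (ii) yield coefficients compatible with the structural conditions (iii) and (iv) of Definition \ref{def:modif}, namely the proportionality of the outgoing weights to $W$ with $\{\alpha_u\}_{u\in\wtd{A}}$ and, if $\wtd{A}\cap\Vout\neq\varnothing$, the analogous proportionality of the output scalars. The freedom to enlarge $A_0$ to $\wtd{A}$ and to choose $\wtd{B}$ inside $V_P\setminus\wtd{A}$, both granted by the lemma's statement, provides the flexibility needed to absorb the new coefficients produced by the cancellation, but verifying compatibility with (iii) and (iv) requires a careful accounting of which coefficients survive in the combined relation and of how the atomic decomposition interacts with the pre-existing proportionality data of $\m{N}$.
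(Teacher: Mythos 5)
Your proposal follows the same overall strategy as the paper's proof: argue by contradiction, split into cases according to whether the reducing set meets $C$, and in the substantive case combine the reducibility relation with the original symmetry to cancel one $C$-term. The case-split ($U'\cap C=\varnothing$ with subcases on $U'\cap W$, versus $U'\cap C\neq\varnothing$) matches the paper's dichotomy, since $U'\cap W\neq\varnothing$ together with $U'\cap C=\varnothing$ forces $C\subset P'$, which is the paper's ``$P_U\cap C\neq\varnothing$'' branch.

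There is, however, one genuine point of divergence in the substantive case, and it is exactly where your anticipated obstacle (i) lies. You propose to \emph{extract} an atomic affine symmetry from the combined dependency relation via Lemma~\ref{lem:LD->reduc}; since that lemma only guarantees the extracted symmetry contains one prescribed index $j^*$, you correctly worry that it may fail to cover all of $A_0$. The paper avoids this entirely: it does not extract a sub-symmetry but instead keeps $\wtd{A}=A$ in full (hence automatically $\supset A_0$), takes $\wtd{B}$ to be the nodes of $\m{N}$ whose coefficients survive the cancellation, and observes that the structural conditions (iii)--(iv) of Definition~\ref{def:modif} transfer verbatim because the $A$-coefficients in the combined relation equal the original $\alpha_u$. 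The paper then only needs the side check $\wtd{C}\neq\varnothing$, which it handles via Lemma~\ref{lem:LD->reduc}: if $\wtd{C}=\varnothing$, the combined relation lives entirely on nodes of $\m{N}$ and forces $(\rho,D')$-reducibility of $\m{N}$. So you should redirect: do not extract a sub-symmetry; instead keep the full $A$, define $\wtd{B}$ by deleting zero coefficients, and invoke Lemma~\ref{lem:LD->reduc} only to rule out $\wtd{C}=\varnothing$. This makes your concern (i) moot and substantially resolves concern (ii). (Strictly speaking the paper also implicitly relies on the combined relation, restricted to its nonzero terms, being \emph{atomic} so that condition (i) of Definition~\ref{def:modif} is met; neither route as stated explicitly verifies this, so be aware that some argument is still needed there, but the paper's route localizes it.)

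Your sketch of the case $U'\cap C=\varnothing$, $U'\cap W\neq\varnothing$ is substantially compressed relative to the paper. The paper's treatment is not a one-line ``unwind the weights'': after deducing $U\subset W$ and $C\subset P_U$, it decomposes $B$ into $B_1,B_2,B_3$ according to the status of the edges $(u,w)$, derives $\nu_w=\tau\wtd{\beta}_w$ by comparing weights into $C$, forms the functions $K_w$, and applies Lemma~\ref{lem:LD->reduc} to establish proportionality of the $B_3$-weights across $w\in U$, before concluding $(\rho,U)$-reducibility of $\m{N}$. Your sentence gestures in the right direction but should not be considered a proof of that case without that accounting.
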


\begin{figure}[h!]\centering
\includegraphics[width=\textwidth,angle=0]{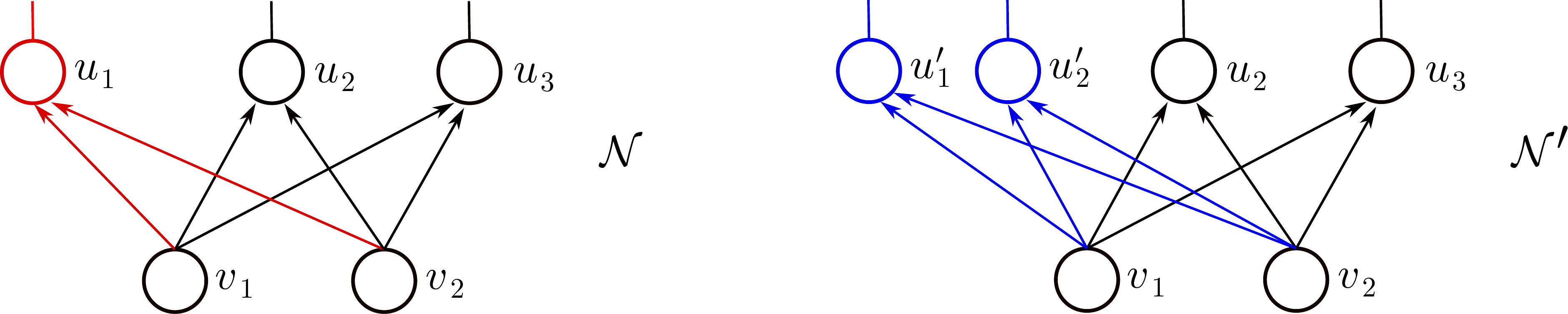}
\caption{Consider a regular network $\m{N}$ as shown on the left, and suppose that it admits a $(\rho\,;A,B,C)$--modification $\m{N}'$ (shown on the right), where $A=\{u_1\}$, $B=\{u_2\}$, and $C=\{u_1', u_2'\}$. The network $\m{N}'$ might then be $(\rho,U)$--reducible, for some $U\subset B\cup C \cup \{u_3\}$ containing at least one element of $C$.
\label{fig:ModificationLemma3}}
\end{figure}

To motivate the second lemma, note that the $(\rho\,; A,B,C)$--modification of a non-degenerate network $\m{N}$ is degenerate precisely if there exists a node $u^*\in B$ that loses all its outgoing edges in the process, and, if $u^*$ is an output node of $\m{N}$, all its output scalars are set to zero. Degeneracy can thus be avoided by performing an alternative $\rho$-modification that, in addition to the nodes in $A$, removes such problematic nodes as well.

\begin{lemma}\label{lem:modif-nondeg}
Let $\m{N}$ be a non-degenerate GFNN with $D$-dimensional output, let $\rho:\R\to\R$ be a nonlinearity, and let $A,B$ be disjoint sets of nodes of $\m{N}$ with a common parent set such that $\m{N}$ admits a $(\rho\,;A,B,C)$--modification.
Then there exists a set $B^*\subset B$ such that $\m{N}$ admits a non-degenerate $(\rho\,;A\cup B^*,B\setminus B^*,C)$--modification $\m{N}'$.
\end{lemma}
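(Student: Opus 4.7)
The approach is to identify precisely which nodes in $B$ cause the degeneracy of the original $(\rho;A,B,C)$-modification $\m{N}'$ of $\m{N}$, move them into $A$, and verify that the resulting modification remains valid and becomes non-degenerate. I would define $B^*\subset B$ to consist of all nodes $u\in B$ that end up with no outgoing edges in $\m{N}'$ and are not in $\Vout'$. Unpacking the edge-creation/deletion rules in Definition~\ref{def:modif}, a node $u\in B$ satisfies the no-outgoing-edges condition precisely when every child of $u$ in $\m{N}$ lies in $W$ and $\omega_{wu}=\alpha_u\nu_w$ for every $w\in W$ (so that each such edge is deleted rather than merely re-weighted). The not-in-$\Vout'$ condition splits along Case~1 ($A\cap\Vout=\varnothing$), where it amounts to $u\notin\Vout$, and Case~2 ($A\subset\Vout$), where it amounts to $u\in B\cap\Vout$ with $\lambda_u^{(r)}=\alpha_u\mu_r$ for every $r$.

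Next I would check that the $(\rho;A\cup B^*,B\setminus B^*,C)$-modification is valid, by verifying conditions (i)--(iv) of Definition~\ref{def:modif}. Conditions (i) and (ii) are inherited from the original modification since the affine symmetry and the scalars $\kappa_v$ are reused. For condition (iii), the defining property of $B^*$ ensures that every node in $B^*$ is a parent of each $w\in W$ with weight $\omega_{wu}=\alpha_u\nu_w$, and that no $B^*$-node has a child outside $W$, so the new children set of $A\cup B^*$ is again $W$, the same $\nu_w$'s work, and $A\cup B^*\subset\pre(w)$ for every $w\in W$. For condition (iv): in Case~1, $B^*\cap\Vout=\varnothing$, so $(A\cup B^*)\cap\Vout=\varnothing$; in Case~2, $B^*\subset\Vout$ with $\lambda_u^{(r)}=\alpha_u\mu_r$ for $u\in B^*$, so $A\cup B^*\subset\Vout$ and the $\lambda$-alignment extends with the same $\mu_r$'s.

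Finally I would establish non-degeneracy of the resulting modification $\m{N}'_{\text{new}}$. The output-scalar condition (ii) of Definition~\ref{def:NonDeg} holds automatically by the construction of $\Vout'_{\text{new}}$. For condition (i), I would verify that every non-input node of $\m{N}'_{\text{new}}$ has a descendant in $\Vout'_{\text{new}}$ by considering three types of nodes: (a) nodes in $C$, whose newly created outgoing edges land on every $w\in W$, and which by assumption propagate downstream into $\Vout'_{\text{new}}$; (b) nodes in $B\setminus B^*$, which by the very choice of $B^*$ either lie in $\Vout'_{\text{new}}$ or retain an outgoing edge, and since $B$-nodes are not parents of other $B$-nodes the additional removal of $B^*$ does not affect their outgoing edges; and (c) nodes in $V\setminus(A\cup B^*\cup\Vin)$, which had descendants in $\Vout$ within the non-degenerate $\m{N}$. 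The main obstacle is case (c): when a node's path to $\Vout$ in $\m{N}$ passed through some removed node in $A\cup B^*$, one must verify that an analogous path exists in $\m{N}'_{\text{new}}$, using the observation that every node in $P$ has outgoing edges to every $u_p'\in C$ and that $C$ is attached by new edges to every element of $W$, so any path $P\to A\cup B^*\to W\to\cdots\to\Vout$ in $\m{N}$ is mirrored by a path $P\to C\to W\to\cdots\to\Vout'_{\text{new}}$ in $\m{N}'_{\text{new}}$.
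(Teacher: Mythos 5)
Your proposal is correct and is essentially the paper's argument in a different logical wrapper: the paper takes a maximal $B'\subset B$ for which a $(\rho\,;A\cup B',B\setminus B',C)$--modification exists and derives a contradiction from degeneracy by exhibiting exactly the witness nodes you place in $B^*$ (all outgoing edges deleted, i.e.\ every child in $W$ with $\omega_{wu}=\alpha_u\nu_w$, and output scalars annihilated), whereas you construct $B^*$ directly and verify conditions (i)--(iv) for the enlarged set in one pass. The two verifications are the same, and your observation that enlarging $A$ to $A\cup B^*$ leaves $W$ and the outgoing edges of $B\setminus B^*$ unchanged is precisely what makes the one-shot construction work without iteration.
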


\noindent We are now ready to prove Proposition \ref{prop:modif-regul}.

\begin{proof}[Proof of Proposition \ref{prop:modif-regul}]
Let $C\subset C_0$ be a subset of minimal cardinality such that $\m{N}$ admits a $(\rho\;,A',B',C)$--modification, for some disjoint sets $A'\supset A_0$ and $B'$ of nodes of $\m{N}$ with parent set $P$. Now, as $\m{N}$ is regular and hence non-degenerate, we have by Lemma \ref{lem:modif-nondeg} that there exists a $B^*\subset B'$ such that $\m{N}$ admits a non-degenerate $(\rho\,;A'\cup B^*,B'\setminus B^*,C)$--modification $\m{N}'$. As $\m{N}$ is irreducible, it follows by Lemma \ref{lem:modif-irred} that $\m{N}'$ is irreducible, and thus $\m{N}'$ is the desired regular $\rho$-modification of $\m{N}$.
\end{proof}

In order to prove Proposition \ref{prop:nonisom->Enullnet}, we will also need the following definition of a subnetwork of a GFNN:

\begin{definition}[Subnetwork]\label{def:GFNNsubnet}
Let $\m{N}=(V,E,\Vin,\Vout,\Omega,\Theta,\Lambda)$ be a GFNN with $D$-dimensional output.
A \emph{subnetwork of $\m{N}$}
 is a GFNN
$\m{N}'=(V',E',\Vin',\Vout',\Omega',\Theta',\Lambda')$ with $D'$-dimensional output such that there exists a set $S\subset V$ so that
\begin{enumerate}[(i)]
\item $V'=\anc_{\m{N}}(S)$,
\item $E'=\{(v,\wtd{v})\in E: v,\wtd{v}\in V'\}$,
\item $\Vin'\supset \Vin\cap V'$,
\item $\Omega'=\{\omega_{\wtd{v}v}:(v,\wtd{v})\in E'\}$,
\item $\Theta'=\{\theta_{v}:v\in V'\}$.
\end{enumerate}
Whenever we wish to specify explicitly the set $S$ giving rise to $\m{N}'$, we will say that $\m{N}'$ is a \emph{subnetwork of $\m{N}$ generated by $S$.}
\end{definition}
\noindent Note that subnetworks generated by a set $S$ are not unique. They become unique, though, if we also specify their input and output sets $\Vin'$ and $\Vout'$, and their set of output scalars $\Lambda'$.

\begin{proof}[Proof of Proposition \ref{prop:nonisom->Enullnet}]
Let $\m{N}_1=(V^1,E^1,\Vin,\Vout^1,\Omega^1,\Theta^1,\Lambda^1)$ and $\m{N}_2 =(V^2,E^2,\Vin,\Vout^2,\allowbreak\Omega^2,\Theta^2,\Lambda^2)$ be as in the proposition statement, and let $\mathscr{M}$ be the set of regular GFNNs $\m{M}$ with the following properties
\begin{itemize}[--]
\item the input set of $\m{M}$ is $\Vin$,
\item $\m{M}$ is a subnetwork of $\m{N}_1$, and
\item $\m{N}_2$ is $\rho$-isomorphic to some regular GFNN containing $\m{M}$ as a subnetwork.
\end{itemize}
We introduce a partial order $\prec$ on $\mathscr{M}$ by setting $\m{M}'\prec\m{M}$ if and only if $\m{M}'$ is a subnetwork of $\m{M}$. Now, let $\m{M}_1$ be a maximal element of $\mathscr{M}$ with respect to $\prec$, and let $\m{N}_3=(V^{3},E^{3},\Vin,\Vout^{3},\Omega^{3},\Theta^{3},\allowbreak \Lambda^{3})$ be a regular GFNN such that $\m{N}_2\rhoisom\m{N}_3$ and $\m{M}_1$ is a subnetwork of $\m{N}_3$.

Note that both $\m{N}_1$ and $\m{N}_3$ contain $\m{M}_1$ as a subnetwork. In particular, the set of nodes of $\m{M}_1$ is given by ${V}^{\m{M}_1}=V^1\cap V^{3}$.
Furthermore, as $\m{N}_2\rhoisom\m{N}_3$, we have by Proposition \ref{prop:modif-basic-prop}:
\begin{equation}\label{eq:morphProp-1}
\begin{aligned}
0&=\big(\outmap{\m{N}_1}{\rho}\big)_r- \big(\outmap{\m{N}_{2}}{\rho}\big)_r=\big(\outmap{\m{N}_1}{\rho}\big)_r-\big(\outmap{\m{N}_3}{\rho}\big)_r\\
&=\lambda^{(r),1}-\lambda^{(r),3} +\sum_{w\in \Vout^1\setminus {\Vout^{3}}} \lambda^{(r),1}_w\outmapT{w}{\rho}{\m{N}_1}\\
&\quad+\sum_{w\in \Vout^1\cap {\Vout^{3}}}\left(\lambda^{(r),1}_w-\lambda^{(r),3}_w \right)\outmapT{w}{\rho}{\m{N}_1}-\sum_{w\in {\Vout^{3}}\setminus \Vout^1} \lambda^{(r),3}_w\outmapT{w}{\rho}{\m{N}_3},
\end{aligned}
\end{equation}
for all $r\in\{1,\dots, D\}$.
We now show the following:

\textit{Claim: there exist an $r\in\{1,\dots,D\}$ and a $w\in \Vout^1\cup {\Vout^{3}}$ such that at least one of the following three statements holds:
\begin{equation}
\begin{aligned}
w&\in \Vout^1\setminus {\Vout^{3}} && \text{ and } && \lambda^{(r),1}_w\neq 0,  \\
 w&\in \Vout^1\cap {\Vout^{3}} && \text{ and } && \lambda^{(r),1}_w-\lambda^{(r),3}_w\neq 0,   \label{eq:morphProp-4}\\
w&\in {\Vout^{3}}\setminus \Vout^1 && \text{ and } && \lambda^{(r),3}_w\neq 0. 
\end{aligned}
\end{equation}
}

\noindent\textit{Proof of Claim.} Suppose by way of contradiction that this is not the case, i.e.,  we have 
\begin{equation*}
\begin{aligned}
\lambda^{(r),1}_w&= 0, &&\text{ for all }w\in \Vout^1\setminus {\Vout^{3}},\\
 \lambda^{(r),1}_w-\lambda^{(r),3}_w&= 0, &&\text{ for all }w\in \Vout^1\cap {\Vout^{3}},\text{ and }\\
\lambda^{(r),3}_w&=0, &&\text{ for all }w\in {\Vout^{3}}\setminus \Vout^1,
\end{aligned}
\end{equation*}
for all $r$. Then, as $\m{N}_1$ is non-degenerate, Property (ii) in Definition \ref{def:NonDeg} implies that $\Vout^1\setminus {\Vout^{3}}=\varnothing$, i.e., $\Vout^1\subset {\Vout^{3}}$. Similarly, as $\m{N}_3$ is non-degenerate, we have $\Vout^{3}\setminus {\Vout^1}=\varnothing$, and thus $\Vout^1={\Vout^{3}}\subset{V}^{\m{M}_1}$. But then we have $V^1\setminus \Vin=V^{\m{M}_1}\setminus \Vin=V^{3}\setminus \Vin$, again by non-degeneracy of $\m{N}_1$ and $\m{N}_3$. Next, as $\lambda^{(r),1}_w=\lambda^{(r),3}_w$, for all $w\in \Vout^1=\Vout^{3}$, it follows from \eqref{eq:morphProp-1} that $\lambda^{(r),1}=\lambda^{(r),3}$, for all $r\in\{1,\dots,D\}$. Thus $\m{N}_1=\m{N}_3$, contradicting the assumption that $\m{N}_1$ and $\m{N}_2$ are not $\rho$-isomorphic. This establishes the Claim.

 Now, for $r\in\{1,\dots,D\}$, set 
\begin{equation*}
S^r=\{w\in \Vout^1\cup \Vout^{3}:\text{one of the statements in \eqref{eq:morphProp-4} holds}\}.
\end{equation*}
Furthermore, for $r\in\{1,\dots,D\}$ and $w\in S^r$, let 
\begin{equation*}
{\lambda}_{w}^{(r)}=
\begin{cases}
\lambda^{(r),1}_w, &\text{if }w\in \Vout^1\setminus {\Vout^{3}}\\
\lambda^{(r),1}_w-\lambda^{(r),3}_w, &\text{if }w\in \Vout^1\cap {\Vout^{3}}\\
\lambda^{(r),3}_w, &\text{if }w\in \Vout^{3}\setminus {\Vout^{1}}\\
\end{cases},
\end{equation*}
and set ${\Lambda}^{(r)}=\{{\lambda}^{(r)}\coleqq 0\}\cup\{\lambda_{w}^{(r)}:w\in S^r\}$ and ${\Lambda}=\bigcup_{r=1}^D {\Lambda}^{(r)}$. By the Claim we know that there exists an $r^*$ such that $S^{r^*}\neq\varnothing$. Moreover, as $S^{r^{*}}\subset \Vout^1\cup \Vout^{3}$, we have $S^{r^*}\cap \Vin=\varnothing$.

Now, define the ``combined'' network $\wtd{\m{A}}=(V^1\cup V^{3}, E^1\cup E^{3}, \Vin, \Vout^1\cup \Vout^{3}, \Theta^1\cup\Theta^{3},\Omega^1\cup\Omega^{3},{\Lambda})$. Finally, let $\m{A}$ be the subnetwork of $\wtd{\m{A}}$ generated by $S^{r^*}$, with input set $\Vin$, output set $S^{r^*}$, and output scalars ${\Lambda}^{(r^*)}$.
Then $\m{A}$ is non-degenerate by construction, $\m{A}$ is not the trivial network  $\m{T}^{\,\Vin,\, 1}$ as $S^{r^*}\cap \Vin=\varnothing$, and by \eqref{eq:morphProp-1} we have
$
\outmap{\m{A}}{\rho}=
0
$. Moreover, if $\m{N}_1$ and $\m{N}_2$ are layered, then $\m{N}_3$ is layered as it is $\rho$-isomorphic to $\m{N}_2$, and hence $\m{A}$ is layered as well.

It remains to show that $\m{A}$ is irreducible. As $\m{A}$ is a subnetwork of $\wtd{\m{A}}$, it suffices to show that $\wtd{\m{A}}$ is irreducible. Assume by way of contradiction that $\wtd{\m{A}}$ is $(\rho,U)$--reducible for some $U\subset V^1\cup V^{3}$. As $\m{N}_1$ and $\m{N}_3$ are both irreducible, we must have $U\not\subset V^1$ and $U\not\subset V^{3}$. In particular, we have $U\cap V^{3}\neq\varnothing$, $ U\setminus V^{3}\neq \varnothing$, and the common parent set $P$ of the nodes $U$ is contained in $V^{\m{M}_1}$.
By definition of reducibility, there exist sets of nonzero real numbers $\{\kappa_{v}\}_{v\in P}$ and $\{\beta_u\}_{u\in U}$ such that $\{\omega_{uv}\}_{v\in P}=\beta_u\{\kappa_v\}_{v\in P}$, for all $u\in U$, as well as a $\zeta\in\R$ and nonzero real numbers $\{\alpha_u\}_{u\in U}$ such that $(\zeta,\{(\alpha_u,\beta_u,\theta_u)_{u\in U}\})$ is an affine symmetry of $\rho$. Now, by definition of affine symmetry,
\begin{equation}\label{eq:morphProp-2}
\sum_{u\in U}\alpha_u \,\rho(\beta_{u}t+\theta_u)=\zeta\,\bm{1}(t),\quad\text{for all }t\in\R.
\end{equation}
Fix an arbitrary node $u^*\in U\cap V^{3}$ and let $B_0\coleqq (U\cap V^{3})\setminus\{u^*\}$. Then \eqref{eq:morphProp-2} can be rearranged to get
\begin{equation*}\label{eq:morphProp-3}
\left[\alpha_{u^*}\rho(\beta_{u^*}t+\theta_{u^*})+\!\sum_{u\in B_0 }\alpha_u \,\rho(\beta_{u}t+\theta_u)\right]+\!\sum_{u\in U\setminus V^{3}}\alpha_u \,\rho(\beta_{u}t+\theta_u)=\zeta\,\bm{1}(t) ,\quad t\in\R.
\end{equation*}
It follows that $\m{N}_3$ admits a $\left(\rho\,; \{u^*\},B_0,U\setminus V^{3}\right)$--modification. Now, by Proposition \ref{prop:modif-regul}, there exist disjoint sets $A\supset\{u^*\}$ and $B$ of nodes of $\m{N}_3$ with parent set $P$ and a $C\subset U\setminus V^{3}\subset V^{1}\setminus V^{3}$ such that $\m{N}_3$ admits a regular $\left(\rho\,; A,B,C\right)$--modification $\m{N}_4$. In particular $C\neq\varnothing$, and thus an arbitrary subnetwork $\m{M}_2$ of $\m{N}_1$ generated by $C\cup V^{\m{M}_1}$ is an element of $\mathscr{M}$ with $\m{M}_1\prec\m{M}_2$ and $\m{M}_1\neq \m{M}_2$, contradicting the maximality of $\m{M}_1$. This establishes that $\wtd{\m{A}}$ is irreducible and concludes the proof.
\end{proof}

\begin{definition}[General null-net condition]
Let $\rho$ be a nonlinearity and $\Vin$ a nonempty finite set. We say that $\rho$ satisfies the \emph{general null-net condition on $\Vin$} if the only regular GFNN $\m{A}$ with one-dimensional output and input set $\Vin$ such that $\OnoA{\m{A}}{\rho}=0$ on $\R^{\Vin}$ is the trivial network $\m{T}^{\,\Vin,\,1}$.
\end{definition}

\begin{theorem}[Null-net theorem for GFNNs]\label{thm:NN-G}
Let $\mathscr{N}^{\Vin,D}_{\mrm{G}}$ be the set of all regular GFNNs with input set $\Vin$ and $D$-dimensional output, and let $\rho$ be a nonlinearity. Then $(\mathscr{N}^{\Vin,D}_{\mrm{G}},\rho)$ is identifiable up to $\rhoisom$ if and only if $\rho$ satisfies the general null-net condition on $\Vin$.
\end{theorem}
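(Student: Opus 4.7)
Both directions follow rather directly from the machinery already assembled, so the proof is essentially a bookkeeping exercise around Proposition~\ref{prop:nonisom->Enullnet} and the observation that $\m{T}^{\,\Vin,\,D}$ is the unique network in its $\rho$-isomorphism class.

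For the sufficiency direction, I would argue by contrapositive. Suppose $(\mathscr{N}^{\Vin,D}_{\mrm{G}},\rho)$ is \emph{not} identifiable up to $\rhoisom$. Then there exist regular GFNNs $\m{N}_1,\m{N}_2\in\mathscr{N}^{\Vin,D}_{\mrm{G}}$ with $\outmap{\m{N}_1}{\rho}=\outmap{\m{N}_2}{\rho}$ but $\m{N}_1\not\rhoisom \m{N}_2$. Invoking Proposition~\ref{prop:nonisom->Enullnet} directly yields a non-trivial regular GFNN $\m{A}$ with one-dimensional output and input set $\Vin$ such that $\outmap{\m{A}}{\rho}=0$. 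This witnesses a failure of the general null-net condition on $\Vin$, completing the contrapositive. (The layered hypothesis of Proposition~\ref{prop:nonisom->Enullnet} is not needed here, since we are in the general GFNN setting.)

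For the necessity direction, assume $\rho$ fails the general null-net condition on $\Vin$, i.e., there exists a non-trivial regular GFNN $\m{A}=(V,E,\Vin,\Vout,\Omega,\Theta,\Lambda)$ with one-dimensional output and $\outmap{\m{A}}{\rho}=0$. From $\m{A}$ I will manufacture a non-trivial regular $\m{A}'\in\mathscr{N}^{\Vin,D}_{\mrm{G}}$ with $\outmap{\m{A}'}{\rho}=0$, which, together with the trivial network $\m{T}^{\,\Vin,\,D}$, breaks identifiability up to $\rhoisom$. The construction is the obvious one: keep $V,E,\Vin,\Vout,\Omega,\Theta$ unchanged, and define a set of output scalars $\Lambda'=\{(\lambda^{(r)})'\}_{r=1}^{D}\cup\{(\lambda_w^{(r)})' : w\in\Vout,\, r\in\{1,\dots,D\}\}$ by
\begin{equation*}
(\lambda^{(1)})'=\lambda^{(1)},\qquad (\lambda_w^{(1)})'=\lambda_w^{(1)}\text{ for }w\in\Vout,\qquad (\lambda^{(r)})'=0,\quad (\lambda_w^{(r)})'=0\text{ for }r\geq 2.
\end{equation*}
Then $\outmap{\m{A}'}{\rho}=\bigl(\outmap{\m{A}}{\rho},0,\dots,0\bigr)=0$. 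Non-degeneracy of $\m{A}'$ follows from that of $\m{A}$: every non-input node still leads up to an output node (condition (i) of Definition~\ref{def:NonDeg} depends only on $V,E,\Vout$), and every $w\in\Vout$ satisfies $(\lambda_w^{(1)})'=\lambda_w^{(1)}\neq 0$. Irreducibility of $\m{A}'$ is identical to irreducibility of $\m{A}$, since Definition~\ref{def:reduc} involves only $\Omega,\Theta$ and the parent structure. Thus $\m{A}'\in\mathscr{N}^{\Vin,D}_{\mrm{G}}$ is regular and non-trivial (its node set strictly contains $\Vin$).

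To conclude, note that $\outmap{\m{T}^{\,\Vin,\,D}}{\rho}=0=\outmap{\m{A}'}{\rho}$, yet $\m{A}'\not\rhoisom \m{T}^{\,\Vin,\,D}$: as remarked immediately after Proposition~\ref{prop:rhoisom-equiv}, the trivial network admits no $\rho$-modifications whatsoever, so its $\rhoisom$-equivalence class is the singleton $\{\m{T}^{\,\Vin,\,D}\}$, and $\m{A}'\neq \m{T}^{\,\Vin,\,D}$ because $\m{A}'$ has non-input nodes. This contradicts identifiability of $(\mathscr{N}^{\Vin,D}_{\mrm{G}},\rho)$ up to $\rhoisom$ and completes the proof. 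The only mildly delicate point in the whole argument is keeping straight that Proposition~\ref{prop:nonisom->Enullnet} already does all the heavy lifting for sufficiency; the necessity direction is essentially a one-line observation once the correct definitions are unpacked.
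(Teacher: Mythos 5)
Your proof is correct and follows essentially the same route as the paper: the sufficiency direction is exactly the contrapositive application of Proposition~\ref{prop:nonisom->Enullnet}, and the necessity direction pairs a null network against the trivial network and uses the fact that $\m{T}^{\,\Vin,\,D}$ is alone in its $\rhoisom$-class. Your necessity argument is in fact slightly more careful than the paper's, which directly exhibits the one-dimensional-output pair $\m{T}^{\,\Vin,\,1}$, $\m{A}$ and leaves the lift to $D$-dimensional output implicit, whereas you carry it out explicitly by zero-padding the output scalars; the only minor omission on your side is the (trivial) preliminary check, done in the paper via Proposition~\ref{prop:rhoisom-equiv}, that $\rhoisom$ satisfies the compatibility condition~\eqref{eq:ident-compat} required for the phrase ``identifiable up to $\rhoisom$'' to be well-posed.
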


\noindent The general null-net condition and Theorem \ref{thm:NN-G} have corresponding versions for layered networks:

\begin{definition}[Layered null-net condition]
Let $\rho$ be a nonlinearity and $\Vin$ a nonempty finite set. We say that $\rho$ satisfies the \emph{layered null-net condition on $\Vin$} if the only regular LFNN $\m{A}$ with one-dimensional output and input set $\Vin$ such that $\OnoA{\m{A}}{\rho}=0$ on $\R^{\Vin}$ is the trivial network $\m{T}^{\,\Vin,\,1}$.
\end{definition}

\begin{theorem}[Null-net theorem for LFNNs]\label{thm:NN-L}
Let $\mathscr{N}^{\Vin,D}_{\mrm{L}}$ be the set of all regular LFNNs with input set $\Vin$ and $D$-dimensional output, and let $\rho$ be a nonlinearity. Then $(\mathscr{N}^{\Vin,D}_{\mrm{L}},\rho)$ is identifiable up to $\rhoisom$ if and only if $\rho$ satisfies the layered null-net condition on $\Vin$.
\end{theorem}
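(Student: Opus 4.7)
The plan is to mirror the proof of the GFNN null-net theorem (Theorem \ref{thm:NN-G}), exploiting the parenthetical strengthening in Proposition \ref{prop:nonisom->Enullnet} that guarantees the extracted witness network is layered whenever the two input networks are. Both implications would be established by contradiction.

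For the sufficiency direction, I would take regular LFNNs $\m{N}_1,\m{N}_2 \in \mathscr{N}^{\Vin,D}_{\mrm{L}}$ with $\OnoA{\m{N}_1}{\rho} = \OnoA{\m{N}_2}{\rho}$ and assume $\m{N}_1 \not\rhoisom \m{N}_2$. Proposition \ref{prop:nonisom->Enullnet} then directly supplies a non-trivial regular LFNN $\m{A}$ with one-dimensional output and input set $\Vin$ satisfying $\OnoA{\m{A}}{\rho}=0$, contradicting the layered null-net condition. No further work is needed here, since the proposition already ``knows'' about preserving layeredness.

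For the necessity direction, I would suppose identifiability holds and let $\m{A}$ be a regular LFNN with one-dimensional output and input set $\Vin$ realizing the zero function; the goal is to show $\m{A}=\m{T}^{\,\Vin,1}$. The idea is to lift $\m{A}$ to a $D$-dimensional LFNN $\m{N}_1$ by keeping the graph, weights, biases, and input/output sets intact and simply enlarging the output scalars, placing the one-dimensional output of $\m{A}$ into the first coordinate and setting all other output scalars (including $\lambda^{(r)}$ for $r\geq 2$) to zero. Since only the output scalars change, $\m{N}_1$ remains layered and irreducible, and non-degeneracy is preserved because each output node of $\m{A}$ already carries a nonzero output scalar in its first coordinate. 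Thus $\m{N}_1\in\mathscr{N}^{\Vin,D}_{\mrm{L}}$ and $\OnoA{\m{N}_1}{\rho}=0=\OnoA{\m{T}^{\,\Vin,D}}{\rho}$, so identifiability up to $\rhoisom$ would force $\m{N}_1\rhoisom\m{T}^{\,\Vin,D}$. As noted after Proposition \ref{prop:rhoisom-equiv}, $\m{T}^{\,\Vin,D}$ admits no $\rho$-modifications, so the only element of its $\rhoisom$-class is itself; since $\m{N}_1\neq \m{T}^{\,\Vin,D}$ whenever $\m{A}$ is non-trivial, this is the desired contradiction.

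The main obstacle is entirely absorbed into Proposition \ref{prop:nonisom->Enullnet} and its supporting machinery (Propositions \ref{prop:modif-regul} and \ref{prop:modif-basic-prop}, and Lemmas \ref{lem:modif-irred} and \ref{lem:modif-nondeg}), and in particular into the clause that the extracted null-net inherits a layered structure when the two original networks are layered. Once that tool is in hand, the LFNN theorem requires no additional ideas beyond the straightforward output-scalar padding used for necessity.
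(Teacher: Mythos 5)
Your proof is correct and takes essentially the same approach as the paper, which proves Theorems~\ref{thm:NN-G} and~\ref{thm:NN-L} jointly, with the layered case following from the parenthetical clause in Proposition~\ref{prop:nonisom->Enullnet}. The one refinement you add in the necessity direction---explicitly lifting the one-dimensional witness $\m{A}$ to a $D$-dimensional $\m{N}_1$ by zero-padding the output scalars---makes precise a step the paper glosses over (it treats $\m{T}^{\,\Vin,\,1}$ and $\m{A}$ directly as members of $\mathscr{N}^{\Vin,D}_{\mrm{L}}$, which for $D>1$ really requires exactly your padding), but this is a matter of careful bookkeeping rather than a different strategy.
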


As the proofs of Theorems  \ref{thm:NN-G} and \ref{thm:NN-L} are completely analogous, we present them jointly. The proof is a straightforward consequence of Proposition \ref{prop:nonisom->Enullnet}.

\begin{proof}[Proof of Theorems \ref{thm:NN-G} and \ref{thm:NN-L}]
Proposition \ref{prop:rhoisom-equiv} implies directly that $\rhoisom$ satisfies \eqref{eq:ident-compat} for both $(\mathscr{N}^{\Vin,D}_{\mrm{G}},\rho)$ and $(\mathscr{N}^{\Vin,D}_{\mrm{L}},\rho)$. Next, suppose that $(\mathscr{N}^{\Vin,D}_{\mrm{G}},\rho)$ (respectively $(\mathscr{N}^{\Vin,D}_{\mrm{L}},\rho)$) is not identifiable up to $\rhoisom$, and let $\m{N}_1,\m{N}_2\in \mathscr{N}^{\Vin,D}_{\mrm{G}}$ (respectively $\m{N}_1,\m{N}_2\in \mathscr{N}^{\Vin,D}_{\mrm{L}}$) be non-$\rho$-isomorphic and such that $\outmap{\m{N}_1}{\rho}=\outmap{\m{N}_2}{\rho}$. Then, by Proposition \ref{prop:nonisom->Enullnet}, there exists a non-trivial regular GFNN (respectively LFNN) $\m{A}$ with one-dimensional output and input set $\Vin$ such that $\outmap{\m{A}}{\rho}=0$. Therefore, $\rho$ fails the general (respectively layered) null-net condition on $\Vin$.

Conversely, suppose that $\rho$ does not satisfy the general (respectively layered) null-net condition on $\Vin$, and let $\m{A}$ be a non-trivial regular GFNN (respectively LFNN) with one-dimensional output such that $\outmap{\m{A}}{\rho}=0$. Then the networks $\m{T}^{\,\Vin,\, 1}$ and $\m{A}$ are regular GFNNs (respectively LFNNs) satisfying $\outmap{\m{T}^{\,\Vin,\, 1}}{\rho}=0=\outmap{\m{A}}{\rho}$, and are not $\rho$-isomorphic (simply as the only network that is $\rho$-isomorphic to $\m{T}^{\,\Vin,\, D}$ is $\m{T}^{\,\Vin,\, D}$ itself). Hence, $(\mathscr{N}^{\Vin,D}_{\mrm{G}},\rho)$ (respectively $(\mathscr{N}^{\Vin,D}_{\mrm{L}},\rho)$) is not identifiable up to $\rhoisom$, completing the proof.

\end{proof}

\section{Pole clustering for single-input network maps with a meromorphic nonlinearity satisfying the SAC and the CAC} \label{sec:Clustering}

Throughout this section we fix a meromorphic nonlinearity $\sigma$ such that
\begin{itemize}[--]
\item $\sigma(\R)\subset\R$,
\item $\sigma$ has infinitely many simple poles and no poles of higher order, and
\item $\sigma$ satisfies the SAC and the CAC. 
\end{itemize}
In this section we formally establish that the map of every non-trivial regular GFNN $\m{N}$ with 1-dimensional output and a singleton input set can be analytically continued to a domain with countable complement in $\C$, and that the set of simple poles of $\OnoA{\m{N}}{\sigma}$ is nonempty. We will, in fact, prove a much stronger result about the structure of the singularities of $\OnoA{\m{N}}{\sigma}$. In order to state this result, we need the concept of clustering depth introduced next.

We write $D(a,r)=\{z\in\C:|z-a|\leq r\}$ and $D^\circ(a,r)=\{z\in\C:|z-a|< r\}$ respectively for the closed and open disk in $\C$ of radius $r\geq 0$ centered at $a\in\C$. 

\begin{definition}[Cluster sets and clustering depth]\label{def:cluster}
Let $E\subset\C$ be a set and let $z\in \C$ be a point.
\begin{enumerate}[(i)]
\item For a nonnegative integer $k$ we define the \emph{$k^{\text{th}}$ cluster set $\m{C}^k(E)$ of $E$} inductively as follows:
	\begin{itemize}[--]
	\item We set $\m{C}^0(E)=E$, and
	\item for $k\geq 1$, we let $\m{C}^k(E)$ be the set of cluster points of $\m{C}^{k-1}(E)$. 
	\end{itemize}
\item We define the \emph{clustering depth $L_{\m{C}}(E)$ of $E$} as the least $k$ for which $\m{C}^k(E)=\varnothing$, if such a $k$ exists, and otherwise we set $L_{\m{C}}(E)=\infty$.
\item We define the \emph{clustering depth of $E$ at $z$} by
\begin{equation*}
L_{\m{C}}(E,z)\coleqq \lim_{\varepsilon\to 0} L_{\m{C}}(E\cap D^\circ(z,\varepsilon))=\inf_{\varepsilon> 0} L_{\m{C}}(E\cap D^\circ(z,\varepsilon)).
\end{equation*}
\end{enumerate}
\end{definition}

\noindent Note that the limit as $\varepsilon\to 0$ in the previous definition always exists, as $L_{\m{C}}(E\cap D^\circ(z,\varepsilon))$ is an increasing function of $\varepsilon$. The following lemma lists some of the properties of cluster sets and clustering depth.

\begin{lemma}\label{lem:cluster-prop}
Let $E,F\subset \C$ be sets, let $z\in\C$ be a point, and let $k$ be a nonnegative integer. Then
\begin{enumerate}[(i)]
\item $\m{C}^1(E)$ is closed,
\item the closure $\overline{E}$ of $E$ satisfies $\overline{E}=E\cup \m{C}^1(E)$,
\item if $L_{\m{C}}(E,z)\geq 1$, then either $z\in E$ or $z$ is a cluster point of $E$,
\item if $E\subset F$, then $\m{C}^k(E)\subset \m{C}^k(F)$,
\item $\m{C}^k(E\cup F)=\m{C}^k(E)\cup \m{C}^k(F)$, and
\item $L_{\m{C}}(E\cup F)=\max\{ L_{\m{C}}(E), L_{\m{C}}(F)\} $.
\end{enumerate}
\end{lemma}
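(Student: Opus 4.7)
The proof consists of six routine facts from point-set topology, which I would verify in the stated order. The only parts requiring structural work are (iv) and (v), where I would induct on $k$; the remaining parts are direct unwindings of definitions. I expect no genuine obstacle --- the one subtlety is careful bookkeeping of the ``$\setminus\{z\}$'' exclusion built into the cluster-point condition when passing between $E$ and $\m{C}^1(E)$.

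For (i), given a cluster point $z$ of $\m{C}^1(E)$ and $\varepsilon>0$, I would pick $w \in \m{C}^1(E) \cap D^\circ(z,\varepsilon/2)\setminus\{z\}$, then $w' \in E \cap D^\circ(w,\min(\varepsilon/2,|z-w|))\setminus\{w\}$; by construction $w' \in E \cap D^\circ(z,\varepsilon)\setminus\{z\}$, so $z \in \m{C}^1(E)$. Part (ii) is the classical characterization of the closure: $E \cup \m{C}^1(E) \subset \overline{E}$ because $\overline{E}$ is closed and contains $E$ (hence all its limit points), while the reverse inclusion holds because any $z \in \overline{E}\setminus E$ is approached by points of $E\setminus\{z\}$ and thus lies in $\m{C}^1(E)$. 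Part (iii) follows from observing that $L_{\m{C}}(E,z)\geq 1$ forces $E \cap D^\circ(z,\varepsilon)\neq\varnothing$ for every $\varepsilon>0$; if $z\notin E$, the witness point lies in $E\setminus\{z\}$, so $z$ is a cluster point of $E$.

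For (iv), I induct on $k$: the case $k=0$ is the hypothesis $E\subset F$, and the inductive step uses the observation that a cluster point of $\m{C}^{k-1}(E)$ is automatically a cluster point of any superset, in particular of $\m{C}^{k-1}(F)$. For (v), I induct again, the base case being trivial. The inductive step reduces to showing $\m{C}^1(A\cup B)=\m{C}^1(A)\cup \m{C}^1(B)$ with $A=\m{C}^{k-1}(E)$ and $B=\m{C}^{k-1}(F)$. The inclusion ``$\supset$'' is two applications of (iv). For ``$\subset$'', given $z \in \m{C}^1(A\cup B)$ I extract a sequence $z_n\to z$ in $(A\cup B)\setminus\{z\}$; pigeonhole provides an infinite subsequence lying entirely in $A\setminus\{z\}$ or in $B\setminus\{z\}$, yielding $z \in \m{C}^1(A)\cup \m{C}^1(B)$. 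Finally, (vi) is immediate from (v) and the definition of $L_{\m{C}}$: $\m{C}^k(E\cup F)=\varnothing$ if and only if both $\m{C}^k(E)=\varnothing$ and $\m{C}^k(F)=\varnothing$, so the least such $k$ is the maximum of the corresponding quantities for $E$ and $F$ individually.
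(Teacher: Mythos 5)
Your proof is correct and takes essentially the same approach as the paper: parts (i) and (ii) are the classical facts (which the paper dismisses in one sentence as elementary, while you supply the $\varepsilon$-argument), part (iii) is the same contrapositive, parts (iv)--(v) use the same inductions, and part (vi) is the same two-sided bound via (v). The only cosmetic difference is in the ``$\subset$'' direction of the base case of (v), where you extract a sequence and pigeonhole it into $A$ or $B$ while the paper argues directly that the punctured neighborhoods $(D^\circ(z,\varepsilon)\setminus\{z\})\cap E$ and $(D^\circ(z,\varepsilon)\setminus\{z\})\cap F$ being empty forces $(D^\circ(z,\varepsilon)\setminus\{z\})\cap(E\cup F)$ to be empty; these are equivalent elementary arguments.
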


 We are now ready to state the main result of this section, which strengthens Proposition \ref{prop:intro-vin-NNC}. 

\begin{prop}\label{prop:pole-structure}
Let $\m{N}=(V,E,\{\vin\},\Vout,\Omega,\Theta,\Lambda)$ be a non-trivial regular GFNN with 1-dimensional output and a singleton input set $\{\vin\}$. Then
\begin{enumerate}[(i)]
\item $\OnoA{\m{N}}{\sigma}$ can be analytically continued to a domain with countable complement in $\C$,
\item writing $ \dom_{\OnoA{\m{N}}{\sigma}} $ for the natural domain of $\OnoA{\m{N}}{\sigma}$ and $P_{\m{N}}\subset\C\setminus \dom_{\OnoA{\m{N}}{\sigma}} $ for its set of simple poles, we have $\C\setminus \dom_{\OnoA{\m{N}}{\sigma}}=\overline{P_{\m{N}}}$, and
\item $L_{\m{C}}( \overline{P_{\m{N}}} )= L(\m{N})$.
\end{enumerate}
\end{prop}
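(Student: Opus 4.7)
The plan is to induct on the depth $L(\m{N})$. For the base case $L(\m{N})=1$, $\OnoA{\m{N}}{\sigma}$ is a finite sum of affinely transformed copies of $\sigma$, hence meromorphic on all of $\C$, so (i) and (ii) are immediate. For (iii) I would repeat the single-layer argument from the informal exposition: the irreducibility of $\m{N}$ together with the SAC rules out $\OnoA{\m{N}}{\sigma}$ being constant, so $P_{\m{N}}$ is nonempty; since $P_{\m{N}}$ is contained in the finite union of the affine preimages of the (discrete) pole set of $\sigma$, it is itself discrete, has no cluster points, and therefore $L_{\m{C}}(\overline{P_{\m{N}}})=1=L(\m{N})$.

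For the inductive step I would decompose
\begin{equation*}
\OnoA{\m{N}}{\sigma} \;=\; f\;+\;\sum_{w\in\Vout^{>1}} \lambda^{(1)}_w\, \sigma(g_w),
\end{equation*}
where $\Vout^{>1}=\{w\in\Vout:\lvl(w)>1\}$, $f$ collects the level-$1$ contributions (a meromorphic function on $\C$), and each $g_w=\sum_{v\in\pre(w)}\omega_{wv}\OnoA{v}{\sigma}+\theta_w$ is realised as the output map of a regular GFNN $\m{N}_w$ of depth strictly less than $L(\m{N})$. Here $\m{N}_w$ is taken as the subnetwork of $\m{N}$ generated by $\pre(w)$, with output set $\pre(w)$, output scalars $\{\omega_{wv}\}_{v\in\pre(w)}$, and constant offset $\theta_w$; irreducibility descends from $\m{N}$ because the reducibility conditions involve only the internal weights and biases, and non-degeneracy is enforced by the choice of output. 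Applying the induction hypothesis to each $\m{N}_w$ gives (i)--(iii) for $g_w$, and in particular $L_{\m{C}}(\overline{P_{g_w}})=L(\m{N}_w)\le L(\m{N})-1$. Part (i) of the proposition then follows at once, since the singular set of $\OnoA{\m{N}}{\sigma}$ is contained in the countable closed set $\overline{P_f}\cup\bigcup_w\overline{P_{g_w}}\cup\bigcup_w g_w^{-1}(P_\sigma)$, where $P_\sigma$ denotes the (countable) pole set of $\sigma$.

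The matching bounds in (iii), together with (ii), form the heart of the argument. For the upper bound: at any $z\notin\bigcup_w\overline{P_{g_w}}$ each $g_w$ is analytic near $z$, so $\sigma(g_w)$ has at worst isolated poles there, giving $\m{C}^1(\overline{P_{\m{N}}})\subset\bigcup_w\overline{P_{g_w}}$; iterating via Lemma \ref{lem:cluster-prop} and the induction hypothesis yields $L_{\m{C}}(\overline{P_{\m{N}}})\le 1+\max_w L(\m{N}_w)\le L(\m{N})$. For the lower bound (which simultaneously completes (ii)), I would fix a $w^*\in\Vout^{>1}$ attaining $L(\m{N}_{w^*})=L(\m{N})-1$ and show, inductively on $k$, that every $p\in\m{C}^k(\overline{P_{g_{w^*}}})$ lies in $\m{C}^{k+1}(\overline{P_{\m{N}}})$. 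This is done by running the informal substitution argument near a carefully chosen $p$: expanding $g_w(z)=\beta_w/(z-p)+\gamma_w+\epsilon_w(1/(z-p))$ for those $w$ whose subnetwork is singular at $p$, changing variables to $z'=1/(z-p)$, and invoking the CAC, the SAC, and the irreducibility of $\m{N}$ to preclude boundedness of the pole set of $\sum_w\lambda^{(1)}_w\sigma(\beta_w z'+\gamma_w+\epsilon_w(1/z'))$. The principal obstacle will be arranging the choice of $p$ at each cluster depth so that the ``no essential cancellation'' hypothesis \eqref{eq:non-ess-cancel} is in force, i.e., that no other subnetwork $g_w$ accidentally cancels the singularity at $p$; handling this will require careful geometric bookkeeping of the multi-level clusters in each $\overline{P_{g_w}}$ and leverages the density/lattice structure of the poles produced by the alignment conditions to supply, at each depth, enough ``generic'' candidates $p$ that lift the clustering order by exactly one.
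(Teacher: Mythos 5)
Your base case, the decomposition of $\OnoA{\m{N}}{\sigma}$, the construction of the subnetworks $\m{N}_w$ with their regularity argument, and the upper-bound direction $L_{\m{C}}(\overline{P_{\m{N}}})\leq L(\m{N})$ all track the paper's proof. One small point on (i): the set $g_w^{-1}(P_\sigma)$ is not a priori closed with countable closure; one must use that $g_w$ is non-constant (which follows once the induction hypothesis gives $P_{g_w}\neq\varnothing$). The paper isolates precisely this in Lemma~\ref{lem:nat-dom}.

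The substantive gap is in your plan for the lower bound and for (ii). Statement (ii) requires $\bigcup_{w\in\Vout^{>1}}\overline{P_{w}}\subset\m{C}^1(P_{\m{N}})$, i.e., that \emph{every} essential singularity of $\OnoA{\m{N}}{\sigma}$ is a limit of its simple poles; working only with a single $w^*$ of maximal subnetwork depth cannot account for essential singularities contributed by the other $g_w$'s. Moreover, the mechanism you envisage for handling a point $p$ at which some $g_w$ has an essential singularity --- supplying ``enough generic candidates $p$'' via the ``density/lattice structure'' of the poles --- departs substantially from the paper and is not obviously workable. The paper proves the reverse inclusion for \emph{every} $p\in\bigcup_w\overline{P_w}$ by a case split: when $p$ avoids the essential-singularity sets $E_w$ of the $\OnoA{\m{N}_w}{\sigma}$, the CAC/SAC/irreducibility argument you sketch applies directly; when $p\in\bigcup_w E_w$, the paper's Claim~2 shows that $\bigcup_w E_w=\bigcup_u\m{C}^1(P_u\setminus\bigcup_w E_w)$, so that every such $p$ is a cluster point of points already handled in the first case. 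Claim~2 is a purely topological argument, using only the induction hypothesis $L_{\m{C}}(E_w)\leq L(\m{N}_w)$ and the clustering-depth calculus of Lemma~\ref{lem:cluster-prop}; no density or lattice structure and no ``generic choice'' enters. Your proposal needs this piece (or an equivalent one) to close the argument --- the issue is not a lack of generic $p$, but the need to handle every $p$, including those where cancellation does occur.
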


\noindent Note that this result immediately implies Proposition \ref{prop:intro-vin-NNC} since the depth of a non-trivial GFNN $\m{N}$ is at least one, and hence $L_{\m{C}}( \overline{P_{\m{N}}} )= L(\m{N})\geq 1$ implies that $P_{\m{N}}\neq\varnothing$.
We remark that statement (ii) of Proposition \ref{prop:pole-structure} is equivalent to the assertion that every essential singularity of $\OnoA{\m{N}}{\sigma}$ be the limit of a sequence of its simple poles.
The proof of Proposition \ref{prop:pole-structure} uses the following auxiliary results, whose proofs can be found in the Appendix.

\begin{lemma}\label{lem:nat-dom}
Let $f:\dom_f\to\C$ be a non-constant holomorphic function on its natural domain $\dom_f$ and suppose that $\C\setminus \dom_f$ is countable. Furthermore, let $g:\dom_g \to \C$ be a meromorphic function on $\C$ with a nonempty set of poles $P$. Then $g\circ f$ can be analytically continued to $\dom\coleqq \{z\in \dom_f: f(z)\in\C\setminus P\}$, and $\dom$ has countable complement in $\C$.
\end{lemma}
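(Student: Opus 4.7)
The plan is to verify the two assertions of the lemma separately: first, that the natural composition $g\circ f$ is holomorphic on the set $\dom$, and second, that $\C\setminus \dom$ is countable. Both reduce to relatively soft facts about openness, the identity theorem, and counting.

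For the first assertion, I would start by observing that the pole set $P$ of the meromorphic function $g$ is a closed discrete subset of $\C$, so $\C\setminus P$ is open. Then $\dom=\dom_f\cap f^{-1}(\C\setminus P)$ is the intersection of two open sets (using continuity of $f$ on $\dom_f$), hence open. On $\dom$ the function $f$ takes values in the domain $\C\setminus P$ on which $g$ is holomorphic, so the composition $g\circ f$ is automatically holomorphic on $\dom$, providing the desired analytic continuation of $g\circ f$.

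For the second assertion, the natural decomposition
\begin{equation*}
\C\setminus\dom \;=\; (\C\setminus\dom_f)\;\cup\;f^{-1}(P)
\end{equation*}
reduces the problem to showing that each summand is countable. The first is countable by hypothesis. For the second, the pole set $P$ is discrete in $\C$ and therefore countable, so it suffices to show that each fiber $f^{-1}(\{p\})$ for $p\in P$ is countable. Since $f$ is non-constant holomorphic on $\dom_f$ and $\dom_f$ is connected (see the remark below), the identity theorem forces every fiber $f^{-1}(\{p\})$ to be a discrete subset of $\dom_f$, hence countable. Thus $f^{-1}(P)$ is a countable union of countable sets.

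The only step I expect to require comment is the connectedness of $\dom_f$. I would handle it by the observation that any open subset of $\C$ whose complement is countable is path-connected: given $z_1,z_2\in\dom_f$, the set of intermediate points $w\in\C$ for which both line segments $[z_1,w]$ and $[w,z_2]$ meet $\C\setminus\dom_f$ is a countable union of lines, hence proper in $\C$, so a suitable $w$ exists and yields a piecewise linear path in $\dom_f$. This ensures the identity theorem applies and completes the argument.
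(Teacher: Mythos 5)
Your proof is correct, but it follows a genuinely different route from the paper's. You decompose $\C\setminus\dom$ as $(\C\setminus\dom_f)\cup f^{-1}(P)$ and count: $P$ is discrete hence countable, and each fiber $f^{-1}(\{p\})$ is discrete in $\dom_f$ by the identity theorem (using connectedness of $\dom_f$ and non-constancy of $f$), hence countable, so $f^{-1}(P)$ is a countable union of countable sets. The paper instead shows that every cluster point of $(\C\setminus\dom)\cap\dom_f$ must lie in $\C\setminus\dom_f$ (again via discreteness of $P$, connectedness, and the identity theorem), and then deduces countability through an exhaustion by the sets $E^N=\{z\in\C\setminus\dom:|z|\le N,\ d(z,\C\setminus\dom_f)\ge 1/N\}$, each of which is shown to be finite. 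Your fiber-counting argument is more direct and avoids the exhaustion entirely; the paper's argument yields slightly more structural information (where the "new" singularities can accumulate), which is in the spirit of the pole-clustering analysis it feeds into, but for the stated lemma both deliver the same conclusion from the same two ingredients (connectedness of a domain with countable complement, plus the identity theorem). One cosmetic slip in your connectedness argument: you should exclude the $w$ for which \emph{at least one} of the segments $[z_1,w]$, $[w,z_2]$ meets $\C\setminus\dom_f$, not the $w$ for which both do; the excluded set is still contained in a countable union of lines, so the argument is unaffected.
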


\begin{lemma}\label{lem:LD->reduc}
Let $\rho:\R\to\R$ be a nonlinearity, and let $\m{J}$ be a finite index set. Suppose that $\{(\alpha_s,\beta_s,\gamma_s)\}_{s\in\m{J}}$ are triples of real numbers such that $\sum_{s\in\m{J}}\alpha_s\rho(\beta_s\cdot\,+\, \gamma_s)$ is constant. Assume that $j^*\in \m{J}$ is such that $\alpha_{j^*}\neq 0$. Then there exist a set $\m{I}\subset \m{J}$ such that $j^*\in\m{I}$, and real $\{\wtd{\alpha}_s\}_{s\in \m{I}}$ such that $\wtd{\alpha}_{j^*}\neq 0$ and $\left(\zeta,\{(\wtd{\alpha}_s,\beta_s,\gamma_s)\}_{s\in\m{I}}\right)$ is an affine symmetry of $\rho$, for some $\zeta\in\R$.
\end{lemma}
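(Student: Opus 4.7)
The plan is to treat this as the extraction of a minimal circuit of linear dependence in a finite-dimensional vector space. I would set $F_0 := \bm{1}$ and $F_s := \rho(\beta_s\cdot + \gamma_s)$ for $s\in\m{J}$, and introduce the linear subspace $W\subset \R^{\m{J}\cup\{0\}}$ of all coefficient vectors $b=(b_s)_{s\in\m{J}\cup\{0\}}$ for which $\sum_{s\in\m{J}}b_s F_s + b_0 F_0 \equiv 0$ on $\R$. By hypothesis, the vector obtained from $(\alpha_s)_{s\in\m{J}}$ together with $b_0 := -\zeta_0$ (where $\zeta_0$ is the constant value of $\sum_s \alpha_s F_s$) lies in $W$ and has nonzero $j^*$-th entry. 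Since $\m{J}$ is finite, I can select an element $v\in W$ with $v_{j^*}\neq 0$ whose support
\begin{equation*}
\m{I} := \{s\in\m{J} : v_s\neq 0\}
\end{equation*}
is minimal with respect to inclusion among all such elements of $W$. Setting $\wtd{\alpha}_s := v_s$ and $\zeta := -v_0$ then immediately yields condition (i) of Definition \ref{def:sym}.

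The main content lies in verifying the minimality condition (ii). I would proceed by contradiction: assume there is a proper subset $\m{I}'\subsetneq \m{I}$ for which $\{F_s:s\in\m{I}'\}\cup\{F_0\}$ admits a nontrivial linear dependence with coefficients $(b_s)_{s\in\m{I}'\cup\{0\}}$, and extend $b$ to all of $\m{J}\cup\{0\}$ by zero so that $b\in W$. The argument then splits into two cases. If $b_{j^*}\neq 0$, then $b$ itself is a competitor to $v$ whose support in $\m{J}$ is contained in $\m{I}'\subsetneq\m{I}$, contradicting the minimality of $\m{I}$ directly. If $b_{j^*} = 0$, then some $b_{s_0}$ with $s_0\in\m{I}'\subset\m{I}$ must be nonzero (otherwise only $b_0 F_0 = 0$ would remain, forcing $b_0=0$ as well and making $b$ trivial); the vector $v - (v_{s_0}/b_{s_0})\,b$ then still lies in $W$, retains $j^*$-th entry $v_{j^*}\neq 0$ because $b_{j^*}=0$, and has its $s_0$-th entry equal to zero, so its support is contained in $\m{I}\setminus\{s_0\}\subsetneq \m{I}$, again contradicting minimality.

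The argument is essentially a linear-algebraic ``circuit extraction,'' and I do not foresee any analytic or structural subtlety. The only delicate point is choosing the right invariant to minimize: because the dependency must keep $j^*$ in its support, I minimize the support among elements of $W$ with $v_{j^*}\neq 0$, which simultaneously rules out the two possible obstructions described above. Notably, the hypothesis that $\rho$ is a nonlinearity is not invoked beyond the triviality that $\bm{1}\neq 0$; the result is a purely formal consequence of the assumption that $\sum_{s\in\m{J}}\alpha_s\rho(\beta_s\cdot + \gamma_s)$ is constant and $\alpha_{j^*}\neq 0$.
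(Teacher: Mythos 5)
Your proof is correct and follows essentially the same approach as the paper's: both extract a linear dependence of minimal support containing $j^*$, and both verify condition (ii) of Definition~\ref{def:sym} by using a hypothetical sub-dependency to eliminate an index, splitting on whether that sub-dependency has a nonzero coefficient at $j^*$. Your phrasing in terms of the dependency subspace $W$ and minimal-support ``circuit'' extraction is a clean reformulation of what the paper does with its auxiliary family $\mathscr{I}$ of index sets.
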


\begin{proof}[Proof of Proposition \ref{prop:pole-structure}]
The proof follows the argument outlined in Section \ref{sec:ident-for-tanh}. We proceed by induction on $L(\m{N})$. To establish the base case, we assume that $L(\m{N})=1$,
and enumerate the nodes $V\setminus\{\vin\}$ as $\{u_1,\dots, u_{D_1}\}$. Now, as $\m{N}$ is non-degenerate, we have $\Vout=\{u_1,\dots, u_{D_1}\}$, and so we can write
\begin{equation*}
\OnoA{\m{N}}{\sigma}(z)=\lambda^{(1)}\;+ \sum_{j=1}^{D_1} \lambda^{(1)}_{u_j}\,\sigma(\omega_{u_j\vin}z\,+\, \theta_{u_j}), \quad \text{for }z\in \dom_{\OnoA{\m{N}}{\sigma}},
\end{equation*}
where $\lambda^{(1)}_{u_j}\neq 0$, for all $j\in\{1,\dots, D_1\}$.
Therefore, $\OnoA{\m{N}}{\sigma}$ is meromorphic on $\C$, and so statements (i) and (ii) hold immediately. To show statement (iii), note that $P_{\m{N}}$ is discrete (simply as $\OnoA{\m{N}}{\sigma}$ is meromorphic), and so $ \overline{P_{\m{N}}}= {P_{\m{N}}}$ and $L_{\m{C}}( {P_{\m{N}}} )\leq 1$. It therefore suffices to show that $P_\m{N}$ is nonempty, as we will then have $L_{\m{C}}( {P_{\m{N}}})\geq 1$.  Suppose by way of contradiction that $P_{\m{N}}$ is empty. Then, in particular, $P_{\m{N}}$ is bounded, and so the SAC for $\sigma$ implies that $\OnoA{\m{N}}{\sigma}$ is constant. Thus, $\sum_{j=1}^{D_1} \lambda^{(1)}_{u_j}\,\sigma(\omega_{u_j\vin}\cdot\,+\, \theta_{u_j})$ is constant, and hence, by Lemma \ref{lem:LD->reduc}, there exist a nonempty set $U\subset \{u_1,\dots, u_{D_1}\}$ and real numbers $\zeta$ and $\{\alpha_{u}\}_{u\in U}$ such that
$\left(\zeta,\{(\alpha_u,\omega_{u \vin},\theta_u)\}_{u\in U}\right)$ is an affine symmetry of $\sigma$. This implies that $\m{N}$ is $(\sigma,U)$--reducible, which stands in contradiction to the regularity of $\m{N}$, and thus establishes that $P_{\m{N}}$ is nonempty.

We proceed to the induction step. Suppose that $L(\m{N})\geq 2$ and assume that the claim of the proposition holds for all non-trivial regular GFNNs $\m{N}'$ with 1-dimensional output, input set $\{\vin\}$, and depth $L(\m{N}')<L(\m{N})$.
We can now write
\begin{equation}\label{eq:pole-clustering(-2)}
\OnoA{\m{N}}{\sigma}(z)=f(z) + \sum_{w\in \Vout^{>1}} \lambda^{(1)}_w\,  \sigma \big( \OnoA{\m{N}_w}{\sigma}(z)\big),\quad\text{for }z\in\dom_{\OnoA{\m{N}}{\sigma}},
\end{equation}
where $\m{N}_w$, for $w\in \Vout^{>1}\coleqq\{w\in \Vout:\lvl(w)>1\}$, is the subnetwork of $\m{N}$ generated by $\pre_{\m{N}}(w)$ with output set $\pre_{\m{N}}(w)$ and output scalars $\{\lambda^{(1),\, \m{N}_w} \coleqq \theta_{w}\}\cup\{\lambda^{(1),\, \m{N}_w}_v\coleqq \omega_{wv} : v\in\pre_{\m{N}}(w) \}$, and $f:\dom_f\to \C$, given by
\begin{equation*}
f(z)=\lambda^{(1)}\;+ \sum_{\substack{w\in \Vout \\ \lvl(w)=1 }} \lambda^{(1)}_{w}\,\sigma(\omega_{w\vin}z\,+\, \theta_{w}),
\end{equation*}
is a meromorphic function with simple poles only. Note that the $\m{N}_w$ are non-trivial regular GFNNs of depth $L(\m{N}_w)< L(\m{N})$.

For statement (i), we first observe that, for  $w\in  \Vout^{>1} $, the induction hypothesis for $\m{N}_w$ implies that $\OnoA{\m{N}_w}{\sigma}$ is non-constant and can be analytically continued to a domain with countable complement in $\C$. Thus, by Lemma \ref{lem:nat-dom}, we have that $\sigma \circ \OnoA{\m{N}_w}{\sigma}$ also analytically continues to a domain with countable complement in $\C$, and, in particular, its natural domain $ \dom_{\sigma \circ \OnoA{\m{N}_w}{\sigma}}$ is well-defined.
Next, note that $\OnoA{\m{N}}{\sigma}$ can be analytically continued to the set
\begin{equation*}
\dom\coleqq  \dom_f \cap \bigcap_{w\in  \Vout^{>1}} \dom_{\sigma \circ \OnoA{\m{N}_w}{\sigma}}.
\end{equation*} 
Then, as $f$ is meromorphic and $\C\setminus  \dom_{\sigma \circ \OnoA{\m{N}_w}{\sigma}}$ is countable for every $w\in  \Vout^{>1}$, we have that $\C\setminus\dom$ is countable, establishing statement (i) for $\m{N}$.  (Note that the natural domain $ \dom_{ \OnoA{\m{N}}{\sigma}}$ can be a strict superset of $\dom$, e.g., if there is a point in $\C$ that is a simple pole of $\sigma \circ \OnoA{\m{N}_{w_1}}{\sigma}$ and $\sigma \circ \OnoA{\m{N}_{w_2}}{\sigma}$ for distinct $w_1$ and $w_2$, their residues could be such that the pole disappears in the linear combination \eqref{eq:pole-clustering(-2)}).

For statement (ii), we begin by noting that, as $\C\setminus  \dom_{ \OnoA{\m{N}}{\sigma}} $ is countable, every element of $\C$ is a point of analyticity, a pole, or an essential singularity of ${\OnoA{\m{N}}{\sigma}}$, and we can thus write $\C\setminus \dom_{\OnoA{\m{N}}{\sigma}}= P_{\m{N}}\cup E_{\m{N}}$, where $P_{\m{N}}$ is the set of simple poles of  $\OnoA{\m{N}}{\sigma}$ and $E_{\m{N}}$ is the set of its essential singularities and poles of higher order.
Now, as  $\m{C}^1(P_{\m{N}})\subset E_{\m{N}}$, in order to complete the proof of statement (ii) for $\m{N}$, it suffices to establish that $E_{\m{N}}\subset \m{C}^1(P_{\m{N}})$. To this end, note that the induction hypothesis for $\m{N}_w$ implies that we can write
$
  \C\setminus\dom_{\OnoA{\m{N}_w}{\sigma}}=P_{w}\cup E_{w}
$,
where $\dom_{\OnoA{\m{N}_w}{\sigma}}$ is the natural domain of $\OnoA{\m{N}_w}{\sigma}$, $P_{w}$ is its set of simple poles, and $E_w=\overline{P_{w}}\setminus {P_{w}}=\m{C}^1({P_{w}})$ is the set of its essential singularities, for $w\in  \Vout^{>1} $.

Then, recalling \eqref{eq:pole-clustering(-2)} and the fact that $f$ and $\sigma$ are meromorphic with simple poles only, we have
\begin{equation*}
\C \bigm\backslash  \bigcup_{w\in  \Vout^{>1}} \overline{P_{w}}=\bigcap_{w\in  \Vout^{>1}}  \dom_{\OnoA{\m{N}_w}{\sigma}}\subset \C\setminus E_{\m{N}},
\end{equation*}
and thus $E_{\m{N}}\subset   \bigcup_{w\in  \Vout^{>1}} \overline{P_{w}}$. It will therefore be enough to show that
\begin{equation}\label{eq:pole-clustering(-1)}
\bigcup_{w\in  \Vout^{>1}} \overline{P_{w}}= \m{C}^1(P_{\m{N}}).
\end{equation}
To this end, first note that we immediately have $\m{C}^1(P_{\m{N}})\subset E_{\m{N}}\subset   \bigcup_{w\in  \Vout^{>1}} \overline{P_{w}}$.
For the reverse inclusion, we let $p\in \bigcup_{w\in  \Vout^{>1}} \overline{P_{w}}$, and distinguish between the cases $p\notin  \bigcup_{w\in \Vout^{>1} } E_w$ and $p\in \bigcup_{w\in \Vout^{>1} } E_w$.

\textit{The case $p\notin  \bigcup_{w\in \Vout^{>1} } E_w$. } 
Fix an arbitrary $w^*\in  \Vout^{>1}$ such that $p\in \overline{P_{w^*}}$ and set $\Vout^*=\{w\in \Vout^{>1}: p\in P_{w}\}$. Note that $w^*\in \Vout^*$, simply as $P_{w^*}=\overline{P_{w^*}}\setminus {E_{w^*}}$, by the induction hypothesis for $\m{N}_{w^*}$, and so $\Vout^*$ is nonempty. Now, for $w\in \Vout^*$, as $p$ is a simple pole of $\OnoA{\m{N}_w}{\sigma}$, we can write
\begin{equation}\label{eq:pole-clustering-3}
\OnoA{\m{N}_w}{\sigma}(z)=\frac{\beta_{w}}{z-p} + \gamma_{w} + \epsilon_{w}(z-p),
\end{equation}
for $z$ in an open neighborhood of $p$, where $\beta_{w}\in \C\setminus\{0\}$, $\gamma_{w}\in \C$, and $\epsilon_{w}:\dom_{\epsilon_{w}}\to \C$ is an ABC. Then, using \eqref{eq:pole-clustering-3} in \eqref{eq:pole-clustering(-2)} and performing the variable substitution $z'=\frac{1}{z-p}$ yields
\begin{equation}\label{eq:pole-clustering-4}
\OnoA{\m{N}}{\sigma}\left(p+1/z'\right)=(f+g)\left(p+1/z' \right)+ \sum_{w\in \Vout^*} \lambda^{(1)}_w\,  \sigma\left(\beta_{w} \,z' +\gamma_w +\epsilon_{w}(1/z') \right),
\end{equation}
for all $z'\in \C$ of sufficiently large modulus, where
\begin{equation*}
g(z)=\sum_{w\in \Vout^{>1}\setminus \Vout^*} \lambda^{(1)}_w\,  \sigma\big(\OnoA{\m{N}_w}{\sigma}(z)\big).
\end{equation*}
Now, due to the case assumption $p\notin  \bigcup_{w\in \Vout^{>1} } E_w$, we have that $\OnoA{\m{N}_w}{\sigma}$ is analytic at $p$, for all $w\in \Vout^{>1}\setminus \Vout^*$, and so $g$ is analytic on a punctured neighborhood of $p$. Thus, according to \eqref{eq:pole-clustering-4}, we will have $p\in \m{C}^1(P_{\m{N}})$, unless the set of poles of
\begin{equation}\label{eq:pole-clustering-5}
z'\; \longmapsto \;  \sum_{w\in \Vout^*} \lambda^{(1)}_w\,  \sigma\left(\beta_{w} \, z' +\gamma_w +\epsilon_{w}(1/z') \right)
\end{equation}
is bounded. Suppose by way of contradiction that the set of poles of \eqref{eq:pole-clustering-5} is bounded. Then, by the CAC for $\sigma$, there  exists a nonempty $U\subset \Vout^*$ such that $\beta_{w_1}^{-1}\epsilon_{w_1}=\beta_{w_2}^{-1}\,\epsilon_{w_2}$,   for all  $w_1,w_2\in U$, and the set of poles of
\begin{equation} \label{eq:pole-clustering-6}
f_U\coleqq \sum_{w\in U} \lambda^{(1)}_w\,  \sigma\left(\beta_{w} \, \cdot  +\gamma_w \right)
\end{equation}
is bounded. This and \eqref{eq:pole-clustering-3} together imply that
\begin{equation} \label{eq:pole-clustering-7}
\beta_{w_1}^{-1 }\OnoA{\m{N}_{w_1}}{\sigma}-\beta_{w_2}^{-1} \OnoA{\m{N}_{w_2}}{\sigma}=\left(\beta_{w_1}^{-1}\gamma_{w_1}-\beta_{w_2}^{-1}\gamma_{w_1}\right)\bm{1}
\end{equation}
is constant, for all $w_1,w_2\in U$.
We next establish the following claim.

\vspace{2mm}
\textit{Claim 1: Writing $Y_w= \pre_{\m{N}}(w) $, for $w\in U$, we have
\begin{equation}\label{eq:pole-clustering-8}
\begin{aligned}
Y\coleqq  Y_{w_1}&=Y_{w_2} \quad\text{and }\\
\beta_{w_1} ^{-1}\{\omega_{w_1v}\}_{v\in Y}&=\beta_{w_2}^{-1}\{\omega_{w_2v}\}_{v\in Y},
\end{aligned}
\end{equation} 
for all $w_1,w_2\in U$, and there exists a $\vartheta\in\R$ such that $\beta_we^{-i\vartheta}\in\R$, for all $w\in U$.
}
\vspace{2mm}

\noindent\textit{Proof of Claim 1.}  We argue by contradiction, so suppose that the claim is false. Then there exist distinct $w_1,w_2\in U$ such that either $Y_{w_1}\neq Y_{w_2}$ or
\begin{equation*}
 Y\coleqq Y_{w_1}=Y_{w_2}\quad\text{and }\quad \beta_{w_1}^{-1}\{\omega_{w_1v}\}_{v\in Y }\neq \beta_{w_2}^{-1}\{\omega_{w_2v}\}_{v\in  Y}.
 \end{equation*} 
Next, let
\begin{equation*}
\begin{aligned}
Z_{1}^{\, \Re}&=\{v\in Y_{w_1}\setminus Y_{w_2}: \Re( \beta_{w_1}^{-1} \omega_{w_1v})\neq 0 \},\\
Z_{2}^{\, \Re}&=\{v\in Y_{w_1}\cap  Y_{w_2}: \Re( \beta_{w_1}^{-1} \omega_{w_1v}- \beta_{w_2}^{-1} \omega_{w_2v})\neq 0 \}, \\
Z_{3}^{\, \Re}&=\{v\in Y_{w_2}\setminus Y_{w_1}: \Re( \beta_{w_2}^{-1} \omega_{w_2v})\neq 0 \},\quad \text{and }\\
S^{\, \Re}&=Z_{1}^{\, \Re}\cup Z_{2}^{\, \Re}\cup Z_{3}^{\, \Re},
\end{aligned}
\end{equation*}
and define the sets $Z_{1}^{\, \Im}$, $Z_{2}^{\, \Im}$, $Z_{3}^{\, \Im}$, and $S^{\, \Im}$ analogously. Then, by our assumption, at least one of $S^{\, \Re}$ and $S^{\, \Im}$ must be nonempty. Suppose for now that $S^{\, \Re}\neq \varnothing$. Next, set
\begin{equation*}
{\lambda}_{v}^{(1),\,\m{N}'}=
\begin{cases}
\beta_{w_1}^{-1}\omega_{w_1v}, &\text{if }v\in Z_1^{\,\Re}\\
\beta_{w_1}^{-1}\omega_{w_1v}-\beta_{w_2}^{-1}\omega_{w_2v}, &\text{if }v\in Z_2^{\,\Re}\\
-\beta_{w_2}^{-1}\omega_{w_2v}, &\text{if }v\in Z_3^{\,\Re}
\end{cases},\quad \text{ for }v\in S^{\,\Re},
\end{equation*}
and
\begin{equation}
{\Lambda}^{\m{N}'}=\{{\lambda}^{(1),\,\m{N}'}\coleqq \Re( \beta_{w_1}^{-1}\theta_{w_1}-  \beta_{w_2}^{-1}\theta_{w_2})  \}\cup\{\lambda_{v}^{(1),\,\m{N}'}:w\in S^{\,\Im}\},
\end{equation}
and define $\m{N}'=(V^{\m{N}'},E^{\m{N}'},\{\vin\},S^{\Re},\Omega^{\m{N}'},\Theta^{\m{N}'},\Lambda^{\m{N}'})$ to be the subnetwork of $\m{N}$ with one-dimensional output generated by $S$. Then $\m{N}'$ is a regular GFNN of depth $L(\m{N}')<L(\m{N})$, and, as $\lvl_{\m{N}}(w)>1$, for $w\in U\subset \Vout^{>1}$, we have that $\m{N}'$ is non-trivial. It hence follows by the induction hypothesis for $\m{N}'$ that the set $P_{\m{N}'}$ of poles of $\m{N}'$ satisfies $L_{\m{C}}(\overline {P_{\m{N}'}})=L(\m{N}')\geq 1$. In particular, we have $P_{\m{N}'}\neq \varnothing$. On the other hand,
 \begin{equation}\label{eq:pole-clustering-9}
\begin{aligned}
\OnoA{\m{N}'}{\sigma} &=\Re \left(\beta_{w_1}^{-1}\theta_{w_1}-\beta_{w_2}^{-1}\theta_{w_2} \right) +\sum_{v\in Z_1^{\, \Re} } \Re\left( \beta_{w_1}^{-1} \omega_{w_1v} \right) \outmapT{v}{\sigma}{\m{N}}\\
&\hspace{2cm}+\sum_{v\in Z_2^{\,\Re }} \Re \left( \beta_{w_1}^{-1} \omega_{w_1v} - \beta_{w_2}^{-1} \omega_{w_2v} \right)\outmapT{v}{\sigma}{\m{N}}- \sum_{v\in Z_3^{\,\Re} } \Re\left(\beta_{w_2}^{-1} \omega_{w_2v} \right) \outmapT{v}{\sigma}{\m{N}}\\
 &=\Re \Big(\beta_{w_1}^{-1}\theta_{w_1}-\beta_{w_2}^{-1}\theta_{w_2} +\sum_{v\in Y_{w_1} \setminus Y_{w_2} } \beta_{w_1}^{-1} \omega_{w_1v} \outmapT{v}{\sigma}{\m{N}}\\
&\hspace{2cm}+\sum_{v\in Y_{w_1} \cap Y_{w_2} } \hspace{-3mm} \left( \beta_{w_1}^{-1} \omega_{w_1v} - \beta_{w_2}^{-1} \omega_{w_2v} \right)\outmapT{v}{\sigma}{\m{N}}- \hspace{-3mm}\sum_{v\in Y_{w_2} \setminus Y_{w_1} } \beta_{w_2}^{-1} \omega_{w_2v} \outmapT{v}{\sigma}{\m{N}} \Big)\\
&=\Re\left(\beta_{w_1}^{-1 }\OnoA{\m{N}_{w_1}}{\sigma}-\beta_{w_2}^{-1} \OnoA{\m{N}_{w_2}}{\sigma} \right)= \Re\left(\beta_{w_1}^{-1}\gamma_{w_1}-\beta_{w_2}^{-1}\gamma_{w_1}\right) \bm{1},
\end{aligned}
\end{equation}
showing that $\OnoA{\m{N}'}{\sigma}$ is constant, which stands in contradiction to $P_{\m{N}'}\neq \varnothing$. An entirely analogous argument leads to a contradiction in the case $S^{\, \Re}= \varnothing$ and $S^{\, \Im}\neq \varnothing$, establishing that \eqref{eq:pole-clustering-8} must hold. Now, as $\beta_{w_1}/\beta_{w_2}=\omega_{w_1v}/\omega_{w_2v}\in \R$, for all $w_1,w_2\in U$ and $v\in Y$, the $\beta_{w}$ all have the same complex argument, and so there must exist a $\vartheta\in\R$ such that $\beta_we^{-i\vartheta}\in\R$, for all $w\in U$, completing the proof of Claim 1.

Now, it follows from the decomposition \eqref{eq:pole-clustering(-2)} that the constant output scalar of every $\m{N}_{w}$ is $\theta_w$, which together with \eqref{eq:pole-clustering-8} implies that 
\begin{equation*}
\beta_{w_1}^{-1 }\OnoA{\m{N}_{w_1}}{\sigma}-\beta_{w_2}^{-1} \OnoA{\m{N}_{w_2}}{\sigma}= \beta_{w_1}^{-1}\theta_{w_1}-  \beta_{w_2}^{-1}\theta_{w_2}.
\end{equation*}
This and \eqref{eq:pole-clustering-7} together give
\begin{equation*}
\beta_{w_1}^{-1}(\theta_{w_1}-\gamma_{w_1})=\beta_{w_2}^{-1}(\theta_{w_2}-\gamma_{w_2}),
\end{equation*}
for all $w_1,w_2\in U$, implying the existence of a $c\in\C$ such that $\beta_{w}^{-1}(\theta_{w}-\gamma_{w})=c$, for all $w\in U$. Hence, recalling \eqref{eq:pole-clustering-6}, we have that the set of poles of
\begin{equation*}
\sum_{w\in U} \lambda^{(1)}_w\,  \sigma\big(\beta_{w} e^{-i\vartheta} \, \cdot  +\,\theta_w \big)= \sum_{w\in U} \lambda^{(1)}_w\,  \sigma\big(\beta_{w} e^{-i\vartheta} \, \cdot  +\, c\beta_w +\gamma_w \big)=f_U(e^{-i\vartheta}\,\cdot\, + c)
\end{equation*}
is bounded, and so, as $\beta_we^{-i\vartheta}\in\R$, for all $w\in U$, the SAC for $\sigma$ implies that $\sum_{w\in U} \lambda^{(1)}_w\,  \sigma\big(\beta_{w} e^{-i\vartheta} \, \cdot  +\,\theta_w \big)$ must be constant. Now, Lemma 
\ref{lem:LD->reduc} establishes the existence of a nonempty $U'\subset U$ and real numbers $\zeta$ and $\{\alpha_w\}_{w\in U'}$ such that $\left(\zeta,\{({\alpha}_w,\beta_we^{-i\vartheta},\theta_w)\}_{w\in U'}\right)$ is a symmetry of $\sigma$. On the other hand, Claim 1 implies that the nodes $U'$ have a common parent set $Y$ in $\m{N}$ and that there exist nonzero real numbers $\{\kappa_v\}_{v\in Y }$ such that $\{\omega_{wv}\}_{v\in Y}=\beta_we^{-i\vartheta} \{\kappa_v\}_{v\in Y}$, for all $w\in U'$, therefore implying that $\m{N}$ is $(\sigma,U')$--reducible. This, however, contradicts the assumption that $\m{N}$ is regular and thereby establishes that $p\in \m{C}^1(P_{\m{N}})$ in the case $p\notin  \bigcup_{w\in \Vout^{>1} } E_w$.

\textit{The case $p\in  \bigcup_{w\in \Vout^{>1} } E_w$. }
Define the sets ${P}_u^\circ\coleqq P_{u}\bigm\backslash   \bigcup_{w\in \Vout^{>1} } E_w$, for $u\in \Vout^{>1}$.  Then every element of ${P}_u^\circ$ is a cluster point of $P_{\m{N}}$ (by the case already established), for every $u\in \Vout^{>1}$, and thus $p$ itself will be a cluster point of $P_{\m{N}}$, provided we can establish the existence of a $u^*\in  \Vout^{>1}$ such that $p\in \m{C}^1( {P}_{u^*}^\circ )$. This will be an immediate consequence of the following claim.

\vspace{2mm}
\textit{Claim 2: We have $\bigcup_{w\in \Vout^{>1} } E_w= \bigcup_{u\in \Vout^{>1} }\m{C}^1( {P}_u^\circ )$.
}
\vspace{2mm}

\noindent\textit{Proof of Claim 2.} For every $u\in \Vout^{>1}$, we have ${P}_u^\circ\subset  P_{u}$, and so
\begin{equation*}
\m{C}^1(P_u^\circ)\subset \m{C}^1({P_{u}})= E_u\subset \bigcup_{w\in \Vout^{>1} } E_w,
\end{equation*}
implying that $\bigcup_{u\in \Vout^{>1} } \m{C}^1\left( {P}_u^\circ \right)\subset \bigcup_{w\in \Vout^{>1} } E_w$. For the reverse inclusion, we suppose by way of contradiction that there exists a point $y\in \bigcup_{w\in \Vout^{>1} } E_w  \bigm\backslash \bigcup_{u\in \Vout^{>1} }\m{C}^1\left( {P}_u^\circ \right)$. Now, for every $w\in \Vout^{>1} $, we have that 
\begin{equation*}
L_{\m{C}}(E_w,y)\leq L_{\m{C}}(E_w)\leq L_{\m{C}}(\overline{P_{w}})=L(\m{N}_w)<\infty,
\end{equation*}
 by statement (iii) for $\m{N}_w$, and so 
 \begin{equation}\label{eq:pole-clustering-0}
 k\coleqq \max_{\substack{w\in \Vout^{>1} \\ y\in E_w}} L_{\m{C}}(E_w,y)<\infty.
 \end{equation}
 Let $w^*\in \Vout^{>1}$ be such that $y\in E_{w^*}$ and $L_{\m{C}}(E_{w^*},y)=k$. Next, as $y$ is not an element of 
$
 \bigcup_{u\in \Vout^{>1} } \m{C}^1\left( {P}_u^\circ \right)
$, it is not a cluster point of ${P}_{w^*}^\circ$, and so
there exists an $\varepsilon>0$ such that ${P}_{w^*}^\circ \cap D^\circ(y,\varepsilon)=\varnothing$. Then, by definition of $P_{w^*}^\circ$, we have
\begin{equation*}
P_{w^*}\cap D^\circ(y,\delta)\bigm\backslash    \bigcup_{w\in \Vout^{>1} } E_w\cap D^\circ(y,\delta)\subset  \Big( P_{w^*}\bigm\backslash    \bigcup_{w\in \Vout^{>1} } E_w \Big)\cap D^\circ(y,\varepsilon) = {P}_{w^*}^\circ\cap  D^\circ(y,\varepsilon) =\varnothing,
\end{equation*}
for every $\delta\in(0,\varepsilon)$, and thus, using item (vi) of Lemma \ref{lem:cluster-prop}, we get 
\begin{equation}\label{eq:pole-clustering-1}
\begin{aligned}
\max_{w\in \Vout^{>1} } \Big\{ L_{\m{C}}(E_w \cap D^\circ(y,\delta)) \Big\}&= L_{\m{C}}\Big( \bigcup_{w\in \Vout^{>1} } E_w \cap D^\circ(y,\delta)  \Big) \geq L_{\m{C}}(P_{w^*}\cap D^\circ(y,\delta) ).
\end{aligned}
\end{equation}
On the other hand, as $E_{w^*}=\m{C}^1 ( P_{w^*})$, we have
$
\m{C}^1 ( P_{w^*} \cap D^\circ(y,\delta))= E_{w^*} \cap D(y,\delta)\neq\varnothing
$, and so
\begin{equation}\label{eq:pole-clustering-2}
L_{\m{C}}(P_{w^*} \cap D^\circ(y,\delta))= L_{\m{C}} (E_{w^*} \cap D(y,\delta)) +1 \geq L_{\m{C}}(E_{w^*},y)+1=k+1.
\end{equation}
Now, \eqref{eq:pole-clustering-1} and \eqref{eq:pole-clustering-2} together yield
\begin{equation*}
\max_{w\in \Vout^{>1} } \Big\{ L_{\m{C}}(E_w ,y ) \Big\}=\lim_{\delta \to 0} \max_{w\in \Vout^{>1} } \Big\{ L_{\m{C}}(E_w \cap D^\circ(y,\delta)) \Big\}\geq k+ 1,
\end{equation*}
and so there must exist a $w'\in \Vout^{>1}$ such that $L_{\m{C}}(E_{w'} ,y )\geq k+1$. Thus, by item (iii) of Lemma \ref{lem:cluster-prop} and the fact that $E_{w'}$ is closed (which follows from item (i) of the same lemma and $E_{w'}=\m{C}^1(P_{w'}) $), we must have $y\in E_{w'}$. But now
\begin{equation*}
\max_{\substack{w\in \Vout^{>1} \\ y\in E_w}} L_{\m{C}}(E_w,y)\geq L_{\m{C}}(E_{w'},y)\geq k+1,
\end{equation*}
which contradicts \eqref{eq:pole-clustering-0} and thus concludes the proof of Claim 2.

We have thus established that $\bigcup_{w\in  \Vout^{>1}} \overline{P_{w}}\subset \m{C}^1(P_{\m{N}})$, completing the proof of \eqref{eq:pole-clustering(-1)} and thereby proving statement (ii) for $\m{N}$.

In order to establish statement (iii), we use \eqref{eq:pole-clustering(-1)} together with item (vi) of Lemma \ref{lem:cluster-prop} and the induction hypothesis to argue as follows:
\begin{equation*}
L_{\m{C}}\left( \m{C}^1(P_{\m{N}}) \right)=L_{\m{C}}\Big( \bigcup_{w\in  \Vout^{>1}} \overline{P_{w}}\;  \Big)=\max_{w\in  \Vout^{>1}}  L_{\m{C}}\left(\overline{P_{w}} \right)=\max_{w\in  \Vout^{>1}}  L(\m{N}_w)=L(\m{N})-1.
\end{equation*}
This, in particular, implies that $\m{C}^1(P_{\m{N}})$ is nonempty (as $L(\m{N})-1\geq 1$), and hence
\begin{equation*}
L_{\m{C}}\left(\overline{P_{\m{N}}} \right)=L_{\m{C}}\left(\m{C}^1( \overline{P_{\m{N}}}) \right)+1=L_{\m{C}}\left(\m{C}^1( {P_{\m{N}}}) \right)+1= L(\m{N}),
\end{equation*}
where we also used $\m{C}^1(\overline{P_{\m{N}}})=\m{C}^1(P_{\m{N}})$. This concludes the proof of the proposition.
\end{proof}

\section{Input anchoring and the proof of Theorem \ref{thm:intro-Zab-NNC}}\label{sec:inp-anch}

\subsection{Input anchoring}

In this section, we introduce the procedure of input anchoring, which will allow us to extend the null-net property for meromorphic nonlinearities on a singleton input set to input sets of arbitrary size.
This procedure was first introduced in \cite{Vlacic2019} for networks satisfying the so-called no-clones condition, which constitutes a special case of irreducibility for nonlinearities with no affine symmetries other than the trivial ones. We now generalize this method to arbitrary nonlinearities $\sigma$ satisfying the SAC. This involves finding a precise ``topological description'' of the set of affine symmetries of $\sigma$ (in the sense of Lemma \ref{lem:orderly} below), as well as applying the Baire category theorem.

Before further discussing input anchoring, we address the case of regular GFNNs having input nodes without any outgoing edges (which is allowed by Definition \ref{def:NonDeg}). Concretely, suppose that $\m{M}=(V,E,\Vin, \Vout, \Omega, \Theta,\Lambda)$ is a non-trivial regular GFNN with one-dimensional output such that $\OnoA{\m{M}}{\rho}=0$.
Then, writing $\Vin^{0}$ for the set of input nodes of $\m{M}$ without any outgoing edges, we have $\Vin^{0}\subsetneq \Vin$, as $\m{M}$ is non-trivial. Therefore, we can define a non-trivial regular GFNN $\m{M}'= (V' ,E,V'_{\mrm{in}}, \Vout, \Omega, \Theta,\Lambda)$ with one-dimensional output, obtained from $\m{M}$ by deleting the nodes $\Vin^{0}$. This network also satisfies $\OnoA{\m{M}'}{\rho}=0$, as well as $V'= \anc(\Vout)$, which can be viewed as a stronger version of Property (i) of Definition \ref{def:NonDeg}. Thus, we can henceforth work w.l.o.g. with networks satisfying the following strong regularity condition.

\begin{definition}[Strong non-degeneracy and strong regularity]
Let $\m{M}=(V,E,\Vin, \Vout, \Omega, \Theta,\Lambda)$ be a GFNN. We say that $\m{M}$ is \emph{strongly non-degenerate} if it is non-degenerate and $V= \anc(\Vout)$. We call  $\m{M}$ \emph{strongly regular} if it is strongly non-degenerate and irreducible.
\end{definition}

Now, let $\m{M}=(V^\m{M},E^\m{M},\Vin^{\m{M}},\Vout^\m{M},\allowbreak\Omega^{\m{M}},\allowbreak\Theta^{\m{M}},\Lambda^{\m{M}})$ be a strongly regular GFNN with one-dimensional output identically equal to zero. Enumerate the input nodes according to $\Vin^\m{M}=\{v_1^0,\dots, \allowbreak v_{D_0}^0\}$, and suppose that $D_0\geq 2$.
Let $a\in\R$ and let $\rho$ be a nonlinearity. We seek to construct a non-trivial GFNN ${\m{M}}_a=(V^{{\m{M}_a}},E^{{\m{M}_a}},\Vin^{{\m{M}_a}},\Vout^{{\m{M}_a}},\allowbreak \Omega^{{\m{M}_a}}, \Theta^{{\m{M}_a}},\Lambda^{\m{M}_a})$ with one-dimensional output, input set $\Vin^{{\m{M}_a}}= \{v_1^0,\dots, v_{D_0-1}^0\}$ satisfying $\Vout^{{\m{M}_a}}=\Vout^{\m{M}}\cap V^{{\m{M}_a}}$, and the following two properties:
\begin{itemize}[--]
\item[(IA-1)] For all $w\in \Vout^{{\m{M}_a}}$,
\begin{equation*}
\outmapT{w}{\rho}{\m{M}_a}\!\left(t_1,t_2,\dots,t_{D_0-1}\right)=\outmapT{w}{\rho}{\m{M}}\!\left(t_1,t_2,\dots,t_{D_0-1},a\right),
\end{equation*}
 for all $(t_1,t_2,\dots,t_{D_0-1})\in\R^{D_0-1}$ (after identifying $\R^{\Vin}$ with $\R^{D_0}$).
\item[(IA-2)]\label{it:InpFixMainItem2} For all $w\in \Vout^{\m{M}}\setminus \Vout^{{\m{M}_a}}$, the function $\R^{D_0-1}\to \R$ given by
\begin{equation*}
(t_1,t_2,\dots,t_{D_0-1})\mapsto \outmapT{w}{\rho}{\m{M}}\!\left(t_1,t_2,\dots,t_{D_0-1},a\right)
\end{equation*}
is constant, and we denote its value by $\outmapT{w}{\rho}{\m{M}}\!\left(a\right)$.
\end{itemize}
As $\Vin^{{\m{M}_a}}= V^{{\m{M}}}_{\mrm{in}}\setminus\{v_{D_0}^0\}$, the network $\m{M}_a$ will, indeed, have fewer input nodes than $\m{M}$.

Suppose now that $\m{M}_a$ is such a network. Then, as $\m{M}$ is assumed to have identically zero output, we have
\begin{equation*}
\lambda^{(1),\,\m{M}}+\sum_{w\in \Vout^\m{M}}\lambda_{w}^{(1),\,\m{M}} \outmapT{w}{\rho}{\m{M}}(t_1,t_2,\dots,t_{D_0-1},a)=0,
\end{equation*}
for all $(t_1,t_2,\dots,t_{D_0-1})\in\R^{D_0-1}$, where $\lambda_w^{(1),\,\m{M}}\neq 0$, for all $w\in \Vout^{\m{M}}$, by non-degeneracy of $\m{M}$.
This can be rewritten as 
\begin{equation*}
\left(\lambda^{(1),\,\m{M}}+\sum_{w\in \Vout^{\m{M}}\setminus \Vout^{\m{M}_a}}\lambda_{w}^{(1),\,\m{M}}\outmapT{w}{\rho}{\m{M}}\!(a)\right)\bm{1}+\sum_{w\in \Vout^{\m{M}_a}}\lambda_{w}^{(1),\,\m{M}} \outmapT{w}{\rho}{\m{M}_a} =0,
\end{equation*}
and thus the output scalars $\Lambda^{\m{M}_a}$ can be chosen so that the output of $\m{M}_a$ is identically zero.

In the following definition, we provide the desired network $\m{M}_a$, and we refer the reader to Figure \ref{fig:anchoring-g-intro} in Section \ref{sec:ident-for-tanh} for an illustration of this construction.

\begin{definition}\label{def:InputFix}
Let $\m{M}=(V^{\m{M}},E^{\m{M}},\Vin^{\m{M}},\Vout^\m{M},\Omega^{\m{M}},\allowbreak\Theta^{\m{M}},\Lambda^{\m{M}})$ be a strongly regular GFNN with one-dimensional output, and input nodes $\Vin^\m{M}=\{v_1^0,\dots, v_{D_0}^0\}$, $D_0\geq 2$. Let $a\in\R$, and let $\rho$ be a nonlinearity such that $\OnoA{\m{M}}{\rho}= 0$. 
The \emph{network obtained from $\m{M}$ by anchoring the input $v_{D_0}^0$ to $a$ with respect to $\rho$} is the GFNN
${\m{M}}_a=(V^{{\m{M}_a}},E^{{\m{M}_a}},\Vin^{{\m{M}_a}},\Vout^{{\m{M}_a}},\Omega^{{\m{M}_a}},\allowbreak \Theta^{{\m{M}_a}},\Lambda^{\m{M}_a})$ given by the following:
\begin{itemize}[--]
\item $V^{{\m{M}_a}}=\{v\in V^{\m{M}}:\{v_1^0,\dots, v_{D_0-1}^0\}\cap \anc_{\m{M}}(\{v\})\neq\varnothing\}$,
\item $E^{{\m{M}_a}}=\{(v,\wtd{v}): v,\wtd{v}\in V^{{\m{M}_a}} \}$,
\item $\Vin^{{\m{M}_a}}= \{v_1^0,\dots, v_{D_0-1}^0\}$ and $\Vout^{{\m{M}_a}}=\Vout^{\m{M}}\cap V^{{\m{M}_a}}$,
\item $\Omega^{{\m{M}_a}}=\{\omega_{\wtd{v}v}: (v,\wtd{v})\in E^{{\m{M}_a}}\}$.
\item For a node $v\in V^{\m{M}}\setminus V^{{\m{M}_a}}$,
we define recursively
\begin{equation}\label{eq:InpFixBiasMod}
a_v=\begin{cases}
a, & v=v_{D_0}^0\\
\rho\left({\textstyle \sum_{u\in\pre_{\m{M}}(v) }}\omega_{vu}a_u+\theta_v\right), & v\neq v_{D_0}^0
\end{cases}.
\end{equation}
(Note that this is well-defined, as $\pre_{\m{M}}(v)\subset V^{\m{M}} \setminus V^{{\m{M}_a}}$ whenever $v\in V^{\m{M}}\setminus V^{{\m{M}_a}}$.)
Now, for $u\in V^{{\m{M}_a}}$, let
\begin{equation}\label{eq:InpFixBiasModTilde}
 \wtd{\theta}_{u}=\theta_u+\sum_{v\in\pre_{\m{M}}(u)\setminus V^{{\m{M}_a}}}\omega_{uv}a_v,
\end{equation}
and set $\Theta^{{\m{M}_a}}=\{\wtd{\theta}_u:u\in V^{{\m{M}_a}}\}$.
\item Set ${\lambda}^{(1),\,\m{M}_a}=\lambda^{(1),\,\m{M}}+\sum_{w\in \Vout^{\m{M}}\setminus \Vout^{\m{M}_a}}\lambda_{w}^{(1),\,\m{M}} \,a_w$ and $\Lambda^{\m{M}_a}=\{{\lambda}^{(1),\,\m{M}_a}\}\cup\{\lambda_w^{(1),\,\m{M}}:w\in \Vout^{\m{M}_a}\}$.
\end{itemize}
\end{definition}

The network ${\m{M}_a}$ satisfies  (IA-1) and (IA-2) by construction, and therefore $\OnoA{\m{M}_a}{\rho}= 0$ by the choice of the output scalars of $\m{M}_a$.
Moreover, ${\m{M}_a}$ is strongly non-degenerate. To see this, take an arbitrary $v\in V^{\m{M}_a}$. Then, by strong non-degeneracy of $\m{M}$, there exists a $w\in V^{\m{M}}_{\mrm{out}}$ such that $v\in \anc_{\m{M}}(w)$. As $w$ is connected directly with a node in $V^{\m{M}_a}$, it follows that $w\in V^{{\m{M}_a}}$, and so $w\in \Vout^{\m{M}_a}$.
Therefore, $v\in \anc_{\m{M}_a}(w)$, and, as $v$ was arbitrary, we obtain $V^{\m{M}_a}\subset \bigcup_{w\in \Vout^{\m{M}_a} }\anc_{\m{M}_a}(w)=\anc_{\m{M}_a}(\Vout^{\m{M}_a} )$. On the other hand, Property (ii) of Definition \ref{def:NonDeg} follows from $\Vout^{\m{M}_a}\subset \Vout^{\m{M}}$ and the fact that $\m{M}_a$ inherits the output scalars from $\m{M}$. This establishes that ${\m{M}_a}$ is strongly non-degenerate. Finally, if $\m{M}$ is layered, then so is ${\m{M}_a}$.

However, ${\m{M}_a}$ is not, in general, guaranteed to be irreducible. Consider, for instance, the network $\m{M}$ in Figure \ref{fig:anchoring-g-intro}. As the biases of the nodes $u_1,u_2,w_2,w_3$ are changed, the network $\m{M}_a$ may be $(\rho,\{u_1,u_2\})$--reducible or $(\rho,\{w_2,w_3\})$--reducible, or both.  This is unfortunate, as our program for proving Theorem \ref{thm:intro-Zab-NNC} envisages maintaining regularity when constructing networks with zero output. However, this nuisance can be circumvented, as the following lemma says that, for real meromorphic nonlinearities satisfying the SAC, either there exists some value of $a\in\R$ such that the network ${\m{M}_a}$ is, indeed, irreducible, or else it is possible to select a strongly regular subnetwork $\m{N}$ of $\m{M}$ with input $\{v_{D_0}^0\}$ and identically zero output. This will be sufficient for our purposes.

\begin{prop}[Input anchoring]\label{prop:inp-anch}
Let $\m{M}=(V^{\m{M}},E^{\m{M}},\Vin^{\m{M}},\Vout^\m{M},\Omega^{\m{M}},\Theta^{\m{M}},\Lambda^{\m{M}})$ be a strongly regular GFNN with one-dimensional output and input nodes $\Vin^\m{M}=\{v_1^0,\dots, v_{D_0}^0\}$, $D_0\geq 2$. Let $\sigma$ be a nonlinearity such that $\sigma(\R)\subset\R$, and suppose that $\sigma$ is meromorphic on $\C$ and satisfies the SAC. Finally, suppose that $\OnoA{\m{M}}{\sigma}= 0$, and let ${\m{M}}_a$ denote the network obtained by anchoring the input $v_{D_0}^0$ to some $a\in\R$ with respect to $\sigma$, according to Definition \ref{def:InputFix}. Then one of the following two statements must be true:
\begin{enumerate}[(i)]
\item There exists an $a\in\R$ such that ${\m{M}}_a$ is strongly regular.
\item There exists a strongly regular subnetwork $\m{N}=(V^{\m{N}},E^{\m{N}},\{v_{D_0}^0\},\Vout^\m{N},\Omega^{\m{N}},\Theta^{\m{N}},\Lambda^{\m{N}})$ of $\m{M}$
with one-dimensional output such that $\OnoA{\m{N}}{\sigma}= 0$.
\end{enumerate}
\end{prop}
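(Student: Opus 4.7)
The plan is to argue by contradiction on the failure of (i): assume that $\m{M}_a$ fails to be strongly regular for every $a\in\R$ and construct the subnetwork promised by (ii). Since $\m{M}_a$ is strongly non-degenerate by construction, the failure must come from reducibility. Observe that only the biases (not the weights or the parent-graph structure) depend on $a$, so any reducing set $U\subset V^{\m{M}_a}\setminus\Vin^{\m{M}_a}$ already satisfies items (i) and (ii) of Definition \ref{def:reduc} with data $\{\beta_u\}_{u\in U}$, $\{\kappa_v\}_{v\in P}$ inherited from $\m{M}$, and only the existence of the affine symmetry (item (iii)) can depend on $a$. The finite collection $\m{U}$ of such candidate sets therefore yields a partition $\R=\bigcup_{U\in\m{U}}A_U$ with $A_U\coleqq\{a\in\R:\m{M}_a\text{ is }(\sigma,U)\text{-reducible}\}$.

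The first main step is to invoke the announced Lemma \ref{lem:orderly}, which is meant to provide a topological description of the set of affine symmetries of $\sigma$: for fixed $\{\beta_u\}_{u\in U}$, the set of bias vectors $(\tilde\theta_u)_{u\in U}\in\R^{U}$ for which $(\zeta,\{(\alpha_u,\beta_u,\tilde\theta_u)\}_{u\in U})$ qualifies as an affine symmetry for some $\zeta,\alpha_u$ is closed in $\R^{U}$. Together with continuity of the map $a\mapsto(\tilde\theta_u(a))_{u\in U}$ (a composition of $\sigma$ and affine maps along the local sub-DAG $\m{M}^{\mathrm{loc}}$ of $\m{M}$ consisting of the removed nodes $V^{\m{M}}\setminus V^{\m{M}_a}$, which by construction has input $\{v^0_{D_0}\}$), this makes each $A_U$ closed in $\R$. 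The Baire category theorem then forces some $A_{U^*}$ to have nonempty interior, so there is an open interval $I\subset A_{U^*}$.

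The second main step converts the existence of $I$ into a rigid structural identity inside $\m{M}^{\mathrm{loc}}$. For $a\in I$ we have nonzero $\{\alpha_u(a)\}_{u\in U^*}$ and $\zeta(a)\in\R$ with
\begin{equation*}
\sum_{u\in U^*}\alpha_u(a)\,\sigma\!\bigl(\beta_u t+\tilde\theta_u(a)\bigr)=\zeta(a),\qquad t\in\R,
\end{equation*}
where $\tilde\theta_u(a)=\theta_u+\sum_{v\in\pre_{\m{M}}(u)\setminus V^{\m{M}_a}}\omega_{uv}\,\OTnoA{v}{\sigma}{\m{M}^{\mathrm{loc}}}(a)$. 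The strong form of Lemma \ref{lem:orderly}, in which the symmetry-admitting bias vectors are locally rigid (discrete once the scaling inherent in Definition \ref{def:sym} is quotiented out), forces $a\mapsto(\tilde\theta_u(a))_{u\in U^*}$ to be locally constant on $I$. This yields a nontrivial affine combination of the maps $\{\OTnoA{v}{\sigma}{\m{M}^{\mathrm{loc}}}\}_{v}$ that is constant on $I$; since these maps analytically continue to meromorphic functions by Proposition \ref{prop:pole-structure}, the identity theorem extends the constancy from $I$ to all of $\R$.

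Finally, this identity supplies a subnetwork of $\m{M}^{\mathrm{loc}}$ (hence of $\m{M}$) with input $\{v^0_{D_0}\}$ realizing a constant map; subtracting the constant through the output scalar $\lambda^{(1)}$ produces an identically zero one-dimensional output, and stripping input nodes without outgoing edges together with finitely many $\sigma$-reductions (each preserving the output by Proposition \ref{prop:modif-basic-prop}(iii) and strictly decreasing the node count) yields a strongly regular non-trivial $\m{N}$, as required by (ii). The main obstacle is the second step: the SAC by itself only constrains the location of poles, so the rigidity of symmetry-admitting bias vectors needed to collapse a whole interval of reducibilities into a single algebraic identity is nontrivial—this is precisely the content packaged in Lemma \ref{lem:orderly}, and verifying that its conclusions apply to the specific continuous family $a\mapsto(\tilde\theta_u(a))_{u\in U^*}$ arising from anchoring requires combining it with continuity of the subnetwork maps on $\m{M}^{\mathrm{loc}}$.
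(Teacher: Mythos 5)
Your high-level plan matches the paper's: partition $\R$ over the finitely many candidate reducing sets $U$, apply Baire, invoke Lemma~\ref{lem:orderly} to pin down the structure, upgrade the resulting constraint to all of $\R$ by the identity theorem, and build the subnetwork $\m{N}$. However, two of the steps misread what Lemma~\ref{lem:orderly} actually provides, and the second misreading changes which quantity is constant and hence what subnetwork can be built.

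First, Lemma~\ref{lem:orderly} does \emph{not} assert that the set $\Gamma$ of symmetry-admitting bias vectors is closed in $\R^U$; it asserts that $\Gamma$ is a \emph{countable union of parallel lines}, which in general is not closed (lines can accumulate). Consequently your claim that each $A_U$ is closed, and hence by Baire some $A_{U^*}$ contains an open interval $I$, is not justified. The paper works instead with the weaker dichotomy Baire actually gives: since the finite union of the $E_U$ covers $\R$, some $E_U$ is non-meagre, and since $\Gamma=\bigcup_{\gamma'\in\Gamma'}\Gamma_{\gamma'}$ with $\Gamma'$ countable, some $E_U^{\gamma'}$ is dense in an open set. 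This is enough to run the identity-theorem step, without ever claiming an entire interval lies in $A_{U^*}$.

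Second, the ``rigidity'' conclusion you extract — that $a\mapsto(\tilde\theta_u(a))_{u\in U^*}$ is \emph{locally constant} on $I$ — is false in general and is not what Lemma~\ref{lem:orderly} yields. A point of $\Gamma$ is free to move along the line direction $(\beta_u)_{u\in U^*}$ while remaining in $\Gamma$; what is pinned down is only the component of $(\tilde\theta_u(a))_u$ orthogonal to $(\beta_u)_u$. The paper records this precisely as the vanishing $\sum_{u\in U^*}(\tilde\theta_u(a)-\gamma_u')\vartheta_u=0$ for all $\vartheta$ with $\sum_u\beta_u\vartheta_u=0$, first on a set with a cluster point and then, via holomorphy of $a\mapsto\tilde\theta_u(a)$ (a direct consequence of $\sigma(\R)\subset\R$ and $\sigma$ meromorphic — no appeal to Proposition~\ref{prop:pole-structure} is needed), for all $a\in\R$. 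What is therefore constant is the \emph{cross-difference} $\beta_{u_1}^{-1}\tilde\theta_{u_1}(a)-\beta_{u_2}^{-1}\tilde\theta_{u_2}(a)$ for $u_1,u_2\in U^*$, not each $\tilde\theta_u(a)$. This distinction is essential: the subnetwork $\m{N}$ is generated from the nodes $v$ on which $\beta_{u_1}^{-1}\omega_{u_1v}$ and $\beta_{u_2}^{-1}\omega_{u_2v}$ disagree, and the irreducibility of $\m{M}$ is invoked exactly to guarantee that such a pair $(u_1,u_2)$ exists and produces a non-trivial generating set; if instead each $\tilde\theta_u$ were individually constant you would be led to a different (and in general empty or degenerate) generating set and the argument would not close.
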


The proof of Proposition \ref{prop:inp-anch} requires the following auxiliary result, whose proof can be found in the Appendix.
\begin{lemma}\label{lem:orderly}
Let $\sigma$ be a meromorphic nonlinearity on $\C$ satisfying the SAC. Furthermore, let $\{\beta_s\}_{s\in\m{I}}$ be a nonempty finite set of nonzero real numbers. Then
\begin{equation*}
\Gamma\coleqq\left\{(\gamma_s)_{s\in\m{I}}\in\R^{\m{I}}:\quad \text{\parbox{8cm}{ $\left(\zeta,\{(\alpha_s,\beta_s,\gamma_s)\}_{s\in\m{I}}\right)$ is an affine symmetry of $\sigma$ for some $\zeta \in \R$ and nonzero real numbers $\{\alpha_s\}_{s\in\m{I}}$ }}\quad \right\}
\end{equation*}
is a (possibly empty) countable union of parallel lines in $\R^{\m{I}}$. More specifically, there exists a countable set $\Gamma' \subset \R^{\m I}$ such that $\Gamma=\bigcup_{\gamma'\in\Gamma'}\{(\gamma'_s+ t\beta_s)_{s\in\m{I}}:t\in\R\}$.
\end{lemma}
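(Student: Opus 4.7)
The first observation is that $\Gamma$ is invariant under the translation action $(\gamma_s)\mapsto (\gamma_s+c\beta_s)$: indeed, $\sigma(\beta_s t+\gamma_s+c\beta_s)=\sigma(\beta_s(t+c)+\gamma_s)$, so shifting $t\mapsto t-c$ preserves the symmetry relation with the same $\alpha_s,\zeta$. It therefore suffices to show that the projection $\bar{\Gamma}:=\pi(\Gamma)$ under $\pi:\R^{\m I}\to\R^{\m I}/\R\vec\beta$ is countable; since this quotient is a separable metric space, it is enough to prove $\bar\Gamma$ is \emph{discrete}. A key preparatory step I will need is that, for each $\gamma\in\Gamma$, the linear space $V_\gamma:=\{(\alpha,\zeta)\in\R^{\m I}\times\R :\sum_s\alpha_s\sigma(\beta_s\cdot+\gamma_s)=\zeta\}$ is exactly one-dimensional, generated by any certificate $(\alpha^*,\zeta^*)$ of $\gamma$ (all of whose $\alpha^*_s$ are nonzero): if $\dim V_\gamma\geq 2$, a suitable linear combination of two independent certificates cancels one coefficient, producing a relation supported on a proper subset of $\m I$ and contradicting the atomicity condition (ii) of Definition~\ref{def:sym}.

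Suppose for contradiction that $\bar\Gamma$ is not discrete at some $\bar\gamma$. Lift to obtain a sequence $\gamma^{(n)}=\gamma+\epsilon^{(n)}\in\Gamma$ with $\epsilon^{(n)}\in W$ (where $W\subset\R^{\m I}$ is the orthogonal complement of $\R\vec\beta$), $\epsilon^{(n)}\neq 0$, and $\epsilon^{(n)}\to 0$. Choose certificates $(\alpha^{(n)},\zeta^{(n)})$ normalized so that $\alpha_{s_0}^{(n)}=\alpha_{s_0}^*$ for a fixed $s_0$. Using one-dimensionality of $V_\gamma$ together with atomicity (to rule out an unbounded subsequence via a limit analysis of $(\alpha^{(n)},\zeta^{(n)})/\|(\alpha^{(n)},\zeta^{(n)})\|$), one shows $(\alpha^{(n)},\zeta^{(n)})\to(\alpha^*,\zeta^*)$. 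Set $r_n:=\|\epsilon^{(n)}\|$, $\hat\epsilon^{(n)}:=\epsilon^{(n)}/r_n$, $\hat\delta_s^{(n)}:=(\alpha_s^{(n)}-\alpha_s^*)/r_n$, $\hat\delta_0^{(n)}:=(\zeta^{(n)}-\zeta^*)/r_n$. Taylor expanding $\sum_s\alpha_s^{(n)}\sigma(\beta_s z+\gamma_s+\epsilon_s^{(n)})=\zeta^{(n)}$ in $\epsilon^{(n)}$, subtracting the identity for $\gamma$, and dividing by $r_n$ yields
\begin{equation*}
\sum_s\alpha_s^{(n)}\hat\epsilon_s^{(n)}\sigma'(\beta_s z+\gamma_s)+\sum_s\hat\delta_s^{(n)}\sigma(\beta_s z+\gamma_s)-\hat\delta_0^{(n)}=O(r_n).
\end{equation*}
Evaluating at $|\m I|$ generic points $z_1,\dots,z_{|\m I|}$ and using that $\{\sigma(\beta_s\cdot+\gamma_s)\}_{s\neq s_0}\cup\{1\}$ are linearly independent (atomicity), the resulting linear system shows that $(\hat\delta^{(n)},\hat\delta_0^{(n)})$ is bounded. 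Passing to a subsequence, $\hat\epsilon^{(n)}\to\hat\epsilon^{(\infty)}\in W$ with $|\hat\epsilon^{(\infty)}|=1$, and $(\hat\delta^{(n)},\hat\delta_0^{(n)})\to(\hat\delta^{(\infty)},\hat\delta_0^{(\infty)})$. The limit identity reads
\begin{equation*}
\sum_s\alpha_s^*\hat\epsilon_s^{(\infty)}\sigma'(\beta_s z+\gamma_s)+\sum_s\hat\delta_s^{(\infty)}\sigma(\beta_s z+\gamma_s)=\hat\delta_0^{(\infty)},
\end{equation*}
which extends by the identity theorem to a meromorphic identity on $\C$.

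\textbf{The main obstacle}, and the crux of the argument, is to convert the appearance of $\sigma'$ in this limit identity into a constraint rigid enough to force $\hat\epsilon^{(\infty)}\in\R\vec\beta$. I will do this by examining principal parts at each potential pole $z_0=\beta_s^{-1}(p-\gamma_s)$, $p\in P_\sigma$: the $(z-z_0)^{-2}$ coefficient receives contributions \emph{only} from the $\sigma'$ terms (since $\sigma$ has simple poles), yielding $\sum_{s\in S_0}\alpha_s^*\hat\epsilon_s^{(\infty)} r_{p_s}/\beta_s^2=0$ for each $z_0$, where $S_0$ indexes the summands having a pole at $z_0$ and $r_{p_s}$ is the residue of $\sigma$ at $p_s=\beta_s z_0+\gamma_s$. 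This is precisely the statement that the function $G(z):=\sum_s(\alpha_s^*\hat\epsilon_s^{(\infty)}/\beta_s)\sigma(\beta_s z+\gamma_s)$ has vanishing residue at every potential pole, i.e., $G$ is entire. Invoking the SAC (empty pole set is bounded), $G$ must be constant, so $\bigl((\alpha_s^*\hat\epsilon_s^{(\infty)}/\beta_s)_s, G_0\bigr)\in V_\gamma$ for some constant $G_0$. By the one-dimensionality of $V_\gamma$ and the fact that $(\alpha^*,\zeta^*)$ has all entries nonzero, this element must be either zero (forcing $\hat\epsilon^{(\infty)}=0$, contradicting $|\hat\epsilon^{(\infty)}|=1$) or a nonzero scalar multiple $\lambda(\alpha^*,\zeta^*)$ (forcing $\hat\epsilon_s^{(\infty)}=\lambda\beta_s$ for all $s$, hence $\hat\epsilon^{(\infty)}\in\R\vec\beta$, which contradicts $\hat\epsilon^{(\infty)}\in W\perp\vec\beta$ and $|\hat\epsilon^{(\infty)}|=1$). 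In either case we arrive at a contradiction, so $\bar\Gamma$ is discrete and hence countable, establishing the claim.
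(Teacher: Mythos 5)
Your argument takes a genuinely different route from the paper's. The paper proceeds constructively: it defines a graph on $\m{I}$ whose edges record nonempty intersections of the pole sets $P_s=\beta_s^{-1}(P_\sigma-\gamma_s)$, uses the SAC together with atomicity to show the graph is connected, and then traces biases along paths from a fixed $s_0$ to place each $\beta_s^{-1}\gamma_s'$ in a fixed countable set built from $\Re(P_\sigma)$, from which an explicit countable $\Gamma'$ is read off. You instead linearize in the bias parameter: assuming $\Gamma/\R\vec\beta$ has a cluster point, you normalize certificates (using the nice observation—itself a consequence of atomicity—that $V_\gamma$ is one-dimensional), pass to a rescaled limit, and extract a constraint from the principal parts of the limit identity. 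Where it applies, your argument proves the stronger statement that $\Gamma/\R\vec\beta$ is discrete.

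There is, however, a genuine gap. The key step where you pass from the limit identity to ``$G$ is entire'' relies crucially on $\sigma$ having only \emph{simple} poles: you isolate the $(z-z_0)^{-2}$ coefficient precisely because, with simple poles, the $\sigma$ terms contribute only to $(z-z_0)^{-1}$ and the $\sigma'$ terms to $(z-z_0)^{-2}$, so that vanishing of that coefficient is exactly $\mathrm{Res}(G,z_0)=0$. If $\sigma$ had a pole of order $m\geq 2$ at $p\in P_\sigma$, then $\sigma(\beta_s z+\gamma_s)$ would itself contribute to $(z-z_0)^{-2},\dots,(z-z_0)^{-m}$; the vanishing of the leading $(z-z_0)^{-(m+1)}$ coefficient only kills the \emph{top}-order coefficient of the principal part of $G$, and the remaining coefficients of $G$ at $z_0$ are entangled with $\hat\delta^{(\infty)}$ in a way that does not force $G$ to be pole-free. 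The lemma hypothesis is only that $\sigma$ is a meromorphic nonlinearity satisfying the SAC—no restriction on pole order (the simple-pole assumption appears in Definition~\ref{def:CAC} and Theorem~\ref{thm:intro-Zab-NNC}, but not here). The paper's proof works in this generality because it reasons purely about pole \emph{locations}, never about residues. So your proof establishes the lemma for $\sigma$ with simple poles only (which does cover the paper's applications, e.g.\ the class $\LtCla{a}{b}$), but not under the stated hypotheses.
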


\begin{proof}[Proof of Proposition \ref{prop:inp-anch}]
For a subset $U$ of nodes of $\m{M}$ define
\begin{equation*}
E_{U}=\{a\in\R: U\subset V^{\m{M}_a}\text{ and $\m{M}_a$ is $(\sigma,U)$--reducible}\}.
\end{equation*}
Suppose that statement {(i)} is false, so that, for every $a\in\R$, there exists a $U\subset V^{\m{M}}$ such that $a\in E_{U}$. We can then write $\R$ as a finite union
\begin{equation*}
\R=\bigcup_{U\subset V^{\m{M}}}E_{U},
\end{equation*}
and, as $\R$ is a complete metric space and the union over the subsets of $V^{\m{M}}$ is finite, it follows by the Baire category theorem \cite[Thm. 5.6]{Rudin1987} that there exists a $U\subset V^{\m{M}}$ such that $E_{U}$ is not meagre in $\R$, i.e., it is not a countable union of nowhere dense sets. Fix such a set $U$, let $P$ be the common parent set in $\m{M}_a$ of the nodes in $U$, let $\{\kappa_v\}_{v\in P}$ and $\{\beta_u\}_{u\in U}$ be such that $\{\omega_{uv}\}_{v\in P}=\beta_u\{\kappa_v\}_{v\in P}$, for all $u\in U$, and set $P'=\bigcup_{u\in U}(\pre_{\m{M}}(u)\setminus P)$. Note that, for $v\in P'$, the  map $\OTnoA{v}{\sigma}{\m{M}}$ depends on $v_{D_0}^0$, but not on the remaining input nodes $\{v_1^0,\dots,v_{D_0-1}^0\}$ of $\m{M}$, so we can write $\OTnoA{v}{\sigma}{\m{M}}(a)$ for the value of $\OTnoA{v}{\sigma}{\m{M}}$ at an arbitrary point $(t_1,\dots,t_{D_0-1},a)\in\R^{\Vin}$. Now, the bias of every $u\in U$ in $\m{M}_a$ is given by
\begin{equation*}
{\xi}_u(a)\coleqq \theta_u+\sum_{v\in P'\cap\pre_{\m{M}}(u)}\omega_{uv}\OTnoA{v}{\sigma}{\m{M}}(a).
\end{equation*}
As $\sigma$ is a meromorphic function satisfying the SAC and $\sigma(\R)\subset \R$, we know by Lemma \ref{lem:orderly} that the set
\begin{equation*}
\Gamma\coleqq\left\{(\gamma_u)_{u\in U}\in\R^{U}:\quad \text{\parbox{8.5cm}{ $\left(\zeta,\{(\alpha_u,\beta_u,\gamma_u)\}_{u\in U}\right)$ is an affine symmetry of $\sigma$ for some $\zeta \in \R$ and nonzero real numbers $\{\alpha_u\}_{u\in U}$ }}\quad \right\}
\end{equation*}
is a countable union of parallel lines in $\R^{U}$, i.e., there exists a countable set $\Gamma' \subset \R^{U}$ such that $\Gamma=\bigcup_{\gamma'\in\Gamma'}\Gamma_{\gamma'}$, where $\Gamma_{\gamma'}\coleqq\{(\gamma'_u+ t\beta_u)_{u\in U}:t\in\R\}$ are w.l.o.g. pairwise disjoint.  
Note that, by definition of reducibility, we have $(\xi_{u}(a))_{u\in U}\in\Gamma$, for all $a\in E_{U}$, and thus we can partition $E_U$ according to $E_{U}=\bigcup_{\gamma'\in \Gamma'}E_{U}^{\gamma'}$, where
\begin{equation*}
E_{U}^{\gamma'}\coleqq \{a\in E_{U}:(\xi_{u}(a))_{u\in U}\in \Gamma_{\gamma'}\},\quad \text{for }\gamma'\in\Gamma'.
\end{equation*}
Now, as $E_{U}$ is not a countable union of nowhere dense sets, and $\Gamma'$ is countable, there must exist a $\gamma'\in\Gamma'$ such that $E_{U}^{\gamma'}$ is dense in an open subset of $\R$. Next, consider $\vartheta\in\R^{U}$ such that $\sum_{u\in U}\beta_u \vartheta_u=0$. Then
\begin{equation}\label{eq:InpAnchLem-1}
\sum_{u\in U}(\xi_u(a)-\gamma'_u)\,\vartheta_u=0,
\end{equation}
for all $a\in E^{\gamma'}_{U}$, by definition of $\Gamma_{\gamma'}$. As $\sigma(\R)\subset \R$, the functions $a\mapsto \xi_u(a)$, for $u\in U$, are holomorphic in a neighborhood of $\R$. Hence, as $E_{U}^{\gamma'}$ has a cluster point in $\R$, it follows by the identity theorem \cite[Thm. 10.18]{Rudin1987} that \eqref{eq:InpAnchLem-1} holds for all $a\in\R$. Now, as $\vartheta$ was arbitrary, we see that, for every $a\in\R$, there exists a $\xi(a)\in\R$ such that 
\begin{equation*}
(\xi_u(a)-\gamma'_u)_{u\in U}=\xi(a)\cdot (\beta_u)_{u\in U}.
\end{equation*}
Then
\begin{equation*}
\sum_{v\in P'\cap\pre_{\m{M}}(u)}\omega_{uv}\OTnoA{v}{\m{M}}{\sigma}(a)= -\theta_u+\gamma'_u+\beta_u\xi(a),\quad a\in\R,
\end{equation*}
for all $u\in U$, and thus, for all $u_1,u_2\in U$, we have that
\begin{equation}\label{eq:InpAnchLem-2}
\begin{aligned}
&\sum_{v\in P'\cap\pre_{\m{M}}(u_1)}\hspace*{-5mm}\beta_{u_1}^{-1}\omega_{u_1v}\OTnoA{v}{\m{M}}{\sigma}(a)\;-\hspace*{-3mm}\sum_{v\in P'\cap\pre_{\m{M}}(u_2)}\hspace*{-5mm} \beta_{u_2}^{-1}\omega_{u_2v}\OTnoA{v}{\m{M}}{\sigma}(a)=\beta_{u_1}^{-1}( \gamma_{u_1}'-\theta_{u_1})-\beta_{u_2}^{-1}( \gamma_{u_2}'-\theta_{u_2}),
\end{aligned}
\end{equation}
is constant as a function of $a\in\R$. We now use this identity to construct a subnetwork $\m{N}$ of $\m{M}$ with one identically zero output and input $\{v_{D_0}^0\}$, thereby establishing statement (ii) of the proposition. This will be done analogously to the construction of the network $\m{N}'$ in the proof of Proposition \ref{prop:pole-structure}. Concretely, we proceed by showing that there exist $u_1,u_2\in U$, $u_1\neq u_2$, such that either
\begin{equation*}
 \pre_{\m{M}}(u_1)\setminus P\neq \pre_{\m{M}}(u_2)\setminus P
\end{equation*}
or
\begin{equation*}
\begin{aligned}
 \wtd{P}\coleqq \pre_{\m{M}}(u_1)\setminus P&= \pre_{\m{M}}(u_2)\setminus P\quad \text{and}\\
 \beta_{u_1}^{-1}\{\omega_{u_1v}\}_{v\in \wtd P}&\neq \beta_{u_2}^{-1}\{\omega_{u_2v}\}_{v\in \wtd P}.
 \end{aligned}
 \end{equation*} 
  Suppose by way of contradiction that this is not the case. First note that $\# (U)\geq 2$, as $\sigma$ is non-constant. Next, recalling that $P'=\bigcup_{u\in U}(\pre_{\m{M}}(u)\setminus P)$, we have $\pre_{\m{M}}(u)\setminus P=P'$, for all $u\in U$, and there exists a set of nonzero real numbers $\{\wtd{\kappa}_{v}\}_{v\in P'}$ 
 such that $\{\omega_{uv}\}_{v\in P'}={\beta}_u\{\wtd{\kappa}_v\}_{v\in P'}$, for all $u\in U$. Recalling that also $\{\omega_{uv}\}_{v\in P}=\beta_u\{{\kappa}_v\}_{v\in P}$, we obtain $\{\omega_{uv}\}_{v\in P\cup P'}=\beta_u\{\kappa_v'\}_{v\in P\cup P'}$, where
\begin{equation*}
\kappa'_v=\begin{cases}
\kappa_v, & v\in P\\
\wtd{\kappa}_v, & v\in P'
\end{cases}\;.
\end{equation*}
But this implies that $\m{M}$ is $(\rho,U)$--reducible, contradicting the assumption that $\m{M}$ is irreducible.  We can therefore find $u_1,u_2\in U$, $u_1\neq u_2$, such that either $\pre_{\m{M}}(u_1)\setminus P\neq \pre_{\m{M}}(u_2)\setminus P$, or $\wtd{P}\coleqq \pre_{\m{M}}(u_1)\setminus P= \pre_{\m{M}}(u_2)\setminus P$ and $\beta_{u_1}^{-1}\{\omega_{u_1v}\}_{v\in \wtd P}\neq \beta_{u_2}^{-1}\{\omega_{u_2v}\}_{v\in \wtd P}$. It hence follows that there exists a $v\in P'$ such that one of the following statements holds:
\begin{equation}
\begin{aligned}
(v,u_1)\in E^{\m{M}}& \text{ and } (v,u_2)\notin E^{\m{M}},\\
(v,u_1)\notin E^{\m{M}}& \text{ and } (v,u_2)\in E^{\m{M}},\text{ or } \label{eq:inp-anch-1}\\
(v,u_1), (v,u_2)\in E^{\m{M}}, & \text { and }\beta_{u_1}^{-1}\omega_{u_1v}-\beta_{u_2}^{-1}\omega_{u_2v}\neq 0.
\end{aligned}
\end{equation}
Hence $S\coleqq\{v\in P':\text{ one of \eqref{eq:inp-anch-1} holds}\}$ is nonempty, and we can set
\begin{equation*}
{\lambda}_{v}^{(1),\,\m{N}}=
\begin{cases}
\beta_{u_1}^{-1}\omega_{u_1v}, &\text{if }(v,u_1)\in E^{\m{M}}, (v,u_2)\notin E^{\m{M}}\\
-\beta_{u_2}^{-1}\omega_{u_2v}, &\text{if }(v,u_1)\notin E^{\m{M}}, (v,u_2)\in  E^{\m{M}} \\
\beta_{u_1}^{-1}\omega_{u_1v}-\beta_{u_2}^{-1}\omega_{u_2v}, & \text{if } (v,u_1), (v,u_2)\in E^{\m{M}} 
\end{cases},\quad \text{ for }v\in S,
\end{equation*}
and 
\begin{equation}\label{eq:lambda-N-inp-anch}
{\Lambda}^\m{N}=\{{\lambda}^{(1),\,\m{N}}\coleqq -\beta_{u_1}^{-1}(\gamma_{u_1}'-\theta_{u_1})+\beta_{u_2}^{-1}(\gamma_{u_2}'-\theta_{u_2})\}\cup\{\lambda_{v}^{(1),\,\m{N}}:v\in S\}.
\end{equation}
We now take $\m{N}=(V^{\m{N}},E^{\m{N}},\{v_{D_0}^0\},S,\Omega^{\m{N}},\Theta^{\m{N}},\Lambda^{\m{N}})$ to be the subnetwork of $\m{M}$ with one-dimen\-sional output, generated by $S$, and with $\Lambda^{\m{N}}$ as given in \eqref{eq:lambda-N-inp-anch}.
Then $\OnoA{\m{N}}{\sigma}=0$ by \eqref{eq:InpAnchLem-2}, and $\m{N}$ is strongly regular, as $\m{M}$ is. This establishes statement (ii) of the proposition and hence completes its proof.
\end{proof}

\subsection{Proof of Theorem \ref{thm:intro-Zab-NNC}}

We are now ready to combine the results of Sections \ref{sec:Clustering} and \ref{sec:inp-anch} to prove Theorem \ref{thm:intro-Zab-NNC}.

\begin{proof}[Proof of Theorem \ref{thm:intro-Zab-NNC}]
We argue by contradiction, so suppose that the statement is false. Specifically, fix a non-trivial regular GFNN $\m{A}$ with one-dimensional identically zero output and input set $\Vin$ of minimal cardinality. Then, as $\Vin$ is of minimal cardinality, $\m{A}$ must be strongly regular. We further claim that $\#(\Vin)=1$. To see this, suppose by way of contradiction that $\#(\Vin)\geq 2$, and apply Proposition \ref{prop:inp-anch} to $\m{A}$. Note that both circumstances of Proposition \ref{prop:inp-anch} yield a strongly regular network $\m{A}'$ with one-dimensional identically zero output, and input set $\Vin'$ strictly contained in $\Vin$. As $\#(\Vin')<\#(\Vin)$, we have a contradiction to the minimality of $\#(\Vin)$, and hence must have $\#(\Vin)=1$.

Now, as ${\OnoA{\m{A}}{\sigma}}\vert_{\R} =0$, it follows by the identity theorem that ${\OnoA{\m{A}}{\sigma}}$ continues in a unique fashion to the zero function on its natural domain $\dom_{\OnoA{\m{A}}{\sigma}}=\C$. On the other hand, as $\m{A}$ is non-trivial, Proposition \ref{prop:pole-structure} implies that the natural domain $\dom_{\OnoA{\m{A}}{\sigma}}$ of the analytic continuation of ${\OnoA{\m{A}}{\sigma}}$ is equal to $\C\setminus \overline {P_{\m{A}}}$, where $P_{\m{A}}$ is the set of poles of ${\OnoA{\m{A}}{\sigma}}$ satisfying $L_{\C}(\overline {P_{\m{A}}})=L(\m{A})\geq 1$. This, in particular, implies that $P_{\m{A}}$ must be nonempty, which stands in contradiction to $\dom_{\OnoA{\m{A}}{\sigma}}=\C$, completing the proof.

\end{proof}

\section{The alignment conditions for $\LtCla{a}{b}$-nonlinearities}\label{sec:LtCla-a-b}

\subsection{Basic properties of $\LtCla{a}{b}$-nonlinearities}

In this section, we derive various straightforward results about lattices in $\C$ and the functions in $\LtCla{a}{b}$ and use these findings to establish both the SAC and the CAC for $\LtCla{a}{b}$-nonlinearities. We begin with a lemma listing several elementary 
properties of $\LtCla{a}{b}$-nonlinearities. In the following we write $d(z,F)=\inf\{|z-w|:w\in F\}$ for the Euclidean distance between the point $z\in\C$ and the set $F\subset\C$.

\begin{lemma}\label{lem:Zab-props}
Let $a,b>0$, let $\{c_k\}_{k\in\Z}$ be a sequence of complex numbers, and suppose $a'\in(0,a)$ is such that $\sup_{k\in\Z}|c_k|e^{-\pi a' |k|/b}<\infty$. Then the series in \eqref{eq:Zab-series} converges uniformly on compact subsets of $\C\setminus P_\sigma$, where $P_\sigma=\{ak_a+ib(k_b+1/2):k_a,k_b\in\Z,\, c_{k_a}\neq 0\}$. Moreover, the function $\sigma$ given by \eqref{eq:Zab-series} has the following properties:
\begin{enumerate}[(i)]
\item $\sigma$ is an $ib$-periodic meromorphic function on $\C$,
\item the set of poles of $\sigma$ is $P_{\sigma}\subset \big(\frac{ib}{2}+a\Z\,\times\,ib \Z\big)$,
\item every pole of $\sigma$ is of order 1,
\item there exist constants $M>0$ and $\eta\in (0,\pi)$ such that $|\sigma(z)|\leq \frac{M }{1\wedge d(z,P_\sigma)}\,e^{\eta |z|/b}$, for all $z\in \dom_\sigma\coleqq \C\setminus P_\sigma$.
\end{enumerate}
\end{lemma}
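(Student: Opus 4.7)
The plan is to apply the standard asymptotic properties of $\tanh$ termwise and then sum: recall that $\tanh$ is $i\pi$-periodic with simple poles of residue $1$ at $i\pi(n+1/2)$, $n \in \Z$, and satisfies $\tanh(w) - \sgn(\Re w) = O(e^{-2|\Re w|})$ as $|\Re w| \to \infty$ uniformly away from these poles. The subexponential growth hypothesis $|c_k| \leq K_0\, e^{\pi a' |k|/b}$ with $a' < a$ provides the margin needed both for absolute convergence and for the growth bound.

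For the uniform convergence assertion and items (i)--(iii), fix a compact set $K \subset \C\setminus P_\sigma$. For $k \gg 1$ and $z \in K$, $\Re(\pi b^{-1}(z - ka)) \leq -\pi a k/b + C_K$, so the asymptotic of $\tanh$ yields $|1 + \tanh(\pi b^{-1}(z - ka))| \leq C\, e^{-2\pi a k/b}$ uniformly on $K$; combined with $|c_k| \leq K_0\, e^{\pi a' k/b}$, the $k$-th summand is bounded by a constant times $e^{-\pi(2a - a')k/b}$, which is geometrically summable since $2a - a' > 0$, and the symmetric estimate handles $k \to -\infty$. The sum is therefore holomorphic on $\C\setminus P_\sigma$ and $ib$-periodic (since each summand is, by $\tanh(w + i\pi) = \tanh(w)$). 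At any $z_0 = a k_0 + ib(n_0 + 1/2) \in P_\sigma$, the $k_0$-th summand has a simple pole of residue $c_{k_0}\, b/\pi$, while every summand with $j \neq k_0$ is analytic at $z_0$ because $\tanh(\pi b^{-1}(z_0 - ja)) = \tanh(\pi(k_0 - j)a/b + i\pi(n_0 + 1/2)) = \coth(\pi(k_0 - j)a/b)$ is finite; this establishes both (ii) and (iii).

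The main obstacle will be the quantitative bound (iv). By $ib$-periodicity of $\sigma$ and of $P_\sigma$, it suffices to prove the bound on the strip $S = \{z : |\Im z| \leq b/2\}$, where $|z|$ and $|\Re z|$ differ by at most $b/2$ (so the factor $e^{\eta b/2}$ can be absorbed into $M$). Fix $z \in S$ with $\Re z \geq 0$ (the case $\Re z < 0$ being symmetric) and set $k^* = \lfloor \Re z / a \rfloor$. I will split the series into three groups according to $k$. For $k > k^*$, $\tanh$ saturates to $-1$ and $|1 + \tanh(\pi b^{-1}(z - ka))| \leq C\, e^{-2\pi(ak - \Re z)/b}$; multiplying by $|c_k|$ and summing geometrically yields $O(e^{\pi a' \Re z/(ab)})$. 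For $0 < k < k^*$, $\tanh$ saturates to $+1$ so $|1 + \tanh| = O(1)$ and $\sum_{0 < k < k^*} |c_k| = O(e^{\pi a' k^*/b}) = O(e^{\pi a' \Re z/(ab)})$, and the contribution of $k \leq 0$ is bounded uniformly in $\Re z \geq 0$ by the same logic as in the convergence step. The finitely many ``central'' terms with $k$ close to $k^*$ produce the singularity near $P_\sigma$; for these I invoke the global bound $|\tanh(w)| \leq 1 + C_0/(1 \wedge d(w, i\pi(\Z + 1/2)))$, which after the affine change of variable contributes the prefactor $1/(1 \wedge d(z, P_\sigma))$ times $|c_{k^*}| = O(e^{\pi a' \Re z /(ab)})$. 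Collecting and setting $\eta = \pi a' / a \in (0, \pi)$ yields $|\sigma(z)| \leq M (1 \wedge d(z, P_\sigma))^{-1}\, e^{\eta|z|/b}$, as required.
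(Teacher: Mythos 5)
Your argument is correct and rests on the same ingredients as the paper: geometric summability of the tail from the margin $a' < a$, $ib$-periodicity for item (i), disjointness of the per-term pole lattices $ka + ib(\Z + \tfrac12)$ across distinct $k$ (so no cross-cancellation) for items (ii)--(iii), and the split near $k^* \approx \Re z/a$ yielding $\eta = \pi a'/a$ for item (iv). The organizational difference is in how item (iv) produces the $1/(1\wedge d(z,P_\sigma))$ prefactor: the paper first proves a single unified building-block estimate $|1 + \tanh(\pi b^{-1}z)|\,\bigl(1 \wedge d(z,P_b)\bigr) \leq M_1\bigl(1 \wedge e^{2\pi\Re z/b}\bigr)$, with $P_b = ib(\Z+\tfrac12)$, which packages the pole blowup and the exponential saturation into one inequality before summing, so the subsequent computation over $k$ on all of $\C$ is a clean geometric-series estimate with the conversion $e^{\pm\eta\Re z/b} \leq e^{\eta|z|/b}$ done at the very end. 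You instead keep the raw $\tanh$ asymptotics, reduce to the strip $|\Im z|\leq b/2$ by periodicity, and carve out a finite block of ``central'' indices near $k^*$ to account for the singularity, estimating the remaining ranges via saturation. Both routes yield the same bound; the paper's unified estimate sidesteps the bookkeeping of where your central window starts and ends and whether the terms outside it are genuinely in their saturated regime --- details your sketch leaves implicit but which are routine to supply.
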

\noindent The proof of Lemma \ref{lem:Zab-props} can be found in the Appendix. 

\subsection{Asymptotic density and the CAC for $\LtCla{a}{b}$-nonlinearities}

The first main result of this section is the following proposition that immediately implies the CAC for functions in $\LtCla{a}{b}$.

\begin{prop}\label{prop:laced-breakdown}
Let $a,b>0$, $\sigma\in \LtCla{a}{b}$, and let $\{(\alpha_s,\beta_s,\gamma_s)\}_{s\in\m{I}}$ be a nonempty finite set of triples of complex numbers such that $\alpha_s,\beta_s\in\C\setminus\{0\}$, for all $s\in \m{I}$. Furthermore, let $\{\epsilon_s\}_{s\in\m{I}}$ be ABCs, and
suppose that the function
\begin{equation}\label{eq:lac-break-stat}
z\mapsto f(z)\coleqq \sum_{s\in\m{I}}\alpha_s\,\sigma\left(\beta_s z+\gamma_s+\epsilon_s(1/z)\right)
\end{equation}
is analytic on $\C\setminus D(0,R)$, for some $R>0$. Then the set $\m{I}$ can be partitioned into sets $\m{I}_1,\dots,\m{I}_n$ such that, for every $j\in\{1,\dots,n\}$,
\begin{enumerate}[(i)]
\item there exists an ABC $\xi_j$ so that $\epsilon_s=\beta_s \xi_j$, for all $s\in\m{I}_j$, and
\item the function $f_j\coleqq \sum_{s\in\m{I}_j}\alpha_s\,\sigma(\beta_s\cdot\,+\,\gamma_s)$ is entire.
\end{enumerate}
\end{prop}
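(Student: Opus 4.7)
The plan is to follow the density-based graph construction sketched for $\tanh$ in Section~\ref{sec:ident-for-tanh}, suitably generalized to $\sigma\in\LtCla{a}{b}$, whose poles sit in the two-dimensional lattice $\frac{ib}{2}+a\Z\times ib\Z$ by Lemma~\ref{lem:Zab-props}~(ii). For each $s\in\m{I}$, set $g_s(z)\coleqq\sigma(\beta_s z+\gamma_s+\epsilon_s(1/z))$ and let $\wtd{P}_s$ be its pole set. Since $\epsilon_s(1/z)\to 0$ as $|z|\to\infty$, the points of $\wtd{P}_s$ of large modulus are small perturbations of the set $\Pi_s\coleqq\beta_s^{-1}(P_\sigma-\gamma_s)$, and in particular $\Delta(\ell_s,\wtd{P}_s)>0$ for an appropriate line $\ell_s\subset\C$ (a fact one can extract from Lemma~\ref{lem:Zab-props}~(iv) via a Rouché argument comparing $g_s$ to $\sigma(\beta_s\cdot+\gamma_s)$). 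I would then define an undirected graph $\m{G}=(\m{I},\m{E})$ by letting $(s_1,s_2)\in\m{E}$ iff $s_1\neq s_2$ and $\Delta(\ell,\wtd{P}_{s_1}\cap\wtd{P}_{s_2})>0$ for some line $\ell$ in $\C$, and take $\m{I}_1,\dots,\m{I}_n$ to be the connected components of $\m{G}$.

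For condition (i), I would fix a component $\m{I}_j$ and an edge $(s_1,s_2)\in\m{E}$ inside it. The positive-density hypothesis provides a sequence $p_k\in \wtd{P}_{s_1}\cap\wtd{P}_{s_2}$ with $|p_k|\to\infty$, and dividing the two pole equations by $\beta_{s_1}$ and $\beta_{s_2}$ respectively and subtracting yields
\[
\big(\beta_{s_1}^{-1}\epsilon_{s_1}-\beta_{s_2}^{-1}\epsilon_{s_2}\big)(1/p_k)\ \in\ \beta_{s_1}^{-1}P_\sigma-\beta_{s_2}^{-1}P_\sigma-\big(\beta_{s_1}^{-1}\gamma_{s_1}-\beta_{s_2}^{-1}\gamma_{s_2}\big).
\]
The density condition, via a Kronecker--Weyl equidistribution argument applied to the lattice difference $\Pi_{s_1}-\Pi_{s_2}$, should force commensurability of the generators of $\beta_{s_1}^{-1}P_\sigma$ and $\beta_{s_2}^{-1}P_\sigma$, and thereby render the right-hand side above a uniformly discrete subset of $\C$. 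Since the left-hand side tends to $0$, it must vanish for all sufficiently large $k$, and the identity theorem (applied near $0$, where $\beta_{s_1}^{-1}\epsilon_{s_1}-\beta_{s_2}^{-1}\epsilon_{s_2}$ is holomorphic) forces $\beta_{s_1}^{-1}\epsilon_{s_1}=\beta_{s_2}^{-1}\epsilon_{s_2}$ on the common domain. Connectedness of $\m{I}_j$ then gives a common ABC $\xi_j\coleqq\beta_s^{-1}\epsilon_s$ for all $s\in\m{I}_j$.

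For condition (ii), I would argue by contradiction: suppose $f_j$ has a pole. Using the $ib$-periodicity of $\sigma$ from Lemma~\ref{lem:Zab-props}~(i) and the lattice structure of the poles, the pole set $P_{\m{I}_j}$ of $f_j$ would then have $\Delta(\ell_j,P_{\m{I}_j})>0$ for some line $\ell_j\subset\C$. The relation $\epsilon_s=\beta_s\xi_j$ for $s\in\m{I}_j$ lets us rewrite
\[
\sum_{s\in\m{I}_j}\alpha_s\,\sigma(\beta_s z+\gamma_s+\epsilon_s(1/z))=f_j\big(z+\xi_j(1/z)\big),
\]
which, since $\xi_j(1/z)\to 0$ at infinity, has the same asymptotic density of poles along $\ell_j$ as $f_j$ itself. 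On the other hand, for $j'\neq j$, $s_1\in\m{I}_j$, $s_2\in\m{I}_{j'}$, the absence of the edge $(s_1,s_2)$ in $\m{G}$ gives $\Delta(\ell_j,\wtd{P}_{s_1}\cap\wtd{P}_{s_2})=0$. Subadditivity of $\Delta$ in the second argument then transfers the positive density to the pole set of $f$ itself along $\ell_j$, contradicting the analyticity of $f$ on $\C\setminus D(0,R)$.

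The main obstacle is obtaining the appropriate Kronecker--Weyl equidistribution statement for the two-dimensional pole lattice of $\sigma$: the $\tanh$ case only requires the one-dimensional analogue for arithmetic sequences $i\pi(\Z+\tfrac{1}{2})$, while here the lattice has two independent generators and the set of allowed $k_a$ is restricted by the support of $\{c_k\}_{k\in\Z}$, so the commensurability conclusion from a positive density of $\Pi_{s_1}\cap\Pi_{s_2}$ must be sharpened accordingly. A secondary technical burden is the quantitative bookkeeping needed to show that $\wtd{P}_s$ is asymptotic to $\Pi_s$ densely enough to justify both the density comparisons and the Rouché-type identification of poles, which I expect to require the growth estimate of Lemma~\ref{lem:Zab-props}~(iv) together with careful handling of the disk $D(0,R)$.
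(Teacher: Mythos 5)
Your high-level strategy is exactly the one the paper uses: push the poles of the perturbed functions back to the unperturbed lattices via a near-identity map, build the graph $\m{G}$ with edges where pole sets share positive asymptotic density along a line, prove a common $\xi_j$ on each component via equidistribution and the identity theorem, and get $f_j$ entire by a density-transfer contradiction (using Lemma~\ref{lem:ent-vs-line} and subadditivity of $\Delta$). Both of the ``obstacles'' you flag at the end are precisely the work the paper's proof does: the pole-to-pole correspondence is the biholomorphic map $g_s(z)=\frac{\beta_s z}{\beta_s+z\epsilon_s(z)}$ in the Claim (more precise than a Rouché count, since you need the exact relation $p'=p-h_s(1/p)$), and the equidistribution-to-discreteness step is Lemmas~\ref{lem:d>0->Q} and~\ref{lem:2-lattice-fit} together with Proposition~\ref{prop:Weyl-equi}.

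However, the specific conclusion you draw from the density hypothesis is stated too strongly, and this is a genuine gap rather than merely deferred bookkeeping. You claim that positive density of $\wtd{P}_{s_1}\cap\wtd{P}_{s_2}$ along some line $\ell$ ``should force commensurability of the generators of $\beta_{s_1}^{-1}P_\sigma$ and $\beta_{s_2}^{-1}P_\sigma$, and thereby render the right-hand side \dots a uniformly discrete subset of $\C$.'' Global uniform discreteness of the two-dimensional lattice difference $\beta_{s_1}^{-1}P_\sigma-\beta_{s_2}^{-1}P_\sigma$ would require $\beta_{s_1}/\beta_{s_2}$ to be a Gaussian-rational-type quantity, and positive density along a single line does not deliver that; a line can hit both lattices in arithmetic sequences with rational spacing ratio while the ambient difference set remains dense in $\C$. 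What actually follows, and what the paper uses, is weaker and local: Lemma~\ref{lem:d>0->Q} (itself proved via the two-dimensional Kronecker--Weyl statement, Proposition~\ref{prop:KW-equi}) shows that $\ell\cap P_{s_m}$ are arithmetic sequences; Proposition~\ref{prop:Weyl-equi} then gives that the spacing ratio $y_1/y_2$ is rational; and Lemma~\ref{lem:2-lattice-fit} upgrades this only to uniform discreteness of $P_{\ell,1}=\{p\in P_{s_1}\cup P_{s_2}:d(p,\ell)\le 1\}$, i.e.\ discreteness in a strip about $\ell$. This suffices because the diverging poles $p_k$ and their preimages $p_{m,k}\in P_{s_m}$ under $g_{s_m}$ eventually lie inside this strip (by the density estimate and $h_{s_m}(1/p_{m,k})\to 0$), so $p_{1,k}-p_{2,k}\to 0$ forces $p_{1,k}=p_{2,k}$ eventually, and only then does the identity theorem apply. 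Your membership-in-a-difference-set formulation skips this localisation, and without it the argument does not close. The part of your sketch for (ii) is fine and matches the paper.
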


The proof of Proposition \ref{prop:laced-breakdown} uses several ancillary results about asymptotic densities of arithmetic sequences and lattices in the sense of  Definition \ref{def:dens}. Concretely, we will need the following three lemmas, whose proofs can be found in the Appendix, as well as a special case of Weyl's equidistribution theorem, which was also employed in the proof of the ``Deconstruction Lemma'' in \cite{Fefferman1994}.

\begin{lemma}\label{lem:d>0->Q}
Let $\Pi= a\Z\times ib\Z$ be a lattice in $\C$, where $a,b>0$. Let $\beta\in\C\setminus \{0\}$ and $\gamma\in\C$, and set $P=\beta^{-1}(\Pi-\gamma)$. Suppose $\ell$ is a line in $\C$ such that $\Delta(\ell,P)>0$. Then $\ell\cap P$ is an arithmetic sequence, and there exists an $\varepsilon_0>0$ such that $(\ell+D(0,\varepsilon_0))\cap P=\ell\cap P$.
\end{lemma}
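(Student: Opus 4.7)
The approach is to reduce to the case $P=\Pi$ via an affine substitution, then analyze the structure of the lattice through orthogonal projection perpendicular to $\ell$, using a dichotomy between ``rational'' and ``irrational'' directions with respect to $\Pi$. First, apply the affine bijection $T\colon w\mapsto \beta w+\gamma$, which sends $P$ to $\Pi$ and $\ell$ to another line $\ell'\coleqq T(\ell)$. A straightforward counting computation using $|T(z)-T(w)|=|\beta|\cdot|z-w|$, together with the observation that the translation by $\gamma$ affects the count only through boundary terms of lower order in a thin strip, yields $\Delta(\ell,P)=|\beta|\,\Delta(\ell',\Pi)$, so $\Delta(\ell',\Pi)>0$, and both conclusions of the lemma are equivalent to their analogues for $(\ell',\Pi)$. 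Thus we may assume $P=\Pi$.

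Now write $\ell=x+\R d$ with $d\in\C\setminus\{0\}$, and let $\pi\colon\C\to\R$ denote orthogonal projection onto the real line perpendicular to $d$. Since $\pi$ is $\R$-linear, $\pi(\Pi)=\pi(a)\Z+\pi(ib)\Z$ is a finitely generated subgroup of $(\R,+)$, which is discrete (of the form $\delta_0\Z$ with $\delta_0\geq 0$) iff $\pi(a)$ and $\pi(ib)$ are $\Q$-linearly dependent, and dense in $\R$ otherwise. I claim the dense case contradicts the hypothesis $\Delta(\ell,\Pi)>0$: under $\Q$-independence, $\pi|_\Pi$ is injective, so lattice points $p=k_1 a+k_2(ib)\in\Pi$ within $\varepsilon$ of $\ell$ are parametrized one-to-one by integer pairs $(k_1,k_2)$ satisfying $|k_1\pi(a)+k_2\pi(ib)-\pi(x)|\leq\varepsilon$; a Weyl equidistribution argument (a special case of \cite[Cor.~2.A.12]{Fefferman1994}) then bounds the proportion of such pairs with $|k_1 a+k_2(ib)|\leq N$ by $O(\varepsilon)$ asymptotically in $N$, yielding $\Delta_\varepsilon(\ell,\Pi)=O(\varepsilon)$ and hence $\Delta(\ell,\Pi)=0$. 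So $\pi(\Pi)=\delta_0\Z$ with $\delta_0>0$ (the value $\delta_0=0$ would force $\Pi\subset\R d$, impossible for a $2$-dimensional lattice).

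With $\pi(\Pi)=\delta_0\Z$, any $\varepsilon<\delta_0$ confines the lattice points within $\varepsilon$ of $\ell$ to the unique slice $\pi^{-1}(y)\cap\Pi$ whose value $y\in\delta_0\Z$ is nearest $\pi(x)$. If $\pi(x)\notin\delta_0\Z$, then for $\varepsilon$ less than the (positive) distance from $\pi(x)$ to $\delta_0\Z$ we have $\Delta_\varepsilon(\ell,\Pi)=0$, contradicting the hypothesis; so $\pi(x)\in\delta_0\Z$, i.e., $\ell\cap\Pi\neq\varnothing$, and any $\varepsilon_0\in(0,\delta_0)$ yields $(\ell+D(0,\varepsilon_0))\cap\Pi=\ell\cap\Pi$. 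Finally, $\ell\cap\Pi$ is a coset of $\ker(\pi|_\Pi)=(\R d)\cap\Pi$; by the first isomorphism theorem, $\Pi/\ker(\pi|_\Pi)\cong\delta_0\Z\cong\Z$, and since $\Pi\cong\Z^2$ the kernel is infinite cyclic, $\ker(\pi|_\Pi)=v_0\Z$ for some $v_0\in\Pi\setminus\{0\}$. Hence $\ell\cap\Pi=p_0+v_0\Z$ for any $p_0\in\ell\cap\Pi$, an arithmetic sequence.

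The main obstacle is the quantitative density bound $\Delta_\varepsilon(\ell,\Pi)=O(\varepsilon)$ in the $\Q$-independent case, where one must carefully apply Weyl's theorem to count integer pairs in the two-dimensional ball-like region $\{|k_1 a+k_2(ib)|\leq N\}$ and handle the degenerate sub-cases where $\pi(a)$ or $\pi(ib)$ vanishes (which in fact fall under the discrete branch of the dichotomy); the remaining reductions and the subgroup-structure arguments are standard facts about finitely generated subgroups of $(\R,+)$ and of $\Z^2$.
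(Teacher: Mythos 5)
Your proposal takes essentially the same route as the paper's proof: affine reduction to $P=\Pi$, a rationality dichotomy for the line direction relative to the lattice, equidistribution to exclude the irrational case, and a discrete-subgroup argument in the rational case. The equidistribution step you correctly flag as the main obstacle is handled in the paper not via \cite[Cor.~2.A.12]{Fefferman1994} (that is Proposition~\ref{prop:Weyl-equi}, the one-dimensional arithmetic-sequence density statement, used later in the proof of Proposition~\ref{prop:laced-breakdown}), but via the two-dimensional Kronecker--Weyl torus equidistribution theorem (Proposition~\ref{prop:KW-equi}, cited to \cite{Weyl1916,Beck2017}), applied after parametrizing lattice points near $\ell$ by disjoint real intervals and estimating the Lebesgue measure of their union; so your parenthetical reference would need a further reduction argument to justify, though the underlying equidistribution idea is the right one. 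Your treatment of the rational case via the projection $\pi(\Pi)=\delta_0\Z$ and the cyclic kernel $\ker(\pi|_\Pi)$ is a touch cleaner than the paper's explicit coset bookkeeping and compactness argument for the existence of $\varepsilon_0$: your observation that any $\varepsilon_0\in(0,\delta_0)$ works gives $\varepsilon_0$ for free.
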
 

\begin{lemma}\label{lem:2-lattice-fit}
Let $\Pi= a\Z\times ib\Z$ be a lattice in $\C$, where $a,b>0$. Let $\beta_1,\beta_2\in\C\setminus \{0\}$ and $\gamma_1,\gamma_2\in\C$, and set $P_j=\beta_j^{-1}(\Pi-\gamma_j)$, for $j\in\{1,2\}$. Suppose that $\ell$ is a line in $\C$ such that $\ell\cap P_j=\{x_j+ky_j:k\in\Z \}$, $j\in\{1,2\}$, are arithmetic sequences so that $y_1/y_2$ is real and rational. Then the sets $P_{\ell,c}\coleqq\{p\in P_1\cup P_2: d(p,\ell)\leq c\}$ are uniformly discrete, for all $c>0$.
\end{lemma}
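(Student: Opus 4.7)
The plan is to show that distances between distinct points of $P_{\ell,c}$ admit a uniform positive lower bound, treating within-lattice and cross-lattice pairs separately. Each $P_j$ is a translate of the rank-$2$ lattice $L_j\coleqq\beta_j^{-1}\Pi$, and is therefore uniformly discrete with some minimum spacing $\delta_j>0$ between distinct points. This immediately handles all distances between distinct points of $P_1\cap S_c$, and between distinct points of $P_2\cap S_c$, where $S_c\coleqq\{z\in\C:d(z,\ell)\leq c\}$. The crux is therefore to bound from below distances between points $p\in P_1\cap S_c$ and $p'\in P_2\cap S_c$.

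Here the hypothesis $y_1/y_2\in\Q$ enters decisively. Writing $y_1/y_2=m/n$ in lowest terms with $m,n\in\Z$ and setting $T\coleqq ny_1=my_2$, I observe that $T$ is parallel to $\ell$ (since both $y_1$ and $y_2$ are) and that $T\in L_1\cap L_2$, because $y_j\in L_j$ as can be read off from $\{x_j+ky_j:k\in\Z\}\subset P_j=L_j-\beta_j^{-1}\gamma_j$. Consequently, translation by $T$ preserves $P_1$, preserves $P_2$, and preserves the strip $S_c$.

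I then argue by contradiction: assume there exist sequences $(p_n)\subset P_1\cap S_c$ and $(p'_n)\subset P_2\cap S_c$ with $p_n\neq p'_n$ and $|p_n-p'_n|\to 0$. For each $n$ choose $k_n\in\Z$ so that the $\ell$-component of $\tilde{p}_n\coleqq p_n-k_nT$ lies in a fixed fundamental interval of length $|T|$. Then $\tilde{p}_n\in P_1\cap S_c$ and $\tilde{p}'_n\coleqq p'_n-k_nT\in P_2\cap S_c$ remain confined to a bounded rectangle inside $S_c$, still satisfy $\tilde{p}_n\neq\tilde{p}'_n$, and still obey $|\tilde{p}_n-\tilde{p}'_n|\to 0$. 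Passing to a convergent subsequence, both tend to a common limit $p_\infty$, which lies in $P_1\cap P_2$ by closedness of the two lattices, and discreteness of $P_1$ and $P_2$ then forces $\tilde{p}_n=p_\infty=\tilde{p}'_n$ for all sufficiently large $n$, a contradiction. The one delicate piece of bookkeeping is ensuring that a single shift $k_nT$ applied simultaneously to $p_n$ and $p'_n$ keeps both inside $S_c$; this is exactly what the parallelism of $T$ with $\ell$ delivers, and it is precisely at this step that the rationality of $y_1/y_2$ is indispensable, since without it no nonzero common period of $P_1$ and $P_2$ along $\ell$ exists.
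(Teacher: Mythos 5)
Your proof is correct and rests on the same central idea as the paper's: the rationality of $y_1/y_2$ produces a nonzero common period $T$ of $P_1$, $P_2$, and the strip $S_c$ along $\ell$ (the paper calls it $B=m_1y_1=m_2y_2$), and this periodicity confines the problem to a bounded fundamental domain where only finitely many lattice points live. The paper proceeds slightly more directly—it observes that $P_{\ell,c}$ is $B$-periodic and that the infimum of pairwise distances therefore equals the infimum over the finite set $Y\cup(Y+B)$, hence is positive—while you argue by contradiction, translate an offending sequence of pairs back into the fundamental strip, extract a convergent subsequence, and use the closedness and discreteness of the two translated lattices to force eventual equality. Your preliminary split into within-lattice and cross-lattice pairs is not strictly necessary (the paper's direct computation handles all pairs at once), but it is harmless and makes the role of the rationality hypothesis, which is only needed for the cross-lattice case, more transparent.
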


\begin{lemma}\label{lem:ent-vs-line}
Let $a,b>0$ and let $\sigma\in \LtCla{a}{b}$. Furthermore, let $\{(\alpha_s,\beta_s,\gamma_s)\}_{s\in\m{I}}$ be a finite set of triples of complex numbers such that $\alpha_s,\beta_s\neq 0$, for all $s\in \m{I}$, and set $f\coleqq \sum_{s\in\m{I}}\alpha_s\,\sigma(\beta_s\cdot\,+\, \gamma_s)$. Then, either
\begin{enumerate}[(i)]
\item $f$ is entire, or
\item $f$ has a nonempty set of poles $P_f$, and there exists a line $\ell$ in $\C$ such that $\Delta(\ell,P_f)>0$.
\end{enumerate}
\end{lemma}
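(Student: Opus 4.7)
The plan is to argue by contradiction: assume $f$ is not entire, so $P_f \ne \varnothing$, while $\Delta(\ell', P_f) = 0$ for every line $\ell' \subset \C$. Fix $p_0 \in P_f$ and an index $s_0 \in \m{I}$ with $p_0 \in P_{s_0}$, and set $T = \beta_{s_0}^{-1}ib$ and $\ell = p_0 + \R T$. By the $ib$-periodicity of $\sigma$ (Lemma \ref{lem:Zab-props}(i)), the function $\sigma(\beta_{s_0}\,\cdot\, + \gamma_{s_0})$ is $T$-periodic and $p_0 + kT \in P_{s_0}$ for every $k \in \Z$, so $\Delta(\ell, P_{s_0}) > 0$. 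The aim is to show that a positive-density subset of the candidates $\{p_0 + kT : k\in\Z\}$ survives as actual poles of $f$, producing the desired contradiction.

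To localize to $\ell$, split $\m{I} = \m{J} \sqcup \m{K}$ with $\m{J} = \{s \in \m{I} : \Delta(\ell, P_s) > 0\}$ (so $s_0 \in \m{J}$), and decompose $f = h + g$ where $h = \sum_{s \in \m{J}}\alpha_s\sigma(\beta_s\cdot + \gamma_s)$ and $g$ is the complementary sum. Subadditivity of $\Delta(\ell, \cdot)$ yields $\Delta(\ell, P_g) \leq \sum_{s \in \m{K}}\Delta(\ell, P_s) = 0$, so it suffices to show $\Delta(\ell, P_h) > 0$. For each $s \in \m{J}$, Lemma \ref{lem:d>0->Q} gives that $\ell \cap P_s$ is an arithmetic sequence uniformly separated from the rest of $P_s$, and a Kronecker--Weyl argument (along the lines of the one used in the introductory $\tanh$ discussion and Lemma \ref{lem:2-lattice-fit}) forces all common differences of these sequences to be rational multiples of $T$. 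Hence there exists $N \in \N$ making the candidate set $Q \coleqq \bigcup_{s \in \m{J}}(\ell \cap P_s)$ invariant under translation by $NT$. Using the series \eqref{eq:Zab-series} for $\sigma$, the residue of $h$ at $q = p_0 + kT \in Q$ is
\[
R_h(q) = \frac{b}{\pi}\sum_{\substack{s \in \m{J}\\ q \in P_s}} \alpha_s\beta_s^{-1}\, c_{k_a(s,q)},
\]
where $k_a(s,q) \in \Z$ denotes the $a$-coordinate of $\beta_s q + \gamma_s$ in the lattice $\tfrac{ib}{2} + a\Z \times ib\Z$.

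The $s_0$-term contributes a fixed nonzero value to $R_h(p_0 + kT)$ for every $k$, whereas each $s \in \m{J}\setminus\{s_0\}$ contributes only along an arithmetic subprogression of $\Z$ of some density $\rho_s$. If $\sum_{s \neq s_0}\rho_s < 1$, a positive-density subset of $k$ picks up only the $s_0$ contribution, so those $p_0 + kT$ are genuine poles of $h$ and we obtain the contradiction. The main obstacle will be the complementary regime, in which several indices in $\m{J}\setminus\{s_0\}$ could in principle simultaneously conspire to cancel the $s_0$-contribution at every candidate. My plan for this case is to induct on $|\m{J}|$, peeling off one index at a time via either a shift $z\mapsto z + T'$ that kills a chosen $\sigma(\beta_s\cdot + \gamma_s)$ using $ib$-periodicity, or by passing to the shorter sum $h(\cdot + NT) - h(\cdot)$. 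The exponential growth constraint $\sup_k|c_k|e^{-\pi a'|k|/b}<\infty$ from the definition of $\LtCla{a}{b}$ should then rule out the infinite families of linear relations on $\{c_k\}_{k\in\Z}$ that a persistent cancellation would impose, contradicting the non-constancy of $\sigma$. Managing the interaction between the rationality data $(p_s, q_s) \in \Q^2$ coming from different $s \in \m{J}$ through this reduction is the main technical hurdle.
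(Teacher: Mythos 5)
Your sketch is on the right track in its use of $ib$-periodicity and Kronecker--Weyl, but it contains a genuine gap that you yourself flag: the ``complementary regime'' in which the residues
$R_h(p_0+kT)=\tfrac{b}{\pi}\sum_{s}\alpha_s\beta_s^{-1}c_{k_a(s,p_0+kT)}$
could conspire to vanish at a full-density set of $k$'s. This is not a corner case one can wave off; it is exactly where the content of the lemma lives, and your plan for closing it does not work as stated. Specifically, your proposed induction on $\#(\m{J})$ is not well-founded: after a difference $h(\cdot+T')-h$, the index $s$ killed by periodicity is removed, but every surviving index contributes \emph{two} summands (the shifted and the unshifted copy), so the number of terms does not decrease and you cannot conclude by induction on the term count. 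Your other suggestion, $h(\cdot+NT)-h$ with $N$ making $Q$ invariant, is even worse: that $N$ makes $\beta_s NT\in\Z$ for all $s\in\m{J}$, so the difference is identically zero and tells you nothing. The final appeal to the growth constraint on $\{c_k\}$ is also not needed for this lemma (it is used elsewhere, for Carlson's theorem in the proof of the SAC), and I do not see how it rules out cancellations of residues at a positive density of points.

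The paper closes exactly this gap by choosing a different, and decisive, induction parameter: the number $n=\#(\Psi(\m{P}))$ of $\sim_{\Q}$-equivalence classes of the $\beta_s$'s, where $s_1\sim_{\Q}s_2$ iff $\beta_{s_1}/\beta_{s_2}\in\Q$. Picking $T$ with $\beta_s T\in\Z$ for every $s$ in a fixed class $\m{I}_1$ and forming $g=f(\cdot+ibT)-f$ kills the \emph{entire class} $\m{I}_1$ at once (each such summand is $ibT$-periodic), so even though the remaining term count may double, $\#(\Psi)$ drops by one. Assuming $\Delta(\ell,P_f)=0$ for all $\ell$, subadditivity gives $\Delta(\ell,P_g)=0$ for all $\ell$, the induction hypothesis then forces $g$ to be entire, and from $g=f(\cdot+ibT)-f$ entire one deduces that any pole $p^*$ of $f$ propagates to the full arithmetic sequence $\{p^*+ibTk:k\in\Z\}\subset P_f$, which has positive density along its line --- contradiction. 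Note that this argument never fixes a line in advance, never computes residues, and never needs to split $\m{I}$ into $\m{J}$ and $\m{K}$; these are complications your local set-up introduces. If you want to rescue your residue-based route, the key repair is to replace the induction on $\#(\m{J})$ by an induction on the number of rational classes among $\{\beta_s\}_{s\in\m{J}}$, at which point you will find yourself reconstructing the paper's argument without any of the residue machinery.
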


\begin{prop}[Weyl, {\cite[Cor. 2.A.12]{Fefferman1994}}]\label{prop:Weyl-equi}
Let $x_1,x_2\in\C$, $y_1,y_2\in\C\setminus\{0\}$, and define the arithmetic sequences $\Pi_j=\{x_j+ky_j:k\in\Z\}$, for $j\in\{1,2\}$. If $y_1/y_2$ is real and irrational, then $\Delta(\Pi_1,\Pi_2)= 0$.
\end{prop}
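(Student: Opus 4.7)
The plan is to reduce the claim to the classical one-dimensional Weyl equidistribution theorem. Set $\alpha = y_1/y_2 \in \R\setminus\Q$, so $y_1 = \alpha y_2$, and observe that both $\Pi_1$ and $\Pi_2$ are contained in lines parallel to the real span of $y_2$. I would split into two cases according to whether these two parallel lines coincide. In the first case, $x_1 - x_2$ is not a real multiple of $y_2$, so $\Pi_1$ and $\Pi_2$ lie on distinct parallel lines whose mutual Euclidean distance $\delta > 0$ is strictly positive; then for every $\varepsilon \in (0,\delta)$ no point of $\Pi_2$ lies within $\varepsilon$ of $\Pi_1$, giving $\Delta_\varepsilon(\Pi_1,\Pi_2) = 0$ and hence $\Delta(\Pi_1,\Pi_2) = 0$.

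In the remaining case, $x_1 - x_2 = c y_2$ for some $c \in \R$, and both sequences lie on the common line $\ell = x_2 + \R y_2$. I would parametrize $\ell$ via the bijection $\R \to \ell$, $t \mapsto x_2 + t y_2$, which scales Euclidean distances by $|y_2|$. Under this identification, $\Pi_2$ corresponds to $\Z$ and $\Pi_1$ to $c + \alpha\Z$, while a Euclidean $\varepsilon$-neighbourhood of $\Pi_1$ in $\C$ corresponds to an $(\varepsilon/|y_2|)$-neighbourhood of $c + \alpha\Z$ in $\R$. Counting the points of $\Pi_2$ in $D(0,N)$ within $\varepsilon$ of $\Pi_1$ thus reduces to counting integers $n$ with $|n|\leq (N + |x_2|)/|y_2|$ satisfying $d(n, c + \alpha\Z) \leq \varepsilon/|y_2|$.

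For all sufficiently small $\varepsilon$, the inequality $d(n, c + \alpha\Z) \leq \varepsilon/|y_2|$ is equivalent to the fractional part of $(n - c)/\alpha$ lying within $\varepsilon/(|y_2|\,|\alpha|)$ of the nearest integer. Since $1/\alpha$ is irrational, the classical one-dimensional Weyl equidistribution theorem applied to $\{n/\alpha \bmod 1\}_{n\in\Z}$ yields that the proportion of $n \in \{-M,\dots,M\}$ satisfying this condition converges to $2\varepsilon/(|y_2|\,|\alpha|)$ as $M \to \infty$. Taking $M = \lfloor (N + |x_2|)/|y_2|\rfloor$, dividing by $2N$, and passing to the $\limsup$ in $N$, I obtain $\Delta_\varepsilon(\Pi_1,\Pi_2) \leq 2\varepsilon/(|y_1|\,|y_2|)$, where I have used $|y_1| = |\alpha|\,|y_2|$. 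Letting $\varepsilon \to 0$ finally yields $\Delta(\Pi_1,\Pi_2) = 0$. The only substantive analytic input is the one-dimensional Weyl theorem; everything else is bookkeeping for the linear change of coordinates, so there is no genuine obstacle.
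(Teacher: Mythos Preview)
The paper does not supply its own proof of this proposition: it is simply quoted as \cite[Cor.~2.A.12]{Fefferman1994} and used as a black box in the proof of Proposition~\ref{prop:laced-breakdown}. So there is no in-paper argument to compare against.

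Your proposal is a correct self-contained proof. The reduction is clean: since $y_1/y_2$ is real, both arithmetic sequences lie on parallel lines in the direction of $y_2$; the case of distinct parallel lines is trivial, and on a common line the affine reparametrization $t\mapsto x_2+ty_2$ converts the problem into counting integers $n$ with $\lVert (n-c)/\alpha\rVert\leq \varepsilon/(|y_2|\,|\alpha|)$, where the one-dimensional Weyl theorem (using that $1/\alpha$ is irrational) gives the asymptotic density $2\varepsilon/(|y_2|\,|\alpha|)$. Your final bound $\Delta_\varepsilon(\Pi_1,\Pi_2)\leq 2\varepsilon/(|y_1|\,|y_2|)$ is the right order, and sending $\varepsilon\to 0$ concludes. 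The only cosmetic point is that the condition $|x_2+ny_2|\leq N$ is not exactly $|n|\leq (N+|x_2|)/|y_2|$ but rather implies it, which is all you need for an upper bound; you handle this correctly.
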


\noindent We are now ready to prove Proposition \ref{prop:laced-breakdown}.

\begin{proof}[Proof of Proposition \ref{prop:laced-breakdown}]
Let $\delta\in(0,1/R)$ be sufficiently small for the functions $\epsilon_s$ to be analytic on an open neighborhood of $ D(0,\delta)$, for all $s\in \m{I}$.
Then, for every $s\in\m{I}$, 
\begin{equation*}
z\mapsto \wtd\sigma_s(z)\coleqq \sigma\left(\beta_s z+\gamma_s+\epsilon_s(1/z)\right)
\end{equation*}
 is a meromorphic function on $\dom_\delta\coleqq \C\setminus D(0,1/\delta)$. Let $\wtd{P}_s\subset \dom_\delta$ denote its set of poles. Next, for $s\in\m{I}$, set $P_s=\beta_s^{-1}(P_\sigma-\gamma_s)$, where $P_\sigma$ is the set of poles of $\sigma$. We now show the following:

\textit{Claim: There exist $\delta'\in(0,\delta)$ and $A> 1/(2\delta')$ such that, for all $s\in\m{I}$, the function $g_s:D^\circ (0,\delta')\to \C$ given by $g_s(z)=\frac{\beta_s z}{\beta_s+z\epsilon_s(z)}$ is biholomorphic onto its image $\mrm{Img}(g_s)\supset D^\circ (0,1/A)$, and, for every $p'\in \wtd{P}_s\setminus D(0,2A)$, we have 
\begin{equation}\label{eq:lacing-lem-1}
\begin{aligned}
p&\coleqq 1/g_s(1/p') \; \in P_s\setminus D(0,A),\quad \text{and }\\
p'&=p-h_s(1/p),
\end{aligned}
\end{equation}
where $h_s\coleqq\beta_{s}^{-1} (\epsilon_s\circ g_s^{-1}):D^\circ(0,1/A)\to\C$.
}\\
\noindent\textit{Proof of Claim.} First note that, for every $s\in \m{I}$, the function $z\mapsto g_s(z)=\frac{\beta_s z}{\beta_s+z\epsilon_s(z)}$ is holomorphic on a neighborhood of $0$. Moreover, we have $g_s(0)=0$ and $g_s'(0)=1$, and thus by the complex open mapping theorem \cite[Thm. 10.32]{Rudin1987}, there exists $\delta_s\in(0,\delta)$ such that $g_s:D^\circ (0,\delta_s)\to \C$ is biholomorphic onto its image. Let $\delta'=\min_{s\in\m{I}}\delta_s$ and  $A>0$ be such that
\begin{equation*}
\begin{aligned}
&\max_{|z|\leq 1/(2A)}\left|\epsilon_s(z)\right|\leq 1,  && \text{for all }s\in\m{I},\\
&A> \frac{1}{2\delta'} \, \vee \, \max_{s\in\m{I}} |\beta_s|^{-1}, &&\text{for all }s\in\m{I}, \text{ and} \\
&D^\circ(0,1/A)\subset\bigcap_{s\in\m{I}}g_s\!\left(D^\circ (0,\delta')\right). \\
\end{aligned}
\end{equation*}
The last of these conditions implies that the image $\mrm{Img}(g_s)$ contains $D^\circ (0,1/A)$, for every $s\in\m{I}$. We proceed to show \eqref{eq:lacing-lem-1}.
To this end, fix an $s\in\m{I}$ and take a $p'\in \wtd{P}_s\setminus D(0,2A)$. Then,$|p'|>2A$ and $\psi\coleqq \beta_s p'+\gamma_s+\epsilon_s(1/p')\in P_\sigma$. Now,
\begin{equation*}
p=1/g_s(1/p')=\frac{\beta_s+(1/p')\cdot\epsilon_s(1/p')}{\beta_s/p'}=\beta_s^{-1}(\psi-\gamma_s)\in \beta_s^{-1}(P_\sigma - \gamma_s)= P_s,
\end{equation*}
and, as $|1/p'|<1/(2A)< \delta'$, we have
\begin{equation*}
|p|=|p'+\beta_s^{-1}\epsilon_s(1/p')|\geq |p'|-|\beta_s^{-1}||\epsilon_s(1/p')|>2A-A\cdot 1=A. 
\end{equation*}
This establishes $p\in P_s\setminus D(0,A)$. Next, as $|1/p'|< \delta'$ and $g_s:D^\circ (0,\delta')\to \C$ is a bijection with its image containing $D^\circ(0,1/A)$, we must have $1/p'=g_s^{-1}(1/p)$. Therefore,
\begin{equation*}
p'=\beta_s^{-1}(\psi-\gamma_s)-\beta_s^{-1}\epsilon_s(1/p')=p-\beta_{s}^{-1}\,\epsilon_s\!\left( g_s^{-1}(1/p)\right)=p-h_s(1/p),
\end{equation*}
as desired. This concludes the proof of the claim.

Now, define an undirected graph $\m G=(\m I, \m E)$ by setting
\begin{equation*}
\m{E}=\{(s_1,s_2)\in \m I \times \m I: s_1\neq s_2, \text{ and }\exists \text{ line }\ell \text{ in }\C\text{ s.t. } \Delta(\ell,\wtd{P}_{s_1}\cap \wtd{P}_{s_2})>0\},
\end{equation*}
and let $\m{I}_1,\dots,\m{I}_n$ be the subsets of $\m{I}$ corresponding to different connected components of $\m{G}$.
Next, fix a connected component $\m{I}_j$ of $\m{G}$. We proceed to establish the existence of an ABC $\xi_j$ such that $\epsilon_s=\beta_s\xi_j$, for all $s\in\m{I}_j$. If $\m{I}_j=\{s^*\}$ is a singleton set, we can then simply set $\xi_j=\beta_{s^*}^{-1}\epsilon_{s^*}$, so suppose that $\#(\m{I}_j)\geq 2$.

 Fix $s_1,s_2\in\m{I}$ such that $(s_1,s_2)\in\m{E}$, and let $\ell$ be a line in $\C$ such that $\Delta(\ell,\wtd{P}_{s_1}\cap\wtd{P}_{s_2})>0$. We proceed by showing that $P_{\ell,1}\coleqq\{p\in P_{s_1}\cup P_{s_2}: d(p,\ell)\leq 1\}$ is uniformly discrete. To this end, take an arbitrary $\varepsilon>0$, let $A_\varepsilon>A$ be such that 
 \begin{equation}\label{eq:dens-calc-pre-1}
 |h_{s_m}(1/z)|\leq \varepsilon,\quad \text{ for $|z|> A_\varepsilon$ and $m\in\{1,2\}$},
 \end{equation}
  and then let $B_\varepsilon>2A$ be such that $|1/g_{s_m}(1/z)|>A_\varepsilon$, for $|z|>B_\varepsilon$ and $m\in\{1,2\}$. We now have that, for each $m\in\{1,2\}$, whenever $p'\in \wtd{P}_{s_m}$ is such that $|p'|>B_\varepsilon$, then 
\begin{equation}\label{eq:dens-calc-pre-2}
p'=p-h_{s_m}(1/p),\quad\text{for some $p\in P_{s_m}$ s.t. $|p|>A_\varepsilon$}.
\end{equation}
 Indeed, for $p'\in \wtd{P}_{s_m}$ satisfying $|p'|>B_\varepsilon$, we have $p'\in \wtd{P}_{s_m}\setminus D(0,2A)$, and it hence follows by the Claim that $p'=p-h_{s_m}(1/p)$, where $p=1/g_{s_m}(1/p')\in P_{s_m}$. Then $|p|>A_{\varepsilon}$ by our choice of $B_\varepsilon$, establishing \eqref{eq:dens-calc-pre-2}.
 We use this to get the following estimate for both $m=1$ and $m=2$:
\begin{equation}\label{eq:massive-dens-est-1}
\begin{aligned}
 \Delta_{2\varepsilon}(\ell,P_{s_m})&=\limsup_{N\to\infty}\frac{1}{2N}\, \#\left\{p\in {P}_{s_m}\cap D(0,N):d(p,\ell)\leq 2\varepsilon \right\}\\
&\geq \limsup_{N\to\infty}\frac{1}{2N}\, \#\left\{\text{$p\in P_{s_m}\cap D(0,N):|p|> A_\varepsilon,\, d( p-h_{s_m}(1/p),\ell)\leq\varepsilon$} \right\}\\
&\geq \limsup_{N\to\infty}\frac{1}{2N}\, \#\left\{ p\in P_{s_m}:\quad \text{\parbox{6.2cm}{  $p-h_{s_m}(1/p)\in D(0,N-\varepsilon)$, $|p|> A_\varepsilon$, \\   $d( p-h_{s_m}(1/p),\ell)\leq \varepsilon$} }\, \right\}\\
&\geq \limsup_{N\to\infty}\frac{1}{2(N-\varepsilon)}\! \left(\#\left\{p'\in \wtd{P}_{s_m}\! \cap D(0,N-\varepsilon): d( p',\ell)\leq \varepsilon \right\} - \#\big(\wtd{P}_{s_m}\!\cap D(0,B_\varepsilon)\big)\!  \right)\\
&=\Delta_{\varepsilon}(\ell,\wtd{P}_{s_m})\geq \Delta(\ell,\wtd{P}_{s_m})\geq \Delta(\ell,\wtd{P}_{s_1}\cap\wtd{P}_{s_2}),
\end{aligned}
\end{equation}
where the first two inequalities follow by \eqref{eq:dens-calc-pre-1}, and the third is a consequence of \eqref{eq:dens-calc-pre-2}.
As $\varepsilon$ was arbitrary, we obtain $\Delta(\ell,P_{s_m})\geq\Delta(\ell,\wtd{P}_{s_1}\cap\wtd{P}_{s_2})>0$, and so it follows by Lemma \ref{lem:d>0->Q} that $\ell\cap P_{s_m}=\{x_m+ky_m:k\in\Z\}$ is an arithmetic sequence, and there exists an $\varepsilon_0>0$ such that we have the following implication
\begin{equation}\label{eq:snap-ell}
p\in P_{s_m},\; d(p,\ell)\leq 2\varepsilon \quad\implies\quad p\in\ell,
\end{equation}
for all $\varepsilon\in(0,\varepsilon_0)$ and both $m\in\{1,2\}$.
Fix such an $\varepsilon$ and define $A_\varepsilon$ and $B_\varepsilon$ as above so that we have \eqref{eq:dens-calc-pre-1} and \eqref{eq:dens-calc-pre-2}. We claim that, whenever $p'\in \wtd{P}_{s_1}\cap \wtd{P}_{s_2}$ is such that $|p'|>B_\varepsilon$ and $d(p',\ell)\leq\varepsilon$, then 
\begin{equation}\label{eq:dens-calc-pre-3}
p'=p_2-h_{s_2}(1/p_2),
\text{ where $p_2\in \ell \cap P_{s_2}\setminus D(0,A_\varepsilon)$ and $d(p_2,\ell\cap P_{s_1})\leq 2\varepsilon$}.
\end{equation}
Indeed, if $p'\in \wtd{P}_{s_1}\cap \wtd{P}_{s_2}$ is such that $|p'|>B_\varepsilon$, then by \eqref{eq:dens-calc-pre-2} we have
\begin{equation}\label{eq:dens-calc-pre-3.5}
p'=p_1-h_{s_1}(1/p_1)=p_2-h_{s_2}(1/p_2),
\end{equation}
where $p_m=1/g_{s_m}(1/p')\in P_{s_m}$ is such that $|p_m|>A_\varepsilon$, for $m\in\{1,2\}$. If $p'$ additionally satisfies $d(p',\ell)\leq \varepsilon$, then
\begin{equation*}
d(p_m,\ell)\leq d(p',\ell)+|h_{s_m}(1/p_m)|\leq \varepsilon+\varepsilon=2\varepsilon,\quad \text{for }m\in\{1,2\},
\end{equation*}
by \eqref{eq:dens-calc-pre-3.5} and \eqref{eq:dens-calc-pre-1}, and so \eqref{eq:snap-ell} establishes $p_1,p_2\in\ell$, further implying $p_1\in \ell\cap P_{s_1}$, $p_2\in\ell\cap P_{s_2}$, and
\begin{equation*}
d(p_2,\ell\cap P_{s_1})\leq |p_1-p_2|=|h_{s_1}(1/p_1)-h_{s_2}(1/p_2)|\leq \varepsilon+\varepsilon=2\varepsilon,
\end{equation*}
where we used $p_1\in \ell\cap P_{s_1}$ and  \eqref{eq:dens-calc-pre-1}. This establishes \eqref{eq:dens-calc-pre-3}.
We now argue
\begin{equation*}
\begin{aligned}
 &\Delta_{2\varepsilon}(\ell\cap P_{s_1},\ell\cap P_{s_2})=\limsup_{N\to\infty}\frac{1}{2N}\, \#\left\{p_2\in \ell \cap {P}_{s_2}\cap D(0,N):  d(p_2\,,\ell\cap P_{s_1})\leq 2\varepsilon \right\}\\
 \geq\, & \limsup_{N\to\infty}\frac{1}{2N}\, \#\left\{ p_2\in \ell \cap P_{s_2}:\;  p_2-h_{s_2}(1/p_2)\in D(0,N-\varepsilon),\; |p_2|> A_\varepsilon,\;  d( p_2,\ell\cap P_{s_1})\leq 2\varepsilon \, \right\}\\
\geq\, & \limsup_{N\to\infty}\frac{1}{2(N-\varepsilon)}\, \left(\#\left\{ p'\in\wtd{P}_{s_2}\cap\wtd{P}_{s_1}\cap D(0,N-\varepsilon): d(p',\ell)\leq \varepsilon \right\}-\# \big(\wtd{P}_{s_2}\!\cap D(0,B_\varepsilon)\big)\! \right)\\
=\,& \Delta_{\varepsilon}(\ell,\wtd{P}_{s_1}\cap\wtd{P}_{s_2})\geq \Delta(\ell,\wtd{P}_{s_1}\cap\wtd{P}_{s_2}),
\end{aligned}
\end{equation*}
where the first inequality follows by \eqref{eq:dens-calc-pre-1} and the second by \eqref{eq:dens-calc-pre-3}.
Taking the infimum over $\varepsilon$ yields $\Delta(\ell \cap P_{s_1},\ell\cap P_{s_2})\geq  \Delta(\ell,\wtd{P}_{s_1}\cap\wtd{P}_{s_2})>0$, and hence Proposition \ref{prop:Weyl-equi} implies $y_1/y_2\in\Q$. It now follows directly from Lemma \ref{lem:2-lattice-fit} that $P_{\ell,1}\coleqq\{p\in P_{s_1}\cup P_{s_2}: d(p,\ell)\leq 1\}$ is uniformly discrete.

We are now ready to show that $\beta_{s_1}^{-1}\epsilon_{s_1}=\beta_{s_2}^{-1}\epsilon_{s_2}$. To this end, let $\{q_k\}_{k\in\N}$ be a sequence in $\wtd{P}_{s_1}\cap\wtd{P}_{s_2}$ such that $|q_k|\to\infty$ and $d(q_k,\ell)\to 0$ as $k\to\infty$, and let $\{p_k'\}_{k\in\N}\subset P_{s_1}$ and $\{p_k\}_{k\in\N}\subset P_{s_2}$ be the corresponding sequences such that $|p_k'|\to\infty$ and $|p_k|\to\infty$ as $k\to \infty$, and
\begin{equation*}
p_k'-h_{s_1}(1/p_k')=p_k- h_{s_2}(1/p_k)=q_k,\quad\text{for all }k\in\N.
\end{equation*}
Then $p_k,p_k'\in P_{\ell,1}$ for sufficiently large $k$, and, since 
\begin{equation*}
p_k'-p_k=h_{s_1}(1/p_k')- h_{s_2}(1/p_k)\to h_{s_1}(0)-h_{s_2}(0)=0-0= 0\quad\text{as }k\to\infty,
\end{equation*}
and $P_{\ell,1}$ is uniformly discrete, we must have $p_k'=p_k$, for all sufficiently large $k$. Therefore $(h_{s_1}-h_{s_2})(1/p_k)=0$, for all sufficiently large $k$. Since $h_{s_1}-h_{s_2}$ is holomorphic in $D^\circ(0,1/A)$ and $1/p_k\to 0$ as $k\to\infty$, it follows by the identity theorem that $ h_{s_1}-h_{s_2}=0$ on $D^\circ(0,1/A)$. Now, choose an arbitrary $x\in D^{\circ}(0,\delta')\cap g_{s_1}^{-1}\big(D^\circ(0,1/A)\big)$, and set $x'=g_{s_2}^{-1}(g_{s_1}(x))$. Then
\begin{equation*}
\beta_{s_2}^{-1}\epsilon_{s_2}(x')=\beta_{s_2}^{-1}(\epsilon_{s_2}\circ g_{s_2}^{-1}\circ g_{s_1}(x))=h_{s_2}\circ g_{s_1}(x)=h_{s_1}\circ g_{s_1}(x)=\beta_{s_1}^{-1}\epsilon_{s_1}(x),
\end{equation*}
and thus
\begin{equation*}
\begin{aligned}
&\frac{x-x'}{(1+x\,\beta_{s_2}^{-1}\epsilon_{s_2}(x'))(1+x'\,\beta_{s_2}^{-1}\epsilon_{s_2}(x'))}= \frac{x}{1+x\,\beta_{s_2}^{-1}\epsilon_{s_2}(x')}-\frac{x'}{1+x'\,\beta_{s_2}^{-1}\epsilon_{s_2}(x')} \\
=\,&\frac{x}{1+x\,\beta_{s_1}^{-1}\epsilon_{s_1}(x)}-\frac{x'}{1+x'\,\beta_{s_2}^{-1}\epsilon_{s_2}(x')}=g_{s_1}(x)-g_{s_2}(x')=0.
\end{aligned}
\end{equation*}
Hence, $x=x'$ and $\beta_{s_1}^{-1}\epsilon_{s_1}(x)=\beta_{s_2}^{-1}\epsilon_{s_2}(x)$, and, since $x$ was arbitrary, we again deduce by the identity theorem that $\beta_{s_1}^{-1}\epsilon_{s_1}=\beta_{s_2}^{-1}\epsilon_{s_2}$ on $D^\circ(0,\delta)$.

Now, choose an arbitrary $s^*\in\m{I}_j$ and define the ABC $\xi_j=\beta^{-1}_{s^*}\epsilon_{s^*}$. Then, for every $s\in \m{I}_j$, as $\m{I}_j$ is a connected component of $\m{G}$, we can find a finite sequence $s_1=s,s_2,\dots,s_{m-1},s_m=s^*$ in $\m{I}_j$ such that $(s_{k},s_{k+1})\in\m{E}$, for $k\in \{1,\dots, m-1\}$. Consequently,
\begin{equation*}
\beta^{-1}_{s}\epsilon_{s}=\beta^{-1}_{s_1}\epsilon_{s_1}=\dots=\beta^{-1}_{s_m}\epsilon_{s_m}=\beta^{-1}_{s^*}\epsilon_{s^*}=\xi_j,
\end{equation*}
and thus $\epsilon_s=\beta_{s} \xi_j$, for all $s\in \m{I}_j$. As the connected component $\m{I}_j$ of $\m{G}$ was arbitrary, we have established item (i).

It remains to show that the functions $f_j= \sum_{s\in\m{I}_j}\alpha_s\,\sigma(\beta_s\cdot\,+\,\gamma_s)$ are entire. To this end, fix a $j\in\{1,\dots,n\}$, and suppose by way of contradiction that the set $P_{f_j}$ of poles of $f_j$ is nonempty. Then, by Lemma \ref{lem:ent-vs-line}, there must exist a line $\ell$ in $\C$ such that $\Delta(\ell,P_{f_j})>0$. Next, define the function
\begin{equation*}
z\mapsto \widetilde{f}_{j}(z)\coleqq\sum_{s\in\m{I}_j}\alpha_s\,\sigma\big(\beta_s z+\gamma_s+\underbrace{\epsilon_s(1/z)}_{=\,\beta_s\xi_j(1/z)}\big)=f_j\left(z+\xi_j(1/z)\right)
\end{equation*}
on $\dom_\delta$, and let $P_{\wtd{f}_j}$ denote its set of poles. Now, for every $p\in P_{f_j}$ with sufficiently large $|p|$, there exists a unique $p'\in \dom_{\delta}$ such that 
\begin{equation*}
\frac{1/p'}{1+(1/p')\cdot \xi_j(1/p')}=\frac{1}{p},
\end{equation*}
and $1/p'\to 0$ as $|p|\to \infty$.
Then $p'\in P_{\widetilde{f}_j}$ and $|p-p'|=|\xi_j(1/p')|\to 0$ as $|p|\to\infty$. Performing density estimates analogous to \eqref{eq:massive-dens-est-1}, we find that $\Delta(\ell,P_{\widetilde{f}_j})\geq \Delta(\ell,P_{{f}_j})>0$. Finally, we let $P_{f}\subset \dom_\delta$ be the set of poles of $f$, and argue
\begin{equation*}
\begin{aligned}
\Delta(\ell,P_{f})&\geq \Delta\Big(\ell,P_{\widetilde{f}_j}\bigm\backslash \bigcup_{s\in\m{I}\setminus\m{I}_j}P_{\wtd{f}_j}\cap \wtd{P}_s\Big)\\
&\geq \Delta\Big(\ell,P_{\widetilde{f}_j}\bigm\backslash\bigcup_{s\in\m{I}\setminus\m{I}_j} \bigcup_{s'\in\m{I}_j} \wtd{P}_{s'}\cap \wtd{P}_s\Big)\\
&\geq  \Delta(\ell,P_{\widetilde{f}_j})-\sum_{s\in\m{I}\setminus\m{I}_j}\sum_{s'\in\m{I}_j}\underbrace{\Delta(\ell, \wtd{P}_{s'}\cap \wtd{P}_s)}_{=\,0}\\
&=\Delta(\ell,P_{\widetilde{f}_j})>0,
\end{aligned}
\end{equation*}
where we used that $\Delta(\ell, \wtd{P}_{s'}\cap \wtd{P}_s)=0$, for $s$ and $s'$ in different connected components of $\m{I}$, by definition of the graph $\m{G}$.
This in particular implies that $P_{f}\neq\varnothing$, which stands in contradiction to the assumption that $f$ is analytic on $\dom_\delta$, and hence establishes that $f_j$ must be entire. Since $j\in\{1,\dots,n\}$ was arbitrary, the proof of the proposition is complete.
\end{proof}

\subsection{The SAC for $\LtCla{a}{b}$-nonlinearities}

The second main result of this section establishes the SAC for $\LtCla{a}{b}$-nonlinearities:

\begin{prop}\label{prop:Zab-SAC}
Let $a,b>0$ and let $\sigma\in\LtCla{a}{b}$. Then $\sigma$ satisfies the SAC.
\end{prop}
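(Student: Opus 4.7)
The proof reduces to Proposition \ref{prop:laced-breakdown}. First, I would assume without loss of generality that $\alpha_s \neq 0$ for every $s$ and split $\m{I} = \m{I}_0 \sqcup \m{I}_+$ according to whether $\beta_s = 0$ or not. The terms indexed by $\m{I}_0$ contribute only a constant $\sum_{s \in \m{I}_0} \alpha_s \sigma(\gamma_s)$ to $f$, so it suffices to show that $g \coleqq \sum_{s \in \m{I}_+} \alpha_s \sigma(\beta_s \cdot + \gamma_s)$ is constant. The hypothesis that $f$ has bounded pole set is precisely the statement that $g$ is analytic on $\C \setminus D(0,R)$ for some $R > 0$; since the zero function is a valid ABC, Proposition \ref{prop:laced-breakdown} applied with $\epsilon_s \equiv 0$ partitions $\m{I}_+$ as $\m{I}_1 \sqcup \cdots \sqcup \m{I}_n$ so that each $f_j \coleqq \sum_{s \in \m{I}_j} \alpha_s \sigma(\beta_s \cdot + \gamma_s)$ is entire.

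It then suffices to show each $f_j$ is constant. Inspecting the graph construction in the proof of Proposition \ref{prop:laced-breakdown}, the appeal to Weyl (Proposition \ref{prop:Weyl-equi}) forces $\beta_{s_1}/\beta_{s_2} \in \Q$ within each connected component $\m{I}_j$. Hence there exists $T_j > 0$ with $m_s \coleqq T_j \beta_s/b \in \Z \setminus \{0\}$ for every $s \in \m{I}_j$, so $f_j$ is entire and $iT_j$-periodic. Substituting $w = e^{2\pi z/T_j}$ yields a holomorphic $h \colon \C \setminus \{0\} \to \C$ with $h(e^{2\pi z/T_j}) = f_j(z)$, and the growth bound in Lemma \ref{lem:Zab-props}(iv) together with $\eta < \pi$ gives $|h(w)| \lesssim |w|^{\mu} + |w|^{-\mu}$ for some $\mu < \tfrac{1}{2} \max_s |m_s|$, so $h$ is a Laurent polynomial.

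To show that $h$ is in fact constant, I would write $\sigma$ in the cylinder variable as $\sigma(\beta_s z + \gamma_s) = \tilde\sigma(\mu_s w^{m_s})$, where $\tilde\sigma(u) = C + \sum_k c_k[\sgn(k) + (u - w_k)/(u + w_k)]$, $w_k = e^{2\pi ka/b}$, and $\mu_s = e^{2\pi \gamma_s/b} > 0$. The function $\tilde\sigma$ is meromorphic on $\C \setminus \{0\}$ with infinitely many simple poles on the negative real axis at $u = -w_k$ (whenever $c_k \neq 0$), so $\tilde\sigma(\mu_s w^{m_s})$ has its poles logarithmically spaced along the $|m_s|$ rays from the origin at angles $\pi(2j+1)/m_s$. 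For $h = \sum_{s \in \m{I}_j} \alpha_s \tilde\sigma(\mu_s w^{m_s})$ to be pole-free on $\C \setminus \{0\}$, a density-counting argument on each ray -- paralleling the graph/Weyl analysis of Proposition \ref{prop:laced-breakdown} -- forces the $|m_s|$'s contributing to any single ray to coincide. Reducing to a common $|m|$ via the substitution $u = w^{|m|}$ turns $h$ into a \emph{rational} function of $u$, whose residues on the negative real axis can only cancel by pairing terms, yielding $h$ constant.

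The main obstacle is the density/ray-matching step: one must control the interaction between pole clusters of the $\tilde\sigma(\mu_s w^{m_s})$ at $w = 0$ and $w = \infty$ (where $\tilde\sigma$ itself may have essential singularities coming from exponentially growing $\{c_k\}$) and the distribution of poles along each ray, before the residue-balancing at the remaining poles can be carried out. The toolkit already developed for Proposition \ref{prop:laced-breakdown} -- subadditivity of asymptotic density, Weyl equidistribution, Lemma \ref{lem:2-lattice-fit}, and the identity theorem -- should be adaptable to this end.
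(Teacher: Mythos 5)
Your reduction to Proposition~\ref{prop:laced-breakdown} with $\epsilon_s\equiv 0$ is non-circular (that proposition does not rely on the SAC) and correctly lands you in the situation ``each $f_j$ is entire with rationally related $\beta_s$''; the cylinder transfer and the Laurent-polynomial bound (degree strictly below $\tfrac{1}{2}\max_{s\in\m{I}_j}|m_s|$, using the $\eta<\pi$ estimate from Lemma~\ref{lem:Zab-props}(iv)) are also sound. The genuine gap is the last step: unless $\max_s|m_s|=1$ in every component $\m{I}_j$, the Laurent polynomial $h_j$ may have nonzero degree, and your plan to force it constant -- the ray-by-ray pole-density matching and residue pairing on the cylinder -- is only sketched, and you flag it yourself as ``the main obstacle.'' The hard case, $\max_s|m_s|>1$, occurs precisely when a component $\m{I}_j$ contains $\beta_s$ of several different magnitudes; the density/Weyl toolkit from Proposition~\ref{prop:laced-breakdown} does not automatically handle the interaction between the resulting strata of poles, so the proposal as written is incomplete.

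The paper instead bypasses Proposition~\ref{prop:laced-breakdown} entirely and resolves the ``mixed magnitudes'' difficulty at the start. After using Lemma~\ref{lem:ent-vs-line} to show $f$ is entire, it isolates $\m{I}_1 \coleqq \{s:|\beta_s|=\beta_{\mrm{max}}\}$ and observes that only those terms can produce poles on the horizontal line $\Im(z)=\tfrac{b}{2\beta_{\mrm{max}}}$; consequently $f_1 \coleqq \sum_{s\in\m{I}_1}\alpha_s\sigma(\beta_s\cdot+\gamma_s)=f-\sum_{s\notin\m{I}_1}\alpha_s\sigma(\beta_s\cdot+\gamma_s)$, being $\tfrac{ib}{\beta_{\mrm{max}}}$-periodic with all its poles confined to such horizontal lines, is already entire. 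Its exponential-order bound $e^{\eta\beta_{\mrm{max}}|z|/b}$ with $\eta<\pi$ then makes $f_1(\tfrac{ib}{\beta_{\mrm{max}}}\cdot)-f_1(0)$ satisfy the hypotheses of Carlson's theorem (Proposition~\ref{prop:Carlson}), forcing $f_1$ constant; $f-f_1$ is entire with strictly fewer terms and falls to induction on $\#(\m{I})$. This maximal-$|\beta|$ stratification (which guarantees $m_s\in\{\pm1\}$ on each stratum by construction) combined with Carlson's theorem is exactly the ingredient your proposal is missing; it plays the role your unfinished cylinder/residue analysis was supposed to play, but in one stroke and without ever needing to mix terms of different $|\beta_s|$.
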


\noindent The proof of Proposition \ref{prop:Zab-SAC} relies on Carlson's theorem, as well as Lemma \ref{lem:ent-vs-line} that we already used to establish the CAC.

\begin{prop}[Carlson {\cite[Sec. 5.81]{Titchmarsh1939}}]\label{prop:Carlson}
Assume that $f$ is an entire function such that
\begin{enumerate}[(i)]
\item there exist $M>0$ and $\eta\in(0,\pi)$ so that $|f(z)|\leq Me^{\eta|z|}$, for all $z\in\C$, and
\item $f(n)=0$, for all $n\in\N$.
\end{enumerate}
Then $f$ is identically $0$.
\end{prop}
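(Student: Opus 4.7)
The plan is to prove the SAC for $\sigma\in\LtCla{a}{b}$ in three main steps, leveraging Lemma \ref{lem:ent-vs-line}, the growth estimate in Lemma \ref{lem:Zab-props}(iv), and Carlson's theorem (Proposition \ref{prop:Carlson}).

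First, I would reduce to the case that $f\coleqq\sum_{s\in\m{I}}\alpha_s\,\sigma(\beta_s\cdot\,+\,\gamma_s)$ is entire. Since $\sigma$ is meromorphic on $\C$, its pole set $P_\sigma$ is discrete (Lemma \ref{lem:Zab-props}(ii)); consequently, the pole set $P_f$ of $f$ is contained in the discrete set $\bigcup_{s\in\m{I}}\beta_s^{-1}(P_\sigma-\gamma_s)$, so $P_f$ is itself discrete. A bounded discrete subset of $\C$ is finite (by Bolzano--Weierstrass), and any finite set has asymptotic density zero along every line of $\C$. The contrapositive of Lemma \ref{lem:ent-vs-line} then forces $f$ to be entire.

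Next, I would establish a growth bound for $f$ and use the $ib$-periodicity of $\sigma$ (Lemma \ref{lem:Zab-props}(i)) to control $f$ on vertical lines. Lemma \ref{lem:Zab-props}(iv) gives $|\sigma(z)|\leq M e^{\eta|z|/b}$ whenever $d(z,P_\sigma)\geq 1$, with $\eta\in(0,\pi)$; hence, for $z$ outside the $1$-neighborhood of the potential pole locations $\bigcup_{s\in\m{I}}\beta_s^{-1}(P_\sigma-\gamma_s)$, we have $|f(z)|\leq K_1 e^{K_2|z|}$ with $K_2=(\eta/b)\max_{s\in\m{I}}|\beta_s|$. Since $f$ is entire and the excluded neighborhoods are individual small disks, this bound extends to all of $\C$ (with adjusted constants) by the maximum modulus principle. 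Furthermore, each summand $\sigma(\beta_s\cdot+\gamma_s)$ is $ib/|\beta_s|$-periodic in the imaginary direction and hence bounded on every vertical line, so that $f$ itself is bounded on $\{\Re z=x_0\}$ with bound growing at most as $e^{K_2|x_0|}$.

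Finally, I would invoke Carlson's theorem to conclude that $f$ is constant. A real rescaling $z\mapsto cz$ with $0<c<\pi/K_2$ yields an entire function $\widetilde f(w)\coleqq f(cw)$ of exponential type at most $cK_2<\pi$. The $ib$-periodicity of $\sigma$, together with the lattice structure $P_\sigma\subset(ib/2+a\Z)\times ib\Z$ from Lemma \ref{lem:Zab-props}(ii), will then be used to manufacture from $f$ (and the rescaled $\widetilde f$) an entire function of exponential type below $\pi$ that vanishes on the nonnegative integers; Carlson's theorem forces this auxiliary function to be identically zero, and the information so obtained will suffice to conclude that $f$ is constant. The hardest part will be precisely this last step: the periods $ib/|\beta_s|$ of the individual summands of $f$ are generally mutually incommensurate, so no single translation makes $f$ periodic, and one must carefully combine the $ib$-periodicity of $\sigma$ with the fact that $f$ is entire---which forces the poles of distinct summands to cancel in a coordinated way---in order to extract enough zeros for Carlson's theorem to apply.
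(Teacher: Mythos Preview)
Your proposal does not address the stated proposition at all. The statement is Carlson's theorem (Proposition~\ref{prop:Carlson}), a classical result which the paper simply cites from \cite{Titchmarsh1939} and does not prove. What you have written is instead a sketch of the proof of Proposition~\ref{prop:Zab-SAC} (the SAC for $\LtCla{a}{b}$-nonlinearities), which \emph{uses} Carlson's theorem as a black box.

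Even as a sketch of the SAC proof, your outline has a genuine gap in the final step, which you yourself flag as ``the hardest part.'' You correctly observe that the periods $ib/|\beta_s|$ are in general incommensurate, so $f$ itself is not periodic and you cannot directly manufacture an arithmetic progression of zeros for Carlson's theorem. You then say one must ``carefully combine'' periodicity with the entireness of $f$, but you give no mechanism for doing so. The paper's mechanism is an \emph{induction on $\#(\m I)$} together with a \emph{layer-peeling step}: one isolates the partial sum $f_1$ over those $s$ with $|\beta_s|=\beta_{\max}$, observes that $f_1$ is genuinely $ib/\beta_{\max}$-periodic, and shows $f_1$ is itself entire (because the remaining summands have no poles on the line $\frac{ib}{2\beta_{\max}}+\R$, so neither can $f_1$, and periodicity propagates this to all of $\C$). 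One then bounds $|f_1(z)|\le M'' e^{\eta\beta_{\max}|z|/b}$ via Lemma~\ref{lem:Zab-props}(iv) and the maximum modulus principle, so that $w\mapsto f_1\big(\tfrac{ib}{\beta_{\max}}w\big)-f_1(0)$ has exponential type $\eta<\pi$ and vanishes at all $n\in\N$; Carlson forces $f_1$ to be constant, and the induction hypothesis handles $f-f_1$. Your proposed rescaling $z\mapsto cz$ with $c<\pi/K_2$ does not by itself produce any zeros, and without the layer-peeling idea there is no periodic piece to which Carlson can be applied.
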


\begin{proof}[Proof of Proposition \ref{prop:Zab-SAC}]
Let $R>0$ and a finite set $\{(\alpha_s,\beta_s,\gamma_s)\}_{s\in\m{I}}\in\C^{\m{I}}\times\R^{\m{I}}\times \R^{\m{I}}$ be such that $f\coleqq \sum_{s\in\m{I}}\alpha_s\,\sigma(\beta_s\cdot\,+\gamma_s)$ is analytic on $\C\setminus D(0,R)$, and assume w.l.o.g. that $\alpha_s\neq 0$, $\beta_s\neq 0$, for all $s\in\m{I}$. We use induction on $\#(\m{I})$ to show that $f$ is constant. If $\#(\m{I})=0$, i.e., $\m{I}=\varnothing$, then $f$ is given by the empty sum, and so $f\equiv 0$ is constant. Suppose now that $\#(\m{I})\geq 1$, and assume that the implication in the definition of the SAC holds for all $\{(\alpha_s',\beta_s',\gamma_s')\}_{s\in\m{I}'}\in\C^{\m{I}'}\times\R^{\m{I}'}\times \R^{\m{I}'}$ with $\#(\m{I}')<\#(\m{I})$. First, note that, as the set of poles of $f$ is bounded, its density along any line in $\C$ is zero, and so it follows by Lemma \ref{lem:ent-vs-line} that $f$ must be entire. Now, let $\beta_{\mrm{max}}=\max\{|\beta_s|:s\in\m{I}\}$, $\beta_{\mrm{min}}=\min\{|\beta_s|:s\in\m{I}\}$, and set $\m{I}_1=\{s\in\m{I}:|\beta_s|=\beta_{\mrm{max}}\}$.
 Then, as the functions $\sigma_s\coleqq \alpha_s \sigma(\beta_s\cdot\,+\gamma_s)$ do not have poles along $\frac{ib}{2\beta_{\mrm{max}}}+\R$, for $s\in\m{I}\setminus\m{I}_1$, the function
\begin{equation*}
f_1\coleqq \sum_{s\in\m{I}_1}\alpha_s\,\sigma(\beta_s\cdot\,+\gamma_s)=f-\sum_{s\in\m{I}\setminus\m{I}_1}\alpha_s\,\sigma(\beta_s\cdot\,+\gamma_s)
\end{equation*}
does not have poles along $\frac{ib}{2\beta_{\mrm{max}}}+\R$ either. Therefore, as $f_1$ is $\frac{ib}{\beta_{\mrm{max}}}\,$--periodic and its poles are contained in $\bigcup_{n\in\Z}\big[\R+\frac{ib}{\beta_{\mrm{max}}}\left(n+\frac{1}{2}\right)\big]$, it follows that $f_1$ is entire. Next, by item (iv) of Lemma \ref{lem:Zab-props}, there exist $M>0$ and $\eta\in(0,\pi)$  such that $|\sigma(z)|\leq \frac{M }{1\wedge d(z,P_\sigma)}\,e^{\eta |z|/b}$, for all $z\in \dom_\sigma\coleqq \C\setminus P_\sigma$, where $P_\sigma$ is the set of poles of $\sigma$. Now, let $P_s=\beta_s^{-1}(P_\sigma-\gamma_s)$ be the set of poles of $\sigma_s$, for $s\in\m{I}_1$, and set $\wtd{P}=\bigcup_{s\in\m{I}_1}P_{s}$. Then, for $z\in\C\setminus\wtd{P}$,
\begin{equation*}
\begin{aligned}
|f_1(z)|&\leq \sum_{s\in\m{I}_1} |\alpha_s| |\sigma(\beta_sz+\gamma_s)|\leq \sum_{s\in\m{I}_1} |\alpha_s|\frac{M }{1\wedge d(\beta_sz+\gamma_s, P_\sigma)}\,e^{\eta |\beta_sz+\gamma_s|/b}\\
&\leq  \sum_{s\in\m{I}_1} |\alpha_s|\frac{M }{1\wedge \beta_{\mrm{min}} d(z, \beta_s^{-1}(P_\sigma-\gamma_s))}\,e^{\eta \beta_{\mrm{max}} |z|/ b}e^{\eta |\gamma_s|/b}\\
&\leq \sum_{s\in\m{I}_1} |\alpha_s|\frac{M (1\wedge \beta_{\mrm{min}})^{-1}}{1\wedge d(z, P_s)}\,e^{\eta \beta_{\mrm{max}} |z|/ b} e^{\eta |\gamma_s|/b}\\
&\leq\frac{1}{1\wedge d(z, \wtd{P})} \underbrace{M (1\wedge \beta_{\mrm{min}})^{-1}  \Bigg( \sum_{s\in\m{I}_1} |\alpha_s|e^{\eta |\gamma_s|/b}\Bigg) }_{M'\coleqq} \,e^{\eta \beta_{\mrm{max}} |z|/ b}.\\
\end{aligned}
\end{equation*}
Now, as 
\begin{equation*}
\wtd{P}=\bigcup_{s\in\m{I}_1}P_s\subset \bigcup_{s\in\m{I}_1} \Big(\textstyle\frac{ib}{2\beta_{\mrm{max}}}-\frac{\gamma_s}{\beta_s}+\frac{a}{\beta_{\mrm{max}}}\Z\times\frac{ib}{\beta_{\mrm{max}}}\Z\Big) = \{\frac{ib}{2\beta_{\mrm{max}}}-\frac{\gamma_s}{\beta_s}:s\in\m{I}_1\} +\frac{a}{\beta_{\mrm{max}}}\Z\times\frac{ib}{\beta_{\mrm{max}}}\Z\,,
\end{equation*}
 we have that $\wtd{P}$ is uniformly discrete, i.e.,
\begin{equation*}
\mu\coleqq \inf\{|p_1-p_2|:p_1,p_2\in\wtd{P},p_1\neq p_2\}>0.
\end{equation*}
Therefore, for $z\in \C$ such that $d(z, \wtd{P})\geq \mu/2$, we have $|f_1(z)|\leq \frac{M'}{1\wedge (\mu/2)}\,e^{\eta \beta_{\mrm{max}} |z|/ b} $. Suppose now that $z\in\C$ satisfies $d(z, \wtd{P})< \mu/2$. Then there exists a $p\in \wtd{P}$ such that $z\in D(p,\mu/2)$, and, by definition of $\mu$, $D(p,\mu/2)\cap \wtd{P}=\{p\}$. Let $z'\in\C$ be such that $|z-z'|=\mu/2$, but otherwise arbitrary. Now, as $f_1$ is analytic on an open neighborhood of the disk $D(p,\mu/2)$, it follows by the maximum modulus principle \cite[Thm. 10.24]{Rudin1987} that
\begin{equation*}
|f_1(z)|\leq |f_1(z')|\leq \frac{M'}{1\wedge (\mu/2)}\,e^{\eta \beta_{\mrm{max}} |z'|/ b} \leq \underbrace{\frac{M'e^{\eta \mu \beta_{\mrm{max}}/(2b)}}{1\wedge (\mu/2)}}_{M''\coleqq }\,e^{\eta \beta_{\mrm{max}} |z|/ b} .
\end{equation*}
Hence $|f_1(z)|\leq M''e^{\eta \beta_{\mrm{max}} |z|/ b} $, for all $z\in\C$. Now, $f_1\left(\frac{ib}{\beta_{\mrm{max}}}\,\cdot \right)-f_1(0)$ satisfies the assumptions of Proposition \ref{prop:Carlson}, and therefore must be identically zero. This establishes that $f_1$ is constant. But now
\begin{equation*}
f_2\coleqq \sum_{s\in\m{I}\setminus\m{I}_1}\alpha_s\,\sigma(\beta_s\cdot\,+\gamma_s)=f-f_1
\end{equation*}
is entire, and therefore constant, by the induction hypothesis, and so $f=f_1+f_2$ is constant. This completes the induction step and concludes the proof of the proposition.
\end{proof}

\section*{Acknowledgment}

The authors would like to thank Prof. Charles Fefferman for his insightful comments on an earlier version of the manuscript, which have lead to a significantly improved exposition in Section \ref{sec:ident-for-tanh} and a simplification of the proof of Proposition \ref{prop:pole-structure}.

\bibliographystyle{IEEEtran} 
\bibliography{ref}

\section*{Appendix: proofs of auxiliary results}

\setcounter{subsection}{0}

\subsection{Proof of Proposition \ref{prop:intro-exotic-symm}}
 
\begin{proof}
Let $\{r_k\}_{k\in\Z}\subset\R$ be a solution of the linear recurrence $\sum_{l=0}^n \alpha_{l}r_{k-l}=0$, $k\in\Z$, such that $r_0=\alpha_n$, $r_1=\alpha_{n-1}$, \dots, $r_{n-1}=\alpha_1$. Then $|r_k|$ grows at most exponentially, i.e., there exists a $b>0$ such that $\sup_{k\in\Z}|r_k|e^{-\pi |k|/b}<\infty$. We can thus define $\sigma\in\LtCla{1}{b}$ by
\begin{equation}\label{eq:Zab-series-exotic}
\sigma=\sum_{k\in\Z }r_k\,   \big[\sgn(k)+\, \tanh\!\big(\pi b^{-1}(\,\cdot - k)\big)\big].
\end{equation}
Then, by Lemma \ref{lem:Zab-props}, $\sigma$ is a meromorphic function with only simple poles contained in $\Pi\coleqq  \frac{ib}{2}+\Z\,\times\,ib \Z$. Hence,
$
f\coleqq \sum_{l=0}^n \alpha_{l}\,\sigma(\cdot -l)
$
 is also meromorphic with only simple poles contained in $\Pi$. However, their residues are
\begin{equation*}
\begin{aligned}
\mrm{Res}\big(f,k+(m+\textstyle\frac{1}{2})ib \big)&=\sum_{l=0}^n \alpha_l \, \mrm{Res}\big(\sigma(\cdot-l), k+(m+{\textstyle\frac{1}{2}} )ib \big)=\sum_{l=0}^n \alpha_l \, \mrm{Res}\big(\sigma, k-l +(m+\textstyle\frac{1}{2})ib \big) \\
&=\sum_{l=0}^n \alpha_{l}r_{k-l } \, \mrm{Res}\big(\tanh(\pi b^{-1}\,\cdot ),(m+\textstyle\frac{1}{2})ib\big)=0,
\end{aligned}
\end{equation*}
 for all $k,m\in\Z$, and therefore $f$ is, in fact, entire.

Now, as $\sigma\in\LtCla{1}{b}$, it follows by Proposition \ref{prop:Zab-SAC} that $\sigma$ satisfies the SAC, and so $f$ must be constant. Let $\zeta\in\R$ be such that $f=\zeta\,\bm{1}$. In order to establish that $\left(\zeta,\{(\alpha_k,1,k)\}_{k=0}^n\right)$ is an affine symmetry of $\sigma$, it remains to show that if $\m{I}\subset\{0,1,\dots, n\}$ is a nonempty set such that  $f_{\m{I}}\coleqq \sum_{k\in\m{I}} \alpha_{k}\,\sigma(\cdot -k)$ is constant, then $\m{I}=\{0,1,\dots, n\}$. Let $\m{I}$ be such a set. Then, as $\alpha_lr_{n-l}=\alpha_l^2>0$, for $l\in\{1,\dots,n\}$, using $\sum_{l=0}^n\alpha_l r_{n-l}=0$, we obtain 
\begin{equation*}
\alpha_0 r_n=-\sum_{l=1}^n \alpha_l r_{n-l}=-\sum_{l=1}^n\alpha_l^2<0.
\end{equation*}
Next, as $f_{\m{I}}$ is constant on $\R$, its analytic continuation does not have poles at any of the points $\Z+\frac{ib}{2}$, and so $\sum_{l\in\m{I}} \alpha_{l}r_{n-l}=0$. Therefore, we must have $0\in\m{I}$, as otherwise we would have $\sum_{l\in\m{I}} \alpha_{l}r_{n-l}>0$, which constitutes a contradiction. Now,
\begin{equation*}
\sum_{l\in \{1,\dots,n\}\setminus \m{I}} \alpha_l^2\;=\sum_{l=0}^n \alpha_l r_{n-l}- \sum_{l\in\m{I}} \alpha_{l}r_{n-l}=0,
\end{equation*}
which implies $\m{I}\supset\{1,\dots, n\}$. Therefore, $\m{I}=\{0,1,\dots, n\}$, as desired.
\end{proof}

\subsection{Proofs of auxiliary results in Section \ref{sec:formal-NNT}}

\begin{proof}[Proof of Proposition \ref{prop:modif-basic-prop}]
\textit{(i)} Let $A$, $B$, and $C$ be sets of nodes such that $\m{N}_2$ is a $(\rho\,;A,B,C)$--modification of $\m{N}_1$ with respect to an affine symmetry $\left(\zeta,\{(\alpha_u,\beta_u,\theta_u)\}_{u\in A\cup B}\cup \{(\alpha'_p,\beta'_p,\gamma'_p)\}_{p=1}^n \right)$ of $\rho$, and adopt the remaining notation of Definition \ref{def:modif}. Suppose that $\m{N}_1$ is layered. Enumerate $C=\{u_1',\dots, u_n'\}$. Then, as $\pre_{\m{N}_2}(u_p')=P$, for $p\in\{1,\dots, n\}$, we have 
\begin{equation*}
\lvl_{\m{N}_2}(u_p')=\lvl_{\m{N}_2}(u)=\lvl_{\m{N}_1}(v)+1=\lvl_{\m{N}_2}(v)+1,
\end{equation*}
for $p\in\{1,\dots, n\}$, $u\in A\cup B$, and $v\in P$. Therefore, for $w\in W$ and $u\in B\cup C$,
\begin{equation*}
\lvl_{\m{N}_2}(w)=\max\left(\{\lvl_{\m{N}_2}(\wtd{u}):\wtd{u}\in B\}\cup \{\lvl(u_p'):p\in\{1,\dots, n\}\}\right)+1=\lvl_{\m{N}_2}(u) +1,
\end{equation*}
hence $\m{N}_2$ is layered.

\noindent \textit{(ii)} We show that $\m{N}_1$ is a $(\rho\,;C,B,A)$--modification of $\m{N}_2$. To this end, first note that we have
\begin{equation*}
\left[\sum_{p=1}^{n}\alpha_p' \,\rho(\beta_p' t +\gamma_p')+\sum_{u\in B}\alpha_u\,\rho(\beta_u t+\theta_{u})\right]+\sum_{u\in A}\alpha_u\,\rho(\beta_u t+\theta_{u})=\zeta\,\bm{1}(t),\quad t\in\R,
\end{equation*}
so Condition (i) of Definition \ref{def:modif} is satisfied for the putative $(\rho\,;C,B,A)$--modification. Moreover, Conditions (ii)--(iv) are satisfied with the same sets $\{\kappa_v\}_{v\in P}$, $\{\nu_w\}_{w\in W}$, and $\{\mu_r\}_{r=1}^D$, so $\m{N}_2$ admits a $(\rho\,;C,B,A)$--modification. It is now a routine check to verify that the $(\rho\,;C,B,A)$--modification of $\m{N}_2$ is, in fact, $\m{N}_1$, as desired.

\noindent \textit{(iii)} Consider a node $w\in V^2$. If $w\in V^2$ and $\anc_{\m{N}_2}(\{w\})\cap C=\varnothing$, then $\OTnoA{w}{\rho}{\m{N}_2}=\OTnoA{w}{\rho}{\m{N}_1}$, simply as the part of $\m{N}_1$ contributing to the map of $w$ is unaffected by the $\rho$-modification. Suppose now that $w\in W$ and write $B=B_1\cup B_2\cup B_3$, where
\begin{equation*}
\begin{aligned}
B_1&=\{u\in B: (u,w)\notin E\},\\
B_2&=\{u\in B\setminus B_1: \omega_{wu}-\nu_w\alpha_u= 0\},\text{ and}\\
B_3&=\{u\in B\setminus B_1: \omega_{wu}-\nu_w\alpha_u\neq 0\}.
\end{aligned}
\end{equation*}
Then $\pre_{\m{N}_1}(w)=A\cup B_2\cup B_3$, $\pre_{\m{N}_2}(w)= B_1\cup B_3 \cup C$, and, denoting $K_P(t)=\sum_{v\in P}\kappa_v\OTnoA{v}{\rho}{\m{N}}(t)$, we can compute
\begin{align*}
&\sum_{u\in \pre_{\m{N}_2}\!(w)\cap(B\cup C) }\omega_{wu}^{\, \m{N}_2}\;\OTnoA{u}{\rho}{\m{N}_2}(t)\;+\theta_w^{\,\m{N}_2}\\[5mm]
=&\, \sum_{u\in B_1\cup B_3\cup C }\omega_{wu}^{\, \m{N}_2}\;\OTnoA{u}{\rho}{\m{N}_2}(t)\;+\theta_w^{\,\m{N}_2} \quad +\quad \sum_{\mathclap{u\in B_2 }}(\omega_{wu}-\alpha_u\nu_w) \rho\big(\beta_u K_P(t)+\theta_u \big)\\[5mm]
=&\,\sum_{\mathclap{u_p\in C }}-\alpha_p'\nu_w \rho\big(\beta_p' K_P(t)+\gamma_p' \big)\quad+\quad \sum_{\mathclap{u\in B_1 }}(-\alpha_u\nu_w) \rho\big(\beta_u K_P(t)+\theta_u \big)   \\
&\hspace{6cm}+\quad \sum_{\mathclap{u\in B_2\cup B_3 }}(\omega_{wu}-\alpha_u\nu_w) \rho\big(\beta_u K_P(t)+\theta_u \big) \quad +\theta_w+\zeta\nu_w\\
=&\,\nu_w \Big(\zeta\,\bm{1}-\sum_{p=1}^{n}\alpha_p' \,\rho(\beta_p'\cdot +\gamma_p')-\sum_{u\in B}\alpha_u\,\rho(\beta_u\cdot+\theta_{u})\Big)\left(K_P(t)\right)+ \sum_{\mathclap{u\in B_2\cup B_3 }}\omega_{wu} \rho\big(\beta_u K_P(t)+\theta_u \big)+\theta_w \\[3mm]
=&\,\nu_w\sum_{u\in A}\alpha_u\,\rho(\beta_u K_P(t)+\theta_{u})\quad+ \sum_{{u\in B\cap \pre_{\m{N}_1}\!(w) }}\omega_{wu}\, \rho\big(\beta_u K_P(t)+\theta_u \big) \quad+\;\theta_w\\
=&\,\sum_{u\in A}\omega_{wu}\,\rho(\beta_u K_P(t)+\theta_{u})\quad+ \sum_{{u\in B\cap \pre_{\m{N}_1}\!(w) }}\omega_{wu}\, \rho\big(\beta_u K_P(t)+\theta_u \big) \quad +\; \theta_w, \qquad\text{for } t\in\R,
\end{align*}
and, as $\pre_{\m{N}_2}\!(w)\setminus(B\cup C)=\pre_{\m{N}_1}\!(w)\setminus(A\cup B)$, we consequently have
\begin{align*}
&\OTnoA{w}{\rho}{\m{N}_2}(t)\\
=\,& \rho\Bigg(\sum_{u\in \pre_{\m{N}_2}\!(w) }\omega_{wu}^{\,\m{N}_2}\,\OTnoA{u}{\rho}{\m{N}_2}(t)\;+\theta_w^{\,\m{N}_2}\Bigg)\\
=\,&\rho\Bigg(\qquad\quad\sum_{\mathclap{u\in \pre_{\m{N}_2}\!(w)\cap(B\cup C)} }\quad\omega_{wu}^{\,\m{N}_2}\,\OTnoA{u}{\rho}{\m{N}_2}(t)\;+\theta_w^{\,\m{N}_2}\quad+\quad\sum_{\mathclap{u\in \pre_{\m{N}_2}\!(w)\setminus(B\cup C) }}\quad\omega_{wu}\OTnoA{u}{\rho}{\m{N}_2}(t) \Bigg)\\
=\,&\rho\Bigg( \sum_{u\in A}\omega_{wu}\,\rho(\beta_u K_P(t)+\theta_{u})+ \sum_{\mathclap{u\in B\cap \pre_{\m{N}_1}\!(w) }}\omega_{wu} \rho\big(\beta_u K_P(t)+\theta_u \big) +\theta_w+\;\;\sum_{\mathclap{u\in \pre_{\m{N}_1}\!(w)\setminus(A\cup B) }}\;\;\omega_{wu}\OTnoA{u}{\rho}{\m{N}_2}(t)  \Bigg)\\
=\,&\OTnoA{w}{\rho}{\m{N}_1}(t), \qquad\text{for } t\in\R.
\end{align*}

\noindent Thus, $\OTnoA{w}{\rho}{\m{N}_2}=\OTnoA{w}{\rho}{\m{N}_1}$, for all $w\in W$, and it then immediately follows that $\OTnoA{w}{\rho}{\m{N}_2}=\OTnoA{w}{\rho}{\m{N}_1}$, for all $w\in V^2\setminus C$. Now, if $A\cap \Vout^1=\varnothing$, then $C\cap \Vout^2=\varnothing$ and $\OnoA{\m{N}_1}{\rho}=\OnoA{\m{N}_2}{\rho}$, as desired. If, on the contrary, $A\subset \Vout^1$, then $C\subset \Vout^2$, and the $r$-th coordinate of $\OnoA{\m{N}_2}{\rho}$ is given by
\begin{equation*}
\begin{aligned}
\big(\OnoA{\m{N}_2}{\rho}\big)_r\!&=\lambda^{(r)}+\zeta\,\mu_r+\sum_{u\in C}-\alpha_p'\mu_r\outmap{u}{\rho}+\sum_{u\in B_{\mrm{out}}^{(r)}}(\lambda^{(r)}_{u}-\alpha_u\mu_r)\outmap{u}{\rho}+\sum_{u\in \Vout^2\setminus (B\cup C)}\lambda^{(r)}_{u}\outmap{u}{\rho}\\
&=\lambda^{(r)}+\mu_r\Big(\zeta-\sum_{u\in C}\alpha_p'\outmap{u}{\rho}-\sum_{u\in B}\alpha_u\outmap{u}{\rho}\Big)+\sum_{u\in \Vout^2\setminus C}\lambda^{(r)}_{u}\outmap{u}{\rho}\\
&=\lambda^{(r)}+\mu_r \sum_{u\in A}\alpha_u\outmap{u}{\rho}+\sum_{u\in \Vout^1\setminus A}\lambda^{(r)}_{u}\outmap{u}{\rho}\\
&=\lambda^{(r)}+ \sum_{u\in A}\lambda_u^{(r)}\outmap{u}{\rho}+\sum_{u\in \Vout^1\setminus A}\lambda^{(r)}_{u}\outmap{u}{\rho}\\
&=\big(\OnoA{\m{N}_1}{\rho}\big)_r,
\end{aligned}
\end{equation*}
for $r\in\{1,\dots, D\}$, concluding the proof.
\end{proof}

\begin{proof}[Proof of Lemma \ref{lem:modif-irred}]
Let $\m{N}'$ be the $(\rho\,;A,B,C)$--modification of $\m{N}$ with respect to the affine symmetry
\begin{equation}\label{eq:modif-irred-sym-1}
\sum_{u\in A\cup B}\alpha_u\,\rho(\beta_u t+\theta_{u})+\sum_{p=1}^n\alpha_p' \,\rho(\beta_p' t +\gamma_p')=\zeta\,\bm{1}(t),\quad t\in\R,
\end{equation}
and adopt the remaining notation of Definition \ref{def:modif}. By definition of $\rho$-modification, there exist nonzero real numbers $\{\kappa_v\}_{v\in P}$ such that $\{\omega_{u_p'v}\}_{v\in P}=\beta_p'\{\kappa_v\}_{v\in P}$, for $p\in\{1,\dots, n\}$, and $\{\omega_{uv}\}_{v\in P}=\beta_u\{\kappa_v\}_{v\in P}$, for all $u\in A\cup B$.

Suppose by way of contradiction that $\m{N}'$ is $(\rho,U)$--reducible, for some set of nodes $U$ with common parent set $P_U$. Then, as $\m{N}$ itself is irreducible, we must have
\begin{equation}\label{eq:UcapC}
 \text{either $U\cap C\neq\varnothing $ or $P_U\cap C\neq \varnothing$.}
\end{equation}
 Suppose first that $C_U\coleqq U\cap C\neq \varnothing $, and let $D=U\setminus C$. It follows by definition of reducibility that the parent set of all nodes in $U=C_U\cup D$ is $P_U$, and, in particular, as $U\cap C\neq\varnothing$, we have $P_U=P$. Moreover, there exist nonzero real numbers $\{\wtd{\beta}_u\}_{u\in C_U\cup D}$ and $\{\wtd{\kappa}_v\}_{v\in P}$ such that $\{\omega_{uv}'\}_{v\in P}=\wtd{\beta}_u\{\wtd{\kappa}_v\}_{v\in P}$, for all $u\in C_U\cup D$, as well as nonzero real numbers $\{\wtd{\alpha}_u\}_{u\in C_U\cup D}$ and $\wtd{\zeta}\in \R$ such that
\begin{equation}\label{eq:modif-irred-sym-2}
\sum_{u_p'\in C_U}\wtd{\alpha}_{u_p'}\,\rho(\wtd{\beta}_{u_p'} t+\gamma_p')+ \sum_{u\in D}\wtd{\alpha}_u\,\rho(\wtd{\beta}_u t+\theta_{u})=\wtd{\zeta}\,\bm{1}(t),\quad t\in\R.
\end{equation}
Specifically, we have 
\begin{equation*}
\begin{aligned}
\beta_p'\{\kappa_v\}_{v\in P}&=\{\omega_{u_p'v}'\}_{v\in P}=\wtd{\beta}_{u_p'}\{\wtd{\kappa}_v\}_{v\in P}, &&\text{ for }u_p'\in C_U,\text{ and}\\
\beta_u\{\kappa_v\}_{v\in P}&=\{\omega_{uv}'\}_{v\in P}=\wtd{\beta}_u\{\wtd{\kappa}_v\}_{v\in P},&&\text{ for }u\in B\cap D.
\end{aligned}
\end{equation*}
 Fix an arbitrary $u_{p^*}'\in C_U$ and let $\tau=\beta_{p^*}'/\wtd{\beta}_{u_{p^*}'}$. Now, by replacing $\wtd{\beta}_u$ by $\tau\wtd{\beta}_u$, for $u\in C_U\cup D$, and $\{\wtd{\kappa}_v\}_{v\in P}$ by $\{\tau^{-1}\wtd{\kappa}_v\}_{v\in P}$, we may assume w.l.o.g. that 
\begin{equation*}
\begin{aligned}
\wtd{\beta}_{u_p'}&=\beta_p',\quad\{\wtd{\kappa}_v\}_{v\in P}=\{{\kappa}_v\}_{v\in P} , &&\text{ for }u_p'\in C_U,\text{ and}\\
\wtd{\beta}_u&=\beta_u, &&\text{ for }u\in B\cap D.
\end{aligned}
\end{equation*}
Similarly, by replacing the $\wtd{\alpha}_{u}$ with $\wtd{\alpha}_{u}/\wtd{\alpha}_{u_{p^*}'}$, for $u\in C_U\cup D$, and $\wtd{\zeta}$ by  $\wtd{\zeta}/\wtd{\alpha}_{u_{p^*}'}$, we may assume w.l.o.g. that $\wtd{\alpha}_{u_{p^*}'}=1$.
With this, \eqref{eq:modif-irred-sym-2} reads
\begin{equation}\label{eq:modif-irred-sym-3}
\rho(\beta_{u_{p^*}'} t+\gamma_{p^*}')+\hspace*{-3mm}\sum_{u_p'\in C_U\setminus\{u_{p^*}'\}} \hspace*{-2mm}{\textstyle \wtd{\alpha}_{u_{p}'}} \rho(\beta_{u_{p}'} t+\gamma_{p}') + \sum_{u\in B}{\textstyle \wtd{\alpha}_u}\,\rho(\beta_u t+\theta_{u}) +\! \sum_{u\in D\setminus B}{\textstyle \wtd{\alpha}_u}\,\rho(\wtd{\beta}_u t+\theta_{u})={\textstyle\wtd{\zeta}}\bm{1}(t),
\end{equation}
for $t\in\R$. Combining \eqref{eq:modif-irred-sym-1} and \eqref{eq:modif-irred-sym-3} now yields
\begin{equation}\label{eq:modif-irred-sym-4}
\begin{aligned}
&\sum_{u\in A}\alpha_u\rho(\beta_u t+\theta_{u})+\sum_{u\in B}\left(\alpha_u-{\textstyle \alpha_{p^*}' \wtd{\alpha}_u}\right)\,\rho(\beta_u t+\theta_{u})+ \sum_{u\in D\setminus B}\!\left(-{\textstyle \alpha_{p^*}' \wtd{\alpha}_u}\right)\,\rho(\wtd{\beta}_u t+\theta_{u}) \\
&\qquad+ \sum_{u_p'\in C_U\setminus\{u_{p^*}'\}}\hspace*{-2mm}\left(\alpha_p' -{\textstyle {\alpha_{p^*}' \wtd{\alpha}_{u_{p}'}}}\right) \rho(\beta_{u_{p}'} t+\gamma_{p}') 
+ \hspace*{-2mm} \sum_{u_p'\in C\setminus C_U} \alpha_p' \,\rho(\beta_p' t +\gamma_p') =\left(\zeta- {\textstyle \alpha_{p^*}' \wtd{\zeta}} \right)\bm{1}(t),
\end{aligned}
\end{equation}
for $t\in\R$. Now let $\wtd{C}=(C\setminus C_U)\cup\{u_p'\in C_U\setminus\{u_{p^*}'\}: \alpha_p' - {\alpha_{p^*}' \wtd{\alpha}_{u_{p}'}}\neq 0 \}$, and note that $\wtd{C}\neq \varnothing$. Indeed, suppose by way of contradiction that $\wtd{C}= \varnothing $. Then $C_U=C$ and $\alpha_p' - {\alpha_{p^*}' \wtd{\alpha}_{u_{p}'}}= 0$, for all $p\in\{1,\dots, n\}\setminus\{p^*\}$, and so \eqref{eq:modif-irred-sym-4} reduces to
\begin{equation*}
\sum_{u\in A}\alpha_u\rho(\beta_u t+\theta_{u})+\sum_{u\in B}\left(\alpha_u-{\textstyle \alpha_{p^*}' \wtd{\alpha}_u}\right)\,\rho(\beta_u t+\theta_{u})+\hspace*{-2mm} \sum_{u\in D\setminus B}\!\left(-{\textstyle {\alpha_{p^*}' \wtd{\alpha}_u}}\right)\,\rho(\wtd{\beta}_u t+\theta_{u})  =\left(\zeta- {\textstyle \alpha_{p^*}' \wtd{\zeta}} \right)\!\bm{1}(t).
\end{equation*}
Lemma \ref{lem:LD->reduc} now implies that $\m{N}$ is $(\rho,D')$--reducible, for some $D'\subset A\cup B\cup D$, which contradicts the assumption that $\m{N}$ is irreducible and thus establishes $\wtd{C}\neq \varnothing$. It now follows from \eqref{eq:modif-irred-sym-4} that $\m{N}$ admits a $(\rho\,;A,\wtd{B}\cup (D\setminus B),\wtd{C})$--modification, where $\wtd{B}=\{u\in B:\alpha_u- {\alpha_{p^*}' \wtd{\alpha}_u}\neq 0\}$. But, as $u_{p^*}'\notin \wtd{C}$, we have $\wtd{C}\subsetneq C$, contradicting the assumption that $C$ is a subset of $C_0$ of least possible cardinality such that $\m{N}$ admits a corresponding $\rho$-modification. This establishes that $U\cap C=\varnothing$.

Recalling \eqref{eq:UcapC}, we deduce that we must have $P_U\cap C\neq \varnothing$, which further implies $U\subset  W$ and $C\subset P_U$. Next, by definition of $\rho$-modification, there exist nonzero real numbers $\{\nu_w\}_{w\in U}$ such that $\{\omega_{wu}\}_{u\in A}=\nu_w\{\alpha_u\}_{u\in A}$, for all $w\in U$. We now write $B=B_1\cup B_2\cup B_3$, where
\begin{equation*}
\begin{aligned}
B_1&=\{u\in B: (u,w)\notin E,\text{ for all }w\in U\},\\
B_2&=\{u\in B\setminus B_1: \omega_{wu}-\nu_w\alpha_u= 0,\text{ for all }w\in U\},\text{ and}\\
B_3&=\{u\in B\setminus B_1: \omega_{wu}-\nu_w\alpha_u\neq 0,\text{ for some }w\in U\}.
\end{aligned}
\end{equation*}
Now, by definition of reducibility, there exists a set of nodes $D\subset V\setminus (A\cup B)$ such that $P_{U}^*\coleqq A\cup B_2\cup B_3\cup D=\pre_{\m{N}}(w)$ and $\pre_{\m{N}'}(w)=P_U=B_1\cup B_3\cup C\cup D$, for $w\in U$, with the pertinent weights and biases of $\m{N}'$ given by
\begin{equation*}
\begin{aligned}
\omega'_{wu}&=-\nu_w\alpha_u, && u\in B_1, & && \qquad\omega'_{wu}&=\omega_{wu}-\nu_w\alpha_u, && u\in B_3,\\
\omega'_{wu_p'}&=-\nu_w\alpha_p',  && u_{p}'\in C, & && \qquad \omega'_{wu}&=\omega_{wu},  && u\in D,
\end{aligned}
\end{equation*}
and $\theta_w'=\theta_w+\zeta\nu_w$, for all $w\in U$.
Furthermore, there exist nonzero real numbers $\{\wtd{\beta}_w\}_{w\in U}$ and  $\{\wtd{\kappa}_u\}_{u\in P_U}$ such that $\{\omega_{wu}'\}_{u\in P_U}=\wtd{\beta}_w\{\wtd{\kappa}_u\}_{u\in P_U}$, for $w\in U$, as well as nonzero real numbers $\{\wtd{\alpha}_w\}_{w\in  U}$ and $\wtd{\zeta}\in \R$ so that
\begin{equation}\label{eq:modif-irred-sym-5}
\sum_{w\in U}\wtd{\alpha}_{w}\,\rho(\wtd{\beta}_{w} t+\theta_w' )=\wtd{\zeta}\,\bm{1}(t),\quad t\in\R.
\end{equation} 
Now, as $C\subset P_U$, we have
\begin{equation*}
-{\nu_{w}^{-1}}{\wtd{\beta}_{w}}\, \{\kappa_{u_p'}\}_{u_p'\in C}= -\wtd{\nu}_{u_p'}^{-1}\, \{\omega_{wu_p'}'\}_{u_p'\in C}=\{\alpha_p'\}_{u_p'\in C},\quad \text{for all }w\in U,
\end{equation*}
and thus 
\begin{equation*}
\left(\nu_{w_1}^{-1}{\wtd{\beta}_{w_1}}-\nu_{w_2}^{-1}{\wtd{\beta}_{w_2}} \right) \{\kappa_{u_p'}\}_{u_p'\in C}=\bm{0},\quad \text{for all }w_1,w_2\in U,
\end{equation*}
which implies the existence of a $\tau\in \R\setminus\{0\}$ such that $\nu_w=\tau\wtd{\beta}_w$, for all $w\in U$.
Next, for $w\in U$, define the function
\begin{equation*}
\begin{aligned}
K_w&\coleqq   \sum_{u\in B_3} \wtd{\beta}_w^{-1} \omega_{wu} \, \rho(\beta_u \cdot\, +\theta_u) \\
&=  \sum_{u\in B_3} \wtd{\beta}_w^{-1} \left(\omega_{wu}' +  \nu_w \alpha_u \right)  \, \rho(\beta_u \cdot\, +\theta_u) \\
&=\sum_{u\in B_3} \left(\wtd{\kappa}_{u} +\tau\alpha_u \right) \, \rho(\beta_u \cdot \,+\theta_u) .
\end{aligned}
\end{equation*}
We observe from the last expression that $K_{w_1}=K_{w_2}$, for all $w_1,w_2\in U$. Therefore,
\begin{equation*}
\sum_{u\in B_3}(\wtd{\beta}_{w_1}^{-1} \omega_{w_1u}-\wtd{\beta}_{w_2}^{-1} \omega_{w_2u} )\; \rho(\beta_u \cdot\, +\theta_u)=K_{w_2}-K_{w_1}=0,
\end{equation*}
for all $w_1,w_2\in U$, $w_1\neq w_2$, and we hence must have $\wtd{\beta}_{w_1}^{-1} \omega_{w_1u}-\wtd{\beta}_{w_2}^{-1} \omega_{w_2u}=0$,  for all $w_1,w_2\in U$, $w_1\neq w_2$, and $u\in B_3$, as otherwise Lemma \ref{lem:LD->reduc}  would  imply $(\rho,B')$--reducibility of $\m{N}$, for some $B'\subset B_3$. Therefore, for every $u\in B_3$, there exists a $\tau_u\in\R\setminus \{0\}$ such that $\omega_{wu}=\tau_u\wtd{\beta}_w$, for $w\in U$. Summarizing, we have
\begin{equation*}
\omega_{wu}=
\begin{cases}
\nu_w\alpha_u=\tau\alpha_u\,\wtd{\beta}_w, & u\in A\cup B_2\\
\tau_u\,\wtd{\beta}_w, &u\in B_3\\
\omega'_{wu}=\wtd{\kappa}_u\, \wtd{\beta}_w, & u\in D
\end{cases},\quad\text{for }w\in U. 
\end{equation*}
We have hence established the existence of nonzero real numbers $\{\kappa_u^*\}_{u\in P_{U}^*}$ such that $\{\omega_{wu}\}_{u\in P_{U}^*}=\wtd{\beta}_{w}\{\kappa_u^*\}_{u\in P_{U}^*}$, which together with \eqref{eq:modif-irred-sym-5} implies that $\m{N}$ is $(\rho,U)$--reducible. This again contradicts the assumption that $\m{N}$ is irreducible, and concludes the proof of the lemma.
\end{proof}

\begin{proof}[Proof of Lemma \ref{lem:modif-nondeg}]
Let $\mathscr{B}$ be the set of all $B'\subset B$ such that $\m{N}$ admits a $(\rho\,;A\cup B',B\setminus B',C)$--modification. We have $\mathscr{B}\neq\varnothing$, as $\varnothing\in\mathscr{B}$ by assumption. Let $B^*$ be a maximal element of $\mathscr{B}$ with respect to set inclusion, and let $\m{N}'$ be the $(\rho\,;A\cup B^*,B\setminus B^*,C)$--modification of $\m{N}$ with respect to the affine symmetry
\begin{equation}\label{eq:modif-nondeg-1}
\left(\zeta,\{(\alpha_u,\beta_u,\theta_u)\}_{u\in (A\cup B^*)\cup(B\setminus B^*)}\cup \{(\alpha'_p,\beta'_p,\gamma'_p)\}_{p=1}^n \right)
\end{equation}
 of $\rho$, and let $\{\kappa_v\}_{v\in P}$, $\{\nu_w\}_{w\in W}$, and $\{\mu_r\}_{r=1}^D$ be as in Definition \ref{def:modif}. We now show that $\m{N}'$ is non-degenerate. Assume by way of contradiction that $\m{N}'$ is degenerate and let $u^*\in B\setminus B^*$ be such that
\begin{itemize}[--]
\item $\{w\in V: (u^*,w)\in E\}= W$ and $\omega_{wu^*}-\alpha_{u^*}\nu_w=0$, for all $w\in W$, and 
\item either
\begin{itemize}[]
\item[(a)] $u^*\notin \Vout$ and $A\cap \Vout=\varnothing$, or
\item[(b)] $u^*\in \Vout$, $A\subset \Vout$, and $\lambda_{u^*}^{(r)}-\alpha_{u^*}\mu_r=0$, for all $r\in\{1,\dots,D\}$. 
\end{itemize}
\end{itemize}
We claim that then $\m{N}$ admits a $(\rho\,;A\cup B^*\cup\{u^*\},B\setminus (B^*\cup\{u^*\}),C)$--modification. Indeed, as $ (A\cup B^*)\cup(B\setminus B^*)=(A\cup B^*\cup\{u^*\})\cup\left(B\setminus (B^*\cup\{u^*\})\right)$,
Condition (i) of Definition \ref{def:modif} is satisfied by the same affine symmetry \eqref{eq:modif-nondeg-1}.
Moreover, $\omega_{wu^*}=\alpha_{u^*}\nu_w$, for all $w\in W$, and, in the circumstance (b) above, $\lambda_{u^*}^{(r)}=\alpha_{u^*}\mu_r$, for all $r\in\{1,\dots,D\}$, and so Conditions (ii)--(iv) of Definition \ref{def:modif} are satisfied with the same sets $\{\kappa_v\}_{v\in P}$, $\{\nu_w\}_{w\in W}$, and $\{\mu_r\}_{r=1}^D$. Therefore, $B^*\cup\{u^*\}\in\mathscr{B}$, contradicting the maximality of $B^*$ and thus completing the proof of the lemma.
\end{proof}

\subsection{Proofs of auxiliary results in Section \ref{sec:Clustering}}

\begin{proof}[Proof of Lemma \ref{lem:cluster-prop}]
Items (i) and (ii) are elementary facts from the analysis of metric spaces.

\noindent {(iii)} Suppose to the contrary that $z$ is neither an element of $E$ nor a cluster point of $E$. Then there exists an $\varepsilon>0$ such that $D^{\circ}(z,\varepsilon)\cap E=\varnothing$. This then implies $L_{\m{C}}(E,z)=0$, contradicting the assumption $L_{\m{C}}(E,z)\geq 1$.

\noindent {(iv)} We use induction on $k$. For the base case, we have $\m{C}^{0}(E)=E\subset F=\m{C}^{0}(F)$ by definition. For the induction step, suppose that $k\geq 1$ and $\m{C}^{k-1}(E)\subset \m{C}^{k-1}(F)$. Then, as every cluster point of $\m{C}^{k-1}(E)$ is a cluster point of $\m{C}^{k-1}(F)$, we obtain $\m{C}^{k}(E)\subset \m{C}^{k}(F)$, as desired.

\noindent {(v)} We again proceed by induction on $k$, starting with the base case $k=1$ (the case $k=0$ is clear). First, as every cluster point of $E$ is a cluster point of $E\cup F$, we have $\m{C}^1(E\cup F)\supset \m{C}^{1}(E)$. Similarly, $\m{C}^1(E\cup F)\supset \m{C}^{1}(F)$, and so $\m{C}^1(E\cup F)\supset \m{C}^{1}(E)\cup \m{C}^1(F)$. For the reverse inclusion, suppose that $z$ is neither a cluster point of $E$ nor $F$, i.e., there exists an $\varepsilon>0$ such that 
\begin{equation*}
(D^{\circ}(z,\varepsilon)\setminus\{z\})\cap E=\varnothing\quad\text{ and }(D^{\circ}(z,\varepsilon)\setminus\{z\})\cap F=\varnothing.
\end{equation*}
Then  $(D^{\circ}(z,\varepsilon)\setminus\{z\})\cap (E\cup F)=\varnothing$, and so $z$ is not a cluster point of $E\cup F$. Therefore, every cluster point of $E\cup F$ must be a cluster point of at least one of $E$ or $F$, establishing $\m{C}^1(E\cup F)=\m{C}^1(E)\cup \m{C}^1(F)$. For the induction step, assume $k\geq 2$ and $\m{C}^{k-1}(E\cup F)=\m{C}^{k-1}(E)\cup \m{C}^{k-1}(F)$. Then, using the identity for the already established base case, we have
\begin{equation*}
\begin{aligned}
\m{C}^{k}(E\cup F)=\m{C}^1\big(\m{C}^{k-1}(E\cup F)\big)&=\m{C}^1\big(\m{C}^{k-1}(E)\cup \m{C}^{k-1}(F)\big)\\
&=\m{C}^1\big(\m{C}^{k-1}(E)\big)\cup\m{C}^1\big( \m{C}^{k-1}(F)\big)=\m{C}^k(E)\cup \m{C}^k(F),
\end{aligned}
\end{equation*}
as desired.

\noindent{(vi)} Letting $k=L_{\m{C}}(E\cup F)$, we have $\varnothing=\m{C}^k(E\cup F)=\m{C}^k(E)\cup \m{C}^k(F)$, and thus both $\m{C}^k(E)$ and $\m{C}^k(F)$ must be empty. Then $L_{\m{C}}(E)\leq k$ and $L_{\m{C}}(F)\leq k$, and thus 
\begin{equation}\label{eq:cluster-lemma-1}
\max\{ L_{\m{C}}(E), L_{\m{C}}(F)\}\leq k=L_{\m{C}}(E\cup F).
\end{equation}
 Next, let $k'=\max\{ L_{\m{C}}(E), L_{\m{C}}(F)\}$. Then $L_{\m{C}}(E)\leq k'$ and $L_{\m{C}}(F)\leq k'$, and so both $\m{C}^{k'}(E)$ and $\m{C}^{k'}(F)$ are empty. Thus, $\m{C}^{k'}(E\cup F)=\m{C}^{k'}(E)\cup \m{C}^{k'}(F)=\varnothing$, and so
\begin{equation*}
L_{\m{C}}(E\cup F)\leq k'=\max\{ L_{\m{C}}(E), L_{\m{C}}(F)\},
\end{equation*}
which together with \eqref{eq:cluster-lemma-1} implies the desired identity.
\end{proof}

\begin{proof}[Proof of Lemma \ref{lem:nat-dom}]
The function $g\circ f$ can clearly be analytically continued to $\m{D}$, so it remains to show that $\m{D}$ has countable complement in $\C$. To this end, let $E_f= \C\setminus\dom_{f}$ and $E=\C\setminus\dom$. We claim that if $z^*$ is a cluster point of $E\cap \dom_f$, then $z^*\in E_f$. Suppose by way of contradiction that this is not the case, and let $(z_n)_{n\in\N}$ be a sequence of distinct elements of $E\cap \dom_f$ such that $z_n\to z^*$, for some $z^*\in \dom_f$. Now, as $f$ is holomorphic, it is, in particular, continuous on $\dom_f$, and therefore $f(z_n)\to f(z^*)$ as $n\to\infty$. On the other hand, we have $f(z_n)\in P$, by definition of  $E$, and as $P$ is discrete, we deduce that there exists a $p^*\in P$ such that $f(z_n)=p^*$ for all sufficiently large $n\in\N$. Now, as $E_f$ is closed and countable by assumption, we have that $\dom_f$ is connected, and therefore it follows by the identity theorem that $f(z)=p^*$, for all $z\in \dom_f$. But this contradicts the assumption that $f$ is non-constant, and thus completes the proof that any cluster point of $E\cap \dom_f$ is contained in $E_f$.

Now define the compact sets 
$
E^N:=\{z\in E:|z|\leq N,\; d(z,E_f)\geq 1/N\},\text{ for }N\in\N
$.
We see that $E^N$ is finite, for each $N\in\N$, for otherwise there would exist a sequence $(z_n)_{n\in\N}$ of distinct elements of $E^N$ converging to a point $z^*\in\C$. But then, by the claim above, we would have $z^*\in E_f$, contradicting $d(z_n,E_f)\geq 1/N$, for all $n\in \N$. We deduce that $E=E_f\cup \bigcup_{N\in\N}E^N$ is a closed countable set, as desired.
\end{proof}

\begin{proof}[Proof of Lemma \ref{lem:LD->reduc}]
Let $\mathscr{I}$ be the set of all $\m{I}'\subset \m{J}$ such that $j^*\in \m{I}'$, and there exist real numbers $\{\wtd{\alpha}_s\}_{s\in \m{I}'}$ such that $\wtd{\alpha}_{j^*}\neq 0$ and $\sum_{s\in\m{I}'}\wtd{\alpha}_s\rho(\beta_s\cdot\,+\gamma_s)$ is constant. Note that $\m{J}\in\mathscr{I}$ by assumption. Let $\m{I}$ be a minimal element of $\mathscr{I}$ with respect to set inclusion. We then have $\sum_{s\in\m{I}}\wtd{\alpha}_s\rho(\beta_s\cdot\,+\gamma_s)=\zeta\,\bm{1}$, for some $\zeta\in\R$, so in order to show that $\left(\zeta,\{(\wtd{\alpha}_s,\beta_s,\gamma_s)\}_{s\in\m{I}}\right)$ is an affine symmetry of $\rho$, it suffices to establish that there does not exist an $\m{I}'\subsetneq \m{I}$ such that $\{\rho(\beta_s\cdot\,+\gamma_s):s\in\m{I}'\}\cup\{\bm{1}\}$ is linearly dependent. Suppose by way of contradiction that such an $\m{I}'$ exists. Assume for now that $j^*\in \m{I}'$, and let $\alpha'_{s}\in\R$, for $s\in\m{I}'$, be such that $\sum_{s\in \m{I}'}\alpha'_s\rho(\beta_s\cdot\,+\gamma_s)$ is constant. Then we must have $\alpha'_{j^*}=0$, for otherwise we would have $\m{I}'\in\mathscr{I}$, contradicting the minimality of $\m{I}$. Therefore, $\sum_{s\in \m{I}'\setminus \{j^*\}}\alpha'_s\rho(\beta_s\cdot\,+\gamma_s)$ is constant, so we may w.l.o.g. assume $j^*\notin\m{I}'$ by replacing $\m{I}'$ with $ \m{I}'\setminus \{j^*\}$ if necessary. Now, there exist $s^*\in\m{I}'$, $\xi\in\R$, and $\delta_s\in\R$, for $s\in\m{I}'\setminus\{s^*\}$, such that $\rho(\beta_{s^*}\cdot\,+\gamma_{s^*})=\xi\,\bm{1} +\sum_{s\in \m{I}'\setminus\{s^*\}}\delta_s\rho(\beta_s\cdot\,+\gamma_s)$. Thus,
\begin{equation*}
\wtd{\alpha}_{j^*}\rho(\beta_{j^*}\cdot\,+\,\gamma_{j^*}) + \hspace*{-3mm} \sum_{s\in\m{I}\setminus(\{j^*\}\cup  \m{I}') } \hspace*{-3mm} \wtd{\alpha}_s\rho(\beta_s\cdot\,+\,\gamma_s) + \sum_{s\in\m{I}'\setminus\{s^*\}}(\wtd{\alpha}_s+\wtd{\alpha}_{s^*}\delta_s)\rho(\beta_s\cdot\,+\,\gamma_s)=(\zeta-\wtd{\alpha}_{s^*}\xi) \,\bm{1},
\end{equation*}
and therefore $\m{I}\setminus\{s^*\}\in\mathscr{I}$, which again contradicts the minimality of $\m{I}$ and concludes the proof.
\end{proof}

\subsection{Proof of Lemma \ref{lem:orderly}}

\begin{proof}
Let $\{\beta_s\}_{s\in\m{I}}$ and $\Gamma$ be as in the statement of the lemma, and fix a  $\gamma=(\gamma_s)_{s\in\m{I}}\in \Gamma$. 
Now, let $P_\sigma$ be the set of poles of $\sigma$, and let $P_s=\beta_s^{-1}(P_\sigma-\gamma_s)$ be the set of poles of $\sigma(\beta_s\,\cdot\,+\gamma_s)$, for $s\in\m{M}$. We define an undirected graph $\m{G}=(\m{I},\m{E})$ by setting
\begin{equation*}
\m{E}=\left\{(s_1,s_2)\in\m{I}\times\m{I}:s_1\neq s_2,\, P_{s_1}\cap P_{s_2}\neq\varnothing\right\},
\end{equation*}
and claim that $\m{G}$ is connected. Suppose by way of contradiction that $\m{G}$ is disconnected, and let $\m{I}=\m{I}_1\cup \m{I}_2$ be a partition of $\m{I}$ into nonempty subsets that are not mutually connected.
By definition of $\Gamma$, there exist $\zeta\in \R$ and nonzero real numbers $\{\alpha_s\}_{s\in\m{I}}$ such that $\left(\zeta,\{(\alpha_s,\beta_s,\gamma_s)\}_{s\in\m{I}}\right)$ is an affine symmetry of $\sigma$. Now, for $j\in\{1,2\}$, let $f_j=\sum_{s\in\m{I}_j}\alpha_s\sigma(\beta_s\,\cdot\,+\gamma_s)$, and note that $f_j$ is meromorphic and its poles are contained in $A_j\coleqq \bigcup_{s\in\m{I}_j} P_{s}$. Moreover, $A_1\cap A_2=\varnothing$ by the choice of $\m{I}_1$ and $\m{I}_2$. Thus, as $f\coleqq f_1+f_2=\zeta\,\bm{1}$ is constant, it follows that $f_1$ must be entire, for otherwise $f$ would have poles. It hence follows by the SAC for $\sigma$ that $f_1$ must, in fact, be constant. But this violates condition (ii) of Definition \ref{def:sym}, so we have reached the desired contradiction, establishing that $\m{G}$ is connected.

Next, for every $t\in\R$, we have that
\begin{equation*}
\sum_{s\in\m{I}}\alpha_s\sigma\big(\beta_s\,\cdot\,+(\gamma_s-t\beta_s)\big)=\sum_{s\in\m{I}}\alpha_s\sigma\big(\beta_s(\,\cdot\,-t)+\gamma_s\big)=f(\,\cdot-t)
\end{equation*}
is constant, and so $(\gamma_s -  t\beta_s)_{s\in\m{I}}\in\Gamma$. Therefore, fixing an arbitrary $s_0\in\m{I}$, we can write
$
\gamma= (\gamma'_s + \beta_{s_0}^{-1} \gamma_{s_0} \cdot  \beta_s)_{s\in\m{I}}
$,
where
$
\gamma'\coleqq (\gamma_s - \beta_{s_0}^{-1}\gamma_{s_0} \cdot \beta_s)_{s\in\m{I}}\in\Gamma
$.
Now, as $\m{G}$ is connected, for every $s\in\m{I}$ there exists a path of vertices $t_0^s=s_0, \,t_1^s,\, t_2^s,\,\dots,\, t_{n_s}^s=s$ of $\m{G}$ leading from $s_0$ to $s$. Then, by definition of $\m{E}$ and $P_s$, there exist poles $p_1^s,\dots,p_{n_s}^s,\tilde{p}_0^s,\dots,\tilde{p}_{n_s-1}^s\in P_\sigma$ such that 
\begin{equation}\label{eq:poles-graph-lem}
\beta_{t_k^s}^{-1}(p_k^s-\gamma_{t_k^s}')=\beta_{t_{k-1}^s}^{-1}(\tilde{p}_{k-1}^s-\gamma_{t_{k-1}^s}'),
\end{equation}
 for all $s\in \m{I}$ and $k\in\{1,\dots,n_s\}$. Further, observing that $\gamma'_{s_0}=0$ and summing \eqref{eq:poles-graph-lem} over $k\in \{1,\dots,n_s\}$, we have
\begin{equation*}
{\beta_s^{-1}}{\gamma_{s}'}=\Re\left({\beta_s^{-1}}{\gamma_{s}'}-{\beta_{s_0}^{-1}}{\gamma_{s_0}'}\right)=\sum_{k=1}^{n_s}\left({\beta_{t_k^s}^{-1}}\Re\left( {p^s_{k}}\right)\,-\,{\beta_{t_{k-1}^s}^{-1}}\Re\left({\tilde{p}^s_{k-1}}\right)\right)\in Q,\quad \text{for }s\in \m{I},
\end{equation*}
where $Q\coleqq \sum_{s_1\in\m{I}}{\beta_{s_1}^{-1}}\Re(P_\sigma)-\sum_{s_2\in\m{I}}\beta_{s_2}^{-1}\Re(P_\sigma)$ is countable. Therefore,
$\gamma'\in\Gamma'\coleqq \Gamma\cap\, \bigtimes_{s\in\m{I}} \beta_{s} Q$, and so
$
\gamma= (\gamma'_s + \beta_{s_0}^{-1} \gamma_{s_0} \cdot  \beta_s)_{s\in\m{I}}\in\{(\gamma'_s+t\beta_s)_{s\in\m{I}}:t\in\R\}
$.
Finally, as $\gamma$ was arbitrary, we deduce that $\Gamma=\bigcup_{\gamma'\in\Gamma'}\{(\gamma'_s+ t\beta_s)_{s\in\m{I}}:t\in\R\}$, and, as $\Gamma'$ is countable, this concludes the proof.
\end{proof}

\subsection{Proofs of auxiliary results in Section \ref{sec:LtCla-a-b}}

\begin{proof}[Proof of Lemma \ref{lem:Zab-props}]
First, note that the function $z\mapsto ({1+e^{-2 \pi z/b}})^{-1}$ is $ib$-periodic, its poles are simple and located at $P_b\coleqq \{ib(k+1/2):k\in\Z\}$, and it has a finite limit as $\Re(z)\to+\infty$. Therefore,
\begin{equation*}
M_1\coleqq \frac{1}{2} \sup_{\substack{z\in \C \\ \Re(z)\geq 0 }} \left| \frac{1\wedge d\left(z,P_b\right) }{1+e^{-2 \pi z/b}}\right|<\infty.
\end{equation*}
Next, note that $1+\tanh\left(\pi b^{-1}z\right)={2e^{\pi z/b}}(e^{\pi z/b}+e^{-\pi z/b})^{-1}$, and so
\begin{equation*}
\left|1+\tanh\left(\pi b^{-1}z\right)\right|\big(1\wedge d\left(z,P_b\right) \big) \leq  2\, \left| \frac{1\wedge d\left(z,P_b\right) }{1+e^{-2 \pi z/b}} \right| \leq M_1,\quad \text{for }\Re(z)\geq 0,
\end{equation*}
and
\begin{equation*}
\left|1+\tanh\left(\pi b^{-1}z\right)\right|\big(1\wedge d\left(z,P_b\right) \big) \leq  2\,  |e^{2 \pi z / b}|  \left| \frac{1\wedge d\left(-z,P_b\right) }{1+e^{2 \pi z/b}} \right| \leq M_1 e^{2 \pi \Re(z) / b} ,\quad \text{for }\Re(z)< 0.
\end{equation*}
We thus deduce that
\begin{equation*}
\left|1+\tanh\left(\pi b^{-1}z\right)\right|\big(1\wedge d\left(z,P_b\right) \big)\leq M_1 (1\wedge e^{2 \pi \Re(z)/b}),\quad \text{for all }z\in\C.
\end{equation*}
Next, let $M_2>0$ be such that $|c_k|\leq M_2 e^{\pi a' |k|/b}$, for all $k\in\Z$.  Then, for $z\in\C\setminus P_\sigma$, we have
\begin{align}
&\sum_{k\geq 1 }|c_k| \big|1+\, \tanh\big(\pi b^{-1}(z - ka)\big)\big|\leq \sum_{k\geq 1 }|c_k| \frac{M_1}{1\wedge d\left(z,P_b+ka\right) }(1\wedge e^{2 \pi \Re(z-ka)/b}) \notag\\
\leq \,&\sum_{k\geq 1 }M_2 e^{\pi a' k/b} \frac{M_1}{1\wedge d\left(z,P_\sigma\right) }(1\wedge e^{2 \pi \Re(z-ka)/b}), \notag\\
\leq \, &\frac{M_1M_2}{1\wedge d\left(z,P_\sigma\right) }\left( \sum_{k\geq \Re(z)/a }e^{\pi a' k/b}e^{2 \pi \Re(z-ka)/b}\quad+\sum_{1\leq k< \Re(z)/a }e^{\pi a' k/b} \right) \notag\\
\leq \, &\frac{M_1M_2}{1\wedge d\left(z,P_\sigma\right) }\left( e^{2 \pi \Re(z)/b}\sum_{k\geq \Re(z)/a }e^{-\pi k(2 a-a')/b}\quad+\frac{e^{\frac{ \pi a'}{b}\frac{\Re(z)}{a}}-1}{e^{\frac{ \pi a'}{b}}-1}\right) \notag\\
\leq \, &\frac{M_1M_2}{1\wedge d\left(z,P_\sigma\right) }\left( e^{2 \pi \Re(z)/b}\frac{e^{-\pi \frac{\Re(z)}{a} (2a-a')/b}}{1-e^{-\pi (2a-a')/b}}\quad+\frac{e^{\pi a'\frac{\Re(z)}{a}/b}-1}{e^{\pi a'/b}-1}\right) \notag\\
\leq \, &\frac{1}{1\wedge d\left(z,P_\sigma\right) }\underbrace{M_1M_2\left(\frac{1}{1-e^{-\pi (2a-a')/b}}\quad+\frac{1}{e^{\pi a'/b}-1}\right)}_{M_3\coleqq } e^{\eta\Re(z)/b}<\infty, \label{eq:Zab-props-1}
\end{align}
where $\eta=\frac{ a'\pi}{a}<\pi$. Next, as
$
-1+\tanh(\pi b^{-1}z)=-(1+\tanh(-\pi b^{-1}z))
$, a derivation completely analogous to the above yields
\begin{equation}\label{eq:Zab-props-2}
\sum_{k\leq 0 }|c_k| \big|-1+\, \tanh\left(\pi b^{-1}(z - ka)\right)\big|\leq \frac{M_3}{1 \wedge d\left(z,P_\sigma\right) }e^{-\eta\Re(z)/b}<\infty,
\end{equation}
for $z\in\C\setminus P_\sigma$. Thus, the series \eqref{eq:Zab-series} converges absolutely uniformly on compact subsets of $\dom_{\sigma}$, and  $\sigma$ is therefore holomorphic on $\dom_{\sigma}$. As the summand functions $\sgn(k)+\, \tanh\left(\pi b^{-1}(\,\cdot - ka)\right)$ are meromorphic with simple poles at $ka+P_b$, items (i), (ii), and (iii) of the lemma follow immediately. Finally, item (iv) follows from \eqref{eq:Zab-props-1}, \eqref{eq:Zab-props-2}, and the fact that $e^{\pm\eta\Re(z)/b}\leq e^{\eta |z|/b}$, for all $z\in\C$.
\end{proof}

\begin{proof}[Proof of Lemma \ref{lem:d>0->Q}]
The proof relies on the following special case of the Kronecker-Weyl equidistribution theorem:

\begin{prop}[Kronecker-Weyl {\cite{Weyl1916}, \cite{Beck2017}}]\label{prop:KW-equi}
Let $x_1,x_2\in[0,1)$ and $y_1,y_2\in\R\setminus\{0\}$ be such that $\|(y_1,y_2)\|_2=1$. Furthermore, for a Jordan measurable set $J\subset [0,1)\times[0,1)$, define
\begin{equation*}
S_J=\{t\in\R: (x_1+ty_1\; \mrm{mod}\; 1, \; x_2+ty_2 \; \mrm{mod}\; 1)\in J\}.
\end{equation*}
If $y_1/y_2$ is irrational, then
\begin{equation*}
\lim_{T\to\infty}\frac{1}{2T}\,\mu(S_J\cap[-T,T])= \mrm{A}(J),
\end{equation*}
where $\mu$ denotes the Lebesgue measure on $\R$ and $\mrm{A}$ stands for the Lebesgue measure on $[0,1)\times [0,1)$.
\end{prop}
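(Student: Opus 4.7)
The plan is to follow the classical Weyl--style proof via harmonic analysis on the two-torus $\mathbb{T}^2=(\R/\Z)^2$, organized in the order: characters $\Rightarrow$ trigonometric polynomials $\Rightarrow$ continuous functions $\Rightarrow$ Jordan-measurable sets.

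First I would observe that, writing $\varphi_t(x)\coleqq (x_1+ty_1,\,x_2+ty_2)\bmod 1$, the quantity $\frac{1}{2T}\mu(S_J\cap[-T,T])$ equals $\frac{1}{2T}\int_{-T}^{T}\mathbf{1}_J(\varphi_t(x))\,dt$. I would therefore prove the following equidistribution statement for continuous $f\colon\mathbb{T}^2\to\C$ under the assumption that $y_1/y_2$ is irrational:
\begin{equation*}
\lim_{T\to\infty}\frac{1}{2T}\int_{-T}^{T} f(\varphi_t(x))\,dt \;=\; \int_{\mathbb{T}^2}f\,d\mathrm{A}. \tag{$\ast$}
\end{equation*}
To establish $(\ast)$ I would first verify it on the characters $\chi_{n_1,n_2}(x_1,x_2)=e^{2\pi i(n_1x_1+n_2x_2)}$ for $(n_1,n_2)\in\Z^2$. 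When $(n_1,n_2)=(0,0)$ both sides equal $1$. Otherwise a direct integration gives
\begin{equation*}
\frac{1}{2T}\int_{-T}^{T}\chi_{n_1,n_2}(\varphi_t(x))\,dt \;=\; \chi_{n_1,n_2}(x)\cdot\frac{\sin\!\big(2\pi T(n_1y_1+n_2y_2)\big)}{2\pi T(n_1y_1+n_2y_2)},
\end{equation*}
provided the denominator is nonzero, in which case the expression tends to $0=\int_{\mathbb{T}^2}\chi_{n_1,n_2}\,d\mathrm{A}$ as $T\to\infty$. The nonvanishing of $n_1y_1+n_2y_2$ for every $(n_1,n_2)\in\Z^2\setminus\{(0,0)\}$ is exactly where irrationality enters: if $n_2=0$ then $n_1y_1=0$ forces $n_1=0$; if $n_2\neq 0$ then $n_1y_1+n_2y_2=0$ would yield $y_1/y_2=-n_2/n_1\in\Q$. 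Linearity then extends $(\ast)$ to arbitrary trigonometric polynomials, and since the trigonometric polynomials are uniformly dense in $C(\mathbb{T}^2)$ by Stone--Weierstrass (or by Fej\'er summation), a standard $\varepsilon$-approximation argument extends $(\ast)$ to all continuous $f$.

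Finally, to pass from continuous functions to Jordan-measurable sets, I would use the defining property that $J$ Jordan-measurable means for every $\varepsilon>0$ there exist continuous $f_-,f_+\colon\mathbb{T}^2\to[0,1]$ with $f_-\le\mathbf{1}_J\le f_+$ and $\int_{\mathbb{T}^2}(f_+-f_-)\,d\mathrm{A}<\varepsilon$ (concretely, approximate the indicators of the interior and closure by Urysohn-type bump functions supported on an $\varepsilon$-neighborhood of $\partial J$, which has Jordan content zero). Applying $(\ast)$ to $f_\pm$ sandwiches
\begin{equation*}
\int_{\mathbb{T}^2}\!f_-\,d\mathrm{A}\;\le\;\liminf_{T\to\infty}\frac{\mu(S_J\cap[-T,T])}{2T}\;\le\;\limsup_{T\to\infty}\frac{\mu(S_J\cap[-T,T])}{2T}\;\le\;\int_{\mathbb{T}^2}\!f_+\,d\mathrm{A},
\end{equation*}
and letting $\varepsilon\to 0$ collapses both bounds to $\mathrm{A}(J)$.

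The main obstacle I anticipate is handling measurability issues cleanly: one must check that $S_J$ is Lebesgue measurable (immediate once $J$ is Jordan, since $\mathbf{1}_J$ is Riemann integrable and $t\mapsto\mathbf{1}_J(\varphi_t(x))$ is then Riemann integrable on compact intervals) and that the sandwich functions $f_\pm$ can indeed be chosen continuous on all of $\mathbb{T}^2$. Beyond that the argument is essentially routine; the dependence on the starting point $(x_1,x_2)$ never causes trouble because every character estimate above is uniform in $x$.
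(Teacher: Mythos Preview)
Your argument is the standard Weyl proof and is correct. The paper does not supply its own proof of this proposition; it simply cites it as a classical result (Weyl 1916, Beck 2017) and invokes it inside the proof of Lemma~\ref{lem:d>0->Q}, so there is no in-paper proof to compare against. One minor remark: in your case analysis for $n_1y_1+n_2y_2\neq 0$, the subcase $n_2\neq 0$, $n_1=0$ should be handled separately (it gives $y_2=0$, contradicting $y_2\in\R\setminus\{0\}$) rather than via division by $n_1$, but this is a trivial gap.
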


Turning back to the proof of the lemma, in order to show that $\ell\cap P$ is an arithmetic sequence, it suffices to find a $z_0\in \Pi$ and a pair $(n_a,n_b)\in \Z\times\Z\setminus\{(0,0)\}$ such that 
\begin{equation*}
\ell'\cap \Pi= \{z_0+(n_a a+in_b\, b)k:k\in\Z\},
\end{equation*}
where $\ell'=\beta\ell +\gamma$. We begin by noting that
\begin{equation*}
\begin{aligned}
\Delta_\varepsilon(\ell,P)&=\limsup_{N\to\infty}\frac{1}{2N}\, \#\{p\in \beta^{-1}(\Pi-\gamma)\cap D(0,N):d(p,\ell)\leq \varepsilon \}\\
&\stackrel{\mathclap{p'=\beta p+\gamma}}{=}\quad \limsup_{N\to\infty}\frac{1}{2N}\, \#\{p'\in \Pi \cap D(\gamma,|\beta| N): d(p' ,\beta \ell+\gamma)\leq |\beta| \varepsilon \}\\
&=|\beta|\cdot \limsup_{N\to\infty}\frac{1}{2|\beta|N}\, \#\{p'\in \Pi \cap D(0,|\beta| N): d(p' ,\ell')\leq |\beta| \varepsilon \}\\
&=|\beta| \Delta_{|\beta|\varepsilon}(\ell',\Pi),
\end{aligned}
\end{equation*}
 for all $\varepsilon>0$, and therefore $\Delta(\ell',\Pi)=|\beta|^{-1}\Delta(\ell,P)>0$.
 
 Next, let $x\in\C$ and $y\in\C\setminus\{0\}$ be such that $\ell'=\{x+ty:t\in\R\}$. Assume w.l.o.g. that $|y|=1$, and write $x=x_1+ix_2$, $y=y_1+iy_2$, where $x_1,x_2,y_1,y_2\in\R$. We claim that $y_1/a$ and $y_2/b$ are rationally dependent, i.e., there exists a pair $(n_a,n_b)\in \Z\times\Z\setminus\{(0,0)\}$ such that $n_b\frac{y_1}{a}-n_a\frac{y_2}{b}=0$. Suppose by way of contradiction that this is not the case, i.e., $y_1\neq0$, $y_2\neq 0$, and $\frac{y_1}{a}/\frac{y_2}{b}$ is irrational. Fix an $\varepsilon\in\left(0,\frac{1}{4}\min\{a,b\}\right)$, and, for $N>0$, set
\begin{equation*}
P_\varepsilon^N=\left\{p\in \Pi \cap D(0,N): d(p , \ell') \leq \varepsilon \right\}.
\end{equation*}
Consider now an arbitrary $p \in P_\varepsilon^\infty$, and write $p=p_a a+ip_b b$, where $p_a,p_b\in\Z$. Select a $t_p\in\R$ such that $|x+t_py-p|\leq \varepsilon$. Then, for all $t\in I_p\coleqq [t_p-\varepsilon,t_p+\varepsilon]$, we have
\begin{equation*}
|x+ty-p|\leq |x+t_py-p|+|t-t_p||y|\leq 2\varepsilon,
\end{equation*}
and so
\begin{equation*}
\left\|\big({\textstyle\frac{x_1}{a}}+t{\textstyle\frac{y_1}{a}}, \; {\textstyle \frac{x_2}{b}}+t{\textstyle \frac{y_2}{b}}\big)-(p_a,p_b)\right\|_2\leq \min\{a,b\}^{-1}|x+ty-p|\leq 2\min\{a,b\}^{-1} \varepsilon.
\end{equation*}
Therefore, defining
\begin{equation*}
\begin{aligned}
J_\varepsilon&=\{(u_1 \;\mrm{mod}\; 1,\,u_2 \;\mrm{mod}\; 1):(u_1,u_2)\in\R^2,\,\|(u_1,u_2)\|_2\leq 2\min\{a,b\}^{-1} \varepsilon\}\quad\text{and}\\
S_{\varepsilon}&=\left\{t\in\R: \big( ({\textstyle\frac{x_1}{a}}+t{\textstyle\frac{y_1}{a}}) \; \mrm{mod}\; 1, \; ({\textstyle \frac{x_2}{b}}+t{\textstyle \frac{y_2}{b}} ) \; \mrm{mod}\; 1\big) \in J_\varepsilon\right\},
\end{aligned}
\end{equation*}
we have $ I_p \subset S_\varepsilon$. We next show that $I_p\cap I_{p'}=\varnothing$, for distinct elements $p$ and $p'$ of $P_\varepsilon^\infty$. Indeed, for such $p$, $p'$, and for $t\in I_p$, $t'\in I_{p'}$, we have
\begin{equation*}
\begin{aligned}
|t-t'|&=|(x+ty)-(x+t'y)|\\
&\geq |p-p'|-|x+ty-p|-|x+t'y-p'| \geq \min\{a,b\} -2\varepsilon-2\varepsilon>0,
\end{aligned}
\end{equation*}
where the last inequality is by our choice of $\varepsilon$. Therefore, $t\neq t'$, and, as $t\in I_p$ and $t'\in I_{p'}$ were arbitrary, we deduce  $I_p\cap I_{p'}=\varnothing$. Next, as $\Delta_\varepsilon(\ell',\Pi)\geq \Delta(\ell',\Pi)>0$, by definition of $\Delta_\epsilon$,
there exists a sequence of positive reals $\{N_k\}_{k\in\N}$ increasing to $\infty$ such that
\begin{equation*}
\# \left(P_\varepsilon^{N_k} \right)\geq 2N_k \cdot \frac{1}{2}\Delta(\ell',\Pi)=N_k\Delta(\ell',\Pi),
\end{equation*}
for all $k\in\N$. Then, for $k\in\N$ such that $N_k\geq |x|+2\varepsilon$ and $p\in P_\varepsilon^{N_k}\subset P_\varepsilon^\infty$, we have
\begin{equation*}
|t_p|+\varepsilon=|(x+t_py-p)+p-x|+\varepsilon \leq \varepsilon + N_k + |x|+\varepsilon \leq  2N_k,
\end{equation*}
and therefore $I_p\subset [-2N_k,2N_k]$. Thus, as the intervals $I_p$ are disjoint for distinct $p$, we obtain
\begin{equation*}
\begin{aligned}
\mu\left(S_\varepsilon\cap [-2N_k,2N_k]\right)& \geq \mu\Big(\bigcup_{p\in P_\varepsilon^{N_k}} I_p\Big)=\# \left(P_\varepsilon^{N_k} \right) \cdot 2\varepsilon\geq 2 N_k\Delta(\ell',\Pi)\, \varepsilon.
\end{aligned}
\end{equation*}
Now, as $J_\varepsilon$ is a union of 4 circular sectors, it is Jordan measurable and its area is given by $\mrm{A}(J_\varepsilon)=\pi (2\min\{a,b\}^{-1} \varepsilon)^2$. Proposition \ref{prop:Weyl-equi} therefore implies that
\begin{equation*}
\pi (2\min\{a,b\}^{-1} \varepsilon)^2=\mrm{A}(J_\varepsilon)=\lim_{k\to\infty}\frac{1}{4N_k} \,\mu\left(S_\varepsilon\cap [-2N_k,2N_k]\right)\geq \frac{1}{2}\Delta(\ell',\Pi)\, \varepsilon, 
\end{equation*}
and thus $\Delta(\ell',\Pi)\leq 8\pi\min\{a,b\}^{-2}\varepsilon$. But $\varepsilon\in\left(0,\frac{1}{4}\min\{a,b\}\right)$ was arbitrary, so we must have $\Delta(\ell',\Pi)=0$. This constitutes a contradiction, and so our assumption that $y_1/a$ and $y_2/b$ are rationally independent must be false.

We can thus find $(n_a,n_b)\in \Z\times\Z\setminus\{(0,0)\}$ such that $n_b\frac{y_1}{a}-n_a\frac{y_2}{b}=0$. Moreover, in the case when one of $y_1$ or $y_2$ is zero, we take  $(n_a,n_b)\in\{(0,1),(1,0)\}$, and if $y_1$ and $y_2$ are both nonzero, we assume w.l.o.g. that $n_a$ and $n_b$ are coprime. Then, letting $K=n_aa+in_bb$, we have
\begin{equation*}
\ell'=\bigcup_{k\in\Z}\{x+sy+kK:s\in[0,|K|] \} \subset \bigcup_{s\in[0,|K|]}x+sy+(n_aa)\Z\times(in_bb)\Z.
\end{equation*}
Let $d(F_1,F_2)=\inf\{|f_1-f_2|:f_j\in F_j,j\in\{1,2\}\}$ denote the Euclidean distance between two closed sets $F_1,F_2\subset \C$. As $\Delta(\ell',\Pi)>0$, we must have $d(\ell',\Pi)=0$, and therefore
\begin{equation*}
\inf_{s\in[0,|K|]}d(x+sy+(n_aa)\Z\times(in_bb)\Z,\Pi)\leq d(\ell',\Pi)=0.
\end{equation*}
Now, as $s\mapsto d(x+sy+(n_aa)\Z\times(in_bb)\Z,\Pi)$ is continuous, and $[0,|K|]$ is compact, there must exist an $s_0\in [0,|K|]$ such that $d(x+s_0y+(n_aa)\Z\times(in_bb)\Z,\Pi)=0$. Then, letting $z_0=x+s_0y$, we have that $z_0+(n_aa)\Z\times(in_bb)\Z$ is an affine sublattice of $\Pi$, and therefore, as $(n_a,n_b)\in\{(0,1),(1,0)\}$ or $n_a$ and $n_b$ are coprime, we have
\begin{equation*}
\ell'\cap \Pi= \left\{z_0+(n_a a+in_b\, b)k:k\in\Z\right\},
\end{equation*}
as desired.

It remains to show that there exists an $\varepsilon_0>0$ such that $(\ell+D(0,\varepsilon_0))\cap P=\ell\cap P$. To this end, let $Y=\{z_0+tK:t\in[0,1]\}$, and note that, as $\ell'$ and $\Pi$ are $K$--periodic, we get
\begin{equation*}
\begin{aligned}
\inf\{d(p,\ell):p\in P\setminus\ell\} &=\inf\{d(p,\beta^{-1} (\ell' -\gamma)):p\in P\setminus\ell\}\\
&= |\beta|^{-1} \inf\{d(p',\ell'):p'\in\Pi\setminus\ell'\}\\
&= |\beta|^{-1} \inf\{d(p',kK+Y):p'\in\Pi\setminus\ell',k\in\Z\}\\
&=  |\beta|^{-1} \inf\{d(p'-kK,Y):p'\in\Pi\setminus\ell',k\in\Z\}\\
&=  |\beta|^{-1} \inf\{d(p',Y):p'\in\Pi\setminus\ell'\}.\\
\end{aligned}
\end{equation*}
Note that the last expression is strictly positive, as $\Pi\setminus \ell'$ is closed, $Y$ is compact, and $(\Pi\setminus \ell')\cap Y=\varnothing$. Let $\varepsilon_0>0$ be such that $\varepsilon_0<|\beta|^{-1} \inf\{d(p',Y):p\in\Pi\setminus\ell'\}$. Now, if $p^*\in P$ is such that $d(p^*,\ell)\leq \varepsilon_0$, then 
\begin{equation*}
d(p^*,\ell)<\inf\{d(p,\ell):p\in P\setminus\ell\},
\end{equation*}
and so $p^*\in\ell$. As $p^*$ was arbitrary, this shows that $(\ell+D(0,\varepsilon_0))\cap P=\ell\cap P$ and thus completes the proof.
\end{proof}

\begin{proof}[Proof of Lemma \ref{lem:2-lattice-fit}]
Let $c>0$ be arbitrary, and let $m_1,m_2\in\Z$ be such that $B\coleqq m_1y_1=m_2y_2$. Then the sets $P_1$ and $P_2$ are both $B$-periodic, so $P_{\ell,c}$ is $B$-periodic. Now, for a point $z\in\C$, let $\pi_\ell(z)$ be the orthogonal projection of $p$ onto $\ell$.
Then, writing $Y=\{p\in P_{\ell,c}:\pi_\ell (p)\in \{x_1+\, tB:t\in[0,1]\}\}$, we have
\begin{equation*}
\begin{aligned}
&\inf\{|p-p'|:p,p'\in P_{\ell,c}, \;p\neq p'\}\\
=&\inf\{|p+kB-(p'+k'B)|:p,p'\in Y,\,k,k'\in\Z,\, \;p+kB\neq p'+k'B\}\\
=&\inf\{|p-p'|:p,p'\in Y\cup (Y+B),\, p\neq p'\}=:\eta(c).
\end{aligned}
\end{equation*}
As the infimum in the last quantity is taken over a finite set of positive numbers, we have $\eta(c)>0$. Therefore, $P_{\ell,c}$ is uniformly discrete, for all $c>0$, as desired.
\end{proof}

\begin{proof} [Proof of Lemma \ref{lem:ent-vs-line}] 
Let $\m{P}\coleqq\{(\alpha_s,\beta_s,\gamma_s)\}_{s\in\m{I}}$ be as in the statement of the lemma.
We define an equivalence relation $\sim_\Q$ on $\m{P}$ by setting $(\alpha_{s_1},\beta_{s_1},\gamma_{s_1})\sim_\Q (\alpha_{s_2},\beta_{s_2},\gamma_{s_2})$ if and only if $\beta_{s_1}/\beta_{s_2}\in\Q$. Let $\Psi(\m{P})$ denote the set of equivalence classes of $\m{P}$ with respect to $\sim_\Q$.
We proceed with the proof of the lemma by induction on $n\coleqq \#(\Psi(\m{P}))$. If $n=0$, i.e., $\m{P}=\varnothing$, then $f$ is given by the empty sum, and so $f=0$ is trivially entire, as desired. Next, suppose that $n\geq 1$, and that the statement of the lemma holds for all functions parametrized by $\m{P}'=\{(\alpha_s',\beta_s',\gamma_s')\}_{s\in\m{I}'}$ with $\#(\Psi(\m{P}'))<n$.

Let $P_f$ be the (possibly empty) set of poles of $f$, and assume that $\Delta(\ell,P_f)=0$, for every line $\ell$ in $\C$. We show that then $f$ must be entire. To this end, fix an equivalence class $\m{P}_1\coleqq \{(\alpha_s,\beta_s,\gamma_s)\}_{s\in\m{I}_1}\in \Psi(\m{I})$, and note that, as $\beta_{s_1}/\beta_{s_2}\in\Q$, for all $s_1,s_2\in \m{I}_1$, there exists a $T\in \C$ such that $\beta_s T\in\Z$, for all $s\in\m{I}_1$. Next, define $g= f(\,\cdot+ibT)-f$ and let $P_g\subset (P_f-ibT)\cup P_f$ be its set of poles. 
Then, as $\sigma$ is $ib$-periodic, we have
\begin{equation*}
\begin{aligned}
g=f(\,\cdot+ibT)-f &= \sum_{s\in\m{I}}\left (\alpha_s\,\sigma\Big(\beta_s \cdot+ ib(\beta_sT) +\gamma_s\Big)- \alpha_s\,\sigma(\beta_s \cdot+\gamma_s)\right)\\
&= \sum_{s\in\m{I}\setminus \m{I}_{1}} \left (\alpha_s\,\sigma\Big(\beta_s \cdot+  ib(\beta_sT) + \gamma_s\Big)- \alpha_s\,\sigma(\beta_s \cdot+\gamma_s)\right),
\end{aligned}
\end{equation*}
and so $g$ takes the form
$
g=\sum_{s\in\m{I}'}\alpha_s'\,\sigma(\beta_s' \cdot+\gamma_s')
$,
for some $\m{P}'\coleqq \{(\alpha_s',\beta_s',\gamma_s')\}_{s\in\m{I}'}$ such that $\{\beta_s'\}_{s\in\m{I}'}\subset\{\beta_s\}_{s\in\m{I}\setminus\m{I}_1}$. But then $\#(\Psi(\m{P}'))= \#(\Psi(\m{P}\setminus\m{P}_1))< \#(\Psi(\m{P}))=n$, so it follows by the induction hypothesis that either $g$ is entire, or $\Delta(\ell,P_g)>0$ for some line $\ell$ in $\C$. On the other hand, as we assumed that $\Delta(\ell,P_f)=0$, for every line $\ell$ in $\C$, we have
\begin{equation*}
\begin{aligned}
\Delta(\ell,P_g)&\leq \Delta(\ell,(P_f-ibT)\cup P_f)\\
&\leq \Delta(\ell,P_f-ibT)+\Delta(\ell,P_f)=\Delta(\ell+ibT,P_f)+\Delta(\ell,P_f)=0,
\end{aligned}
\end{equation*}
for every line $\ell$ in $\C$, and therefore $g$ must be entire.
We are now ready to show that $f$ must also be entire. To see this, suppose by way of contradiction that there exists a $p^*\in P_f$. Then, as $f(\,\cdot+ibT)-f=g$ is analytic at $p^*$ and $p^*-ibT$, we also have $p^*+ibT,p^*-ibT\in P_f$, and thus, applying this argument repeatedly, we conclude that $\{p^*+ibTk:k\in\Z\}\subset P_f$. Therefore, letting $\ell'=\{p^*+ibTt:t\in\R\}$, we obtain $\Delta(\ell',P_f)\geq (b|T|)^{-1}>0$, contradicting our assumption that $\Delta(\ell,P_f)=0$, for every line $\ell$ in $\C$. This establishes that $f$ is entire and concludes the proof of the lemma.
\end{proof}

\end{document}